\newcommand{\Proof}{\begin{proof}}
\newcommand{\Endproof}{\end{proof}}
\newcommand{\N}{\mathbb N}
\newcommand{\R}{\mathbb{R}}
\newcommand{\C}{\mathbb{C}}
\newcommand{\OO}{\mathcal{O}}
\newcommand{\diag}{\mathop{\mathrm{diag}}}
\newcommand{\dist}{\mathop{\mathrm{dist}}}
\newcommand{\supp}{\mathop{\mathrm{supp}}}
\renewcommand{\Re}{\mathop{\mathrm{Re}}}
\renewcommand{\Im}{\mathop{\mathrm{Im}}}
\renewcommand{\mod}{\mathop{\mathrm{mod}}}
\newcommand{\eps}{\varepsilon}
\DeclareMathOperator{\Tr}{Tr}
\DeclareMathOperator{\Bal}{Bal}
\DeclareMathOperator{\sign}{sgn}
\DeclareMathOperator{\Ai}{Ai}
\newtheorem{theorem}{Theorem}[section]
\newtheorem{lemma}[theorem]{Lemma}
\newtheorem{corollary}[theorem]{Corollary}
\newtheorem{proposition}[theorem]{Proposition}
\newtheorem{Remark}[theorem]{Remark}
\newtheorem{Example}[theorem]{Example}
\newtheorem{Assumption}[theorem]{Assumption}
\newtheorem{definition}[theorem]{Definition}
\newenvironment{example}{\begin{Example}\rm}{\end{Example}}
\numberwithin{equation}{section}
\numberwithin{figure}{section}
\title{Universality in the two matrix model: \\
    a Riemann-Hilbert steepest descent analysis}
\author{Maurice Duits \and Arno B.J. Kuijlaars
}
\begin{document}

\begin{abstract}
The  eigenvalue statistics of a pair $(M_1,M_2)$ of $n\times n$ Hermitian matrices  taken
random with respect to the measure
    $$\frac{1}{Z_n}\exp\big(-n\Tr (V(M_1)+W(M_2)-\tau M_1M_2)\big) \ {\rm d}M_1 {\rm d} M_2 $$
can be described in terms of two families of biorthogonal polynomials.
In this paper we give a  steepest descent analysis of a $4 \times 4$ matrix-valued
Riemann-Hilbert problem characterizing one of the families of biorthogonal polynomials
in the special case $W(y)=y^4/4$ and $V$ an even polynomial. As a result we obtain the
limiting behavior of the correlation kernel associated to the eigenvalues of $M_1$
(when averaged over $M_2$) in the global and local regime as $n\to \infty$ in the one-cut
regular case.
 A special feature in the analysis is the introduction of a vector equilibrium problem
involving  both an external field and an upper constraint.
\end{abstract}

\maketitle

\tableofcontents

\pagestyle{myheadings} \thispagestyle{plain}
\markboth{MAURICE DUITS AND ARNO B.J. KUIJLAARS}{UNIVERSALITY IN THE TWO MATRIX MODEL}

\allowdisplaybreaks

\section{Introduction}

\subsection{Two matrix model}
The two matrix model in random matrix theory is a probability
measure
    \begin{equation}\label{eq:twomatrixmodel}
        \frac{1}{Z_n}{\rm e}^{-n \Tr \big(V(M_1)+W(M_2)-\tau M_1 M_2\big)}
        {\rm d} M_1 {\rm d}M_2
    \end{equation}
defined on pairs $(M_1, M_2)$ of $n \times n$ Hermitian matrices.
Here $Z_n$ is a normalization constant, $V$ and $W$ are two
polynomials of even degree and positive leading coefficients,
$\tau\neq0 $ is a coupling constant and ${\rm d}M_i$ denotes the
product of the Lebesgue measures on the independent entries of
$M_i$ for $i=1,2$.
The model was introduced in \cite{IZ,Meh} as a generalization
of the unitary one matrix model which allows for a larger class
of critical phenomena, see also \cite{DKK,Kaz}
and \cite{DiF,DiFGZJ} for a survey.
For more recent advances in the physics literature, see e.g.\
\cite{BergereEynard,Eynard2,EynardOrantin,EynardOrantin2}
and the references cited therein.

An open problem in random matrix theory is to give rigorous
asymptotic results on the eigenvalue statistics of $M_1$ and $M_2$
as $n\to \infty$. A natural approach is to use the connection with
biorthogonal polynomials. In \cite{MehtaShukla} it is shown that
the eigenvalue statistics can be described in terms of two
families of polynomials $(p_{k,n})_k$ and $(q_{l,n})_l$, where
$p_{k,n}$ and $q_{l,n}$ are monic polynomials of degrees $k$ and
$l$, respectively, satisfying
\begin{align} \label{eq:orthogonality}
 \iint_{\R^2} p_{k,n}(x) q_{l,n}(y) & {\rm e}^{-n \big(V(x)+W(y)-\tau
        x y\big)}\ {\rm d} x\ {\rm d}y=0, \qquad k\neq l.
\end{align}
These polynomials are well-defined and have real and simple zeros
\cite{ErcoMcL}. From a complete asymptotic description of the
polynomials, it is possible to compute the limiting behavior of
the eigenvalue statistics. This has been carried out for the case
\begin{align} \label{eq:fullquadratic}
    V(x)=x^2/2 \qquad \text{ and } \quad  W(y)=ay^2/2
\end{align}
in \cite{ErcoMcL}. Although heuristic calculations can be found in
the physics literature \cite{Eynard}, fully rigorous asymptotic
results for the biorthogonal polynomials for more general $V$ and
$W$ are not known.

The orthogonal polynomial approach to random matrices proved to be
successful in the one matrix models. The orthogonal polynomials
appearing in these models are characterized by a $2 \times 2$
matrix valued Riemann-Hilbert problem \cite{FIK2}. In
\cite{DKMVZuniform,DKMVZstrong} the authors applied the Deift/Zhou
steepest descent method to this Riemann-Hilbert problem and
obtained a complete asymptotic description of the polynomials. As
a result it was possible to give a rigorous  proof of the
universality conjecture for the local eigenvalue correlations.

Inspired by the success of \cite{DKMVZuniform,DKMVZstrong},
several attempts were made to study the asymptotic behavior of the
biorthogonal polynomials by Riemann-Hilbert methods. A number of
Riemann-Hilbert problems for the biorthogonal polynomials were
formulated \cite{BEH2,ErcoMcL,KapaevRH,KuijlaarsMcLaughlin}, but a
successful steepest descent analysis has not been carried out so
far. It is included in Deift's list of major open problems in
random matrix theory and the theory of integrable systems
\cite{DeiftOpen}.

In this paper we present the first complete steepest descent
analysis for a case beyond the fully quadratic case
\eqref{eq:fullquadratic}. We analyze the Riemann-Hilbert problem
for the biorthogonal polynomials $p_{n,n}$ given in
\cite{KuijlaarsMcLaughlin} for the special case
    \begin{align} \label{eq:assumptions}
        W(y)=y^4/4 \quad \textrm{ and } \quad V \textrm{ is an  even polynomial}.
    \end{align}
As a result we are able to compute asymptotics of  the eigenvalue
correlations of the matrix $M_1$, when averaged over $M_2$.

\subsection{Unitary ensembles}

Let us first recall some aspects of the unitary ensembles and
orthogonal polynomials. In the  unitary ensemble one considers
$n\times n$ Hermitian matrices taken randomly with respect to  the
probability measure defined by
    \begin{align} \label{eq:onematrixmodel}
        \frac{1}{Z_n} {\rm e}^{-n\Tr V(M)} {\rm d}M,
    \end{align}
where $V$ is such that
\[ \lim_{x \to \pm \infty} \frac{V(x)}{\log(x^2+1)} = +\infty. \]
Let $p_{k,n}$ be the unique monic polynomial of degree $k$
satisfying
    \begin{align}
        \int_{-\infty}^{\infty} p_{k,n}(x)x^j {\rm e}^{-nV(x)}\ {\rm d}x=0, \qquad j=0,\ldots, k-1.
    \end{align}
Then the eigenvalues of $M$ describe a determinantal point process
on $\R$ with kernel $K_n$ defined by
    \begin{align}
        K_n(x,y)= \gamma_{n-1}^2 {\rm e}^{-n\frac{V(x)+V(y)}{2} }\frac{p_{n,n}(x)p_{n-1,n}(y)-p_{n,n}(y)p_{n-1,n}(x)}{x-y},
    \end{align}
where the constant $\gamma_{n-1}$ is the leading coefficient of
the orthonormal polynomial of degree $n-1$. Thus the joint
probability density $\mathcal P$ on the eigenvalues
$x_1,\ldots,x_n$ is (up to a constant) equal to the determinant
    \begin{align}
    \mathcal P(x_1,\ldots, x_n) = \frac{1}{n!} \det
        \left(K_n(x_i,x_j)\right)_{i,j=1}^n,
        \end{align}
and similarly for the marginal densities for $k=1, \ldots, n-1$,
    \begin{multline}
        \underbrace{\int \cdots \int}_{n-k \textrm{ times}}
        \mathcal P(x_1,\ldots,x_n) \ {\rm d}x_{k+1}\cdots {\rm d}x_n
        = \frac{(n-k)!}{n!} \det
        \left(K_n(x_i,x_j)\right)_{i,j=1}^k.
    \end{multline}
In order to compute the asymptotic behavior of the eigenvalue
statistics, it is sufficient  to obtain asymptotics for the
orthogonal polynomials and the kernel $K_n$.

Fokas, Its, and Kitaev \cite{FIK2} characterized the orthogonal
polynomials $p_{n,n}$ in terms of a Riemann-Hilbert problem (RH
problem). It consists of seeking a $2\times 2$ matrix valued
function $Y$ satisfying
    \begin{align} \label{eq:RHP-for-OPs}
        \left\{
            \begin{array}{ll}
                \multicolumn{2}{l}{Y \textrm{ is analytic in } \C\setminus \R,}\\
                Y_+(x)=Y_-(x)
                    \begin{pmatrix}
                        1 & {\rm e}^{-nV(x)}\\
                        0 & 1
                    \end{pmatrix}, & x\in \R,\\
                Y(z)=\left(I+\OO(z^{-1})\right)
                    \begin{pmatrix}
                        z^n & 0 \\
                        0 & z^{-n}
                    \end{pmatrix},& z\to \infty.
            \end{array}
                \right.
    \end{align}
Here $Y_+$ and  $Y_-$ denote the limiting values of $Y$ on $\R$
when $\R$ is approached from above and below, respectively. The
unique solution of the RH problem is given by
    \begin{align}
    Y=
    \begin{pmatrix}
        p_{n,n} & C \left(p_{n,n} {\rm e}^{-nV}\right)\\
        -2\pi {\rm i}\gamma_{n-1}^2 p_{n-1,n} & - 2\pi {\rm i} \gamma_{n-1}^2
            C\left(p_{n-1,n} {\rm e}^{-nV}\right)
    \end{pmatrix},
    \end{align}
where $C f$ denotes the Cauchy transform
\begin{align} \label{eq:CauchyCf}
            (Cf) (z) = \frac{1}{2\pi {\rm i}} \int_{-\infty}^{\infty}
            \frac{f(x)}{x-z} {\rm d}x, \qquad z\in \C\setminus \R.
    \end{align}
In  \cite{DKMVZstrong} and \cite{DKMVZuniform}, the Deift/Zhou
steepest descent method for RH problems \cite{DZ} was applied to
obtain the asymptotic behavior of $Y$ as $n\to \infty$, and hence
of the orthogonal polynomials.

A key ingredient in the steepest descent analysis in
\cite{DKMVZuniform,DKMVZstrong} is an equilibrium measure, see
also \cite{DeiftBook,ST}. This measure is the unique minimizer of the
energy functional
    \begin{align}
        \iint \log\frac{1}{|x-y|}\ {\rm d}\mu(x){\rm d}\mu(y)+\int V(x) {\rm d}\mu(x)
    \end{align}
among all Borel probability measures $\mu$ on $\R$. If $V$ is real
analytic then the equilibrium measure is supported on a finite
number of intervals, has an analytic density in the interior of
each interval and vanishes at the endpoints \cite{DKM}. It is the
weak limit of the normalized counting measure on the zeros of
$p_{n,n}$ as $n\to \infty$. Moreover, it describes the limiting
mean eigenvalue distribution of a matrix from
\eqref{eq:onematrixmodel}.

The papers \cite{DKMVZuniform,DKMVZstrong} had a major impact on
the theory of random matrices and orthogonal polynomials. Inspired
by these papers, several authors extended the methods of
\cite{DKMVZuniform,DKMVZstrong} to obtain asymptotics for
different types of orthogonal polynomials. For example, for
orthogonal polynomials on the half line \cite{Vanlessen}, on the
interval \cite{KMVV}, and on the unit circle \cite{MMS1,MMS2}.
Another important development is the asymptotic analysis for
discrete orthogonal polynomials \cite{BKMM}. In all these cases
the orthogonal polynomials can be characterized in terms of a
$2\times 2$ matrix-valued RH problem and an associated equilibrium
measure plays an important role.

\subsection{Two matrix models and biorthogonal polynomials}

Let us now return to the two matrix model
\eqref{eq:twomatrixmodel} and the biorthogonal polynomials
$p_{k,n}$ and $q_{l,n}$ in \eqref{eq:orthogonality}. In the
one-matrix model  the eigenvalues of the random matrix follow a
determinantal point process on $\R$ whose kernel is the
reproducing kernel corresponding to the  orthogonal polynomials.
In the two matrix model we have a similar result but the situation
is more complicated.

Define the transformed functions
    \begin{align} \label{eq:transformQj}
        Q_{k,n}(x)&={\rm e}^{-n V(x)}\int_{-\infty}^{\infty} q_{k,n}(y) {\rm e}^{-n\big( W(y)-\tau x y\big)} {\rm d}y, \\
        P_{k,n}(y)&={\rm e}^{-n W(y)} \int_{-\infty}^{\infty} p_{k,n}(x) {\rm e}^{-n \big(V(x)-\tau x y\big)} {\rm
        d}x,
    \end{align}
and let $h_{k,n}^2$ be defined as
    \begin{align}
        h_{k,n}^2=\int_{-\infty}^{\infty} \int_{-\infty}^{\infty}
        p_{k,n}(x) q_{k,n}(y) {\rm e}^{-n\big(V(x)+W(y)-\tau xy \big)}\ {\rm d } x {\rm d}y.
    \end{align}
Consider the following kernels
    \begin{align} \label{eq:KernelK11}
            K^{(n)}_{11}(x_1,x_2)&=\sum_{k=0}^{n-1} \frac{1}{h_{k,n}^2 } p_{k,n}(x_1)Q_{k,n}(x_2),\\
            K^{(n)}_{22}(y_1,y_2)&=\sum_{k=0}^{n-1} \frac{1}{h_{k,n}^2 } P_{k,n}(y_1)q_{k,n}(y_2),\\
            K^{(n)}_{12}(x,y)&=\sum_{k=0}^{n-1} \frac{1}{h_{k,n}^2 } p_{k,n}(x)q_{k,n}(y),\\
            K^{(n)}_{21}(y,x)&=\sum_{k=0}^{n-1} \frac{1}{h_{k,n}^2 } P_{k,n}(y) Q_{k,n}(x)-{\rm e}^{-{n\big(V(x)+W(y)-\tau x
            y\big)}}.
    \end{align}
Then the joint probability density function $\mathcal P$ for the
eigenvalues $x_1,\ldots,x_n$ of $M_1$ and the eigenvalues
$y_1,\ldots,y_n$ of $M_2$ is given by \cite{Eynard,MehtaShukla}
(see also \cite[Chapter 23]{Mehta-book}),
\begin{multline}
    \mathcal    P(x_1,\ldots,x_n,y_1,\ldots,y_n) \\
        =\frac{1}{(n!)^2}\det
                \begin{pmatrix}
                \big(K^{(n)}_{11}(x_i,x_j)\big)_{i,j=1}^n & \big(K^{(n)}_{12}(x_i,y_j)\big)_{i,j=1}^n\\
                \big(K^{(n)}_{21}(y_i,x_j)\big)_{i,j=1}^n & \big(K^{(n)}_{22}(y_i,y_j)\big)_{i,j=1}^n
        \end{pmatrix}.
\end{multline}
Moreover, the marginal densities also have the determinantal form
    \begin{multline}
    \underbrace{\int\cdots \int}_{n-k +n-l \textrm{ times}}
    \mathcal P(x_1,\ldots,x_n,y_1,\ldots,y_n) {\rm d}x_{k+1}\cdots{\rm d} x_n {\rm d} y_{l+1}\cdots {\rm d} y_{n}\\
    =\frac{(n-l)!(n-k)!}{(n!)^2}\det \begin{pmatrix}
    \big(K^{(n)}_{11}(x_i,x_j)\big)_{i,j=1}^{k} & \big(K^{(n)}_{12}(x_i,y_j)\big)_{i,j=1}^{k,l}\\
    \big(K^{(n)}_{21}(y_i,x_j)\big)_{i,j=1}^{l,k} & \big(K^{(n)}_{22}(y_i,y_j)\big)_{i,j=1}^{l}
    \end{pmatrix}.
    \end{multline}
After averaging over the eigenvalues of $M_2$, we see that the
eigenvalues of $M_1$ follow a determinantal point process with
kernel $K^{(n)}_{11}$. Similarly, the eigenvalues of $M_2$ follow
a determinantal point process with kernel $K^{(n)}_{22}$. Both
these determinantal point processes are examples of biorthogonal
ensembles in the sense of \cite{Borodin}.

To determine the asymptotic behavior of the eigenvalues of $M_1$
and $M_2$ it is sufficient to determine asymptotic behavior of the
biorthogonal polynomials and the kernels $K^{(n)}_{ij}$.

The biorthogonal polynomials have been studied for many years and many
interesting properties have been discovered
\cite{AvM,Bertola,BertolaEynard,BEH1,BEH2,BHI,ErcoMcL,Eynard,EM1,KapaevRH,KuijlaarsMcLaughlin}.
Although heuristic results on the asymptotic behavior of the biorthogonal polynomials
can be found in \cite{Eynard},
rigorous asymptotic results have not yet been obtained.

A first step to an asymptotic analysis by means of RH methods is
the formulation of a RH problem for the biorthogonal polynomials
\cite{BEH2,ErcoMcL,KapaevRH,KuijlaarsMcLaughlin}. Here we  follow
\cite{KuijlaarsMcLaughlin}. The relationship with the RH problem
of \cite{BEH2} has been clarified in \cite{BHI}. In fact, this
relationship will be exploited as part of our analysis.

Assume $W$ is a polynomial of degree $d_W$ and define $d_W-1$
weights
\begin{align}
    w_{j,n}(x)={\rm e}^{-n V(x)} \int_{-\infty}^{\infty} y^j
    {\rm e}^{-n(W(y)-\tau xy)} \ {\rm d} y, \qquad j=0,1,\ldots, d_W-2.
\end{align}
Then the RH problem associated with the biorthogonal polynomial
$p_{k,n}$ is the following. We look for a $d_W\times d_W$-matrix
valued function $Y$ satisfying the following properties
\begin{equation}\label{RH-problem-pn-general}
    \left\{\begin{array}{l}\multicolumn{1}{l}{Y  \textrm{ is analytic in } \C\setminus \R,}
    \\
    Y_+(x)=Y_-(x) \begin{pmatrix}   1& w_{0,n}(x)& w_{1,n}(x) & \cdots & w_{d_W-2,n}(x)\\
                                0& 1 & 0 & \cdots &0\\
                                \vdots &  & \ddots    & & \vdots \\
                                \vdots &  & &\ddots&0\\
                                0& \cdots & \cdots&  0 & 1
\end{pmatrix},\ \  x\in \R,\\
Y(z)=(I+\OO(1/z)) \diag\left(z^k, z^{-k_1}, \ldots,
z^{-k_{d_W-1}}\right), \ \  z\to \infty.\end{array}\right.
\end{equation}
where $k_j$ is the integer part of $(k+d_W-1-j)/(d_W-1)$.  In
\cite{KuijlaarsMcLaughlin} it is proved that this RH problem has a
unique solution given by

\begin{equation}\label{eq:solutionRHproblemY}
  Y=\begin{pmatrix}
    p_{k,n} & C(p_{k,n}w_{0,n}) & \cdots  & C(p_{k,n}w_{d_W-2,n}) \\
    p_{k-1,n}^{(0)} & C(p_{k-1,n}^{(0)}w_{0,n}) & \cdots  &C( p_{k,n}^{(0)}w_{d_W-2,n}) \\
    \vdots & \vdots&  &\vdots\\
    p_{k-1,n}^{(d_W-2)} & C(p_{k-1,n}^{(d_W-2)}w_{0,n}) & \cdots&
    C(p_{k-1,n}^{(d_W-2)}w_{d_W-2,n})
  \end{pmatrix}
\end{equation}
where $p_{k,n}$ is the monic biorthogonal polynomial of degree $k$
and $p_{k-1,n}^{(j)}$, $j=0, \ldots, d_W-2$, are certain
polynomials of degree $\leq k-1$. Here $Cf$ denotes the Cauchy
transform as given in \eqref{eq:CauchyCf}.

In the case $d_W=2$ the RH problem  \eqref{RH-problem-pn-general}
reduces to the $2\times 2$ matrix valued RH problem for orthogonal
polynomials. Then the biorthogonal polynomials $p_{k,n}$ are
simply orthogonal polynomials on the real line with respect to a
varying exponential weight, see also \cite{ErcoMcL}.

For $d_W > 2$ the biorthogonal polynomials $p_{k,n}$ do not reduce
to orthogonal polynomials, but instead they are examples of what
is known as multiple orthogonal polynomials. Multiple orthogonal
polynomials for $r$ weights are characterized by $(r+1)\times
(r+1)$ matrix valued RH problems \cite{VAGK}, and
\eqref{RH-problem-pn-general} is an example of such a RH problem.
The steepest descent analysis has not been applied to the RH
problem \eqref{RH-problem-pn-general}.

Certain other systems of multiple orthogonal polynomials and their
associated RH problems were successfully analyzed recently. These
polynomials were related to matrix models with external source
\cite{ABK,BK1,BK2,McL} and non-intersecting paths \cite{DaKu2,DKV,KMW}.
See also \cite{BGS1,BGS2} for a $3 \times 3$ matrix valued RH
problem describing a Cauchy two matrix model. The steepest descent
analysis of the larger size RH problems revealed several new
features that are not present in the analysis of the $2 \times 2$
RH problem \eqref{eq:RHP-for-OPs}.

In our opinion, an important obstacle in the asymptotic
analysis of \eqref{RH-problem-pn-general} is the lack of a
tractable equilibrium problem.  One of the main contribution of the present paper
is  a suitable equilibrium problem that we use in the steepest descent
analysis of \eqref{RH-problem-pn-general} for the special case
\eqref{eq:assumptions}.

\subsection{RH problem for case $W(y) = y^4/4$}

In this paper, we will not treat the general two matrix model with
polynomial potentials, but we restrict ourselves to the special
case \eqref{eq:assumptions}. Due to the assumption $W(y)=y^4/4$ the RH problem
\eqref{RH-problem-pn-general} is of size $4\times 4$.
The assumption that $V$ is an even polynomial introduces a
symmetry with respect to the imaginary axis into the problem. This
will be important for the steepest descent analysis.
We also make
the additional assumption that $n$ is a multiple of three. This is not
essential and is made only for reasons of exposition, since it
simplifies many of the formulas.

Due to these assumptions the RH problem
\eqref{RH-problem-pn-general} characterizing the biorthogonal
polynomial $p_{n,n}$ takes the form
\begin{equation}\label{RH-problem-pn}
\left\{\begin{array}{ll}\multicolumn{2}{l}{Y  \textrm{ is analytic
in } \C\setminus \R,} \\
Y_+(x)=Y_-(x) \begin{pmatrix}   1& w_{0,n}(x)& w_{1,n}(x)& w_{2,n}(x)\\
                                0&1&0&0\\
                                0&0&1&0\\
                                0&0&0&1
\end{pmatrix},& x\in \R,\\
Y(z)=(I+\OO(1/z))\begin{pmatrix} z^n & 0&0&0\\
                                0&z^{-n/3}&0&0\\
                                0&0&z^{-n/3}&0\\
                                0&0&0&z^{-n/3}
\end{pmatrix}, & z\to \infty.\end{array}\right.
\end{equation}
with
\begin{equation} \label{eq:functionswj}
    w_{j,n}(x)={\rm e}^{-n V(x)} \int_{-\infty}^{\infty} y^j {\rm e}^{-n(y^4/4-\tau
    xy)} \ {\rm d}y, \qquad j=0,1,2,
\end{equation}

This RH problem has a unique solution which is given by
\eqref{eq:solutionRHproblemY}. A central observation in
\cite{KuijlaarsMcLaughlin} that leads to this result,  is that the
polynomial $p_{n,n}$ is a multiple orthogonal polynomial with
respect to the three weights $w_{j,n}$ on the real line. This
means that $p_{n,n}$ is the unique monic polynomial of degree $n$
such that
\begin{equation}
    \int_{-\infty}^{\infty} p_{n,n}(x) x^l w_{j,n}(x) \ {\rm d}x=0,
    \qquad l=0,\ldots, \frac{n}{3} -1, \quad j=0,1,2.
\end{equation}

The polynomials $p_{k,n}$ are multiple orthogonal polynomials of
type II. There are also multiple orthogonal polynomials of type I.
These appear in the function $Q_{k,n}$ of \eqref{eq:transformQj},
which can be written as
\begin{align}
    Q_{k,n}(x)=\sum_{j=0}^3 A^{(j)}_{(k,n)}(x) w_{j,n}(x),
\end{align}
for certain polynomials $A^{(j)}_{(k,n)}$, called multiple
orthogonal polynomials of type I. In \cite{DaKu}  a
Christoffel-Darboux formula is proved for the reproducing kernel
associated to multiple orthogonal polynomials in a general
setting. In our case it applies to the kernel $K_{11}^{(n)}$ of
\eqref{eq:KernelK11}. As a result of the Christoffel-Darboux
formula the kernel $K_{11}^{(n)}$ can be expressed in terms of the
solution $Y$ to the RH problem \eqref{RH-problem-pn} as follows.
\begin{proposition}
We have that
\begin{align}\label{eq:kernelintermsofY}
K_{11}^{(n)}(x,y)=\frac{1}{2\pi{\rm i}(x-y)}\begin{pmatrix} 0 &
w_{0,n} (y)& w_{1,n}(y) & w_{2,n}(y)
\end{pmatrix} Y_+^{-1}(y)Y_+(x) \begin{pmatrix}
1\\ 0 \\ 0\\ 0
\end{pmatrix},
\end{align}
for  $x,y\in \R$.
\end{proposition}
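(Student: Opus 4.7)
The plan is to derive \eqref{eq:kernelintermsofY} from the Christoffel--Darboux identity for multiple orthogonal polynomials of \cite{DaKu}, after identifying the polynomials appearing in that identity with specific entries of $Y$ and $Y^{-1}$.

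The starting structural observation is that the jump matrix in \eqref{RH-problem-pn} decomposes as $J(y) = I + e_1 v(y)^T$, where $e_1 = (1,0,0,0)^T$ and $v(y)^T = \begin{pmatrix} 0 & w_{0,n}(y) & w_{1,n}(y) & w_{2,n}(y) \end{pmatrix}$ is precisely the row vector appearing on the right-hand side of \eqref{eq:kernelintermsofY}. Consequently $Y_+(y)-Y_-(y) = Y_\pm(y)\, e_1 v(y)^T$, which on the one hand confirms that the first column of $Y$ is entire (it consists of the polynomials $p_{n,n}, p_{n-1,n}^{(0)}, p_{n-1,n}^{(1)}, p_{n-1,n}^{(2)}$ listed in \eqref{eq:solutionRHproblemY}), and on the other hand exhibits $v(y)^T Y_+(y)^{-1}$ as the first row of $I - Y_+^{-1}Y_-$. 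Hence the matrix product on the right of \eqref{eq:kernelintermsofY} encodes a pairing between the first column of $Y(x)$ (type II polynomials) and the Cauchy-type discontinuities carried by the remaining columns of $Y(y)$.

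The second ingredient I would assemble is a description of $Y^{-1}$ in terms of the type I multiple orthogonal polynomials. Since the jump matrix has determinant $1$ and $Y(z) \sim \diag(z^n,z^{-n/3},z^{-n/3},z^{-n/3})$ at infinity, one has $\det Y\equiv 1$, and Cramer's rule identifies the entries of the first row of $Y^{-1}(y)$ with the type I polynomials $A^{(j)}_{(n,n)}(y)$ and their Cauchy transforms $C(A^{(j)}_{(n,n)} w_{j,n})(y)$, up to signs. This is the standard companion statement to the RH problem in \cite{VAGK,KuijlaarsMcLaughlin}, and it gives concrete meaning to $v(y)^T Y_+^{-1}(y)$ as a row vector built out of boundary values of $A^{(j)}_{(n,n)}(y)\, w_{j,n}(y)$ for $j=0,1,2$.

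With these two ingredients in hand, the Christoffel--Darboux formula from \cite{DaKu}, specialized to the present near-diagonal three-weight setting, expresses $(x-y)K_{11}^{(n)}(x,y)$ as a bilinear form in the collections $\{p_{n,n}(x), p_{n-1,n}^{(j)}(x)\}_{j=0,1,2}$ and $\{A^{(j)}_{(n,n)}(y)\, w_{j,n}(y)\}_{j=0,1,2}$. By the preceding two paragraphs these are respectively the first column of $Y_+(x)$ and the entries of $v(y)^T Y_+^{-1}(y)$, so the CD expression reassembles into the matrix product in \eqref{eq:kernelintermsofY}. I expect the main obstacle to be the combinatorial bookkeeping: tracking signs, the normalizations involving $h_{k,n}^2$, and the exact correspondence between the rows of $Y^{-1}$ and the type I polynomials $A^{(j)}_{(n,n)}$, so that the CD formula collapses cleanly into the claimed product. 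Once that bookkeeping is settled, the identity is the $4\times 4$ analogue of the familiar formula $K_n(x,y) = \frac{1}{2\pi{\rm i}(x-y)}\begin{pmatrix} 0 & {\rm e}^{-nV(y)}\end{pmatrix} Y_+^{-1}(y) Y_+(x)\begin{pmatrix} 1 \\ 0\end{pmatrix}$ for the unitary ensemble.
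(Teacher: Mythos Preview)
Your proposal is correct and follows exactly the approach the paper takes: the paper's proof consists solely of the reference ``See \cite{DaKu}'', invoking the Christoffel--Darboux formula for multiple orthogonal polynomials, which is precisely what you outline. Your elaboration on how the type~II polynomials sit in the first column of $Y$ and the type~I data in the first row of $Y^{-1}$ is the standard mechanism behind that citation, so there is nothing to add.
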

\Proof See \cite{DaKu}. \Endproof

\section{Statement of results}

Our results deal with the limiting behavior as $n \to \infty$ of
the eigenvalues of the matrix $M_1$ in the two matrix model
\[  \frac{1}{Z_n}{\rm e}^{-n \Tr \big(V(M_1)+W(M_2)-\tau M_1 M_2\big)} {\rm d} M_1 {\rm d}M_2
    \]
with $\tau > 0$ under the following assumptions
\begin{enumerate}
    \item[(a)] $V$ is an even polynomial,
    \item[(b)] $W(y) = y^4/4$, and
    \item[(c)] $n$ is a multiple of three.
\end{enumerate}

Guionnet \cite{Guionnet} showed  that the limiting mean eigenvalue
density of the matrix $M_1$ exists. Her result is valid in much
greater generality than stated in the next proposition.

\begin{proposition}
The limiting mean density of the eigenvalues of the matrix $M_1$
exists. That is, there exists a probability measure $\mu_1$ on
$\R$  with respect to Lebesgue measure such that in the sense of
weak convergence of measures
    \begin{align} \label{eq:limitingmeandensity0}
    \lim_{n\to \infty}\frac{1}{n} K^{(n)}_{11}(x,x) \ {\rm d}x ={\rm d}
    \mu_1.
    \end{align}
\end{proposition}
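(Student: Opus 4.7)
The plan is to deduce this from Guionnet's large deviation principle \cite{Guionnet} for the empirical eigenvalue measures in the two matrix model, together with the basic identity that interprets $\tfrac{1}{n} K_{11}^{(n)}(x,x)\, dx$ as the mean empirical measure of the eigenvalues of $M_1$.

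First I would unpack the determinantal structure. Integrating the joint density $\mathcal{P}$ over $y_1,\dots,y_n$ in the marginal formula displayed above shows that, after averaging over $M_2$, the eigenvalues of $M_1$ form a determinantal point process on $\R$ with correlation kernel $K_{11}^{(n)}$. Writing $L_n^{(1)} = \tfrac{1}{n}\sum_{i=1}^{n}\delta_{\lambda_i(M_1)}$ for the empirical measure of the eigenvalues of $M_1$, this yields the identity
\[
    \mathbb{E}\!\left[\int_{\R} f\, dL_n^{(1)}\right] = \frac{1}{n}\int_{\R} f(x)\,K_{11}^{(n)}(x,x)\,dx
\]
for every bounded measurable $f$; equivalently, $\tfrac{1}{n}K_{11}^{(n)}(x,x)\,dx = \mathbb{E}[L_n^{(1)}]$ as measures on $\R$.

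Next I would appeal to Guionnet's theorem. Hypotheses (a)--(b) guarantee the confinement needed for her result: the quartic growth of $W$ dominates the linear coupling $\tau x y$ uniformly in $x$ as $y\to\infty$, so the joint weight $\exp\!\big(-n\Tr(V(M_1)+W(M_2)-\tau M_1M_2)\big)$ is integrable and fits into her framework. \cite{Guionnet} then provides a large deviation principle for the joint empirical measure $(L_n^{(1)}, L_n^{(2)})$ with a good rate function on pairs of probability measures, together with the convergence of $L_n^{(1)}$ to a deterministic probability measure $\mu_1$ in probability in the weak topology.

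To conclude, fix $f \in C_b(\R)$. The in-probability convergence $\langle L_n^{(1)}, f\rangle \to \langle \mu_1, f\rangle$, combined with the deterministic bound $|\langle L_n^{(1)}, f\rangle| \le \|f\|_\infty$, gives by bounded convergence
\[
    \int_{\R} f(x)\,\frac{1}{n}K_{11}^{(n)}(x,x)\,dx = \mathbb{E}\langle L_n^{(1)}, f\rangle \longrightarrow \int_{\R} f\,d\mu_1,
\]
which is exactly \eqref{eq:limitingmeandensity0}. The only genuine obstacle along this route is verifying that Guionnet's hypotheses apply in our setting and that her rate function has a unique minimizer under (a)--(b); once these two points are settled, the determinantal identity for $\mathbb{E}[L_n^{(1)}]$ and bounded convergence make the rest automatic. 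An independent identification of $\mu_1$, including its absolute continuity and an explicit formula for its density, will in any case follow a posteriori from the steepest descent analysis of \eqref{RH-problem-pn} carried out in the remainder of the paper via the representation \eqref{eq:kernelintermsofY}.
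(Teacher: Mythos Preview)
Your proposal is correct and takes essentially the same approach as the paper: the paper's proof consists of the single line ``See \cite{Guionnet},'' and your argument is a reasonable unpacking of what that citation entails, namely that Guionnet's large deviation principle gives almost sure (hence in probability) convergence of the empirical eigenvalue measure to a deterministic limit, and the determinantal identity $\tfrac{1}{n}K_{11}^{(n)}(x,x)\,dx = \mathbb{E}[L_n^{(1)}]$ together with bounded convergence then transfers this to the mean. The caveats you flag (checking Guionnet's hypotheses and uniqueness of the minimizer) are exactly the points the paper implicitly defers to \cite{Guionnet}, so there is no gap beyond what the paper itself leaves to that reference.
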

\begin{proof}
See \cite{Guionnet}.
\end{proof}

Guionnet characterized $\mu_1$ by a variational problem. Under the
above assumptions (a)-(b), we are going to characterize $\mu_1$ in
terms of an equilibrium problem from potential theory. It is not
clear to us if the two minimization problems are related to each
other.

\subsection{The equilibrium problem}
\label{subsec:EquilibriumProblem}

As already mentioned, we characterize $\mu_1$ by means of an
equilibrium problem from potential theory. Main references for
potential theory in the complex plane are \cite{Ransford,ST}.

We will work with (non-negative) measures on $\C$, but only on
$\R$ and ${\rm i}\R$.  A signed measure is real and can be
negative. The support of a measure $\nu$ is denoted by $S(\nu)$.
The measure may have unbounded support. If a measure $\nu$ has
unbounded support then we assume that
\begin{equation}\label{cond:unbounded-support}
    \int \log(1+|x|) \ {\rm d} \nu(x)<\infty.
\end{equation}
The logarithmic potential of $\nu$ is the function
\begin{equation}
    U^{\nu}(z) = \int \log \frac{1}{|z-x|} {\rm d}\nu(x).
\end{equation}
 The logarithmic energy of $\nu$ is defined as
\begin{equation}
    I(\nu) = \iint \log \frac{1}{|x-y|}  {\rm d} \nu(x) {\rm d} \nu (y)
\end{equation}
and $I(\nu) \in (-\infty, +\infty]$. If $\nu_1$ and $\nu_2$ are
two measures with finite logarithmic energy, then their mutual
logarithmic energy is defined by
\begin{align}
    I(\nu_1,\nu_2)=\iint \log \frac{1}{|x-y|} \ {\rm d} \nu_1(x) {\rm d} \nu_2 (y).
\end{align}
For a given $V$ and $\tau > 0$, we study the energy functional
$E_V$ defined by
\begin{multline}
        E_V(\nu_1,\nu_2,\nu_3) \\
         = \sum_{j=1}^3 I(\nu_j) - \sum_{j=1}^2 I(\nu_j,\nu_{j+1}) \label{energyfunctional}
          +\int\left( V(x)-\frac{3}{4}\tau^{4/3} |x|^{4/3}\right)  {\rm d}\nu_1(x)
\end{multline}
where $\nu_1$, $\nu_2$, $\nu_3$ are three measures with finite
logarithmic energy. The equilibrium problem is the following.

\begin{definition}\label{def:equilibriummeasure}
The equilibrium problem is to minimize $E_V(\nu_1, \nu_2, \nu_3)$
among all measures $\nu_1$, $\nu_2$, $\nu_3$ such that
\begin{itemize}
\item[(a)] the measures have finite logarithmic energy;

\item[(b)] $\nu_1$ is a measure on $\R$ with $\nu_1(\R) = 1$;

\item[(c)] $\nu_2$ is a measure on ${\rm i}\R$ with $\nu_2({\rm i} \R) = 2/3$;

\item[(d)] $\nu_3$ is a measure on $\R$ with $\nu_3(\R) = 1/3$;

\item[(e)] $\nu_2$ satisfies the upper constraint $\nu_2 \leq
\sigma$ where $\sigma$ is the (unbounded) measure on ${\rm i}\R$
defined by
\begin{align}\label{eq:densitysigma}
{\rm d}\sigma(z)=\frac{\sqrt{3}}{2\pi} \tau^{4/3} |z|^{1/3} | {\rm d}z|,
\qquad z \in {\rm i}\R,
\end{align}
where $|{\rm d}z|$ denotes the arclength on ${\rm i}\R$.
\end{itemize}
\end{definition}

An electrostatic interpretation of the equilibrium problem for the
energy functional \eqref{energyfunctional} is the following.
Consider three types of charged particles. The particles of the
first type are put on $\R$ and have total charge $1$. The
particles of the second type are put on ${\rm i}\R$ and have total
charge $2/3$. The particles of the third type are put on $\R$ and
have total charge $1/3$. Particles of the same type repel each
other. The particles of the first and the second types attract
each other with a strength that is half the strength of the
repulsion of particles of the same type. So do the particles of
the second and the third types. Particles of the first and the
third types do not interact directly. Particles of the first type
are influenced by an external field depending on $V$ and $\tau$.

In the equilibrium problem the particles distribute themselves in
order to minimize their energy under the extra condition that the
particle density of the second type particles does not exceed the
density of $\sigma$, where $\sigma$ is the given measure
\eqref{eq:densitysigma}. Thus $\sigma$ acts as an upper constraint
on the second measure.

Equilibrium problems for a vector of measures with mutual
interaction as in \eqref{energyfunctional} arise for
Nikishin systems \cite{NS} in the theory of rational approximation,
see also the survey \cite{Apt},
and the more recent papers \cite{AKV,BB,DuKu,KMW,LL,VAGK}.

Equilibrium problems with constraint appeared before in asymptotic
results for discrete orthogonal polynomials
\cite{BKMM,Beckermann,Dragnev-Saff,KR,Rakhmanov}, singular limits
of integrable systems \cite{DM}, and convergence results for
Krylov methods in numerical linear algebra \cite{KuijlaarsSIAM}.

We prove the following.
\begin{theorem} \label{prop:structureEqmeasure}
Let $V$ be an even polynomial and $\tau > 0$. Then there is a
unique minimizer $(\mu_1,\mu_2,\mu_3)$  of $E_V$ subject to the
conditions {\rm (a)}--{\rm (e)} in the equilibrium problem.
Moreover,
\begin{enumerate}
\item[(a)] The measure $\mu_1$ is supported on a finite number of
    disjoint intervals $\bigcup_{j=1}^N [a_j, b_j]$ with a density of
    the form
    \begin{equation}\label{eq:measuremu1psi}
        \frac{{\rm d}\mu_1(x)}{{\rm d}x}= h_j(x)\sqrt{(b_j-x)(x-a_j)},
        \qquad x \in [a_j,b_j],
    \end{equation}
    where $h_j$ is real analytic and non-negative on $[a_j,b_j]$, for
    $j=1, \ldots, N$.

\item[(b)] $S(\mu_2)={\rm i} \R$ and there exists a constant $c>0$
such that
\[ S(\sigma-\mu_2) = {\rm i} \R \setminus (-{\rm i}c,{\rm i}c). \]
The measure $\sigma-\mu_2$ has an analytic density on ${\rm i
\R}\setminus (-{\rm i}c,{\rm i}c)$ which vanishes as a square root
at $\pm {\rm i}c$. Moreover, the logarithmic potential $U^{\mu_2}$
is such that
    \begin{align} \label{eq:analyticVeff}
        \frac{3}{4} \tau^{4/3} |x|^{4/3} + U^{\mu_2}(x)
    \qquad \textrm{ is real analytic on } \R.
    \end{align}

\item[(c)] $S(\mu_3)=\R$ and $\mu_3$ has a density which is
analytic in $\R \setminus\{0\}$.

\item[(d)] all three measures $\mu_1$, $\mu_2$ and $\mu_3$ are
symmetric in the sense that $\mu_j(-A) = \mu_j(A)$ for $j=1,2,3$,
and for every Borel set $A$.
\end{enumerate}
\end{theorem}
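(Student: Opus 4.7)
The plan is to prove the theorem in four stages: existence and uniqueness, symmetry, derivation of Euler--Lagrange conditions, and deduction of the structural claims. For \textbf{existence and uniqueness}, the quadratic part of $E_V$ is given by the Nikishin-type interaction matrix
\[
A = \begin{pmatrix} 1 & -\tfrac12 & 0 \\ -\tfrac12 & 1 & -\tfrac12 \\ 0 & -\tfrac12 & 1 \end{pmatrix},
\]
which is positive definite. Combined with positive definiteness of the logarithmic kernel on signed measures of mass zero (with appropriate integrability at infinity), this yields strict convexity of $E_V$ on the admissible affine set. Lower semicontinuity is standard, and a minimizing sequence is tight: $\mu_1$ by the growth $V(x)-\tfrac{3}{4}\tau^{4/3}|x|^{4/3}\to+\infty$ (since $\deg V\ge 4$), $\mu_2$ by the upper constraint and its finite mass, and $\mu_3$ by interaction estimates. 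Helly's selection together with lower semicontinuity gives existence, and strict convexity gives uniqueness. \textbf{Symmetry} (d) is then immediate from uniqueness, because the reflection $x\mapsto-x$ preserves $\R$, $\mathrm{i}\R$, $V$, $|x|^{4/3}$, and $\sigma$, and hence preserves both $E_V$ and all the constraints.

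Standard perturbation arguments produce constants $\ell_1,\ell_2,\ell_3$ and the \textbf{Euler--Lagrange} inequalities
\begin{align*}
2U^{\mu_1}-U^{\mu_2}+V-\tfrac{3}{4}\tau^{4/3}|x|^{4/3} &\geq \ell_1 \text{ on } \R, \text{ equality on } S(\mu_1),\\
2U^{\mu_2}-U^{\mu_1}-U^{\mu_3} &= \ell_2 \text{ on } S(\sigma-\mu_2),\ \leq \ell_2 \text{ where } \mu_2=\sigma,\\
2U^{\mu_3}-U^{\mu_2} &\geq \ell_3 \text{ on } \R, \text{ equality on } S(\mu_3).
\end{align*}
The key observation that unlocks the rest is the \emph{balayage identity}
\[
U^\sigma(x)+\tfrac{3}{4}\tau^{4/3}|x|^{4/3}=\mathrm{const},\qquad x\in\R,
\]
in the regularized sense of subtracting the value at $x=0$. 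Substituting $u=x/y$ in the integral for $U^\sigma(x)-U^\sigma(0)$ reduces the computation to the Mellin-type integral $\int_0^\infty\log(1+u^2)u^{-7/3}\,du=\pi\sqrt3/2$ (integration by parts followed by $\int_0^\infty v^{-2/3}(1+v)^{-1}\,dv=\pi/\sin(\pi/3)$), which matches the normalization in \eqref{eq:densitysigma}. In other words, $\sigma$ is chosen to be the unique unbounded measure on $\mathrm{i}\R$ whose logarithmic potential on $\R$ cancels the non-analytic function $\tfrac{3}{4}\tau^{4/3}|x|^{4/3}$.

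From here the \textbf{structural claims} follow. Writing $\mu_2=\sigma-(\sigma-\mu_2)$ and subtracting at $0$ gives
\[
\tfrac{3}{4}\tau^{4/3}|x|^{4/3}+U^{\mu_2}(x)=U^{\mu_2}(0)-\bigl(U^{\sigma-\mu_2}(x)-U^{\sigma-\mu_2}(0)\bigr),\qquad x\in\R;
\]
since $\sigma-\mu_2$ is supported on $\mathrm{i}\R\setminus(-\mathrm{i}c,\mathrm{i}c)$ with density growing only like $|z|^{1/3}$ at infinity, the right-hand side is a convergent integral, real analytic on $\R$, proving \eqref{eq:analyticVeff} and hence (b). The variational equation for $\mu_1$ then becomes a scalar equilibrium condition in the real analytic external field $V-(\tfrac{3}{4}\tau^{4/3}|x|^{4/3}+U^{\mu_2})$, and (a) follows from the theory of \cite{DKM}: support on finitely many intervals with square-root vanishing density at each endpoint. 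For (c), the effective external field on $\mu_3$ is $-U^{\mu_2}$, whose asymptotic $\tfrac{2}{3}\log|x|$ exactly balances the self-repulsion of a mass-$\tfrac13$ measure on $\R$; the confining condition is marginal, so $S(\mu_3)=\R$, and the density follows from inverting a Cauchy-type equation, inheriting analyticity from $U^{\mu_2}$ off $\{0\}$. The symmetric argument on $\mathrm{i}\R$ (effective field $-U^{\mu_1}-U^{\mu_3}\sim\tfrac{4}{3}\log|z|$ at infinity, matching the self-repulsion of mass $\tfrac23$) yields $S(\mu_2)=\mathrm{i}\R$.

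The \textbf{main obstacle} is the precise identification of the coincidence set $\{\mu_2=\sigma\}=[-\mathrm{i}c,\mathrm{i}c]$ and the square-root vanishing of $\sigma-\mu_2$ at its free boundary $\pm\mathrm{i}c$. This is a one-dimensional obstacle problem for $\mu_2$ under the upper bound $\sigma$: nonemptiness of the coincidence set follows because $|z|^{1/3}$ is too small near $0$ to match the unconstrained equilibrium demand; boundedness follows because $\sigma$'s density eventually dominates; symmetry from (d) restricts the free boundary to a symmetric pair $\pm\mathrm{i}c$; and the square-root vanishing at the endpoints is the standard regularity at the free boundary in this type of problem. Making these qualitative statements rigorous, together with the precise analyticity claims used throughout, is the technical heart of the argument.
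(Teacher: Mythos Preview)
Your uniqueness argument via positive definiteness of the Nikishin interaction matrix is equivalent to the paper's decomposition
\[
E_V = \tfrac{1}{3}I(\nu_1) + \tfrac{1}{3}I(2\nu_1-3\nu_2) + \tfrac{1}{4}I(\nu_2-2\nu_3) + \int\bigl(V-\tfrac{3}{4}\tau^{4/3}|x|^{4/3}\bigr)\,d\nu_1,
\]
and your regularized balayage identity for $U^\sigma$ on $\R$ is the same computation the paper carries out (in differentiated form) in the proof of Lemma~\ref{lemma:equilibriumnu1}(b) to establish \eqref{eq:analyticVeff}. So on those points the two arguments coincide.

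The genuine gap is in your existence proof. You claim tightness of a minimizing sequence for $\nu_2$ ``by the upper constraint and its finite mass'', but the constraint $\nu_2 \leq \sigma$ does \emph{not} yield tightness: the density of $\sigma$ is $\frac{\sqrt{3}}{2\pi}\tau^{4/3}|z|^{1/3}$, which \emph{grows} at infinity, so a sequence of admissible $\nu_2$'s can push all of its mass $2/3$ out to infinity while staying below $\sigma$. The paper flags exactly this obstruction (the minimizing densities decay only like $|z|^{-5/3}$, so the tails are fat) and resolves it by imposing an \emph{auxiliary} decaying constraint $\nu_2 \leq K\sigma_p$ with $d\sigma_p(z)=|z|^{-p}\,|dz|$, $1<p<5/3$. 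This extra bound does force tightness of $\nu_2$ and, via $\nu_3=\tfrac{1}{2}\Bal(\nu_2,\R)$ and Lemma~\ref{lemma:estimateonnu3}, of $\nu_3$ as well; existence under the tighter constraint then follows by Helly, and a balayage estimate (Corollary~\ref{cor:estimateonBalnu3} together with \eqref{eq:constantK}) shows the auxiliary bound is strictly inactive for the minimizer, so it can be discarded. Your ``interaction estimates'' for $\nu_3$ would have to replace this machinery, and you have not indicated what they are.

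A secondary point concerns the coincidence set $\{\mu_2=\sigma\}$. Your obstacle-problem framing is correct in spirit, but as you concede it is not carried out, and generic free-boundary regularity does not automatically give a \emph{single} interval or the square-root vanishing. The paper's route (Lemma~\ref{lemma:equilibriumnu2}) is more constructive: an iterated balayage starting from $\tfrac{1}{2}\Bal(\nu_1+\nu_3,\mathrm{i}\R)$ produces an increasing sequence of saturation intervals $[-\mathrm{i}c_k,\mathrm{i}c_k]$, and the explicit balayage kernels onto $K_c$ (Example~\ref{example2}) have densities strictly decreasing in $|z|$. This monotonicity simultaneously forces the coincidence set to be one interval and the density of $\sigma-\mu_2$ to vanish as a square root at $\pm\mathrm{i}c$, without invoking abstract obstacle-problem theory.
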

The symmetry property in part (d) of Theorem
\ref{prop:structureEqmeasure} is a direct consequence of the
uniqueness of the minimizer and the fact that $V$ is even. We
state it explicitly here, since it will be used many times in what
follows, often without explicit mentioning it.

For given $\nu_2$, $\nu_3$, we can minimize $E_V(\nu_1, \nu_2,
\nu_3)$ with respect to $\nu_1$ only. Then we look for the
minimizer of the energy functional
\begin{align} \label{eq:energyfunctionalmu1}
    I(\nu_1)+\int\left( V(x)-\frac{3}{4}\tau^{4/3} |x|^{4/3}-U^{\nu_2}(x) \right) {\rm d}\nu_1(x),
\end{align}
among all probability measures $\nu_1$ on $\R$.  This is a usual
equilibrium problem with external field \cite{DeiftBook,ST}. The
external field is analytic in $\R\setminus \{0\}$, but possibly
not at $0$. However, due to \eqref{eq:analyticVeff} the external
field is analytic at $0$ if $\nu_2 = \mu_2$ is part of the
minimizer $(\mu_1, \mu_2, \mu_3)$ of the full equilibrium problem.
Thus $\mu_1$ is the minimizer of an energy functional with
analytic external field on $\R$. The statements in Theorem
\ref{prop:structureEqmeasure} about the structure of the
measure $\mu_1$ then follow from results of \cite{DKM}.

If we minimize only with respect to $\nu_3$, with $\nu_1$ and
$\nu_2$ fixed, then we minimize the energy functional
\begin{align}
I(\nu_3)-\int U^{\nu_2}(x) \ {\rm d}\nu_3(x),
\end{align}
among all $\nu_3$ on $\R$ with total mass $1/3$.  Again this is an
equilibrium problem with external field, but the external field
$-U^{\nu_2}(x)$ is only slowly growing at infinity. In
\cite{Simeonov} such an external field is called weakly
admissible. It is a consequence of the slow growth that the
support $S(\nu_3)$ of the minimizer will be unbounded, and in fact
it will be the full real line.

Finally, if we fix $\nu_1$ and $\nu_3$, and minimize with respect
to $\nu_2$, then we look for the minimizer of the energy
functional
\begin{align}
I(\nu_2)-\int\left( U^{\nu_1}(x)+U^{\nu_3}(x) \right) {\rm
d}\nu_2(x),
\end{align}
among all $\nu_2$ on ${\rm i}\R$ with total mass $2/3$ satisfying
the constraint $\nu_2\leq \sigma$. Here we again have a weakly
admissible external field, but in addition there is the upper
constraint. Again, we will have that the support of the minimizer
is unbounded, and indeed it is the full imaginary axis. The
constraint $\sigma$ is active on a symmetric interval $[-{\rm i}c,
{\rm i}c]$ around the origin.

\subsection{Variational conditions and regular/singular behavior}
From the discussion above we see that each of $\mu_1$, $\mu_2$,
$\mu_3$ is the minimizer for an equilibrium problem with external
field and/or upper constraint, and as such they are characterized
by the following set of variational conditions.
\begin{proposition} \label{prop:variationalconditions}
 The measures $\mu_1$, $\mu_2$ and $\mu_3$ satisfy
\begin{align}
      2U^{\mu_1}(x)&=U^{\mu_{2}}(x)-V(x)+\frac{3}{4}\tau^{4/3} |x|^{4/3}+\ell,
        \quad x\in S(\mu_1), \label{eq:variationalcondmu1a}\\
      2U^{\mu_1}(x)&\geq U^{\mu_{2}}(x)-V(x)+\frac{3}{4}\tau^{4/3} |x|^{4/3}+\ell,
        \quad x\in \R\setminus S(\mu_1), \label{eq:variationalcondmu1b}
\end{align}
for some constant $\ell$,
\begin{align}
      2U^{\mu_2}(x)&=U^{\mu_1}(x)+U^{\mu_3}(x),
        \quad x\in S(\sigma-\mu_2), \label{eq:variationalcondmu2a} \\
      2U^{\mu_2}(x)&<U^{\mu_1}(x)+U^{\mu_3}(x),
        \quad x\in {\rm i}\R\setminus S(\sigma-\mu_2),
        \label{eq:variationalcondmu2b}
\end{align}
and
\begin{align}
      2U^{\mu_3}(x)&=U^{\mu_{2}}(x),
        \qquad x\in S(\mu_3) = \R. \label{eq:variationalcondmu3}
\end{align}
\end{proposition}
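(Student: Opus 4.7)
The plan is to exploit the three reduced scalar equilibrium problems identified in the discussion preceding the proposition: freezing two of the three measures at their minimizing values reduces the minimization of $E_V$ to a standard (or constrained) one-measure equilibrium problem, to which the classical Euler--Lagrange characterization of \cite{DeiftBook,ST} applies directly. This yields equalities on the supports and (weak) inequalities off them, with a Lagrange multiplier to be identified in each case.

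For \eqref{eq:variationalcondmu1a}--\eqref{eq:variationalcondmu1b}, freezing $(\mu_2,\mu_3)$ turns the problem into the ordinary equilibrium problem \eqref{eq:energyfunctionalmu1}; by \eqref{eq:analyticVeff} the effective external field $V(x)-\tfrac{3}{4}\tau^{4/3}|x|^{4/3}-U^{\mu_2}(x)$ is real analytic on $\R$, so the standard EL theorem produces the Robin constant $\ell$. For \eqref{eq:variationalcondmu3}, freezing $(\mu_1,\mu_2)$ yields a weakly admissible equilibrium problem for $\mu_3$ whose minimizer is supported on all of $\R$ by Theorem~\ref{prop:structureEqmeasure}(c); EL then gives $2U^{\mu_3}-U^{\mu_2}\equiv F_3$ on $\R$, and the logarithmic asymptotics $U^{\mu_j}(x)=-\mu_j(\text{total})\log|x|+o(1)$ as $|x|\to\infty$, together with $\mu_3(\R)=\tfrac{1}{3}$ and $\mu_2({\rm i}\R)=\tfrac{2}{3}$, force both sides to tend to zero and hence $F_3=0$. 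The analysis for \eqref{eq:variationalcondmu2a} and the weak form of \eqref{eq:variationalcondmu2b} is parallel but with an upper constraint: freezing $(\mu_1,\mu_3)$ gives a constrained equilibrium problem on ${\rm i}\R$ with external field $-(U^{\mu_1}+U^{\mu_3})$; the constrained EL conditions (cf.\ \cite{Dragnev-Saff,KR,Rakhmanov}) yield a constant $F_2$ with equality on $S(\sigma-\mu_2)$ and the reverse weak inequality on the saturation set $(-{\rm i}c,{\rm i}c)$, and the same asymptotic matching ($\sim -\tfrac{4}{3}\log|z|$ on both sides) gives $F_2=0$.

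The main obstacle is upgrading the weak inequality to the strict form \eqref{eq:variationalcondmu2b} on $(-{\rm i}c,{\rm i}c)$. Here I would use that the one-dimensional restriction $g({\rm i}t) = (U^{\mu_1}+U^{\mu_3})({\rm i}t) - 2U^{\mu_2}({\rm i}t)$ is real analytic on $(-c,0)\cup(0,c)$, since by Theorem~\ref{prop:structureEqmeasure}(b) the density of $\mu_2=\sigma$ on that interval is the explicit real analytic function \eqref{eq:densitysigma} (away from the origin) and $\mu_1,\mu_3$ are supported on $\R$. Moreover $g\ge0$ on $(-c,c)$, $g\equiv0$ on $\{|t|\ge c\}$ by \eqref{eq:variationalcondmu2a}, and $\sigma-\mu_2$ vanishes like a square root at $\pm{\rm i}c$ by Theorem~\ref{prop:structureEqmeasure}(b), providing the boundary regularity. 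A standard maximum-principle/Hopf-type argument, together with the symmetry $t\mapsto -t$, then rules out interior zeros of $g$ on $(-c,0)\cup(0,c)$; the single point $t=0$ is handled by continuity and the special $|t|^{1/3}$ behavior of $\sigma$ at the origin. Together these deliver the strict inequality \eqref{eq:variationalcondmu2b}.
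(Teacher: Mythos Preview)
Your approach to the equalities and to the weak forms of the inequalities is sound and essentially parallel to the paper's. The paper also freezes two measures and applies the one-measure theory, but instead of invoking Euler--Lagrange conditions and then matching asymptotics to pin down the Lagrange multipliers, it uses balayage directly: the minimizer $\mu_3$ is $\tfrac12\Bal(\mu_2,\R)$, so $2U^{\mu_3}=U^{\mu_2}$ on $\R$ with no additive constant; and the constant for $\mu_2$ vanishes because the balayages onto $K_c$ and onto $\R$ all have $C=0$ in \eqref{balayage}. Your asymptotic-matching route is a legitimate alternative, though you should verify that $U^{\mu_2}$ and $U^{\mu_3}$ really satisfy $U^{\mu}(z)=-\|\mu\|\log|z|+o(1)$ despite their unbounded supports (the $O(|z|^{-5/3})$ tails established in Section~4 make this true, but it is not automatic).

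The genuine gap is in your argument for the \emph{strict} inequality \eqref{eq:variationalcondmu2b}. General constrained-equilibrium theory only delivers the weak inequality $2U^{\mu_2}\le U^{\mu_1}+U^{\mu_3}$ on the saturation interval. Your proposed upgrade via ``real analyticity plus a maximum-principle/Hopf argument'' does not go through. Set $g=U^{\mu_1}+U^{\mu_3}-2U^{\mu_2}$ as a function on $\C$. Near a point ${\rm i}t_0\in(-{\rm i}c,{\rm i}c)\setminus\{0\}$ the potentials $U^{\mu_1},U^{\mu_3}$ are harmonic (their supports lie on $\R$), while $-2U^{\mu_2}$ is subharmonic with distributional Laplacian $4\pi\mu_2=4\pi\sigma$, which is a positive measure there. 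Hence $g$ is \emph{sub}harmonic, not superharmonic, so an interior zero of the nonnegative function $g$ on the segment contradicts no maximum principle; and Hopf's lemma, which concerns boundary points of a domain where the function is harmonic and of one sign in the interior, is not applicable either. One-dimensional real analyticity alone does not help: a nonnegative real-analytic function may vanish at interior points (e.g.\ $(t-t_0)^2$). The point $t=0$ is no easier; continuity from $(-c,0)\cup(0,c)$ would only give $g(0)\ge0$.

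The paper obtains the strict inequality by a different, constructive mechanism: the iterated balayage of Lemma~\ref{lemma:equilibriumnu2}. Starting from $\nu_{2,0}=\tfrac12\Bal(\mu_1+\mu_3,{\rm i}\R)$, which satisfies $2U^{\nu_{2,0}}=U^{\mu_1}+U^{\mu_3}$ exactly on ${\rm i}\R$, one repeatedly sweeps the excess $(\nu_{2,k}-\sigma)^+$ onto $K_{c_k}$. Each sweep strictly lowers the potential on the interval from which mass was removed, because the balayage of a nonzero measure supported in $(-{\rm i}c_k,{\rm i}c_k)$ onto the disjoint set $K_{c_k}$ has strictly smaller potential on $(-{\rm i}c_k,{\rm i}c_k)$. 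Passing to the limit yields $2U^{\mu_2}<U^{\mu_1}+U^{\mu_3}$ on all of $(-{\rm i}c,{\rm i}c)$. If you want to avoid this construction, you must supply an independent reason why equality cannot occur at an interior point of the saturation set; your current sketch does not provide one.
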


Note that the inequality in \eqref{eq:variationalcondmu2b} is
strict. By contrast the inequality in
\eqref{eq:variationalcondmu1b} need not be strict. Then as in the
case of the one-matrix model and orthogonal polynomials
\cite{DKMVZuniform} we can distinguish between regular and
singular cases as follows.

\begin{definition} \label{def:regular}
The measure $\mu_1$ is called regular (otherwise singular) if the
functions $h_j$ in \eqref{eq:measuremu1psi} satisfy $h_j>0$ on
$[a_j,b_j]$ for every $j=1, \ldots, N$, and if the inequality in
\eqref{eq:variationalcondmu1b} is strict.

A point $x^* \in \R$ at which one of these conditions fails is a
singular point. There are three types of singular points, namely
\begin{itemize}
    \item $x^*$ is a singular end point if $x^* \in \{a_j,b_j\}$
    for some $j=1, \ldots, N$ and $h_j(x^*) = 0$,
    \item $x^*$ is a singular interior point if $x^* \in
    (a_j,b_j)$ for some $j=1, \ldots, N$ and $h_j(x^*) = 0$,
    \item $x^*$ is a singular exterior point if $x^*\in \R\setminus S({\mu_1})$ and equality holds
    for $x = x^*$ in \eqref{eq:variationalcondmu1b}.
\end{itemize}
\end{definition}

We see that the structure for the measure $\mu_1$ in the global
regime is the same as what is known for the one-matrix model in a
real analytic external field \cite{DKMVZuniform}. It is known that
the regular case holds generically in one-matrix models \cite{KM1}
and all three types of singular behavior can occur. The same is
true in the present two matrix model, but the maybe surprising
fact is that no other type of singular behavior can occur. We
emphasize that this is related to our assumption that $W(y) =
y^4/4$.

We conclude with an easy to check convexity result, which is
analogous to the well-known fact that for a convex external field
the equilibrium measure is supported on one interval. The
following is a similar result for the equilibrium problem of
Definition \ref{def:equilibriummeasure}.

\begin{proposition} \label{Th:convexity}
Suppose that $V$ is even and that  $x\mapsto V(\sqrt x )$ is
convex for $x > 0$. Then $S(\mu_1)$ consists of one or two
intervals.
\end{proposition}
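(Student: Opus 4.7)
The plan is to reduce the problem to a classical single-interval result for equilibrium measures with convex external fields. As explained after Theorem~\ref{prop:structureEqmeasure}, $\mu_1$ is the standard one-variable equilibrium measure on $\R$ for the effective external field
\[
V_{\mathrm{eff}}(x) = V(x) - \tfrac{3}{4}\tau^{4/3}|x|^{4/3} - U^{\mu_2}(x),
\]
which is real analytic on $\R$ by Theorem~\ref{prop:structureEqmeasure}(b) and even by the symmetry of $V$ and $\mu_2$. I would prove that $y\mapsto V_{\mathrm{eff}}(\sqrt y)$ is convex on $(0,\infty)$ and then invoke the standard symmetrization argument: push $\mu_1|_{[0,\infty)}$ forward under $x\mapsto y = x^2$, use the identity $\log|x^2-u^2| = \log|x-u|+\log|x+u|$ together with the evenness of $\mu_1$ to recognize the image as the equilibrium measure on $[0,\infty)$ with external field $V_{\mathrm{eff}}(\sqrt y)$; convexity of this external field forces connected support $[A,B]\subset[0,\infty)$, and pulling back yields $S(\mu_1)=[-\sqrt B,-\sqrt A]\cup[\sqrt A,\sqrt B]$, which is one interval when $A=0$ and two intervals when $A>0$.

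Since $V(\sqrt y)$ is convex by hypothesis, it suffices to show that
\[
\Psi(y) := \tfrac{3}{4}\tau^{4/3}\,y^{2/3} + U^{\mu_2}(\sqrt y)
\]
is \emph{concave} on $(0,\infty)$. Writing $d\mu_2(it)=\rho_2(t)\,dt$, we have $U^{\mu_2}(\sqrt y) = -\tfrac{1}{2}\int \log(y+t^2)\,\rho_2(t)\,dt$, and two differentiations give
\[
\Psi''(y) = -\tfrac{1}{6}\tau^{4/3}y^{-4/3} + \tfrac{1}{2}\int_{-\infty}^{\infty}\frac{\rho_2(t)\,dt}{(y+t^2)^{2}}.
\]
Here I would invoke the upper constraint $\mu_2 \leq \sigma$ from Definition~\ref{def:equilibriummeasure}(e). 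Combining the substitution $u = t^{2}/y$ with the beta identity $B(\tfrac{2}{3},\tfrac{4}{3}) = 2\pi/(3\sqrt 3)$ yields
\[
\int_{-\infty}^{\infty}\frac{|t|^{1/3}\,dt}{(y+t^2)^{2}} = y^{-4/3}\int_{0}^{\infty}\frac{u^{-1/3}\,du}{(1+u)^{2}} = \frac{2\pi}{3\sqrt{3}}\, y^{-4/3},
\]
and this, together with the explicit density $\tfrac{\sqrt 3}{2\pi}\tau^{4/3}|t|^{1/3}$ of $\sigma$, delivers precisely $\tfrac{1}{2}\int\rho_2(t)(y+t^2)^{-2}\,dt \leq \tfrac{1}{6}\tau^{4/3}y^{-4/3}$. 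Hence $\Psi''(y)\leq 0$, so $\Psi$ is concave and $V_{\mathrm{eff}}(\sqrt y)=V(\sqrt y)-\Psi(y)$ is convex.

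The main obstacle — and the conceptual heart of the argument — is this sharp cancellation: the density of the upper constraint $\sigma$ in Definition~\ref{def:equilibriummeasure} has been calibrated precisely so that $\mu_2 \leq \sigma$ provides the exact bound needed to offset the concavity contribution of the $y^{2/3}$ term. Without this calibration, the concavity of $-U^{\mu_2}(\sqrt y)$ (which is immediate from $\mu_2\geq 0$ alone) would spoil the desired convexity. Once this estimate is in place, the remainder of the proof — the symmetrization, the standard single-interval criterion for convex external fields on $[0,\infty)$, and the pull-back to $\R$ — is routine, and yields the claimed bound of at most two intervals as the natural analogue of the classical one-matrix fact that a convex external field produces a single support interval.
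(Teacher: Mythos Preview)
Your strategy is the same as the paper's: push $\mu_1$ forward under $x\mapsto x^2$, recognize the image as the equilibrium measure on $[0,\infty)$ in the external field $V_{\mathrm{eff}}(\sqrt y)$, and show that this field is convex so that the support is a single interval. The paper cites \cite[Theorem~IV.1.10(f)]{ST} for the symmetrization step, while you sketch it by hand, but this is the same argument.

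Where your proof is genuinely different --- and in fact sharper --- is in the verification of convexity. The paper's printed proof writes
\[
-U^{\mu_2}(\sqrt{x})=-\tfrac{1}{2}\int \log(x+|z|^2)\, {\rm d}\mu_2(z)
\]
and concludes that $x\mapsto -U^{\mu_2}(\sqrt{x})$ is convex. This carries a sign slip: since $U^{\mu_2}(w)=-\int\log|w-z|\,d\mu_2(z)$ and $|\sqrt{x}-z|^2=x+|z|^2$ for $z\in{\rm i}\R$, the correct identity is $-U^{\mu_2}(\sqrt{x})=+\tfrac{1}{2}\int\log(x+|z|^2)\,d\mu_2(z)$, which is \emph{concave}. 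Thus the paper's term-by-term convexity argument, taken at face value, does not close. You anticipated exactly this difficulty: you group $\tfrac{3}{4}\tau^{4/3}y^{2/3}+U^{\mu_2}(\sqrt{y})$ as $\Psi(y)$, compute $\Psi''$, and use the upper constraint $\mu_2\le\sigma$ together with the beta integral $B(\tfrac{2}{3},\tfrac{4}{3})=2\pi/(3\sqrt{3})$ to show that the convex part of $U^{\mu_2}(\sqrt{y})$ is exactly dominated by the concavity of $\tfrac{3}{4}\tau^{4/3}y^{2/3}$. This is the correct repair; your remark that the density of $\sigma$ is ``calibrated precisely'' for this cancellation is apt. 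The remainder of your argument (connected support for convex fields on a half-line, pull-back to $\R$) matches the paper.
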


The proposition applies to all even quartic potentials
$V(x) = a x^4 + b x^2$ with $a > 0$.

\subsection{One-cut regular case}

The core of the present paper is  a Deift/Zhou steepest descent analysis for
the RH problem for the biorthogonal polynomials. As a particular result we are
able to obtain the limiting behavior of the kernel $K_{11}^{(n)}$ as $n\to \infty$.
The steepest descent analysis simplifies  in the one-cut regular case, that is, the
measure $\mu_1$ is regular and supported on one interval. We will only deal with this
case in this paper, but our methods can be  extended to the non-regular and multi-cut
situations.

The non-regular cases for unitary ensembles have
been treated recently, see  \cite{Bleher-Its2,Claeys-Kuijlaars,Claeys-Kuijlaars-Vanlessen}
for singular interior points, \cite{Claeys-Vanlessen} for singular
endpoints and \cite{Bertola-Lee,Claeys,Mo} for singular exterior
points.  All three singular cases can appear in our situation, and
can be implemented within our methods to obtain the limiting
behavior of the polynomials $p_{n,n}$ and the kernel $K_{11}^{(n)}$ for
the singular cases as well. We will
not go into details here.

As part of our analysis we will express the minimizer $(\mu_1,\mu_2,\mu_3)$ in
terms of a meromorphic functions defined on a four sheeted Riemann surface. The
genus of the Riemann surface equals $N-1$ where $N$ is the number of intervals
in the support of $\mu_1$. Hence the genus is non-zero in the multi-cut case,
which complicates the construction of the outside parametrix in the steepest
descent analysis. For reasons of exposition we only deal here with the one-cut
case and plan to return to the multi-cut case in future work.

\subsection{Global eigenvalue regime}

Our first main theorem states that the measure $\mu_1$ that we obtain as part of the
minimizer $(\mu_1, \mu_2, \mu_3)$ for the equilibrium problem for
$E_V$ is the limiting mean  distribution for the eigenvalues of
$M_1$. We prove this in this paper for the one-cut regular case.

\begin{theorem} \label{th:eigenvaluedistribution}
Let $W(y) = y^4/4$, let $V$ be an even polynomial and $\tau > 0$.
Let $(\mu_1, \mu_2, \mu_3)$ be the minimizer of the equilibrium
problem described in Subsection
    \ref{subsec:EquilibriumProblem} above. Suppose that the measure
$\mu_1$ is one-cut regular.

Then we have as $n \to \infty$ with $n \equiv 0 (\mod 3)$
    \begin{align} \label{eq:limitingmeandensity}
    \lim_{n\to \infty}\frac{1}{n} K^{(n)}_{11}(x,x) =
    \frac{{\rm d}\mu_1}{{\rm d}x}(x),
    \end{align}
    uniformly for $x \in \R$.
\end{theorem}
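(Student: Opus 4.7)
The plan is to apply the Deift/Zhou steepest descent method to the $4\times 4$ RH problem \eqref{RH-problem-pn} and extract the diagonal asymptotics of $K_{11}^{(n)}$ via \eqref{eq:kernelintermsofY}. The vector equilibrium problem of Definition \ref{def:equilibriummeasure} provides the three $g$-functions
\begin{align*}
    g_j(z) = \int \log(z-s)\, d\mu_j(s), \qquad j=1,2,3,
\end{align*}
with cuts on $\R$ for $g_1, g_3$ and on ${\rm i}\R$ for $g_2$. I would perform a first transformation $Y \mapsto T$ of the form $T(z) = L \cdot Y(z) \cdot \diag\bigl(e^{-ng_1(z)}, e^{n(g_1-g_2)(z)-n\ell_1}, e^{n(g_2-g_3)(z)-n\ell_2}, e^{ng_3(z)-n\ell_3}\bigr)$, with a constant diagonal $L$ and constants $\ell_1, \ell_2, \ell_3$ chosen so that $T$ is normalized at infinity. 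The variational identities \eqref{eq:variationalcondmu1a}--\eqref{eq:variationalcondmu3} translate into a factorization of the resulting jumps of $T$ on $\R \cup {\rm i}\R$ suitable for opening lenses, with purely oscillatory diagonals on $S(\mu_1)$ and on $S(\sigma - \mu_2) = {\rm i}\R \setminus (-{\rm i}c, {\rm i}c)$, and identically one on the complements.

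Next I would open lenses around $S(\mu_1)$ and around $S(\sigma - \mu_2)$, producing $T \mapsto S$. The strict inequalities \eqref{eq:variationalcondmu1b} and \eqref{eq:variationalcondmu2b}, together with the one-cut regular hypothesis, guarantee that the lens jumps of $S$ decay to the identity exponentially in $n$ on compact subsets away from the endpoints. The remaining ingredients are the parametrices: a global parametrix $P^{(\infty)}$ constructed from meromorphic functions on the four-sheeted Riemann surface naturally attached to $(\mu_1, \mu_2, \mu_3)$, of genus zero by the one-cut hypothesis so that a rational uniformization is available; local Airy parametrices at $\pm b$ and at $\pm {\rm i}c$, exploiting the square-root vanishing of $\mu_1$ and of $\sigma - \mu_2$ at these points; and no local parametrix at the origin, thanks to the real analyticity \eqref{eq:analyticVeff} of the effective external field for $\mu_1$. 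Setting $R = S \cdot P^{-1}$ yields a small-norm RH problem with jumps $I + O(1/n)$, whence $R = I + O(1/n)$ uniformly in $\C$, together with its $x$-derivatives on compact subsets of $\R$ away from the endpoints.

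Unwinding $Y \leftrightarrow T \leftrightarrow S \leftrightarrow R$ and substituting into \eqref{eq:kernelintermsofY}, the diagonal $K_{11}^{(n)}(x,x)$ (obtained by l'H\^opital, using that the first entry of $\vec{w}(x)$ vanishes) reduces at leading order to a quantity controlled by the global parametrix and by the jump identity $g_{1,+}(x) - g_{1,-}(x) = -2\pi{\rm i}\, \mu_1([x,\infty))$ on $S(\mu_1)$. A direct computation then gives $\frac{1}{n} K_{11}^{(n)}(x,x) = \frac{d\mu_1}{dx}(x) + O(1/n)$ uniformly on compact subsets of the interior of $S(\mu_1)$. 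Uniformity up to the endpoints $\pm b$ is obtained from the matching with the Airy parametrices, and uniformity on $\R \setminus S(\mu_1)$ follows from the exponential decay inherited from the strict inequality \eqref{eq:variationalcondmu1b}.

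The principal obstacle I expect is the construction of the global parametrix. Even in the genus-zero one-cut case, the parametrix must satisfy jumps on two disjoint contours ($\R$ and ${\rm i}\R$) coming from two distinct active sets, and its four-sheeted structure demands careful tracking of branches on each sheet; in particular, verifying that it has the correct multiplicative behavior at $\infty$ on all four sheets (matching the exponents $n$ and $-n/3$ in \eqref{RH-problem-pn}) is the most delicate point. A secondary technical issue is controlling the error bound on $R$ uniformly up to $x = 0$, where $\mu_3$ has an analytic density only on $\R\setminus\{0\}$; this is ultimately handled by \eqref{eq:analyticVeff} together with the absence of an active constraint near the origin, but it is not automatic.
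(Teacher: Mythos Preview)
Your outline has a genuine gap at the very first step. The RH problem \eqref{RH-problem-pn} for $Y$ has a jump \emph{only on $\R$}, and that jump carries the three weights $w_{j,n}(x)=e^{-nV(x)}\int y^j e^{-n(y^4/4-\tau xy)}\,dy$, which are rescaled Pearcey integrals, not bare exponentials. If you conjugate $Y$ directly by the diagonal of $g$-functions as you propose, two things go wrong: (i) the off-diagonal entries on $\R$ still contain the Pearcey functions $w_{j,n}$ and do \emph{not} reduce to a single factor $e^{-n(V-\frac34\tau^{4/3}|x|^{4/3})}$ that the variational condition \eqref{eq:variationalcondmu1a} can neutralize; and (ii) the only jump you pick up on ${\rm i}\R$ comes from the branch cut of $g_2$ and is purely diagonal, so there is no off-diagonal $2\times2$ block there to factorize and open a lens around $S(\sigma-\mu_2)$. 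In short, the measure $\mu_2$ on ${\rm i}\R$ never enters the analysis in the way you describe, and the external-field term $-\frac34\tau^{4/3}|x|^{4/3}$ in the equilibrium problem has no counterpart in your jump.

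What the paper does --- and what is missing from your plan --- is a preliminary transformation $Y\mapsto X$ that multiplies on the right by a $3\times3$ block built from three independent Pearcey solutions $p_j$ (different choices in each quadrant to handle the Stokes phenomenon). This collapses the row $(w_{0,n},w_{1,n},w_{2,n})$ to $(e^{-nV},0,0)$ on $\R$, manufactures a genuine lower-triangular jump on ${\rm i}\R$, and produces the extra $\frac34\tau^{4/3}|z|^{4/3}$ in the exponent; only after this are the $g$-functions applied. You should also note that a lens must be opened around $S(\mu_3)=\R$ (the unbounded support of $\mu_3$), not only around $S(\mu_1)$ and $S(\sigma-\mu_2)$; this is the step that handles the lower-right $2\times2$ block and is where $\mu_3$ actually does its work. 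Once the Pearcey step is in place, the rest of your outline (global parametrix via the genus-zero Riemann surface, Airy parametrices at $\pm a$ and $\pm{\rm i}c$, small-norm argument for $R$, and the kernel computation) matches the paper's route.
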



\subsection{Local eigenvalue regime}

In the local eigenvalue regime we obtain the same universal
limiting behaviors that are known from one-matrix models.
This is in agreement with the universality conjecture in random
matrix theory, which says that local eigenvalue correlations  in
random matrix models with unitary symmetry do not depend on the
particular features of the model, but only on the global regime.

We again restrict ourselves to the one-cut regular case. So we
assume that the measure $\mu_1$ from the solution of the
equilibrium problem is supported on one interval and we write
$S(\mu_1)=[-a,a]$.

\begin{theorem} \label{th:universality}
Let $W(y)=y^4/4$, let $V$ be an even polynomial and $\tau > 0$.
Assume that the measure $\mu_1$ is one-cut regular.
\begin{enumerate}
\item[{\rm (a)}] Let $x^*\in (-a,a)$ and define
 \begin{align}
 \rho := \frac{{\rm d} \mu_1}{{\rm d} x} (x^*) > 0.
 \end{align}
 Then, as $n \to \infty$ with $n\equiv 0 (\mod 3)$ we have for every $k \in \N$
 that
 \begin{multline}\label{eq:universalitysine}
        \lim_{n\to\infty} \det\left(\frac{1}{\rho n} K^{(n)}_{11}
        \left(x^*+\frac{u_i}{\rho n}, x^*+\frac{u_j}{\rho n}\right)\right)_{i,j=1}^k \\
        =
        \det\left( \frac{\sin\pi(u_i-u_j))}{\pi(u_i-u_j)}\right)_{i,j=1}^k.
\end{multline}
uniformly for $(u_1,\ldots,u_k)$ in compact subsets of $\R^k$.

\item[{\rm (b)}] Let $\rho>0$ be such that
\begin{align*}
        \frac{{\rm d}\mu_1(x)}{{\rm d}x}=\frac{\rho}{\pi}(a-x)^{1/2}\left(1+\OO(a-x)\right),
        \qquad \textrm{as } x \nearrow a.
        \end{align*}
         Then, as $n\to \infty$ with $n\equiv 0 (\mod 3)$ we have
         for every $k \in \N$ that
\begin{multline}\label{eq:universalityairy}
    \lim_{n\to \infty} \det\left(\frac{1}{ (\rho n)^{2/3}} K^{(n)}_{11}\left(a+\frac{u_i}{(\rho n)^{2/3}}, a+\frac{u_j}{(\rho n)^{2/3}}\right)\right)_{i,j=1}^k \\
    =\det\left(\frac{\Ai(u_i)\Ai'(u_j)-\Ai'(u_i)\Ai(u_j)}{u_i-u_j}\right)_{i,j=1}^k,
        \end{multline}
      uniformly for $(u_1,\ldots,u_k)$ in compact subsets of $\R^k$. Here $\Ai$ denotes the usual Airy function.
\end{enumerate}
\end{theorem}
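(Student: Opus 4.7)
The plan is to derive Theorem \ref{th:universality} from a Deift/Zhou steepest descent analysis of the $4\times 4$ RH problem \eqref{RH-problem-pn} for $Y$, combined with the explicit representation \eqref{eq:kernelintermsofY} of $K_{11}^{(n)}$ in terms of the boundary values $Y_+$. The equilibrium triple $(\mu_1,\mu_2,\mu_3)$ of Theorem \ref{prop:structureEqmeasure}, together with the variational conditions in Proposition \ref{prop:variationalconditions}, is exactly what is needed to feed into the first transformation.

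I would perform the standard chain $Y \mapsto T \mapsto S \mapsto R$. The first step uses $g$-functions $g_j(z)=\int\log(z-s)\,{\rm d}\mu_j(s)$ built from $\mu_1,\mu_2,\mu_3$. The equalities \eqref{eq:variationalcondmu1a}, \eqref{eq:variationalcondmu2a}, \eqref{eq:variationalcondmu3} convert the jumps on $S(\mu_1)=[-a,a]$, on $S(\sigma-\mu_2)={\rm i}\R\setminus(-{\rm i}c,{\rm i}c)$, and on $\R$ into purely oscillatory unimodular factors, while the strict inequalities in \eqref{eq:variationalcondmu1b}, \eqref{eq:variationalcondmu2b} (valid off the supports in the one-cut regular case) produce exponential decay there. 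The second transformation opens lenses around $[-a,a]$ and around the analytic arcs on ${\rm i}\R$, leaving only jumps supported on $[-a,a]$, on $(-{\rm i}c,{\rm i}c)$, and (asymptotically inert) on $\R$ as non-negligible in $S$. The final transformation $R=S\cdot M^{-1}$ compares $S$ to a global parametrix $N$ outside small disks $D_\delta(\pm a)$ and to Airy-type local parametrices $P_{\pm a}$ inside them; the matching $P_{\pm a}=(I+\OO(1/n))N$ on $\partial D_\delta(\pm a)$ and the exponential smallness of all remaining jumps force $R$ to solve a small-norm RH problem, so $R(z)=I+\OO(1/n)$ uniformly.

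For part (a), fix $x^*\in(-a,a)$ and rescale $x=x^*+u/(\rho n)$, $y=x^*+v/(\rho n)$; for large $n$ both points lie outside $D_\delta(\pm a)$, so in \eqref{eq:kernelintermsofY} only $N$ and $R=I+\OO(1/n)$ contribute after unwinding $T$ and $S$. The non-oscillatory $g_j$-prefactors cancel pairwise in the row-column product of \eqref{eq:kernelintermsofY}, leaving the oscillatory factor $e^{\pm 2\pi i n\int_{x^*}^{x} {\rm d}\mu_1}$, whose phase expands as $\pm 2\pi i \rho(x-x^*)+\OO((x-x^*)^2)$. A direct computation, structurally identical to the bulk analysis of \cite{DKMVZuniform}, produces the sine kernel in the variables $u,v$, whence \eqref{eq:universalitysine} by continuity of $\det$. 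For part (b), at the right edge $a$ I rescale by the Airy scale $(\rho n)^{-2/3}$ so that $x,y$ land inside $D_\delta(a)$; substituting the explicit Airy parametrix $P_a$, built from $\Ai$ and $\Ai'$ exactly as in \cite{DKMVZuniform}, into \eqref{eq:kernelintermsofY} and invoking $R=I+\OO(1/n)$ yields the Airy kernel $(\Ai(u)\Ai'(v)-\Ai'(u)\Ai(v))/(u-v)$ in the limit, giving \eqref{eq:universalityairy}.

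The main obstacle is the construction of the global parametrix $N$, which must carry jumps on $[-a,a]$, on ${\rm i}\R$, and on $\R$ dictated by the three equilibrium measures simultaneously, including the extra cut on ${\rm i}\R\setminus(-{\rm i}c,{\rm i}c)$ produced by the upper constraint on $\mu_2$. The natural route is to realize $N$ through meromorphic functions on the four-sheeted Riemann surface associated with $(\mu_1,\mu_2,\mu_3)$, which in the one-cut regular case has genus zero and therefore admits a rational uniformization; the entries of $N$ are then built from the four branches of the uniformizing map, with the correct behavior at $\infty$ enforced to match the normalization in \eqref{RH-problem-pn}. Once $N$ is pinned down, the Airy parametrix construction, the small-norm argument for $R$, and the extraction of the sine and Airy kernels from \eqref{eq:kernelintermsofY} all follow the established template of \cite{DKMVZuniform}.
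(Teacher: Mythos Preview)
Your outline has a genuine gap: you propose to apply the $g$-functions $g_j$ directly to the RH problem \eqref{RH-problem-pn} for $Y$, but that RH problem has a jump \emph{only on $\R$}, and its jump matrix carries the three Pearcey-type weights $w_{0,n},w_{1,n},w_{2,n}$ of \eqref{eq:functionswj} rather than a single varying exponential. There is no contour on ${\rm i}\R$ for $\mu_2$ to act on, and no block structure in which $\mu_1,\mu_2,\mu_3$ can be attached to separate $2\times2$ subblocks. So the step ``the equalities \eqref{eq:variationalcondmu1a}, \eqref{eq:variationalcondmu2a}, \eqref{eq:variationalcondmu3} convert the jumps on $S(\mu_1)$, on $S(\sigma-\mu_2)$, and on $\R$ into purely oscillatory unimodular factors'' cannot be executed as stated.

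What is missing is the preliminary transformation $Y\mapsto X$ (the Bertola--Harnad--Its trick): one multiplies on the right by a $3\times3$ block built from Wronskian matrices of Pearcey integrals $p_j(z)=\int_{\Gamma_j}e^{-s^4/4+sz}\,{\rm d}s$, chosen differently in each quadrant. This collapses the triple $(w_{0,n},w_{1,n},w_{2,n})$ into the single weight $e^{-n(V(x)-\frac34\tau^{4/3}|x|^{4/3})}$ in the $(1,2)$ entry, introduces (via the Stokes phenomenon for the Pearcey functions) a new constant jump on ${\rm i}\R$, and produces exactly the $2\times2$ block pattern in which $\mu_1$ acts on rows/columns $1,2$, $\mu_2$ on $2,3$, and $\mu_3$ on $3,4$. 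Only after this step do the $g$-functions and the equilibrium conditions do what you describe. A secondary omission is that local Airy parametrices are needed not only at $\pm a$ but also at $\pm{\rm i}c$, where the density of $\sigma-\mu_2$ vanishes like a square root; without them the matching on $\partial D_\delta(\pm{\rm i}c)$ fails and the small-norm argument for $R$ does not close. Once the $Y\mapsto X$ step and the parametrices at $\pm{\rm i}c$ are in place, the rest of your outline---lens opening, genus-zero Riemann surface parametrix, Airy parametrices at $\pm a$, small-norm $R$, and extraction of the sine and Airy kernels from \eqref{eq:kernelintermsofY}---matches the paper's route.
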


In Theorem \ref{th:universality} we only deal with the one-cut
regular case. The non-regular cases that can occur in our situation,
also appear in the unitary ensembles. In these cases,
we also obtain the same limits for  the  kernel $K_{11}^{(n)}$
as in the unitary ensembles. For example, if the density of the
measure $\mu_1$ vanishes quadratically at the an interior point
of the support, then the limiting behavior of the kernel $K_{11}^{(n)}$,
in a double scaling limit, is expressed in terms of the $\Psi$-functions
associated to the Hastings-McLeod solution of the  Painlev\'e II equation,
see also \cite{Bleher-Its2,Claeys-Kuijlaars}.

\subsection{Overview of the rest of the paper}

The proof of Theorem \ref{th:universality}  follows from a
Deift/Zhou steepest descent analysis applied to the RH problem for
biorthogonal polynomials. The Deift/Zhou steeepest descent
analysis consists of a sequence of explicit and  invertible
transformations
\begin{align}
   Y \mapsto X \mapsto U \mapsto T \mapsto S \mapsto R.
\end{align}
The RH problem for $Y$ is not in a suitable form for an immediate
application of the equilibrium measures from the equilibrium
problem. In the first transformation $Y \mapsto X$ we transform
the RH problem for $Y$ to a RH problem for $X$ that is more
suitable for further asymptotic analysis. This transformation
depends on a method that we learned from the authors of
\cite{BHI}. The construction involves Pearcey integrals and is
presented in Section 3.

In the second transformation $X\mapsto U$ we will use the
equilibrium measures $\mu_1$, $\mu_2$ and $\mu_3$ from the
equilibrium problem of Definition \ref{def:equilibriummeasure}
 and their
corresponding $g$-functions. In Section 4 we discuss the
equilibrium problem and we prove Theorem
\ref{prop:structureEqmeasure} and Propositions
\ref{prop:variationalconditions} and \ref{Th:convexity}.

The properties of the $g$-functions are conveniently expressed in
terms of functions that come from a four sheeted Riemann surface,
which in the one-cut case has genus zero. We introduce the Riemann
surface in Section 5. The second transformation of the RH problem
$X \mapsto U$ is given in Section 6.

In the transformation $U \mapsto T\mapsto S$ in Section 7 we open lenses.
In the transformation $U \mapsto T$ we open the lens around the supports of $\mu_2$ and $\mu_3$. These supports are
unbounded so special care has to be taken at infinity. In the transformation $T \mapsto S$ we open the lenses
around the support of $\mu_1$.

The next step is the construction of a parametrices in
Section 8. Here the Riemann surface is used again. We give a
rational parametrization which allows us to give explicit formulas
for the outside parametrix. In the multi-cut case the Riemann
surface has higher genus and the construction of the outside
parametrix is more complicated. We construct local Airy parametrices around the
branch points of the Riemann surface. These are the endpoints $\pm
a$ of the support of $\mu_1$ and the endpoints $\pm {\rm i}c$ of
the support of $\sigma-\mu_2$. The final transformation $S \mapsto R$ is also given in Section 8. It
leads to a RH problem for $R$ with jump matrices that are
uniformly close to the identity matrix. The RH problem for $R$ is
also  normalized at infinity.

The proofs of Theorems \ref{th:eigenvaluedistribution} and
\ref{th:universality} are in the final Section 9.

\section{The first transformation $Y\mapsto X$}

The starting point is the RH problem \eqref{RH-problem-pn} with
functions \eqref{eq:functionswj}.

In this section we introduce the first transformation $Y\mapsto X$
which makes use of the special form of the weight functions
$w_{j,n}$.

\subsection{The main idea}
The main idea behind the first transformation $Y\mapsto X$  can be
found in an unpublished manuscript of Bertola, Harnad and Its
\cite{BHI}.  The starting point is the observation that
$w_{0,n}(x){\rm e}^{nV(x)}$ satisfies a scaled version of the
so-called Pearcey differential equation
\begin{align}\label{eq:PearceyDE}
p'''(z)=zp(z).
\end{align}
Special solutions to the equation \eqref{eq:PearceyDE} are given
by the Pearcey integrals
   \begin{equation} \label{Pearcey-integrals}
      p_j (z) = \int_{\Gamma_j} {\rm e}^{-s^4/4+ sz} \ {\rm d} s, \qquad
      j=0,\ldots,5,
   \end{equation}
where the contours $\Gamma_j$ are
   \begin{equation}
        \begin{array}{ll}
            \Gamma_0= (-\infty, \infty), & \Gamma_1=({\rm i} \infty,0]\cup [0,\infty), \\
            \Gamma_2= ({\rm i}\infty,0]\cup [0,-\infty), &\Gamma_3= (-{\rm i}\infty,0]\cup [0,-\infty),\\
            \Gamma_4=(-{\rm i}\infty,0]\cup [0,\infty), &\Gamma_5=(-{\rm i}\infty, {\rm i} \infty),
        \end{array}
   \end{equation}
or homotopic deformations such as the ones shown Figure
\ref{fig:PearceyC}. Each $\Gamma_j$ is equipped with an
orientation as shown in Figure \ref{fig:PearceyC}.
    \begin{figure}
        \centering
          \includegraphics[angle=270,scale=0.4]{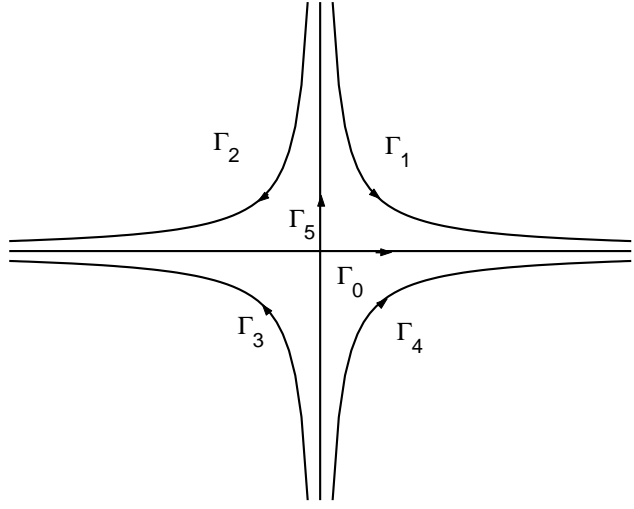}
         \caption{Contours $\Gamma_j$ in the definition of the Pearcey intergals.}
         \label{fig:PearceyC}
    \end{figure}
Since the  Pearcey equation \eqref{eq:PearceyDE} is a linear third
order equation, there must be a linear relation between any four
solutions. Indeed, from the integral representations
\eqref{Pearcey-integrals} one can find for example,
    \begin{align}\label{eq:p5p4p1}
        p_5(z)=p_4(z)-p_1(z).
    \end{align}

The functions $w_{j,n}$ from \eqref{eq:functionswj} are expressed
in terms of the Pearcey integral $p_0$ and its derivatives as
follows
    \begin{align} \label{eq:w0intermsofp5}
        w_{0,n}(z) &= n^{-1/4}{\rm e}^{-n V(z)}p_0(n^{3/4} \tau z),\\
        w_{1,n}(z) &= n^{-1/2}{\rm e}^{-n V(z)}p_0'(n^{3/4} \tau z),\\
        w_{2,n}(z) &= n^{-3/4}{\rm e}^{-n V(z)}p_0''(n^{3/4} \tau z)\label{eq:w2intermsofp5}.
\end{align}

Now we can illustrate the main idea behind the transformation
$Y\mapsto X$. Consider the matrix-valued function $\widehat X$
defined by
\begin{equation} \label{eq:widehatX}
\widehat X(z)= Y(z)
\begin{pmatrix}
1 & 0 \\
0 & D_n \widehat{P}^{-t}(n^{4/3} \tau z)
\end{pmatrix},
\end{equation}
where
\begin{align} \label{eq:defDn}
    D_n = \begin{pmatrix} n^{1/4} & 0 & 0\\
 0 & n^{1/2}  & 0\\
 0 & 0 &n^{3/4}\end{pmatrix}, \qquad
\widehat{P}=\begin{pmatrix}
p_0 & p_j & p_k\\
p_0' & p_j' & p_k'\\
p_0'' & p_j''& p_k''
\end{pmatrix},
\end{align}
and $j$ and $k$ are such that $p_0$, $p_j$, and $p_k$  are
linearly independent solutions of the Pearcey equation, so that
$\widehat{P}$ is indeed invertible. To compute the jumps for
$\widehat{X}$ we note that
\eqref{eq:w0intermsofp5}--\eqref{eq:w2intermsofp5} can be written
as
\begin{align}\label{eq:mainideatrickBHIpre}
D_n^{-1}\widehat{P}(n^{4/3} \tau z)  \begin{pmatrix}1\\ 0 \\
0\end{pmatrix}={\rm e}^{nV(z)} \begin{pmatrix} w_{0,n}(z)\\
w_{1,n}(z)\\
w_{2,n}(z)
\end{pmatrix},
\end{align}
so that for $x \in \R$, by \eqref{RH-problem-pn} and
\eqref{eq:widehatX},
\begin{align} \label{eq:mainideatrickBHI}
\widehat X_-(x)^{-1} \widehat X_+(x) &=
\begin{pmatrix}
1 & \begin{pmatrix} w_{0,n}(x) & w_{1,n}(x) & w_{2,n}(x)
\end{pmatrix} D_n \widehat{P}^{-t}(n^{4/3} \tau x)\\
0& I_3
\end{pmatrix}\\
\nonumber &=\begin{pmatrix}
1 & {\rm e}^{-n V(x)} & 0 & 0 \\
0 & 1 & 0  & 0 \\
0 & 0 & 1 & 0\\
0 & 0 & 0 & 1
\end{pmatrix}.
\end{align}
The jump matrix has only one non-trivial upper triangular $2\times
2$ block. This $2\times 2$ block has a varying exponential in its
$12$ entry and so is precisely of the form that appears in the RH
problem for orthogonal polynomials on the real line with respect
to  varying exponential weight \cite{DeiftBook,DKMVZuniform,FIK2}.
So the jump matrix for $\widehat X$ has a form that seems very
promising for asymptotic analysis.

The complication however, is that the asymptotic behavior of
$\widehat{X}$ becomes rather involved due to the nontrivial
asymptotic behavior of the Pearcey integrals $p_j(z)$ as $z \to
\infty$. We also have to deal with the Stokes phenomenon which
implies that the asymptotic behavior may be different as $z \to
\infty$ in different sectors in the complex plane. For example, by
classical steepest descent methods one can prove as in \cite{BK2}
that
    \begin{align} \label{eq:p0asymptotics-RHplane}
        p_0(z) = \sqrt{\frac{2\pi}{3}} z^{-1/3} {\rm e}^{\frac{3}{4} z^{4/3}}(1+\OO(z^{-4/3}))
    \end{align}
as $z\to \infty$ with $\Re z > 0$, and
    \begin{align} \label{eq:p0asymptotics-LHplane}
        p_0(z)&=\sqrt{\frac{2\pi}{3}} (-z)^{-1/3} {\rm e}^{\frac{3}{4} (-z)^{4/3}}(1+\OO(z^{-4/3}))
    \end{align}
as $z\to \infty$ with $\Re z < 0$. Here and throughout the paper
we use principal branches of the fractional powers, that is the
branch cut is along the negative real axis. So $p_0$ has a
different asymptotic behavior as $z \to \infty$ in the right
half-plane and in the left half-plane. Also the asymptotic
formulas
\eqref{eq:p0asymptotics-RHplane}--\eqref{eq:p0asymptotics-LHplane}
are only uniformly valid if we stay away from the imaginary axis.

To deal with the different asymptotic behaviors we define $P$
differently in each of the quadrants and define the transformation
$Y\mapsto X$ accordingly. This procedure will introduce a new
constant jump on the imaginary axis, while still simplifying the
jump matrix on the real axis. Also the asymptotic condition in the
RH problem for $X$ takes a nice form.

\subsection{A RH problem for Pearcey integrals}

We denote the four quadrants in the complex plane by $I$, $II$,
$III$ and $IV$, respectively. We define $P$ in the different
quadrants by
\begin{align} \label{eq:PinIandII}
P=\begin{pmatrix}
    p_0   &   -p_2   &   -p_5  \\
    p_0'  &   -p_2'  &   -p_5' \\
    p_0'' &   -p_2'' &   -p_5''
\end{pmatrix} \textrm{ in } I, \qquad &
P=\begin{pmatrix}
    p_0   &   -p_1   &   -p_5  \\
    p_0'  &   -p_1'  &   -p_5' \\
    p_0'' &   -p_1'' &   -p_5''
\end{pmatrix} \textrm{ in } II,\\
\label{eq:PinIIIandIV} P=\begin{pmatrix}
    p_0   &   -p_4   &   -p_5  \\
    p_0'  &   -p_4'  &   -p_5' \\
    p_0'' &   -p_4'' &   -p_5''
\end{pmatrix} \textrm{ in } III, \quad &
P=\begin{pmatrix}
    p_0   &   -p_3   &   -p_5  \\
    p_0'  &   -p_3'  &   -p_5' \\
    p_0'' &   -p_3'' &   -p_5''\\
\end{pmatrix} \textrm{ in } IV.
\end{align}
Hence each column of  $P$ contains a particular Pearcey integral
$p_j$, see \eqref{Pearcey-integrals}, and its derivatives. We use
$p_0$ in the first column and $p_5$ in the last column. In the
middle column we use different Pearcey integrals in the different
quadrants.

Then $P$ is analytic in $\C\setminus \big(\R\cup {\rm i}\R\big)$
with a jump on $\R \cup {\rm i}\R$. By linear relations such as
\eqref{eq:p5p4p1} one can easily obtain jump conditions for
$P$. In the following lemma we show the RH problem that
is satisfied by this function $P$. 

\begin{lemma} \label{lem:RHforP}
We have that $P$ satisfies the following RH problem:
\begin{equation}
\left\{
\begin{array}{l}
\multicolumn{1}{l}{P \textrm{ is analytic in } \C \setminus (\R \cup  {\rm i} \R)}, \\
P_+(z)= P_-(z) \begin{pmatrix}
                            1 & 0 & 0\\
                            0 & 1 & 0\\
                            0 & -1 & 1
                            \end{pmatrix},
                            \quad  z\in  \R, \\
P_+(z) = P_-(z) \begin{pmatrix}
                            1 & -1 & 0\\
                            0 & 1 & 0\\
                            0 & 0 & 1
                            \end{pmatrix},
                            \quad  z\in {\rm i} \R, \\
P(z)=\sqrt{2\pi}(I+\OO(z^{-2/3})) \begin{pmatrix}
                    z^{-1/3} & 0 & 0\\
                    0        & 1 & 0\\
                    0        & 0 & z^{1/3}
                \end{pmatrix} B_j \Theta_j(z),\\
                \multicolumn{1}{r}{\textrm{as } z\to \infty \textrm{ in the }$j$\textrm{th } quadrant,}
            \end{array}\right.
\end{equation}
where the constant matrices $B_j$ are given by
 (we use $\omega={\rm e}^{2\pi {\rm i}/3}$),
\begin{align} \label{eq:B1B2}
B_1=\frac{{\rm i}}{\sqrt{3}}\begin{pmatrix}
1& -\omega^2 & -\omega\\
1& -1 & -1\\
1& -\omega & -\omega^2
\end{pmatrix}, & \qquad
B_2=\frac{{\rm i}}{\sqrt{3}}\begin{pmatrix}
-\omega^2& -1 & -\omega\\
-1& -1 & -1\\
-\omega& -1 & -\omega
\end{pmatrix},\\ \label{eq:B3B4}
B_3=\frac{{\rm i}}{\sqrt{3}}\begin{pmatrix}
-\omega& -1 & \omega^2\\
-1& -1 & 1\\
-\omega^2& -1& \omega
\end{pmatrix}, & \qquad
B_4=\frac{{\rm i}}{\sqrt{3}}\begin{pmatrix}
1& -\omega & \omega^2\\
1& -1 & 1\\
1& -\omega^2 & \omega
\end{pmatrix},
\end{align}
and the diagonal matrices $\Theta_j(z)$ for $j=1,\ldots,4$ are
given by
\begin{align} \label{eq:Theta1Theta2}
    \Theta_1(z)=\begin{pmatrix}
        {\rm e}^{\theta_3(z)} &0 & 0\\
        0& {\rm e}^{\theta_1(z)} & 0\\
        0& 0 & {\rm e}^{\theta_2(z)}
    \end{pmatrix}, & \quad
\Theta_2(z)=\begin{pmatrix}
{\rm e}^{\theta_1(z)} &0 & 0\\
0& {\rm e}^{\theta_3(z)} & 0\\
0& 0 & {\rm e}^{\theta_2(z)}
\end{pmatrix}\\ \label{eq:Theta3Theta4}
\Theta_3(z)=\begin{pmatrix}
{\rm e}^{\theta_2(z)} &0 & 0\\
0& {\rm e}^{\theta_3(z)} & 0\\
0& 0 & {\rm e}^{\theta_1(z)}
\end{pmatrix}, & \quad
\Theta_4(z)=\begin{pmatrix}
{\rm e}^{\theta_3(z)} &0 & 0\\
0& {\rm e}^{\theta_2(z)} & 0\\
0& 0 & {\rm e}^{\theta_1(z)}
\end{pmatrix}
\end{align}
with functions $\theta_j$ defined as
\begin{equation} \label{eq:thetaj}
\theta_j(z)= \frac{3}{4}\omega^j z^{4/3}, \quad j=1,2,3, \qquad
 -\pi < \arg z < \pi.
\end{equation}

The asymptotics for $P$ are uniform as $z \to \infty$ in any
region provided we stay away from the axis, that is, it is uniform
as $z \to \infty$ in the region
\begin{equation}\label{eq:epsilonarea}
    \dist\left(\arg z, \frac{\pi}{2} \mathbb Z \right) > \eps,
\end{equation}
for some $\eps>0$.
\end{lemma}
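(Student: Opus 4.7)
The plan is to verify the three parts of the Riemann–Hilbert problem (analyticity, jumps, asymptotics) separately. Analyticity of $P$ on $\C\setminus(\R\cup i\R)$ is immediate: each Pearcey integral $p_j$ is an entire function, since the quartic decay of $e^{-s^4/4}$ on each wedge containing $\Gamma_j$ at infinity justifies differentiation under the integral sign. Thus on each open quadrant $P$ is given by a single analytic formula.

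For the jump conditions I would first derive the linear identities among the $p_j$ that correspond to homologies between the six contours $\Gamma_j$. The relevant identities are obtained by comparing endpoints and cancelling common pieces, and include
\[
 p_1 - p_2 = p_0, \qquad p_2 - p_3 = -p_5, \qquad p_1 - p_4 = -p_5, \qquad p_3 - p_4 = -p_0,
\]
of which \eqref{eq:p5p4p1} is one. These identities persist after differentiation, so they translate into identities between the columns of $P$ on the two sides of each coordinate axis. A column-by-column check then gives the stated jump matrices: e.g.\ on the positive real axis the middle column changes from $(-p_3,-p_3',-p_3'')^t$ (quadrant $IV$) to $(-p_2,-p_2',-p_2'')^t = (-p_3 + p_5, \ldots)^t$ (quadrant $I$), which is the old middle column minus the last column, yielding the unit lower-triangular matrix. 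The three other axes are treated identically.

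For the asymptotics I would apply classical steepest descent to each integral. The saddles of $-s^4/4 + sz$ satisfy $s^3=z$, giving three saddles $s_k = \omega^{k} z^{1/3}$ ($k=1,2,3$) with phase values $\tfrac{3}{4}\omega^{k} z^{4/3} = \theta_k(z)$. The Gaussian expansion around $s_k$ produces a contribution of size $\sqrt{2\pi/3}\,\omega^{k m}\,z^{(m-1)/3} e^{\theta_k(z)}$ to $p^{(m)}(z)$, for $m=0,1,2$, where the $\omega^{km}$ factor comes from $s_k^m$, the width factor is $\sim z^{-1/3}$, and $1/\sqrt{3}$ will be absorbed into the $B_j$ once $\sqrt{2\pi}$ is pulled outside. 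This combined $z^{(m-1)/3}$ scaling with the derivative order $m$ is exactly the diagonal $\diag(z^{-1/3},1,z^{1/3})$ appearing in the statement. Which saddle (or saddles) contribute to each $p_j$ in a given quadrant is determined by deforming the contour $\Gamma_j$ onto steepest-descent paths lying in the four decay wedges of $\Re(-s^4/4)$, and a finite case analysis (quadrant by quadrant, $\Gamma_j$ by $\Gamma_j$) shows that each Pearcey integral used in the definition of $P$ is dominated by exactly one saddle there. Reading off the Gaussian constants and phases gives the explicit $B_j$ and $\Theta_j$.

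The main technical obstacle will be the bookkeeping of phases in the steepest descent step: verifying that in each quadrant the contour of each $p_j$ in the definition of $P$ passes through the specific saddle assigned by $\Theta_j$, and that the resulting leading coefficient matches the entries of $B_j$ in \eqref{eq:B1B2}--\eqref{eq:B3B4} (with the correct branches of $z^{1/3}$ and $z^{4/3}$). The requirement that we stay in the region \eqref{eq:epsilonarea} reflects the Stokes phenomenon: near the rays $\arg z \in \tfrac{\pi}{2}\Z$ the subdominant saddle contributions become comparable in size to the dominant one, so uniformity of the leading-order asymptotics fails there. The $O(z^{-2/3})$ error term is then the standard next-order correction to each Gaussian integral, obtained by expanding the phase $-s^4/4+sz$ to one more order around $s_k$.
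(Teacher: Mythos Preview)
Your approach is essentially the same as the paper's: contour homologies for the jumps, classical steepest descent for the asymptotics, and the Stokes phenomenon to explain why uniformity fails near the axes. The paper in fact cites \cite{BK2} for the details and only spells out the first-quadrant case.

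One point to correct: your explanation of the $\OO(z^{-2/3})$ error is not quite right. The next-order steepest descent correction for each individual Pearcey integral is $\OO(z^{-4/3})$, not $\OO(z^{-2/3})$ (after rescaling $s=z^{1/3}u$ the large parameter is $z^{4/3}$, so the asymptotic expansion proceeds in powers of $z^{-4/3}$). What the paper observes is that this yields
\[
P(z)=\sqrt{\tfrac{2\pi}{3}}\,\diag(z^{-1/3},1,z^{1/3})\,(I+\OO(z^{-4/3}))\,B_j\Theta_j(z),
\]
and the $\OO(z^{-2/3})$ in the stated form arises only when you commute the error factor to the left of the diagonal matrix:
\[
\diag(z^{-1/3},1,z^{1/3})(I+\OO(z^{-4/3}))=(I+\OO(z^{-2/3}))\diag(z^{-1/3},1,z^{1/3}),
\]
the loss coming from the $z^{2/3}$ ratio between the extreme diagonal entries. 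So the weaker exponent is an artefact of the matrix rearrangement, not of the saddle-point expansion itself.
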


\begin{proof} In  \cite{BK2}  a detailed discussion of a RH
problem for Pearcey integrals can be found. The lemma is proved by
following these arguments. However, we do wish to make a few
remarks on the asymptotics of $P$. From classical steepest descent
arguments one can prove as in \cite{BK2} that
\begin{align}
p_0(z)&=\sqrt{\frac{2\pi}{3}} z^{-1/3} {\rm e}^{\frac{3}{4} z^{1/3}}(1+\OO(z^{-4/3}))\\
p_2(z)&=\omega^2\sqrt{\frac{2\pi}{3}} z^{-1/3} {\rm e}^{\frac{3}{4}\omega  z^{1/3}}(1+\OO(z^{-4/3}))\\
p_5(z)&=\omega \sqrt{\frac{2\pi}{3}} z^{-1/3} {\rm e}^{\frac{3}{4}
\omega^2 z^{1/3}}(1+\OO(z^{-4/3}))
\end{align}
as $z\to \infty$ in the first quadrant. In \cite{BK2} these
formulas (with a different numbering of the functions $p_j$) are
stated with an error term $\OO(z^{-2/3})$. However, our case is  a
special case of the more general case considered in \cite{BK2}. By
analyzing the proof for our special case we obtain that the error
term is indeed of order $\OO(z^{-4/3})$. From these asymptotics
(and the corresponding ones for the derivatives of the Pearcey
integrals) we obtain by \eqref{eq:PinIandII} that
\begin{equation}
P(z)=\sqrt{\frac{2\pi}{3}} \begin{pmatrix} z^{-1/3} & 0 & 0\\
0 & 1 & 0\\
0 & 0 & z^{1/3}
\end{pmatrix} (1+\OO(z^{-4/3}))  B_1 \Theta_1(z),
\end{equation}
as  $z\to \infty$ in the first quadrant. Since
\begin{multline}
\begin{pmatrix} z^{-1/3} & 0 & 0\\
0 & 1 & 0\\
0 & 0 & z^{1/3}
\end{pmatrix} (1+\OO(z^{-4/3}))\\
= (1+\OO(z^{-2/3}))\begin{pmatrix} z^{-1/3} & 0 & 0\\
0 & 1 & 0\\
0 & 0 & z^{1/3}
\end{pmatrix}, \qquad \text{ as } z \to \infty,
\end{multline}
we obtain the asymptotic behavior of $P$ in the first quadrant.
The other quadrants can be dealt with in a similar way.

Finally, because of the Stokes phenomenon, the asymptotics for
$p_0$ are only uniform if we stay away from the imaginary axis.
Similarly, the asymptotics for $p_5$ are uniform if we stay away
from the real axis. Hence the asymptotic behavior is uniform if we
stay away from both the real and imaginary axis and let
$z\to\infty$ so that it remains in the region given by
\eqref{eq:epsilonarea} for some $\eps>0$.
\end{proof}

\subsection{The inverse transpose of $P$}
In the transformation $Y\mapsto X$ we will use
 \begin{equation} \label{eq:tildeQ}
    Q = P^{-t}.
 \end{equation}
Then $Q$ satisfies a RH problem which is easily obtained from the
RH problem satisfied by $P$.

\begin{lemma}\label{lem:RHforQ}
Define $Q= P^{-t}$. Then $Q$ satisfies the following RH problem:
\begin{equation}\label{eq:RHproblemtildeQ}
\left\{
\begin{array}{l}
\multicolumn{1}{l}{Q \textrm{ is analytic in } \C \setminus (\R \cup  {\rm i} \R)}, \\
Q_+(z) = Q_-(z) \begin{pmatrix}
                            1 & 0 & 0\\
                            0 & 1 & 1\\
                            0 & 0 & 1
                            \end{pmatrix},
                            \quad  z\in  \R,\\
Q_+(z) = Q_-(z) \begin{pmatrix}
                            1 & 0 & 0\\
                            1 & 1 & 0\\
                            0 & 0 & 1
                            \end{pmatrix},
                            \quad  z\in {\rm i} \R,\\
Q(z)=\frac{1}{\sqrt{2\pi}}(I+\OO(z^{-2/3}))
    \begin{pmatrix}
                    z^{1/3} & 0 & 0\\
                    0        & 1 & 0\\
                    0        & 0 & z^{-1/3}
                \end{pmatrix} A_j \Theta_j(z)^{-1},\\
                \multicolumn{1}{r}{\textrm{as }z\to \infty \textrm{ in the }j\textrm{th } quadrant,}
            \end{array}\right.,
\end{equation}
where $A_j = B_j^{-t}$, and the $B_j$ and $\Theta_j$, $j=1,2,3,4$,
are given by \eqref{eq:B1B2}--\eqref{eq:Theta3Theta4}.

The asymptotics for $Q$ are uniform as $z \to \infty$ in any
quadrant such that \eqref{eq:epsilonarea} holds for some $\eps
> 0$. \end{lemma}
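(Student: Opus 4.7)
The proof is essentially an exercise in transposing and inverting the data of Lemma \ref{lem:RHforP}. Analyticity of $Q = P^{-t}$ on $\C\setminus(\R\cup{\rm i}\R)$ is immediate once we know that $P$ is invertible there, which follows from the fact that any triple $p_0, p_j, p_k$ ($j\in\{1,2,3,4\}$, $k=5$) of solutions to the Pearcey equation used in \eqref{eq:PinIandII}--\eqref{eq:PinIIIandIV} is linearly independent and $\det P$ is constant on each quadrant.

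For the jumps, the identity $P_+ = P_- J_P$ together with $(AB)^{-t} = A^{-t} B^{-t}$ gives $Q_+ = (P_- J_P)^{-t} = Q_- J_P^{-t}$. On $\R$ the jump for $P$ is lower triangular with a $-1$ in the $(3,2)$ entry; inverting flips the sign of that entry, and transposing moves it to position $(2,3)$, yielding precisely the matrix in \eqref{eq:RHproblemtildeQ}. On ${\rm i}\R$ the same computation with the upper triangular jump of $P$ produces the stated lower triangular jump for $Q$.

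For the asymptotic condition, write the statement in Lemma \ref{lem:RHforP} as $P(z) = \sqrt{2\pi}\,(I+E(z))\, D(z)\, B_j\, \Theta_j(z)$ with $E(z) = \OO(z^{-2/3})$ and $D(z) = \diag(z^{-1/3},1,z^{1/3})$. Since $D$ and $\Theta_j$ are diagonal, $D^{-t} = D^{-1} = \diag(z^{1/3},1,z^{-1/3})$ and $\Theta_j^{-t}=\Theta_j^{-1}$, so
\begin{equation*}
Q(z) = P^{-t}(z) = \tfrac{1}{\sqrt{2\pi}}\,(I+E(z)^t)^{-1}\, D(z)^{-1}\, B_j^{-t}\, \Theta_j(z)^{-1}.
\end{equation*}
A Neumann series expansion gives $(I+E^t)^{-1} = I - E^t + (E^t)^2 - \cdots = I + \OO(z^{-2/3})$, since matrix transposition preserves entrywise order estimates. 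Setting $A_j = B_j^{-t}$ yields the claimed asymptotic formula. The uniformity claim transfers directly from $P$ to $Q$ because all of the above manipulations are performed with constant matrices and with the diagonal factors $D, \Theta_j$, which are analytic on sectors avoiding the axes; hence the region \eqref{eq:epsilonarea} for which the $\OO$-estimates hold is unchanged.

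There is no genuine obstacle here; the only point that deserves care is that the error term $\OO(z^{-2/3})$ in the asymptotics is sandwiched on the \emph{left} of $D(z)$ rather than commuted past it. If one tried to move $(I+\OO(z^{-2/3}))$ across $D(z)^{-1}$ the entrywise bound would degrade through the conjugation $D^{-1} \cdot D$; keeping the error factor on the same side throughout (i.e.\ computing $P^{-t}$ by the formula $(WZYX)^{-1} = X^{-1}Y^{-1}Z^{-1}W^{-1}$ followed by transposition) avoids this issue entirely and gives the uniform $\OO(z^{-2/3})$ remainder stated in \eqref{eq:RHproblemtildeQ}.
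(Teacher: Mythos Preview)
Your proof is correct and follows exactly the approach the paper intends: the paper's own proof is the single sentence ``This follows immediately from \eqref{eq:tildeQ} and Lemma \ref{lem:RHforP},'' and you have simply spelled out the invert-and-transpose computation in detail. Your remark about keeping the $\OO(z^{-2/3})$ factor on the left of the diagonal matrix $D(z)^{-1}$ is a nice point of care, though not something the paper comments on.
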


\begin{proof} This follows immediately from \eqref{eq:tildeQ} and
Lemma \ref{lem:RHforP}. \end{proof}

The constant matrices $A_j$ are given explicitly by (with
$\omega={\rm e}^{2\pi {\rm i}/3}$),
\begin{align} \label{eq:A1A2}
A_1=\frac{{\rm i}}{\sqrt{3}}\begin{pmatrix}
-1& \omega & \omega^2\\
-1& 1 & 1\\
-1& \omega^2 & \omega
\end{pmatrix}, & \qquad
A_2=\frac{{\rm i}}{\sqrt{3}}\begin{pmatrix}
\omega& 1 & \omega^2\\
1& 1 & 1\\
\omega^2& 1 & \omega
\end{pmatrix}, \\ \label{eq:A3A4}
A_3=\frac{{\rm i}}{\sqrt{3}}\begin{pmatrix}
\omega^2& 1 & \omega\\
1& 1 & 1\\
\omega& 1& \omega^2
\end{pmatrix}, & \qquad
A_4=\frac{{\rm i}}{\sqrt{3}}\begin{pmatrix}
-1& \omega^2 & \omega\\
-1& 1 & 1\\
-1& \omega & \omega^2
\end{pmatrix}.
\end{align}
The prefactor ${\rm i}/\sqrt{3}$ is so that all matrices $A_j$
have determinant $1$.

\subsection{The transformation $Y \mapsto X$}

For clarity we define the transformation $Y \mapsto X$ in two
steps. First we define $Y \mapsto \widehat{X}$ and then
$\widehat{X} \mapsto X$.

We define $\widehat X$ by
\begin{equation}\label{eq:YtohatX}
    \widehat X(z)=\begin{pmatrix} 1 & 0 \\
    0 & C_n \end{pmatrix}
    Y(z)\begin{pmatrix} 1&0\\
    0& D_n Q(n^{3/4} \tau z)
    \end{pmatrix},
\end{equation}
where $D_n$ and $Q$ are given in \eqref{eq:defDn},
\eqref{eq:tildeQ}, and the constant prefactor $C_n$ is given by
\begin{equation}\label{eq:defCn}
C_n= \sqrt\frac{2\pi}{ n} \begin{pmatrix}
  \tau^{-1/3} &0 & 0\\
 0& 1  & 0\\
 0 & 0 & \tau^{1/3}
\end{pmatrix}. \end{equation}

We also define
\begin{equation} \label{eq:YtoX}
    X(z) = \widehat{X}(z) \begin{pmatrix} 1 & 0 \\
        0 & \Theta_j(n^{3/4} \tau z) \end{pmatrix}
        \end{equation}
        for $z$ in the $j$th
quadrant, where $\Theta_j$ is given by
\eqref{eq:Theta1Theta2}--\eqref{eq:Theta3Theta4}.

The $4 \times 4$ matrix valued functions $\widehat X$ and $X$ are
analytic in each of the four quadrants with jumps across the real
and imaginary axes.

\begin{lemma}
We have that $\widehat X$ satisfies the following RH problem:
\begin{equation} \label{eq:RHforhatX}
\left\{
\begin{array}{l}
\multicolumn{1}{l}{\widehat X \textrm{ is analytic in } \C \setminus (\R \cup  {\rm i} \R),}\\
\widehat X_+(z)=\widehat X_-(z)
    \begin{pmatrix} 1 & e^{-nV(z)} & 0 & 0 \\
        0 & 1 & 0 & 0 \\
        0 & 0 & 1 & 1 \\
        0 & 0 & 0 & 1
        \end{pmatrix}, \quad z \in \R, \\
\widehat X_+(z) = \widehat X_-(z)
    \begin{pmatrix} 1 & 0 & 0 & 0 \\
        0 & 1 & 0 & 0 \\
        0 & 1 & 1 & 0 \\
        0 & 0 & 0 & 1 \end{pmatrix}, \quad z \in {\rm i}\R, \\
\widehat X(z)=(I+\OO(z^{-2/3})) \begin{pmatrix}
    z^n&0 & 0 & 0 \\
    0 & z^{-\frac{n+1}{3}} & 0 & 0\\
    0 & 0 & z^{-\frac{n}{3}} & 0 \\
    0 & 0 & 0 & z^{-\frac{n-1}{3}}
                \end{pmatrix}
                \begin{pmatrix} 1 &  0 \\
                0& A_j \Theta_j^{-1}(n^{3/4} \tau z) \end{pmatrix} \\
                \multicolumn{1}{r}{\textrm{as }z\to \infty \textrm{ in the }j\textrm{th quadrant}}
            \end{array}\right.
\end{equation}
The asymptotics for $\widehat X$ are uniform as $z \to \infty$ in
any quadrant provided we stay away from the axis, that is, it
holds uniformly as $z \to \infty$ with \eqref{eq:epsilonarea} for
some $\eps > 0$.
\end{lemma}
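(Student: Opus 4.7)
The plan is to verify the three properties in the RH problem \eqref{eq:RHforhatX} directly from the definition \eqref{eq:YtohatX} by combining the RH problem \eqref{RH-problem-pn} for $Y$ with the RH problem for $Q$ from Lemma \ref{lem:RHforQ} and exploiting the Pearcey representation \eqref{eq:w0intermsofp5}--\eqref{eq:w2intermsofp5} of the weights. Analyticity of $\widehat X$ in $\C\setminus(\R\cup {\rm i}\R)$ is immediate, since $Y$ is analytic in $\C\setminus\R$, $Q(n^{3/4}\tau z)$ is analytic in $\C\setminus(\R\cup{\rm i}\R)$, and the constant matrices $\diag(1,C_n)$ and $\diag(1,D_n)$ do not introduce new jumps.

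For the jumps write $\widehat X_\pm(z) = \diag(1,C_n)\,Y_\pm(z)\,\diag(1,D_nQ_\pm(n^{3/4}\tau z))$. On ${\rm i}\R$ the matrix $Y$ is analytic and the $D_n$ factors cancel, so the jump $\widehat X_-^{-1}\widehat X_+$ reduces to $\diag(1,Q_-^{-1}Q_+)$, which matches \eqref{eq:RHforhatX} via Lemma \ref{lem:RHforQ}. On $\R$ the same cancellation yields
\[
\widehat X_-^{-1}(x)\widehat X_+(x) = \begin{pmatrix} 1 & W(x)^t D_n Q_+(n^{3/4}\tau x)\\ 0 & Q_-^{-1}Q_+ \end{pmatrix}
\]
with $W(x)=(w_{0,n}(x),w_{1,n}(x),w_{2,n}(x))^t$. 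The lower $3\times 3$ block is again supplied by Lemma \ref{lem:RHforQ}. The main point is the upper block. Reading off the scaling in \eqref{eq:w0intermsofp5}--\eqref{eq:w2intermsofp5} one finds
\[
D_nW(z) = {\rm e}^{-nV(z)}\bigl(p_0,p_0',p_0''\bigr)^t(n^{3/4}\tau z),
\]
which by the definition \eqref{eq:PinIandII}--\eqref{eq:PinIIIandIV} is ${\rm e}^{-nV(z)}$ times the first column of $P(n^{3/4}\tau z)$ in every quadrant. Since $p_0$ is entire, this column has no jump across $\R$; combining with $Q = P^{-t}$, i.e.\ $P^tQ = I$, I obtain
\[
W(x)^tD_n Q_+(n^{3/4}\tau x) = {\rm e}^{-nV(x)}\,(1,0,0)\,P^t(n^{3/4}\tau x)\,Q_+(n^{3/4}\tau x) = {\rm e}^{-nV(x)}(1,0,0),
\]
which is exactly the upper block in \eqref{eq:RHforhatX}.

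For the asymptotic at infinity I would substitute $n^{3/4}\tau z$ into the expansion of $Q$ from Lemma \ref{lem:RHforQ} and multiply by $D_n$ on the left. The scaling produces an overall factor $n^{1/2}/\sqrt{2\pi}$ together with powers of $\tau$ in the leading diagonal; the constants $D_n$ and $C_n$ are designed precisely so that $C_n\cdot(n^{1/2}/\sqrt{2\pi})\diag(\tau^{1/3},1,\tau^{-1/3}) = I$ and the scalar prefactor collapses. The remaining powers of $z$ combine with the $z^{-n/3}$-block of the $Y$-asymptotic \eqref{RH-problem-pn} to produce the diagonal matrix of $z$-powers in \eqref{eq:RHforhatX}, while $A_j\Theta_j^{-1}(n^{3/4}\tau z)$ is carried through unchanged. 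The combined error absorbs the $O(1/z)$ of $Y$ into the $O(z^{-2/3})$ of $Q$, and uniformity as $z\to\infty$ away from the coordinate axes is inherited directly from Lemma \ref{lem:RHforQ}.

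The hardest step is the upper-block computation on $\R$: the cancellation depends entirely on the identification $D_nW(z) = {\rm e}^{-nV(z)}P(n^{3/4}\tau z)(1,0,0)^t$, which in turn relies on $p_0$ being an \emph{entire} solution of \eqref{eq:PearceyDE} and on the common choice of $p_0$ as the first column of $P$ in every quadrant, coupled with the identity $P^tQ = I$ coming from $Q = P^{-t}$. It is this identity that eliminates all Pearcey data from the $\R$-jump of $\widehat X$ and reduces it to the block-triangular form in \eqref{eq:RHforhatX}, which is what makes $\widehat X$ a suitable starting point for the subsequent $g$-function normalization.
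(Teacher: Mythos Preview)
Your proof is correct and follows essentially the same approach as the paper. Both arguments compute the jumps directly from the definition \eqref{eq:YtohatX}, reduce the top-row block on $\R$ via the identification $D_nW(z)={\rm e}^{-nV(z)}P(n^{3/4}\tau z)(1,0,0)^t$ together with $P^tQ=I$ (the paper packages this as a reference back to \eqref{eq:mainideatrickBHI}), and obtain the asymptotics by substituting $n^{3/4}\tau z$ into Lemma~\ref{lem:RHforQ} and checking that the constants $C_n$, $D_n$ absorb the resulting prefactors.
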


\begin{proof}
Let $z \in \R$. Then we obtain from \eqref{RH-problem-pn} and
\eqref{eq:YtohatX} that
 \begin{multline}
    \widehat X_-^{-1}(z) \widehat X_+(z) =
    \begin{pmatrix}
    1 & \begin{pmatrix} w_{0,n}(z) & w_{1,n}(z) & w_{2,n} (z) \end{pmatrix} D_n Q_+(n^{3/4} \tau z)\\
    0& Q_-(n^{3/4}\tau z)^{-1} Q_+(n^{3/4} \tau z)
    \end{pmatrix}.
 \end{multline}
Since $Q = P^{-t}$, we can use \eqref{eq:mainideatrickBHI} to find
\[ \begin{pmatrix} w_{0,n}(z) & w_{1,n}(z) & w_{2,n} (z) \end{pmatrix} D_n Q_+(n^{3/4} \tau z)
    = \begin{pmatrix} e^{-nV(z)} & 0 & 0 \end{pmatrix}.
\]
Using also the jump condition on the real axis in the RH problem
\eqref{RH-problem-pn} for $Q$, we obtain the jump matrix for
$\widehat X$ on the real line as given in the RH problem
\eqref{eq:RHforhatX}.

The proof of the jump on the imaginary axis is similar (even
simpler).

 Finally we check the asymptotic formula for $\widehat X$. From
the asymptotic behavior of $Q$  given in
\eqref{eq:RHproblemtildeQ} and the definitions \eqref{eq:defDn}
and \eqref{eq:defCn},
 of $D_n$ and $C_n$, we obtain as $z \to \infty$ in the $j$th quadrant,
\begin{multline} \label{eq:asymptoticshatXstep1}
    D_n Q (n^{3/4} \tau z)
    = C_n^{-1} (I+\OO(z^{-2/3})) \begin{pmatrix}
    z^{1/3} & 0 & 0\\
    0 & 1 & 0 \\
    0 & 0  & z^{-1/3}
    \end{pmatrix} A_j \Theta_j^{-1}(n^{3/4} \tau z).
\end{multline}
 From the asymptotic condition in the RH problem
\eqref{RH-problem-pn} for $Y$ and the definition
\eqref{eq:YtohatX}, it follows that
\begin{align}
\widehat X(z)&=\begin{pmatrix} 1 & 0 \\
0 & C_n \end{pmatrix} \left(I+\OO(z^{-1})\right)\begin{pmatrix}
z^n & 0 & 0 & 0\\
0 & z^{-n/3} & 0 & 0 \\
0 & 0 & z^{-n/3} & 0 \\
0 & 0 & 0 & z^{-n/3}
\end{pmatrix}\\
\nonumber& \qquad \times
\begin{pmatrix} 1&0\\
0& D_n Q(n^{3/4} \tau z)
\end{pmatrix}.
\end{align}
Combining this with \eqref{eq:asymptoticshatXstep1} gives the
asymptotic condition for $\widehat X$ in \eqref{eq:RHforhatX}.
\end{proof}

The transformation \eqref{eq:YtoX} from $\widehat X$ to $X$ has
the effect of simplifying the asymptotic condition in the RH
problem, but the jumps on the real and imaginary axis are more
complicated.

\begin{lemma}
We have that $X$ satisfies the following RH problem:
\begin{equation} \label{eq:RHforX}
\left\{
\begin{array}{l}
\multicolumn{1}{l}{X \textrm{ is analytic in } \C \setminus (\R \cup  {\rm i} \R),}\\
    X_+(z)= X_-(z)J_{X}(z), \quad z\in \R\cup {\rm i}\R,\\
    X(z)=(I+\OO(z^{-2/3})) \begin{pmatrix}
    z^n&0 &0&0\\
    0&z^{-\frac{n+1}{3}} & 0 & 0\\
    0& 0& z^{-\frac{n}{3}} & 0\\
    0& 0 & 0 & z^{-\frac{n-1}{3}}
                \end{pmatrix}
                \begin{pmatrix}1 &  0 \\
                0& A_j\end{pmatrix}\\
                \multicolumn{1}{r}{\textrm{as }z\to \infty \textrm{ in the }j\textrm{th quadrant}}
            \end{array}\right.
\end{equation}
where the jump matrices $J_{X}$ are given by
\begin{align} \label{eq:jumpforXonR}
J_{X}(z)&= \begin{pmatrix}
1& {\rm e}^{-n (V(z)-\frac{3}{4}\tau^{4/3} |z|^{4/3}}) & 0 & 0\\
                            0& 1 & 0 & 0\\
                            0& 0 & {\rm e}^{\pm n \frac{3\sqrt{3}}{4}{\rm i} \tau^{4/3}|z|^{4/3}} & 1\\
                            0& 0 & 0 & {\rm e}^{\mp n \frac{3\sqrt{3}}{4}{\rm i}\tau^{4/3}|z|^{4/3}}
                            \end{pmatrix},
\end{align}
for  $z\in  \R_\pm$, and
\begin{align}\label{eq:jumpforXoniR}
    J_{X}(z)&= \begin{pmatrix}
               1&0 & 0 & 0\\
                        0 & {\rm e}^{\mp n \frac{3\sqrt{3}}{4} {\rm i} \tau^{4/3} |z|^{4/3}} & 0 & 0\\
                        0&  1 & {\rm e}^{\pm n \frac{3\sqrt{3}}{4}{\rm i} \tau^{4/3} |z|^{4/3}}  & 0\\
                        0&  0 & 0 & 1
                            \end{pmatrix},
\end{align}
for $ z\in {\rm i} \R_{\pm}$.

The asymptotics for $X$ are uniform as $z \to \infty$ in any
quadrant provided we stay away from the axis, that is, it holds
uniformly as $z \to \infty$ with \eqref{eq:epsilonarea} for some
$\eps > 0$.
\end{lemma}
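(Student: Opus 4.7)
The plan is to read off the lemma directly from the definition \eqref{eq:YtoX}, which multiplies $\widehat X$ on the right by $\diag(1,\Theta_j(n^{3/4}\tau z))$ in the $j$-th quadrant. Since $\Theta_j$ depends only on $z^{4/3}$ with its principal branch (cut along the negative real axis), it is analytic in the interior of the $j$-th quadrant, so analyticity of $X$ on $\C \setminus (\R \cup {\rm i}\R)$ will be immediate. What must be verified are the asymptotic formula and the jumps across the two axes.

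For the asymptotics I would simply observe that in the asymptotic condition for $\widehat X$ in \eqref{eq:RHforhatX}, the trailing factor $\Theta_j^{-1}(n^{3/4}\tau z)$ is exactly cancelled by the right multiplication coming from \eqref{eq:YtoX}, producing \eqref{eq:RHforX}, and no new singularity is introduced along the axes so the uniformity region \eqref{eq:epsilonarea} is preserved. For the jumps I would fix one of the four half-axes and let $j_+$, $j_-$ denote the quadrants bordering it on the plus and minus sides; then by \eqref{eq:YtoX} and $\widehat X_+ = \widehat X_- J_{\widehat X}$,
\begin{equation*}
J_X(z) = \begin{pmatrix} 1 & 0 \\ 0 & \Theta_{j_-}^{-1}(n^{3/4}\tau z) \end{pmatrix} J_{\widehat X}(z) \begin{pmatrix} 1 & 0 \\ 0 & \Theta_{j_+}(n^{3/4}\tau z) \end{pmatrix}.
\end{equation*}
Writing $J_{\widehat X}$ from \eqref{eq:RHforhatX} in $1+3$ block form, the top row produces on $\R$ the single $(1,2)$ entry $e^{-nV(z)}\,[\Theta_{j_+}]_{11}$, while the bottom $3\times 3$ block is the conjugate $\Theta_{j_-}^{-1}M\Theta_{j_+}$ of the elementary matrix $M$ appearing in $J_{\widehat X}$. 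The computation then reduces to simplifying scalar exponentials $\theta_k(n^{3/4}\tau z)$ and their differences, using $\omega^3=1$, $\omega-\omega^2={\rm i}\sqrt 3$, and the principal-branch identities $z^{4/3}=\omega^{\pm 1}|z|^{4/3}$ for $z \in {\rm i}\R_\pm$ and $z^{4/3}=\omega^{\mp 1}|z|^{4/3}$ for $z \in \R_-$ approached from above/below. I expect \eqref{eq:jumpforXonR} and \eqref{eq:jumpforXoniR} to fall out directly.

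The only real obstacle will be the bookkeeping: for each half-axis one must identify the correct pair $(\Theta_{j_-}, \Theta_{j_+})$, track which of $\theta_1,\theta_2,\theta_3$ sits in each diagonal slot, and verify that the off-diagonal $1$ in $M$ survives the conjugation. The labellings in \eqref{eq:Theta1Theta2}--\eqref{eq:Theta3Theta4} appear to have been arranged precisely so that the diagonal entries of $\Theta_{j_+}$ and $\Theta_{j_-}$ share a common exponent at the relevant off-diagonal position ($(3,4)$ on $\R$ and $(3,2)$ on ${\rm i}\R$), so that the $1$ passes through; the two diagonal exponents that do differ then combine via $\omega-\omega^2={\rm i}\sqrt 3$ to yield exactly the oscillating factors $e^{\pm n\frac{3\sqrt 3}{4}{\rm i}\tau^{4/3}|z|^{4/3}}$ appearing in \eqref{eq:jumpforXonR}--\eqref{eq:jumpforXoniR}. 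Once this matching is made explicit for each of the four half-axes, the lemma follows.
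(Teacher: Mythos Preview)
Your proposal is correct and follows essentially the same approach as the paper: the paper also derives $J_X$ by conjugating $J_{\widehat X}$ with the block-diagonal $\Theta$ factors (carrying out the explicit computation for $z\in\R_+$ with $\Theta_4^{-1}$ and $\Theta_1$, then declaring the other half-axes similar), and it obtains the asymptotics by the same cancellation of $\Theta_j^{-1}$. Your additional remarks about why the off-diagonal $1$ survives the conjugation and how $\omega-\omega^2={\rm i}\sqrt3$ produces the oscillating diagonal entries make explicit exactly the mechanism the paper leaves implicit.
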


\begin{proof} The asymptotic condition is clear from
\eqref{eq:YtoX} and the asymptotic condition in the RH problem
\eqref{eq:RHforhatX} for $\widehat X$.

Let $z \in \R_+$. Then it follows from \eqref{eq:YtoX}, the jump
condition in the RH problem \eqref{eq:RHforhatX}, and the
expressions for $\Theta_1$ and $\Theta_4$ from
\eqref{eq:Theta1Theta2}--\eqref{eq:Theta3Theta4} that
 \begin{multline}
 X_-^{-1}(z) X_+(z) =
 \begin{pmatrix}1 &0 \\
 0 & \Theta_{4}^{-1}(n^{3/4} z)
 \end{pmatrix}
 \begin{pmatrix}
1 & {\rm e}^{-nV(z)} & 0 & 0 \\
0 & 1 & 0 & 0 \\
0 & 0 & 1 & 1 \\
0 & 0 & 0 & 1
\end{pmatrix}
\begin{pmatrix} 1 &0 \\
 0 & \Theta_{1}(n^{3/4} z)
 \end{pmatrix} \\
 =
\begin{pmatrix}
    1 & {\rm e}^{-n V(z)+\theta_3(n^{3/4}n z)} & 0 &0\\
    0 & 1 & 0 & 0\\
    0 & 0   & {\rm e}^{\theta_1(n^{3/4} \tau z)-\theta_2(n^{3/4}\tau z)}  & 1 \\
    0 & 0 & 0 & {\rm e}^{\theta_1(n^{3/4} \tau z)-\theta_2(n^{3/4}\tau z)}
\end{pmatrix}.
\end{multline}
Inserting $\theta_j(z)=\frac{3}{4}\omega^j z^{4/3}$ we arrive at
the jump matrix \eqref{eq:jumpforXonR} for $z\in \R_+$.

The proof of \eqref{eq:jumpforXonR} for $z\in \R_-$ is similar,
taking into account that the functions $\theta_j$ have their
branch cuts along the negative real axis.

The proof of the jump matrix \eqref{eq:jumpforXoniR} for $z \in
{\rm i}\R$ is similar.
\end{proof}
The fact that the asymptotics for $X$ is not uniform up to the
axis is a complication that will be automatically resolved during
the steepest descent analysis.

The RH problem for $X$ reveals to some extent how the equilibrium
problem comes into play. We see that the jump matrices in the RH
problem locally have the structure of jump matrices in a $2\times
2$ matrix valued RH problem. The measure $\mu_1$ will be used to
handle the upper left $2 \times 2$ block in the jump matrix
\eqref{eq:jumpforXonR} on $\R$ and the measure $\mu_3$ will handle
the lower right block. The measure $\mu_2$ is supported on the
imaginary axis and will handle the middle $2 \times 2$ block in
the jump matrix \eqref{eq:jumpforXoniR} on the imaginary axis.

Thus $\mu_1$ will act on the first and second rows and columns,
$\mu_2$ and the second and third, and $\mu_3$ and the third and
fourth. Then we also see that $\mu_1$ and $\mu_2$ interact in some
way, since they both act on the second row and column. Similarly,
$\mu_2$ and $\mu_3$ will interact. On the other hand, the measures
$\mu_1$ and $\mu_3$ do not interact with each other, since they
act on different rows and columns. This is reflected in the
equilibrium problem where we have that the mutual logarithmic
energies $I(\mu_1, \mu_2)$ and $I(\mu_2, \mu_3)$ are present, but
not the mutual logarithmic energy of $\mu_1$ and $\mu_3$.

The potential $V(z)$ in \eqref{eq:jumpforXonR} comes with an extra
term $-\frac{3}{4}\tau^{4/3}|z|^{4/3}$. Together they appear in
the equilibrium problem as the external field on $\mu_1$. The
lower $2\times 2$ block in \eqref{eq:jumpforXonR} and the middle
$2\times 2$ block in \eqref{eq:jumpforXoniR} have rapidly
oscillating entries on the diagonal. This is reminiscent of what
happens in the steepest descent analysis of the RH problem for
orthogonal polynomials after the application of the $g$-functions
\cite{DKMVZuniform}.

We will explain at a later stage why the restriction $\mu_2 \leq
\sigma$ plays a role. Here we can only say that it is partly
explained by the triangular structure of the jump matrices
\eqref{eq:jumpforXoniR} and \eqref{eq:jumpforXonR}. Indeed the
non-trivial $2\times 2$ block in \eqref{eq:jumpforXoniR} is lower
triangular, while the $2\times 2$ blocks in \eqref{eq:jumpforXonR}
are upper triangular.

\section{Analysis of the equilibrium problem}

To prepare for the next transformation of the RH problem we need
the equilibrium problem for the energy functional $E_V$. In this
section we prove Theorem \ref{prop:structureEqmeasure} which gives
existence and properties of the equilibrium measures $\mu_1$,
$\mu_2$ and $\mu_3$. We also prove Propositions
\ref{prop:variationalconditions} and \ref{Th:convexity}. A basic
tool is the balayage (sweeping out) of a measure.

\subsection{Balayage}
The balayage of a finite measure $\nu$ onto a closed set $K$ with
positive capacity is a positive measure $\widehat \nu$ on $K$ with
$\|\nu\|=\|\widehat \nu\|$ and with the property that there exists
a constant $C$ such that
\begin{equation} \label{balayage}
      U^{\nu}(x) = U^{\widehat{\nu}}(x) + C, \qquad \text{q.e. } x \in
      K,
\end{equation}
where q.e.\ means quasi-everywhere, that is, with the exception of
a set of zero capacity. We will be interested in cases where $K$
is either the real or imaginary axis, or $K$ consists of two
unbounded intervals along the imaginary axis
\begin{equation} \label{eq:definitionKc}
    K_c = (-{\rm i} \infty, -{\rm i}c] \cup [{\rm i}c, {\rm i}\infty), \qquad c > 0.
\end{equation}
In these cases we have $C=0$ for the constant in \eqref{balayage}.

We use the notation $\widehat{\nu}$ for the balayage of $\nu$ if
$K$ is understood. To emphasize $K$ we write
\begin{equation} \label{eq:notationbalayage}
    \widehat{\nu} = \Bal(\nu, K).
\end{equation}

Each balayage measure can be written as the integral over balayage
measures of Dirac-delta measures as follows,
    \begin{align} \label{eq:integralofbalayage}
        \Bal(\nu,K)=\int \Bal(\delta_z,K) \ {\rm d}\nu(z).
    \end{align}
Hence it is sufficient (for our purposes) to calculate
$\Bal(\delta_z,K)$ for several $z$ and $K$.

\begin{example} \label{example1}
If $K = \mathbb R$ is the real line, and $y \neq 0$, then the
balayage measure of $\delta_{{\rm i}y}$ onto $\R$ has the Cauchy density
\begin{equation} \label{eq:balayageyonR}
         \frac{{\rm d} \Bal(\delta_{{\rm i}y}, \R)(x)}{{\rm d}x} = \frac{1}{\pi} \frac{|y|}{x^2 + y^2}, \qquad  x \in \R.
\end{equation}

For $K = {\rm i} \mathbb R$ and $x \neq 0$ we have similarly
\begin{equation}\label{eq:balayagexoniR}
    \frac{{\rm d} \Bal(\delta_x, {\rm i} \R)}{|{\rm d}z|} = \frac{1}{\pi} \frac{|x|}{|z|^2 + x^2}, \qquad z \in {\rm i} \R.
\end{equation}

These results are well-known. They are also the limiting case $c
\to 0+$ of \eqref{eq:balayagexonKc} in  the next example.
\end{example}

\begin{example} \label{example2}
Let $K_c = (-{\rm i}\infty,-{\rm i}c] \cup [{\rm i}c,{\rm  i} \infty)$ with $c >0$. If $y
\in (-c,c)$ then the balayage of $\delta_{{\rm i}y}$ onto $K_c$ has the
density
\begin{equation} \label{eq:balayageiyonKc}
    \frac{{\rm d} \Bal(\delta_{{\rm i} y}, K_c)}{|{\rm d}z|} =
               \frac{1}{\pi}\frac{\sqrt{c^2-y^2}}{|z - {\rm i}y| \sqrt{|z|^2-c^2}}, \qquad z \in K_c.
         \end{equation}

If $x \in \R$ then the balayage of $\delta_x$ onto $K_c$ has the
density
\begin{equation} \label{eq:balayagexonKc}
    \frac{{\rm d} \Bal(\delta_{x}, K_c)}{|dz|} =
               \frac{1}{\pi} \frac{|z| \sqrt{c^2+x^2}}{(|z|^2 + x^2) \sqrt{|z|^2-c^2}}, \qquad z \in K_c.
         \end{equation}
In all cases we have that the constant $C$ in \eqref{balayage}
vanishes.

These expressions can be proved by simple contour integration. We
omit the proofs. It will be important for us that the densities
\eqref{eq:balayageiyonKc} and \eqref{eq:balayagexonKc} are
decreasing as $|z|$ increases, with a decay rate $O(|z|^{-2})$ as
$|z| \to \infty$.
\end{example}

In the following we will see how the notion of balayage is related
to the equilibrium problem.

\subsection{Equilibrium problem for $\nu_3$}

If we fix two of the measures $\nu_j$, $j=1,2,3$ we can consider
the equilibrium problem with respect to the remaining measure
only.
If $\nu_1$ and $\nu_2$ are given, then the equilibrium problem for
$\nu_3$ is to minimize
\begin{equation} \label{eq:equilibriumnu3}
    I(\nu) - \int U^{\nu_2} {\rm d}\nu
\end{equation}
with respect to all measures $\nu$ on $\R$ with total mass
$1/3$. This problem only depends on $\nu_2$. Assuming that $\nu_2$
has finite logarithmic energy and total mass $2/3$ (as in conditions
(a) and (c) of the equilibrium problem), the minimizer for \eqref{eq:equilibriumnu3}
is given by $\nu_3 = \frac{1}{2} \Bal(\nu_2, \R)$
since it satisfies $U^{\nu_3} = \frac{1}{2} U^{\nu_2}$ on $\R$,
which is the variational condition for \eqref{eq:equilibriumnu3},
and it has the correct total mass $1/3$. So we have

\begin{lemma} \label{lemma:equilibriumnu3}
Suppose $\nu_1$ and $\nu_2$ are fixed so that conditions {\rm
(a)}, {\rm (b)}, and {\rm (c)} of the equilibrium problem are
satisfied. Then the measure $\nu_3$ that minimizes $E_V(\nu_1,
\nu_2,\nu_3)$ subject to the condition {\rm (d)} is given by
\begin{equation} \label{eq:nu3asbalayage}
    \nu_3 = \frac{1}{2} \Bal(\nu_2, \R).
\end{equation}
Thus by \eqref{eq:balayageyonR} and \eqref{eq:integralofbalayage},
we see that $\nu_3$ has a density given by
\[ \frac{{\rm d} \nu_3}{{\rm d}x}(x) = \frac{1}{\pi} \int \frac{|z|}{x^2 + |z|^2}
    {\rm d} \nu_2(z). \]
\end{lemma}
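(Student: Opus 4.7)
The plan is to isolate the part of the energy $E_V(\nu_1,\nu_2,\nu_3)$ that depends on $\nu_3$ and recognize the resulting problem as a standard equilibrium problem with external field. Since $\nu_1$ and $\nu_2$ are fixed, the only $\nu_3$-dependent terms in \eqref{energyfunctional} are $I(\nu_3)$ and $-I(\nu_2,\nu_3)$. Using that $I(\nu_2,\nu_3)=\int U^{\nu_2}(x)\,{\rm d}\nu_3(x)$ by Fubini, the problem reduces to minimizing
$$F(\nu):=I(\nu)-\int U^{\nu_2}(x)\,{\rm d}\nu(x)$$
among all non-negative Borel measures $\nu$ on $\R$ with $\nu(\R)=1/3$ satisfying \eqref{cond:unbounded-support}. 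Since $-U^{\nu_2}(x)\sim\tfrac{2}{3}\log|x|$ as $|x|\to\infty$, the external field is weakly admissible in the sense of \cite{Simeonov}; existence and uniqueness of the minimizer, together with the necessity and sufficiency of the Euler-Lagrange conditions, follow from the theory developed there.

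Next I would write down the Euler-Lagrange conditions: a measure $\nu$ is the minimizer if and only if there exists a constant $\ell_3$ with $2U^\nu(x)-U^{\nu_2}(x)=\ell_3$ for q.e.\ $x\in S(\nu)$ and $2U^\nu(x)-U^{\nu_2}(x)\ge\ell_3$ for q.e.\ $x\in\R\setminus S(\nu)$. I would then verify these conditions directly for the candidate $\nu_3=\tfrac{1}{2}\Bal(\nu_2,\R)$. By the defining property of balayage onto $\R$, which has constant $C=0$ as noted in Example \ref{example1}, one has $U^{\Bal(\nu_2,\R)}(x)=U^{\nu_2}(x)$ for q.e.\ $x\in\R$, and therefore $2U^{\nu_3}(x)=U^{\nu_2}(x)$ on $\R$. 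The equality version of the Euler-Lagrange condition is thus satisfied with $\ell_3=0$ and $S(\nu_3)=\R$, so $\nu_3$ is the minimizer. The mass condition is automatic: balayage preserves total mass, so $\nu_3(\R)=\tfrac12\|\nu_2\|=\tfrac12\cdot\tfrac23=1/3$, in agreement with {\rm (d)}.

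Finally, the density representation in the lemma is obtained by inserting the explicit Cauchy density \eqref{eq:balayageyonR} for $\Bal(\delta_{{\rm i}y},\R)$ into the disintegration identity \eqref{eq:integralofbalayage} and swapping the order of integration. The main delicate point, rather than a serious obstacle, is the handling of the unboundedness of the target set $K=\R$: one must justify that $\Bal(\nu_2,\R)$ is a well-defined finite measure with the same total mass as $\nu_2$ and with the correct potential-theoretic characterization. Using \eqref{eq:integralofbalayage} together with the explicit Cauchy density, Fubini, and the growth condition \eqref{cond:unbounded-support} on $\nu_2$, this is routine. Uniqueness of the minimizer of $F$ is an immediate consequence of the strict convexity of $\nu\mapsto I(\nu)$ on the affine space of signed measures of fixed total mass and finite logarithmic energy.
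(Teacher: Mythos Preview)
Your proposal is correct and follows essentially the same approach as the paper: isolate the $\nu_3$-dependent part of $E_V$, recognize it as the equilibrium problem \eqref{eq:equilibriumnu3} with external field $-U^{\nu_2}$, and verify that $\tfrac{1}{2}\Bal(\nu_2,\R)$ satisfies the variational condition $2U^{\nu_3}=U^{\nu_2}$ on $\R$ with the correct total mass. Your write-up is in fact more detailed than the paper's, which simply asserts these facts in the paragraph preceding the lemma; your explicit invocation of weak admissibility via \cite{Simeonov} and of strict convexity for uniqueness is entirely in line with what the paper has in mind.
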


\subsection{Equilibrium problem for $\nu_1$}

For given $\nu_2$ and $\nu_3$ having finite logarithmic energy,
the equilibrium problem for $\nu_1$ is to minimize
\begin{equation} \label{eq:equilibriumfornu1}
    I(\nu) + \int \left(V(x) - \frac{3}{4} \tau^{4/3} |x|^{4/3} - U^{\nu_2}(x) \right) {\rm d}\nu(x)
\end{equation}
among all probability measures $\nu$ on $\R$. This is a usual equilibrium
problem on $\R$ with external field
\[ V(x) - \frac{3}{4} \tau^{4/3} |x|^{4/3} - U^{\nu_2}(x). \]
The term $-U^{\nu_2}$ in the external field attracts the measure
$\nu$ towards $0$.

\begin{lemma} \label{lemma:equilibriumnu1}
Suppose $\nu_2$ and $\nu_3$ are fixed so that conditions {\rm
(a)}, {\rm (c)}, {\rm (d)} and {\rm (e)} of the equilibrium
problem are satisfied. Then the measure $\nu_1$ that minimizes
$E_V(\nu_1, \nu_2,\nu_3)$ subject to the condition {\rm (b)} is
the minimizer for \eqref{eq:equilibriumfornu1}.

It has the following properties:
\begin{enumerate}
\item[\rm (a)] The support of $\nu_1$ is contained in the convex
hull of the support of the equilibrium measure on $\R$ in the
external field $V(x) -\frac{3}{4} \tau^{4/3} |x|^{4/3}$.
 \item[\rm (b)] If $\nu_2$ is so that there exists a constant $c > 0$ such
that $S(\sigma-\nu_2) \subset K_c$ where $K_c$ is given by \eqref{eq:definitionKc},
then
\[ \frac{3}{4} \tau^{4/3} |x|^{4/3} + U^{\nu_2}(x) \]
is real analytic on $\mathbb R$.
\end{enumerate}
\end{lemma}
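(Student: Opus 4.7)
The plan is to first reduce the minimization over $\nu_1$ to a classical single-measure equilibrium problem. Using $I(\nu_1,\nu_2)=\int U^{\nu_2}\,d\nu_1$, the terms of \eqref{energyfunctional} involving $\nu_1$ assemble exactly into \eqref{eq:equilibriumfornu1}; the remainder depends only on $\nu_2$ and $\nu_3$. So the minimizer $\nu_1$ is the equilibrium measure on $\R$ for the external field $\tilde V(x):=V(x)-\tfrac{3}{4}\tau^{4/3}|x|^{4/3}-U^{\nu_2}(x)$. Since $\nu_2$ lives on ${\rm i}\R$ with finite logarithmic energy and mass $2/3$, the function $U^{\nu_2}$ is continuous on $\R$ and grows only logarithmically at infinity, so $\tilde V$ is admissible in the sense of \cite{ST} and has a unique minimizer.

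For part (a), observe that because $\nu_2\ge 0$ lives on ${\rm i}\R$,
\begin{equation*}
    -U^{\nu_2}(x)=\tfrac{1}{2}\int\log(x^2+s^2)\,d\nu_2({\rm i}s),\qquad x\in\R,
\end{equation*}
is strictly decreasing on $(-\infty,0)$ and strictly increasing on $(0,\infty)$. Thus the perturbation $-U^{\nu_2}$ added to $Q_1(x):=V(x)-\tfrac{3}{4}\tau^{4/3}|x|^{4/3}$ attracts mass toward the origin. Letting $\mu^{Q_1}$ be the equilibrium measure in the field $Q_1$, with convex hull of support $[-A,A]$, I would then argue by balayage: if $\rho:=\nu_1|_{\R\setminus[-A,A]}$ were nonzero, replacing $\rho$ by $\tilde\rho=\Bal(\rho,[-A,A])$ would (i) not increase the logarithmic self-energy, by the standard potential-theoretic property of balayage, (ii) strictly decrease $\int Q_1\,d\nu$ thanks to the strict Frostman inequality $Q_1>F_0-2U^{\mu^{Q_1}}$ outside $S(\mu^{Q_1})$ combined with the reciprocity $\int U^{\mu^{Q_1}}d\tilde\rho=\int U^{\mu^{Q_1}}d\rho$, and (iii) not increase $-\int U^{\nu_2}\,d\nu$ by the monotonicity of $-U^{\nu_2}$ in $|x|$. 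The net strict decrease contradicts minimality of $\nu_1$.

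For part (b), the plan is to use the constraint $S(\sigma-\nu_2)\subset K_c$ to write $\nu_2=\sigma|_{(-{\rm i}c,{\rm i}c)}+\nu_2|_{K_c}$. The contribution $U^{\nu_2|_{K_c}}$ is real analytic in a strip around $\R$ since $K_c$ lies at distance $c>0$ from $\R$ and \eqref{cond:unbounded-support} ensures convergence of the continued integral. For the $\sigma|_{(-{\rm i}c,{\rm i}c)}$-part, I would compute
\begin{equation*}
    \frac{d}{dx}\Bigl[\tfrac{3}{4}\tau^{4/3}|x|^{4/3}+U^{\sigma|_{(-{\rm i}c,{\rm i}c)}}(x)\Bigr]
        =\frac{\sqrt{3}}{2\pi}\tau^{4/3}\int_{|s|>c}\frac{x\,|s|^{1/3}}{x^2+s^2}\,ds,\qquad 0<|x|<c,
\end{equation*}
where the singular contributions at $x=0$ cancel by virtue of the identity $\int_{-\infty}^{\infty}\frac{x\,|s|^{1/3}}{x^2+s^2}\,ds=\frac{2\pi}{\sqrt{3}}\sign(x)|x|^{1/3}$, which in turn rests on the beta-integral $\int_0^\infty t^{1/3}/(1+t^2)\,dt=\pi/\sqrt{3}$. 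This cancellation precisely justifies the normalization in \eqref{eq:densitysigma}. The right-hand side above is real analytic in $(-c,c)$, so combined with analyticity on $\R\setminus\{0\}$ we obtain analyticity on all of $\R$.

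The main obstacle is the balayage step in part (a): although the monotonicity of $-U^{\nu_2}$ in $|x|$ provides a clear geometric picture, $-U^{\nu_2}$ is not convex, so one cannot invoke a standard convexity lemma and must verify by hand that the strict inequality from the Frostman condition for $\mu^{Q_1}$ survives the simultaneous treatment of all three contributions to the energy. Part (b) reduces to a direct computation once the decomposition of $\nu_2$ is set up.
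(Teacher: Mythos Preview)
Your treatment of part (b) is correct and essentially the same as the paper's: split $\nu_2$ as $\sigma\big|_{(-{\rm i}c,{\rm i}c)}+\nu_2\big|_{K_c}$, observe that the second piece contributes a potential real analytic on $\R$, and use the beta integral $\int_0^\infty t^{1/3}(1+t^2)^{-1}\,dt=\pi/\sqrt3$ to show that the singularity of $U^{\sigma|_{(-{\rm i}c,{\rm i}c)}}$ at $x=0$ cancels exactly against that of $\tfrac34\tau^{4/3}|x|^{4/3}$.

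For part (a), however, your step (i) is false: balayage onto a \emph{compact} set generally \emph{increases} the logarithmic self-energy. When $K\subset\C$ is compact the balayage identity reads $U^{\tilde\rho}=U^\rho+c$ q.e.\ on $K$ with $c=\int g_{\C\setminus K}(\,\cdot\,,\infty)\,d\rho>0$ whenever $\rho$ lives off $K$; this is the constant $C$ in \eqref{balayage}, which vanishes only for the unbounded sets used elsewhere in the paper. One then computes $I(\tilde\rho)=\int_K(U^\rho+c)\,d\tilde\rho=I(\rho,\tilde\rho)+c\|\rho\|$, while globally only $U^{\tilde\rho}\le U^\rho+c$ holds, so the best available bound is $I(\tilde\rho)\le I(\rho)+2c\|\rho\|$. (Concretely: sweep the uniform probability measure on $\{|z|=2\}$ onto the closed unit disk; the result is the uniform measure on the unit circle, and the energies go from $-\log 2$ to $0$.) Your steps (ii) and (iii) do furnish decreases, but nothing in the scheme shows they dominate the increase in (i), so the obstacle you flag at the end is not a bookkeeping issue but a structural failure of the method.

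The paper avoids all of this by invoking the Mhaskar--Saff characterization \cite[Chapter~III.2]{ST}: the support of the equilibrium measure in field $Q$ is the closure of the set where weighted polynomials $e^{-nQ}P_n^2$, $\deg P_n\le n$, attain their maximum modulus. If for $Q=Q_1$ this maximum is always attained in $[-X,X]$, then since $e^{nU^{\nu_2}}$ is even and strictly decreasing in $|x|$, the maximum of $e^{-n(Q_1-U^{\nu_2})}P_n^2=e^{nU^{\nu_2}}\cdot e^{-nQ_1}P_n^2$ is also attained only in $[-X,X]$. This yields $S(\nu_1)\subset[-X,X]$ in one line, without any energy comparison.
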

\begin{proof} (a) We use the fact that the support of the equilibrium
measure with external field $Q$ on $\R$ is the closure of those
points where weighted polynomials $e^{-nQ} P_n^2$, $\deg P_n \leq
n$, $n \in \N$, take their maximum modulus \cite[Chapter
III.2]{ST}.

Assume the equilibrium measure in the external field $Q(x) = V(x)
- \frac{3}{4} \tau^{4/3} |x|^{4/3}$ is supported in $[-X,X]$ for
some $X > 0$. Hence the maximum modulus of $e^{-nQ} P_n^2$, $\deg
P_n \leq n$, is attained only in $[-X,X]$. Since $U^{\nu_2}(x) =
\int \log \frac{1}{|x-z|} {\rm d}\nu_2(z)$ is even on $\R$ and
strictly decreasing as $|x|$ increases, it then follows that the
maximum modulus of $e^{n U^{\nu_2}} e^{-nQ} P_n^2 = e^{-n(Q -
U^{\nu_2})} P_n^2$, $\deg P_n \leq n$ is attained on $[-X,X]$
only, which proves part (a) of the lemma.

(b) Suppose  $\nu_2$ and $c > 0$ are such that $S(\sigma-\nu_2)
\subset K_c$. Then
\begin{equation} \label{eq:realanalytic1}
    U^{\nu_2}(x) = - \int_{-{\rm i}c}^{{\rm i}c} \log|x-z| {\rm d}\sigma(z)
    - \int_{|z| \geq c} \log|x-z| {\rm d}\nu_2(z),
\end{equation}
and the second term is real analytic for $x \in \R$.
For the first term in \eqref{eq:realanalytic1} we have by
the definition \eqref{eq:densitysigma} of $\sigma$,
\begin{equation} \label{eq:realanalytic2}
    - \int_{-{\rm i}c}^{{\rm i}c} \log|x-z| {\rm d}\sigma(z)
    = - \frac{\sqrt{3}}{2 \pi} \tau^{4/3} \int_0^c \log(x^2+y^2) y^{1/3} {\rm d}y
\end{equation}
whose derivative with respect to $x$ is
\begin{multline} \label{eq:realanalytic3}
    -\frac{\sqrt{3}}{\pi} \tau^{4/3} \int_0^c \frac{xy^{1/3}}{x^2+y^2} {\rm d}y \\
    =
 -\frac{\sqrt{3} }{\pi} \tau^{4/3}\int_0^{\infty} \frac{x y^{1/3}}{x^2+y^2}  {\rm d}y
    + \frac{\sqrt{3}}{\pi} \tau^{4/3} \int_c^{\infty} \frac{x y^{1/3}}{x^2+y^2}  {\rm d}y.
    \end{multline}
The second term on the right-hand side of \eqref{eq:realanalytic3}
is real analytic on $\R$. In the first term we introduce the
change of variables $y = |x| s$ to obtain
\[  \int_0^{\infty} \frac{x y^{1/3}}{x^2+y^2} {\rm d}y = x |x|^{-2/3}
    \int_0^{\infty} \frac{ s^{1/3}}{1+s^2} {\rm d}s
\]
Because of the standard integral
\begin{equation} \label{eq:standardintegral}
    \int_0^{\infty} \frac{s^p}{1+s^2} {\rm d}s = \frac{\pi}{2 \cos (p\pi/2)},
    \qquad -1  < p < 1, \end{equation}
which for $p =1/3$ gives the value $\frac{1}{3} \sqrt{3} \pi$,
it then follows that
\[ -\frac{\sqrt{3}}{\pi} \tau^{4/3} \int_0^{\infty} \frac{x y^{1/3}}{x^2+y^2}  {\rm d}y
    = - \tau^{4/3} x |x|^{-2/3}. \]

Combining this with \eqref{eq:realanalytic1},
\eqref{eq:realanalytic2}, and \eqref{eq:realanalytic3}, we obtain
\[ \frac{{\rm d}}{{\rm d}x} U^{\nu_2}(x) = - \tau^{4/3} x |x|^{-2/3} + \text{``real analytic function''}, \]
which after integration completes the proof of part (b) of the
lemma. \end{proof}

Lemmas \ref{lemma:equilibriumnu3} and \ref{lemma:equilibriumnu1}
show how the solutions of the equilibrium problems for $\nu_1$ and
$\nu_3$ are determined by the measure $\nu_2$. Next we study the
converse: given $\nu_1$ and $\nu_3$, what are the properties of
the solution of the equilibrium problem for $\nu_2$?

\subsection{Equilibrium problem for $\nu_2$}
Now suppose that  we are given $\nu_1$ and $\nu_3$ satisfying
the conditions (a), (b), (d) of the equilibrium problem and we
wish to minimize $E_V(\nu_1,\nu_2,\nu_3)$ with respect to $\nu_2$.
The equilibrium problem for $\nu_2$ is to minimize
\[ I(\nu) - \int \left(U^{\nu_1} + U^{\nu_3}\right) {\rm d}\nu \]
with respect to all measures $\nu$ on ${\rm i}\R$ with total mass $2/3$
and satisfying the constraint $\nu \leq \sigma$.

If the constraint $\nu \leq \sigma$ were not present, then
simple balayage arguments would give that the minimizer is given
by
\[ \frac{1}{2}\Bal\left(\nu_1 + \nu_3,{\rm i}\R\right). \]
Note that this measure has indeed the correct total mass $2/3$.
However, this measure will violate the constraint. Indeed, the
density of the balayage  of any measure on $\R$ onto ${\rm
i}\R$ has its maximum at $0$, and  is strictly decreasing for $z \in {\rm i}\R$
if $|z|$ increases, see also \eqref{eq:balayagexoniR}.
Since  the density of $\sigma$ \eqref{eq:densitysigma} vanishes at $z=0$
and is strictly increasing if $|z|$ increases, it is
clear that the  density of the balayage measure lies strictly above the
density of $\sigma$ on a non-empty symmetr{\rm i}c interval around $0$ on ${\rm i}\R$.

\begin{lemma} \label{lemma:equilibriumnu2}
Suppose that $\nu_1$ and $\nu_3$ are fixed so that conditions {\rm
(a)}, {\rm (b)}, and {\rm (d)} of the equilibrium problem are
satisfied. Then the measure $\nu_2$ that minimizes $E_V(\nu_1,
\nu_2,\nu_3)$ subject to the conditions (c) and (e) exists, and
there is a constant $c >0$ such that
\[ \supp(\sigma - \nu_2) = K_c := (-{\rm i} \infty, -{\rm i}c] \cup [{\rm i}c, {\rm i} \infty). \]
Moreover, $\nu_2$ satisfies
\begin{align}\label{eq:variationalcondnu2}
            \begin{cases}
                U^{\nu_{2}}(z) = U^{\nu_1}(z)+U^{\nu_3}(z), & z\in {\rm i}\R\setminus (-{\rm i}c,{\rm i}c)\\
                U^{\nu_{2}}(z) < U^{\nu_1}(z)+U^{\nu_3}(z), & z\in  (-{\rm i}c,{\rm i}c).
            \end{cases}
    \end{align}
    and $\sigma-\nu_2$ vanishes as a square root at $\pm {\rm i}c$.
\end{lemma}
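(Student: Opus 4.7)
My plan is to follow the standard template for constrained equilibrium problems with weakly admissible external field, adapted to the unbounded imaginary axis; the main technical point will be identifying the support of $\sigma - \nu_2$.

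For existence and uniqueness, I would verify that the admissible class---positive measures $\nu$ on ${\rm i}\R$ with $\nu({\rm i}\R) = 2/3$, $\nu \leq \sigma$, and $I(\nu) < \infty$---is nonempty, convex, and vaguely closed, and that the upper constraint $\nu \leq \sigma$ together with the growth $|z|^{1/3}$ of ${\rm d}\sigma/|{\rm d}z|$ controls the tails so that mass cannot escape to infinity. Strict convexity of $I$, lower semi-continuity of $I(\nu) - \int(U^{\nu_1}+U^{\nu_3})\,{\rm d}\nu$, and a standard coercivity estimate then produce a unique minimizer $\nu_2$. Perturbing $\nu_2$ by zero-mass signed measures whose positive part is supported on $\supp(\sigma - \nu_2)$ and applying a Kuhn--Tucker argument yields
\[ 2U^{\nu_2} - U^{\nu_1} - U^{\nu_3} = F \]
quasi-everywhere on $\supp(\sigma - \nu_2)$ and $\geq F$ on ${\rm i}\R$ for some constant $F$. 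The mass balance $2\cdot\tfrac{2}{3} - 1 - \tfrac{1}{3} = 0$ forces the $\log|z|$ terms at infinity to cancel, so $F = 0$, giving \eqref{eq:variationalcondnu2}.

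The structural claim $\supp(\sigma-\nu_2) = K_c$ is the core of the argument. By uniqueness of the minimizer and the reflection symmetry of $\sigma$, $\nu_1$, $\nu_3$, the set $S := \supp(\sigma - \nu_2)$ must be symmetric under $z\mapsto -z$. On ${\rm i}\R$, the density of the unconstrained free balayage $\tfrac{1}{2}\Bal(\nu_1+\nu_3,{\rm i}\R)$ given by \eqref{eq:balayageyonR} is strictly decreasing in $|z|$, whereas ${\rm d}\sigma/|{\rm d}z|$ is strictly increasing; this forces the constraint $\nu_2 = \sigma$ to be active precisely on a single symmetric interval around the origin. To make this rigorous, for each $c > 0$ I would form the trial measure $\sigma|_{(-{\rm i}c, {\rm i}c)} + \eta_c$, where $\eta_c$ is the unconstrained equilibrium measure on $K_c$ with external field $-(U^{\nu_1}+U^{\nu_3})$ and prescribed mass $\tfrac{2}{3} - \sigma((-{\rm i}c, {\rm i}c))$; a continuity and monotonicity argument in $c$ then selects a unique $c$ for which this candidate satisfies the upper constraint globally together with the variational inequality on $(-{\rm i}c, {\rm i}c)$, and uniqueness of the minimizer identifies this candidate with $\nu_2$. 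I expect this to be the step that demands the most care, since it requires ruling out more complicated symmetric configurations for $S$ via the explicit monotonicity of both densities in $|z|$.

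The square-root vanishing at $\pm {\rm i}c$ follows by differentiating the equality $2U^{\nu_2} = U^{\nu_1} + U^{\nu_3}$ on $K_c$ along ${\rm i}\R$: this expresses the Cauchy transform of $\sigma - \nu_2$ restricted to $K_c$ through the (analytic) Cauchy transforms of $\sigma$, $\nu_1$, $\nu_3$, leading to a scalar Riemann--Hilbert problem on $K_c$ whose standard Akhiezer-type solution on the two-sheeted Riemann surface of $\sqrt{z^2 + c^2}$ produces a density that extends analytically across $K_c$ and vanishes like $\sqrt{|z|^2 - c^2}$ at the endpoints.
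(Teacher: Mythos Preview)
Your approach is genuinely different from the paper's, and the existence step as you have sketched it contains a real gap. The paper does \emph{not} run a compactness/variational argument; it instead constructs the minimizer directly by the iterated balayage algorithm of Kuijlaars--Dragnev. One starts from the unconstrained balayage $\nu_{2,0}=\tfrac{1}{2}\Bal(\nu_1+\nu_3,{\rm i}\R)$, observes from \eqref{eq:balayagexoniR} and \eqref{eq:densitysigma} that $(\nu_{2,0}-\sigma)^+$ is supported on a single symmetric interval $[-{\rm i}c_0,{\rm i}c_0]$, and then repeatedly sweeps the excess onto $K_{c_k}$, using the explicit decreasing densities \eqref{eq:balayageiyonKc}--\eqref{eq:balayagexonKc} to see that the new excess is again supported on a symmetric interval. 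The increasing limit $c$ and weak limit $\nu_2$ satisfy the variational conditions with \emph{strict} inequality on $(-{\rm i}c,{\rm i}c)$, and the monotonicity of the density of $\nu_2$ on $K_c$ versus the increasing density of $\sigma$ yields the square-root vanishing with no Riemann--Hilbert analysis.

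In your sketch, the sentence ``the upper constraint $\nu\le\sigma$ together with the growth $|z|^{1/3}$ of ${\rm d}\sigma/|{\rm d}z|$ controls the tails'' is the wrong way around: since the constraint density \emph{grows}, the bound $\nu\le\sigma$ imposes no restriction at infinity and gives no tightness whatsoever. Existence in this weakly admissible setting would have to come from the external field $-(U^{\nu_1}+U^{\nu_3})\sim \tfrac{4}{3}\log|z|$ balancing the mass $2/3$, e.g.\ via the results of Simeonov that the paper cites; the iterated balayage bypasses this issue completely. Separately, your Kuhn--Tucker inequality is reversed: on the active set $\{\nu_2=\sigma\}$ only nonpositive perturbations are admissible, so the first-order condition gives $2U^{\nu_2}-U^{\nu_1}-U^{\nu_3}\le F$ there (and $=F$ on $\supp(\sigma-\nu_2)$), matching \eqref{eq:variationalcondmu2a}--\eqref{eq:variationalcondmu2b}; your ``$\ge F$ on ${\rm i}\R$'' would correspond to a lower constraint. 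Finally, the abstract KKT argument only yields the non-strict inequality; the paper's iterative construction is what produces the strict ``$<$'' on $(-{\rm i}c,{\rm i}c)$, and your parametric-family proposal for identifying $K_c$ would need a separate argument for strictness.
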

\begin{proof} We use the iterated balayage algorithm of
\cite{Kuijlaars-Dragnev}. Write
\[ \nu_{2,0} := \frac{1}{2}\Bal\left(\nu_1+\nu_3,{\rm i}\R\right). \]
As already noted above, there exists $c_0$ such that
\[ S((\nu_{2,0} - \sigma)^+) = [-{\rm i}c_0, {\rm i}c_0] \]
where $(\nu_{2,0} - \sigma)^+$ denotes the positive part of the
signed measure $\nu_{2,0} -\sigma$. It then follows from the
saturation principle of \cite{Dragnev-Saff} that the minimizer
$\nu_2$ (if it exists) is equal to the constraint $\sigma$ on
$[-{\rm i}c_0, {\rm i}c_0]$.

Then we define
\begin{align} \label{eq:iteratedbalayage1}
    \nu_{2,1} := \min(\sigma, \nu_{2,0}) +
    \Bal\left((\nu_{2,0}-\sigma)^+, K_{c_0}\right).
\end{align}
In other words, we balayage the part of $\nu_{2,0}$ that lies
above $\sigma$ onto the part of the imaginary axis $K_{c_0}$ where
the constraint is not (yet) active. Then $\nu_{2,1}$  has a
density for $z \in K_{c_0}$ that is strictly decreasing as $|z|$
increases, see also \eqref{eq:balayageiyonKc}. Then there exists
$c_1 > c_0$ so that
\[ S((\nu_{2,1} - \sigma)^+)= [-{\rm i}c_1, -{\rm i}c_0] \cup [{\rm i}c_0, {\rm i}c_1] \]
and we define
\begin{align} \label{eq:iteratedbalayage2}
 \nu_{2,2} := \min(\sigma, \nu_{2,1}) +
    \Bal\left((\nu_{2,1}-\sigma)^+, K_{c_1}\right).
\end{align}

Continuing this way we find an increasing sequence $(c_k)$ and a
sequence $(\nu_{2,k})$ of measures of total mass $2/3$. Clearly,
the $c_k$ must be such that
    \begin{align}
        \sigma([-{\rm i}c_k,{\rm i}c_k])\leq \frac{2}{3}.
    \end{align}
Hence the sequence converges and we denote the limit with $c$. It
follows as in \cite{Kuijlaars-Dragnev} that $(\nu_{2,k})$
converges weakly to a measure $\nu_2$. In fact, after each step in
the iterated balayage process we have that
    \begin{align}
            \begin{cases}
                U^{\nu_{2,k+1}}(x) = U^{\nu_{2,k}}(z), & z\in {\rm i}\R\setminus (-{\rm i}c_k,{\rm i}c_k), \\
                U^{\nu_{2,k+1}}(x) < U^{\nu_{2,k}}(z), & z\in  (-{\rm i}c_k,{\rm i}c_k).
            \end{cases}
    \end{align}
From these equations and the fact that
$2U^{\nu_{2,0}}(z)=U^{\nu_1}(z)+U^{\nu_3}(z)$ for all $z\in {\rm
i}\R$, we can easily deduce \eqref{eq:variationalcondnu2}. Hence
$\nu_2$  satisfies the variational conditions (with strict
inequality) that uniquely characterize the minimizer.

From the above arguments it also follows that the density of
$\nu_{2}$ for $z \in K_c$ is strictly decreasing as $|z|$
increases. Therefore  the density of $\nu_2$ lies strictly below
the density of $\sigma$ on $(-{\rm i}\infty, -{\rm i}c) \cup ({\rm
i}c,{\rm i}\infty)$ and there is a square root vanishing of the
density of $\sigma-\nu_2$ at $\pm {\rm i}c$.
\end{proof}

\subsection{Proof of uniqueness of the minimizer}

We start by rewriting the energy functional as
\begin{multline} \label{eq:energyrewrite}
    E_V(\nu_1,\nu_2,\nu_3) =
    \frac{1}{3} I(\nu_1)+\frac{1}{3} I(2\nu_1- 3\nu_2)+\frac{1}{4} I(\nu_2-2\nu_3)\\
    +\int \left( V(x)-\frac{3}{4}\tau^{4/3} |x|^{4/3} \right) {\rm d}\nu_1(x).
\end{multline}

If two measures $\nu$ and $\mu$ have finite logarithmic energy
and $\int {\rm d}\nu = \int {\rm d}\mu$, then
\begin{equation}\label{eq:positivelogenergy}
    I(\nu-\mu)\geq 0,
\end{equation}
with equality if and only if $\nu= \mu$. This is a well-known
result if $\nu_1$ and $\nu_2$ have compact supports \cite{ST}. For
measures with unbounded support, this is a more recent result of
Simeonov \cite{Simeonov}, who obtained this from a very elegant
integral representation for $I(\nu- \mu)$.

It follows from \eqref{eq:positivelogenergy} that the energy
functional \eqref{eq:energyrewrite} is strictly convex. The
minimizer is therefore uniquely characterized by the
Euler-Lagrange variational conditions listed in
Proposition~\ref{prop:variationalconditions}.

\subsection{Proof of the existence of the minimizer}

We rewrite $E_V$ again as in \eqref{eq:energyrewrite}. From
\eqref{eq:positivelogenergy} it follows that
\begin{align}
I(2\nu_1- 3\nu_2) > 0 \textrm{ and } I(\nu_2-2\nu_3) > 0.
\end{align}
Combining this with \eqref{eq:energyrewrite} leads to
\begin{align}
    E_V(\nu_1,\nu_2,\nu_3) &\geq \frac{1}{3} I(\nu_1) +
    \int \left(V(x)-\frac{3}{4}\tau^{4/3} |x|^{4/3} \right) {\rm d}\nu_1(x),
\end{align}
and this implies that $E_V$ is bounded from below.

A natural approach to prove the existence of the minimizer is now
the following. Since $E_V$ is bounded from below we can approach
the infimum by a sequence of measures satisfying the conditions of
the equilibrium problem. If we can prove that this sequence is
tight, then we have that the sequence converges weakly to the
minimizer of the $E_V$. However, it seems complicated to follow
this procedure. The main difficulty is the fact that the measures
$\nu_2$ and $\nu_3$ will have unbounded supports with rather fat
tails. It turns out that their densities behave like
$O(|z|^{-5/3})$ as $|z| \to \infty$.

Our method of proof consists of restricting the measure $\nu_2$
by an additional constraint
\begin{equation} \label{eq:extraconstraint}
    \nu_2 \leq K \sigma_{p}
\end{equation}
where $\sigma_{p}$ is the measure on ${\rm i}\R$ with density
\begin{equation} \label{eq:definitionsigmap}
    d\sigma_{p}(z) = \frac{1}{|z|^p} |dz|, \qquad z\in {\rm i}\R,
\end{equation}
with $p \in (1, 5/3)$  and $K > 0$ is a suitable constant that
will be determined later on. So we consider the equilibrium
problem with extra constraint \eqref{eq:extraconstraint} and we
prove that for the minimizer the extra constraint is not active.

The special role of the value $p = 5/3$ is made
clear in the following lemma.

\begin{lemma} \label{lemma:estimateonnu3}
Let $1 < p < 2$.
If $\nu_2 \leq K \sigma_{p}$ and $\nu_3 = \frac{1}{2} \Bal(\nu_2, \R)$
then
\begin{equation} \label{eq:estimateonnu3}
    \frac{{\rm d}\nu_3}{{\rm d}x} \leq C_p \frac{K}{|x|^p}, \qquad x \in \R,
\end{equation}
with constant
\begin{equation} \label{eq:definitionCp}
    C_p = \frac{1}{2 \sin(p \pi/2)}.
\end{equation}
We have $C_p < 1$ if and only if $p < 5/3$.
\end{lemma}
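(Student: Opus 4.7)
The plan is to combine the representation of the balayage as an integral of Dirac balayages from Example \ref{example1} with the pointwise bound on $\nu_2$, and then reduce the resulting integral to the standard integral \eqref{eq:standardintegral} by a scaling change of variables.

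First I would apply \eqref{eq:integralofbalayage} together with \eqref{eq:balayageyonR} to obtain
\begin{equation*}
    \frac{{\rm d}\nu_3}{{\rm d}x}(x) = \frac{1}{2\pi}\int \frac{|y|}{x^2+y^2}\,{\rm d}\nu_2({\rm i}y).
\end{equation*}
The assumption $\nu_2 \leq K\sigma_p$, together with the definition \eqref{eq:definitionsigmap} of $\sigma_p$, gives ${\rm d}\nu_2({\rm i}y) \leq K|y|^{-p}\,{\rm d}y$, so that
\begin{equation*}
    \frac{{\rm d}\nu_3}{{\rm d}x}(x) \leq \frac{K}{\pi}\int_0^\infty \frac{y^{1-p}}{x^2+y^2}\,{\rm d}y.
\end{equation*}
The substitution $y = |x|s$ (valid for $x \neq 0$) then converts this into
\begin{equation*}
    \frac{K}{\pi}|x|^{-p}\int_0^\infty \frac{s^{1-p}}{1+s^2}\,{\rm d}s,
\end{equation*}
and since $1-p \in (-1,0) \subset (-1,1)$, the standard integral \eqref{eq:standardintegral} with exponent $1-p$ evaluates to $\pi/(2\cos((1-p)\pi/2)) = \pi/(2\sin(p\pi/2))$, yielding the bound \eqref{eq:estimateonnu3} with $C_p$ as in \eqref{eq:definitionCp}.

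For the final claim, $C_p < 1$ is equivalent to $\sin(p\pi/2) > 1/2$, i.e.\ $p\pi/2 \in (\pi/6, 5\pi/6)$, which under $1 < p < 2$ becomes precisely $p < 5/3$.

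The argument is entirely routine once the correct pieces are put in place; I do not anticipate a serious obstacle. The only point that requires minor care is verifying the exchange of integration implicit in applying \eqref{eq:integralofbalayage} and then comparing against the constraint pointwise, but this is justified by the non-negativity of the integrand together with Tonelli's theorem.
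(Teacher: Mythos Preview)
Your proof is correct and follows essentially the same approach as the paper: express the balayage density via the Poisson kernel, apply the pointwise constraint $\nu_2 \leq K\sigma_p$, scale by $y = |x|s$, and evaluate using the standard integral \eqref{eq:standardintegral}. Your treatment of the final equivalence $C_p < 1 \Leftrightarrow p < 5/3$ is slightly more explicit than the paper's, which simply asserts it.
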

\begin{proof}
By direct calculation
\begin{align*}
    \frac{{\rm d}\nu_3}{{\rm d}x} & = \frac{1}{2\pi} \int \frac{|z|}{x^2 + |z|^2} {\rm d}\nu_2(z)
    \leq \frac{1}{2\pi} \int_{-\infty}^{\infty} \frac{|y|}{x^2 + y^2} \frac{K}{|y|^p} {\rm d}y \\
    & = \frac{K}{\pi} \int_0^{\infty} \frac{y^{1-p}}{x^2 + y^2} {\rm d}y,
    \qquad x \in \R.
\end{align*}
Make the change of variables $y = |x| s$ and use
\eqref{eq:standardintegral} to obtain
\begin{align*}
    \frac{{\rm d}\nu_3}{{\rm d}x} & \leq \frac{K}{\pi |x|^p} \int_0^{\infty} \frac{s^{1-p}}{1+s^2} {\rm d}s
     = \frac{K}{\pi |x|^p}  \frac{\pi}{2 \cos((1-p)\pi/2)} \\
    & = \frac{K}{|x|^p} \frac{1}{2 \sin(p \pi/2)}, \qquad x \in \R.
\end{align*}
This proves the lemma. Note that indeed $C_p < 1$ if and only if
$p < 5/3$.
\end{proof}

We may iterate the above argument and we obtain
\begin{corollary} \label{cor:estimateonBalnu3}
If $\nu_2 \leq K \sigma_{p}$ and $\nu_3 = \frac{1}{2} \Bal(\nu_2,
\R)$, then
\begin{equation} \label{eq:estimateonBalnu3}
    \frac{1}{2} \Bal(\nu_3, {\rm i}\R) \leq C_p^2 K \sigma_{p}.
\end{equation}
\end{corollary}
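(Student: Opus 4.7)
The plan is to observe that the same calculation appearing in the proof of Lemma \ref{lemma:estimateonnu3} is symmetric under swapping the roles of $\mathbb{R}$ and ${\rm i}\mathbb{R}$, so Corollary \ref{cor:estimateonBalnu3} follows by iterating that lemma once.

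First, I would apply Lemma \ref{lemma:estimateonnu3} directly to obtain the pointwise bound
\[
  \frac{{\rm d}\nu_3}{{\rm d}x}(x) \leq \frac{C_p K}{|x|^p}, \qquad x \in \mathbb{R}.
\]
In other words, $\nu_3$ is dominated by the measure $C_p K \,\widetilde{\sigma}_p$ on $\mathbb{R}$, where $\widetilde{\sigma}_p$ is the analogue on $\mathbb{R}$ of the measure $\sigma_p$ on ${\rm i}\mathbb{R}$ defined in \eqref{eq:definitionsigmap}, namely ${\rm d}\widetilde{\sigma}_p(x) = |x|^{-p}\,{\rm d}x$.

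Next, I would compute the density of $\frac{1}{2}\Bal(\nu_3,{\rm i}\mathbb{R})$ by combining \eqref{eq:balayagexoniR} and \eqref{eq:integralofbalayage}: for $z \in {\rm i}\mathbb{R}$,
\[
  \frac{{\rm d}\bigl(\tfrac{1}{2}\Bal(\nu_3,{\rm i}\mathbb{R})\bigr)}{|{\rm d}z|}(z)
    = \frac{1}{2\pi}\int_{\mathbb{R}} \frac{|x|}{|z|^2 + x^2}\,{\rm d}\nu_3(x).
\]
Using the pointwise bound from the first step together with the substitution $x = |z|\,s$ and the standard integral \eqref{eq:standardintegral} with $p' = 1-p$, exactly as in the proof of Lemma \ref{lemma:estimateonnu3}, one finds
\[
  \frac{{\rm d}\bigl(\tfrac{1}{2}\Bal(\nu_3,{\rm i}\mathbb{R})\bigr)}{|{\rm d}z|}(z)
    \leq \frac{C_p K}{\pi}\int_0^\infty \frac{x^{1-p}}{|z|^2+x^2}\,{\rm d}x
    = \frac{C_p K}{|z|^p}\cdot\frac{1}{2\sin(p\pi/2)} = \frac{C_p^2\,K}{|z|^p},
\]
which is precisely the density of $C_p^2 K\,\sigma_p$ on ${\rm i}\mathbb{R}$.

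There is no real obstacle here: the argument is a mechanical iteration of the computation already carried out for Lemma \ref{lemma:estimateonnu3}, exploiting the fact that the Cauchy-type kernel $|w|/(|z|^2+|w|^2)$ governing balayage onto a line is completely symmetric between the $\mathbb{R}$-to-${\rm i}\mathbb{R}$ and ${\rm i}\mathbb{R}$-to-$\mathbb{R}$ directions. The only thing to watch is that the factor $\tfrac{1}{2}$ is carried along correctly, so that the two factors of $C_p$ multiply to give $C_p^2$ in the final bound, as needed for the subsequent argument (which will use that $C_p^2 < 1$ when $p < 5/3$).
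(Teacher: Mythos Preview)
Your proposal is correct and matches the paper's approach: the paper simply states that one may ``iterate the above argument'' (i.e., apply the computation of Lemma~\ref{lemma:estimateonnu3} twice, once from ${\rm i}\mathbb{R}$ to $\mathbb{R}$ and once back) and records the corollary without further detail. Your write-up makes this iteration explicit and carries the constants through correctly.
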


We fix $p \in (1,5/3)$ and we show first what constant $K > 0$ to
take. Let $X > 0$ be such that the support of the equilibrium
measure in the external field $V(x) - \frac{3}{4} \tau^{4/3}
|x|^{4/3}$ is contained in $[-X,X]$.

From the explicit formulas \eqref{eq:balayageiyonKc} and
\eqref{eq:balayagexonKc}
 for the balayage of point
masses in $[-{\rm i}c, {\rm i}c] \cup \R$ onto $K_{c}$, we can easily
find the bounds
\[ \frac{{\rm d} \Bal(\delta_{{\rm i}y}, K_{c})}{|{\rm d}z|} \leq \frac{1}{\pi} \frac{c}{|z|^2 - c^2},
    \qquad z \in K_{c}, \]
and
\[ \frac{{\rm d} \Bal(\delta_{x}, K_{c})}{|{\rm d}z|} \leq \frac{1}{\pi} \frac{\sqrt{c^2+x^2}}{|z|^2 - c^2},
    \qquad z \in K_{c}. \]
From this it follows that
\begin{equation} \label{eq:boundbalayagerho}
    \frac{{\rm d} \Bal(\rho, K_{c})}{|{\rm d}z|} \leq \frac{1}{\pi} \frac{\sqrt{c^2 + X^2}}{|z|^2 - c^2},
    \qquad z \in K_{c}. \end{equation}
for every probability measure $\rho$ on $[-X,X] \cup [-{\rm i}c, {\rm i}c]$.

Let $c_0$ be such that $\sigma([-{\rm i}c_0, {\rm i}c_0]) = 2/3$.
It follows from \eqref{eq:densitysigma} and \eqref{eq:boundbalayagerho}
that there exists a constant $L = L_X$ so that
\begin{equation} \label{eq:boundbalayagerho2}
    \min \left(\frac{{\rm d}\sigma}{|{\rm d}z|}, \frac{{\rm d} \Bal(\rho, K_{c_0})}{|{\rm d}z|}\right) \leq \frac{L}{|z|^p},
    \qquad z \in K_{c_0} \end{equation}
for every probability measure $\rho$ on $[-X,X] \cup [-{\rm i}c_0, {\rm i}c_0]$.
Since $C_p < 1$ we  can then find a constant $K$ large enough so that
\begin{equation} \label{eq:constantK}
    C_p^2 K + \frac{2}{3} L < K.
\end{equation}
This is how we choose $K$ and this determines the extra
constraint $K \sigma_{p}$.\\

We now claim that there exists a  vector of measures
$(\nu_1^*,\nu_2^*,\nu_3^*)$ that minimizes $E_V(\nu_1,\nu_2,\nu_3)$
subject to the conditions $(a)-(e)$ of the equilibrium problem of
Definition \ref{def:equilibriummeasure} and subject to the additional constraint
\begin{align}\label{eq:extracondition2}
 \nu_2 \leq K \sigma_{p}
 \end{align}
with $K$ satisfying \eqref{eq:constantK}.

We already found that $E_V$ is bounded from below. Therefore there
exists a sequence $(\nu_{1,n},\nu_{2,n},\nu_{3,n})$ such that
    \begin{align}
        E_V(\nu_{1,n},\nu_{2,n},\nu_{3,n})\leq \frac{1}{n}+\inf E_V
    \end{align}
where both the sequence and the infinimum are taken with respect to
conditions (a)-(e) of the equilibrium problem and \eqref{eq:extracondition2}.
By Lemma \ref{lemma:equilibriumnu3} we may as well assume that
\begin{align}
\nu_{3,n}=\frac{1}{2}\Bal(\nu_{2,n},\R).
\end{align}
Moreover, we take $\nu_{1,n}$ as the minimizer of $E(\nu_1,\nu_{2,n},\nu_{3,n})$
subject to condition (b). By \eqref{eq:extraconstraint} and Lemma \ref{lemma:estimateonnu3}
the sequences $\nu_{2,n}$ and $\nu_{3,n}$ are tight. Moreover,
by Lemma \ref{lemma:equilibriumnu1} and the choice of $X$ we have that
the measures $\nu_{1,n}$ are supported in $[-X,X]$.  Therefore there
exists a subsequence $(\nu_{1,k_n},\nu_{2,k_n},\nu_{3,k_n})$ that
converges weakly to a vector of measures $(\nu_{1}^*,\nu_2^*,\nu_3^*)$.
This is the minimizer that we seek for and hence we proved the claim.

Our final goal is to show that the extra upper constraint $K
\sigma_{p}$ is not active.

Given the measures $\nu_1^*$ and $\nu_3^*$ from the minimizer of
the equilibrium problem with extra constraint, we denote by
$\nu_2^{**}$ the measure that minimizes the energy functional
$E_V(\nu_1^*, \nu_2, \nu_3^*)$ subject to the constraint $\nu_2
\leq \sigma$ only. By Lemma \ref{lemma:equilibriumnu2} we then
have that  $\nu_2^{**}$ is equal to $\sigma$ on some interval
$[-{\rm i}c, {\rm i}c]$. We clearly have $c < c_0$. It also
follows from properties of balayage that
\begin{align} \nonumber
    \left. \nu_2^{**} \right|_{K_{c_0}} & \leq \frac{1}{2} \Bal(\nu_1^* + \nu_3^*, K_{c_0}) \\
    & \leq \frac{1}{2} \Bal(\nu_3^*, {\rm i}\mathbb R)
    + \Bal(\rho, K_{c_0})
    \label{eq:boundonnu2starstar}
\end{align}
where
\[ \rho = \frac{1}{2} \nu_1^* + \frac{1}{2} \left. \Bal(\nu_3^*, {\rm i}\R) \right|_{[-{\rm i}c_0,{\rm i}c_0]} \]
is a measure on $[-X,X] \cup [-{\rm i}c_0, {\rm i} c_0]$ with total mass $\leq 2/3$.
Recalling also that
\[ \nu_2^{**} \leq \sigma \]
we find by Corollary \ref{cor:estimateonBalnu3},
\eqref{eq:boundbalayagerho2}, and \eqref{eq:boundonnu2starstar}
that
\[ \frac{{\rm d}\nu_2^{**}}{|{\rm d}z|} \leq C_p^2 \frac{K}{|z|^p}  + \frac{2}{3} \frac{L}{|z|^p}
    \qquad z \in {\rm i}\R. \]
Then it follows from \eqref{eq:constantK} that
\begin{equation} \label{eq:strictbound}
    \frac{{\rm d}\nu_2^{**}}{|{\rm d}z|} <  \frac{K}{|z|^p}, \qquad z \in {\rm i}\R.
\end{equation}
Thus $\nu_2^{**}$ would also be the minimizer if we impose the
extra constraint $\nu_2 \leq K\sigma_{p}$. It follows that
\[ \nu_2^* = \nu_2^{**} \]
and since the inequality in \eqref{eq:strictbound} is strict,
it follows that $\nu_2^*$ does not feel the extra constraint.

Thus $(\nu_1^*, \nu_2^*, \nu_3^*)$ is also the minimizer of the
equilibrium problem of Definition \ref{def:equilibriummeasure}
(without extra constraint), and existence of the minimizer is now
proved.

\subsection{Proof of Theorem \ref{prop:structureEqmeasure}
 and proof of Proposition \ref{prop:variationalconditions}}

\begin{proof}
We have shown that a unique minimizer $(\mu_1, \mu_2, \mu_3)$ for
the equilibrium problem exists.

Proposition \ref{prop:variationalconditions}  and parts (a)-(c) of
Theorem \ref{prop:structureEqmeasure} follow by taking the
minimizer $(\mu_1,\mu_2,\mu_3)$ for the vector
$(\nu_1,\nu_2,\nu_3)$ in  Lemmas \ref{lemma:equilibriumnu2},
\ref{lemma:equilibriumnu3} and \ref{lemma:equilibriumnu1}.

Part (d) of Theorem \ref{prop:structureEqmeasure} follows from the
symmetry in the equilibrium problem (note that $V$ is even) and
the uniqueness of the minimizer.
\end{proof}

\subsection{Proof of Proposition \ref{Th:convexity}}

\begin{proof}
The measure $\mu_1$ is the unique minimizer of the energy
functional given in \eqref{eq:equilibriumfornu1}. Since  the
external field is symmetric, it follows from \cite[Theorem IV.1.10
(f)]{ST} that ${\rm d}\mu_1(t)={\rm d}\widetilde \mu(t^2)/2$ where
$\widetilde \mu$ is the unique minimizer of the energy functional
\begin{align}
\iint \log \frac{1}{|x-y|} {\rm d} \nu(x) {\rm d}\nu(y)
 + 2 \int \left( V(\sqrt{x})-\tau^{4/3} x^{2/3} -U^{\mu_2}(\sqrt{x}) \right) {\rm d} \nu(x),
\end{align}
among all probability measures $\nu$ with support on $[0,\infty)$.
From
\begin{align}
    -U^{\mu_2}(\sqrt{x})=-\frac{1}{2}\int \log(x+|z|^2) {\rm d}\mu_2(z)
\end{align}
it follows that $x\mapsto -U^{\mu_2}(\sqrt{x})$ is a convex
function on $[0,\infty)$. Also $x \mapsto -\tau^{4/3}x^{2/3}$ is
convex on $[0,\infty)$. Hence, since we assumed that $x\mapsto
V(\sqrt x)$ is convex as well, the external field in the
equilibrium problem for $\tilde \mu$ is convex. Thus the support
of $\tilde{\mu}$ consists of one interval $[a,b]$ with $0 \leq a <
b$, see e.g.~\cite[Theorem IV.1.10]{ST}. Hence the support of
$\mu_1$ is either one  or two intervals, depending on whether
$a=0$ or $a > 0$. \end{proof}

\section{A Riemann surface}

\subsection{A four-sheeted Riemann surface}
We use the solution of the equilibrium problem to construct a
Riemann surface as follows. Start with the four sheets $\mathcal
R_j$ defined as
\begin{align}\begin{array}{ll}
\mathcal R_1 = \C \setminus  S(\mu_1), & \mathcal R_2=\C\setminus (S(\mu_1) \cup S(\sigma-\mu_2)),\\
\mathcal R_3 = \C \setminus (S(\sigma-\mu_2) \cup S(\mu_3)), &
\mathcal R_4 = \C \setminus S(\mu_3). \end{array}
\end{align}
The four sheets are connected as follows: $\mathcal R_1$ is
connected to $\mathcal R_2$ via $S(\mu_1)$, $\mathcal R_2$ is
connected to $\mathcal R_3$ via $S(\sigma-\mu_2)$ and $\mathcal
R_3$ is connected to $\mathcal R_4$ via $S(\mu_3)$, every
connection is  in the usual crosswise manner.  See also Figure
\ref{RiemannS} for a picture of the Riemann surface in the case
where the support of $\mu_1$ consists of one interval.

The Riemann surface is compactified by adding two points at
infinity: one is on the first sheet, and the other point at
infinity is common to the other three sheets. The Riemann surface
has genus $N-1$ if $N$ is the number of intervals in the support
of $\mu_1$.

\begin{figure}[t]
\centering
  \begin{overpic}[scale=0.4]{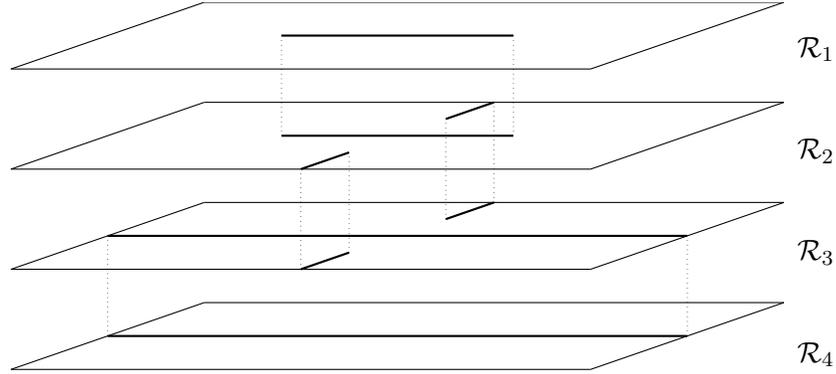}
  \put(90,32){\mbox{$\mathcal R_1$}}
   \put(90,22){\mbox{$\mathcal R_2$}}
    \put(90,12){\mbox{$\mathcal R_3$}}
     \put(90,2){\mbox{$\mathcal R_4$}}
  \end{overpic}
  \caption{The Riemann surface $\mathcal R$ for the one-cut case}

  \label{RiemannS}
\end{figure}

\subsection{The Cauchy transforms}

The Cauchy transform $F$ of a measure $\mu$  is defined as
\begin{equation}
  F(z)= \int \frac{1}{z-x} \ {\rm d}\mu(x), \qquad z\in  \C \setminus S(\mu).
\end{equation}
Note that we slightly abuse the notion of Cauchy transform since
\[ C\mu(z) = \frac{1}{2\pi {\rm i}} \int \frac{1}{x-z} d\mu(x)
    = -\frac{1}{2\pi {\rm i}} F(z) \]
is also called the Cauchy transform of $\mu$.

We are interested in particular in the Cauchy transforms of the
measures $\mu_1$, $\mu_2$ and $\mu_3$ that are the solution of the
equilibrium problem of Definition \ref{def:equilibriummeasure}. We
denote their Cauchy transforms by $F_1$, $F_2$, $F_3$,
respectively. We use the Cauchy transforms $F_j$ to construct a
meromorphic function on the Riemann surface $\mathcal R$. We
recall the Sokhotski-Plemelj formulas according to which
 \begin{align*}
    F_{j,+}(z) + F_{j,-}(z) & = 2 PV \int \frac{1}{z-x} d\mu_j(x), \\
    F_{j,+}(z) - F_{j,-}(z) & = -2 \pi {\rm i} \frac{d \mu_j}{dz},
    \end{align*}
    where PV denotes the Cauchy principal value.

\begin{lemma} \label{lem:analyticityxi}
The function $\xi: \bigcup_{j=1}^4 \mathcal R_j \to \C$ defined by
 \begin{align} \label{eq:defxi}
\xi(z)=\left\{\begin{array}{ll} V'(z) -F_1(z), & z\in \mathcal R_1,\\
   F_1(z) - F_2(z)+ \tau^{4/3} z^{1/3}, & z\in \mathcal R_2,\ \Re z > 0,\\
   F_1(z) - F_2(z)-\tau^{4/3} (-z)^{1/3}, & z\in \mathcal R_2, \ \Re z <0,  \\
   F_2 (z) - F_3(z) -\tau^{4/3} (-z)^{1/3},& z\in \mathcal R_3,\ \Re z  > 0,\\
   F_2 (z) - F_3(z)+ \tau^{4/3} z^{1/3}, & z\in \mathcal R_3,\ \Re z < 0,\\
      F_3(z)+
      {\rm e}^{4\pi{\rm i}/3}\tau^{4/3} z^{1/3},& z\in \mathcal R_4, \ \Im z > 0, \\
       F_3(z)+{\rm e}^{2\pi{\rm i}/3}\tau^{4/3} z^{1/3}, & z\in \mathcal R_4, \ \Im z < 0,
    \end{array}\right.
  \end{align}
has an extension to a meromorphic function (also denoted by $\xi$)
on $\mathcal R$. The meromorphic function has a pole of order
$\deg V - 1$ at infinity on the first sheet, and a simple pole at
the other point at infinity.
\end{lemma}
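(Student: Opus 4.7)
My plan is to establish three properties of $\xi$: analyticity on each sheet $\mathcal R_j$ separately, correct matching of boundary values on each branch cut of $\mathcal R$ that connects two sheets, and the claimed behavior at the two points at infinity. Together these yield a meromorphic extension to $\mathcal R$ with precisely the stated poles.

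First I would verify within-sheet analyticity. The definitions of $\xi$ on $\mathcal R_2$ and $\mathcal R_3$ are piecewise with a split along ${\rm i}\R$, which is the natural branch locus of $z^{1/3}$ and $(-z)^{1/3}$. On the part $K_c = S(\sigma-\mu_2)$ of ${\rm i}\R$ the split coincides with an actual cut of $\mathcal R$, and matching is handled in the next step; on the complementary segment $(-{\rm i}c, {\rm i}c)$ genuine within-sheet continuity is required. A direct computation with principal branches shows that at $z = {\rm i}y$ with $0 < y < c$ the two sides of the piecewise definition contribute $\pm\tau^{4/3}y^{1/3}{\rm e}^{\pm {\rm i}\pi/6}$, whose difference is $\sqrt{3}\,\tau^{4/3}|y|^{1/3}$. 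By Theorem~\ref{prop:structureEqmeasure}(b), $\mu_2 = \sigma$ on $(-{\rm i}c,{\rm i}c)$ with density \eqref{eq:densitysigma}, so the Sokhotski--Plemelj jump of $F_2$ across this segment is exactly this value and the two pieces glue analytically.

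Second I would verify the matching across the three cuts that connect the sheets. On $S(\mu_1)$, differentiating the variational condition \eqref{eq:variationalcondmu1a} and combining the standard identities $F_{1,+}+F_{1,-} = -2(U^{\mu_1})'$, $(U^{\mu_2})'(x) = -F_2(x)$ for $x\in\R$, and $\frac{d}{dx}\bigl(\tfrac{3}{4}\tau^{4/3}|x|^{4/3}\bigr) = \tau^{4/3}\sign(x)|x|^{1/3}$ yields
\begin{equation*}
 F_{1,+}(x) + F_{1,-}(x) = F_2(x) + V'(x) - \tau^{4/3}\sign(x)|x|^{1/3}, \qquad x \in S(\mu_1),
\end{equation*}
which, read against \eqref{eq:defxi}, is precisely the crosswise identity $\xi_{\mathcal R_1,\pm} = \xi_{\mathcal R_2,\mp}$. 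The matchings on $S(\sigma-\mu_2)$ (between $\mathcal R_2$ and $\mathcal R_3$) and on $S(\mu_3) = \R$ (between $\mathcal R_3$ and $\mathcal R_4$) follow analogously from differentiating \eqref{eq:variationalcondmu2a} along ${\rm i}\R$ and \eqref{eq:variationalcondmu3} along $\R$; the $\mathcal R_3\leftrightarrow \mathcal R_4$ matching additionally uses the identity ${\rm e}^{-{\rm i}\pi/3}+{\rm e}^{2\pi {\rm i}/3}=0$ to cancel the $\tau^{4/3}z^{1/3}$ contributions across the real axis.

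Finally I would analyze the behavior at infinity. On $\mathcal R_1$, $F_1(z) = O(1/z)$ gives $\xi|_{\mathcal R_1} = V'(z) + O(1/z)$ and hence a pole of order $\deg V - 1$. At the common point at infinity of $\mathcal R_2,\mathcal R_3,\mathcal R_4$ the local uniformizer is $w = z^{-1/3}$; in any fixed direction in the first quadrant, the three sheet values of $\xi$ are asymptotic to $\tau^{4/3}|z|^{1/3}{\rm e}^{{\rm i}(\arg z + 2\pi k)/3}$, $k=0,1,2$, i.e.\ the three branches of $\tau^{4/3}z^{1/3}$, which jointly describe a function with a simple pole in $w$ at the branch point. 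The hard part, I expect, is the careful bookkeeping of the principal branches of $z^{1/3}$ and $(-z)^{1/3}$ across ${\rm i}\R$ and $\R$, which must dovetail exactly with the Sokhotski--Plemelj jumps of the Cauchy transforms $F_j$; any sign or phase slip would invalidate the meromorphic extension.
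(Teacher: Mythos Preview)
Your proposal is correct and follows essentially the same approach as the paper: within-sheet analyticity on $\mathcal R_2$ and $\mathcal R_3$ via the Sokhotski--Plemelj jump of $F_2$ across $(-{\rm i}c,{\rm i}c)$ (where $\mu_2=\sigma$), and cross-sheet matching by differentiating the variational conditions, with the $S(\mu_1)$ case worked out explicitly and the others sketched. You add a bit more than the paper does on the behavior at infinity and on the phase identity $e^{-{\rm i}\pi/3}+e^{2\pi{\rm i}/3}=0$ needed for the $\mathcal R_3\leftrightarrow\mathcal R_4$ matching, but the argument is the same.
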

\begin{proof} The analyticity of $\xi$ on $\mathcal R_1$ and $\mathcal
R_4$ is immediate. To obtain the analyticity of $\xi$ on $\mathcal
R_2$ and $\mathcal R_3$ we note that from the Sokhotski-Plemelj
formula and from the fact that $\mu_2=\sigma$ on $[-{\rm i}c,{\rm
i}c]$ it follows that
\begin{align} \nonumber
    F_{2,+}(z)- F_{2,-}(z) & =
    - 2\pi{\rm i} \frac{{\rm d} \mu_2(z)}{{\rm d}z}
      =  -2\pi{\rm i} \frac{{\rm d} \sigma(z)}{{\rm d}z}\\
    & = - \tau^{4/3} \sqrt{3} |z|^{1/3} = -\tau^{4/3}(z^{1/3}+(-z)^{1/3})
\end{align}
for $z\in (-{\rm i}c,{\rm i}c)$. By the definition of $\xi$ this
implies that $\xi$ has an analytic continuation to $(-{\rm
i}c,{\rm i}c)$ on both $\mathcal R_2$ and $\mathcal R_3$. Thus
$\xi$ is analytic on each individual sheet $\mathcal R_j$.

It remains to check the analyticity if we cross a cut and move
from one sheet to another. That this holds, follows from the
variational conditions for the equilibrium problem and the
symmetry in the problem. We will only show  that $\xi$ is analytic
when one crosses $S(\mu_1)$ since the other cases follow from
similar arguments.

Differentiating $U^{\mu_1}$ gives
  \begin{align}
    2 \frac{{\rm d}}{{\rm d} x } U^{\mu_1}(x) = -2 PV \int \frac{1}{x-s}{\rm d}\mu_1(s),
    \qquad x \in S(\mu_1),
  \end{align}
  which by the Sokhotski-Plemelj formulas  can be written as
   \begin{align} \label{Sokhotski-Plemelj2}
   - 2 \frac{{\rm d}}{{\rm d} x } U^{\mu_1}(x)= F_{1,+}(x)+ F_{1,-}(x),
    \qquad x \in S(\mu_1).
  \end{align}
On the other hand, we can differentiate the right-hand side of the
variational condition \eqref{eq:variationalcondmu1a} to obtain
\begin{align} \label{diffUmu1}
2 \frac{{\rm d}}{{\rm d} x }& U^{\mu_1}(x)= \frac{{\rm d}}{{\rm d} x } U^{\mu_2}(x)-V'(x)+\tau |x|^{1/3} \sign x\\
\nonumber &=-\frac{{\rm d}}{{\rm d} x }\int \log|x-s| {\rm d}\mu_2(s) -V'(x)+\tau |x|^{1/3} \sign x\\
\nonumber  &=-\frac{{\rm d}}{{\rm d} x }\frac{1}{2}\int \log (x^2-s^2) {\rm d}\mu_2(s) -V'(x)+\tau |x|^{1/3} \sign x\\
\nonumber  &=-\int \frac{x}{x^2-s^2} {\rm d}\mu_2(s) -V'(x)+\tau |x|^{1/3} \sign x\\
\nonumber  &=-\frac{1}{2}\int \frac{1}{x-s}{\rm
d}\mu_2(s)-\frac{1}{2}\int\frac{1}{x+s} {\rm d}\mu_2(s)
-V'(x)+\tau |x|^{1/3} \sign x,
\end{align}
for $x\in S(\mu_1)$.  By symmetry of $\mu_2$ we obtain
\begin{align} \label{symmCauchymu2}
  \frac{1}{2}\int \frac{1}{x-s}{\rm d}\mu_2(s)+\frac{1}{2}\int\frac{1}{x+s} {\rm d}\mu_2(s)
  =\int \frac{1}{x-s} {\rm d} \mu_2(s)
  =F_2(x)
\end{align}
for $x\in \R$. Combining \eqref{symmCauchymu2}, \eqref{diffUmu1} and
\eqref{Sokhotski-Plemelj2}  leads to
\begin{align}
  F_{1,+}(x)+ F_{1,-}(x) = F_2(x) + V'(x) - \tau |x|^{1/3} \sign x,
\end{align}
for $x\in S(\mu_1)$. This is exactly what is needed to prove that
$\xi$ is analytic when one crosses $S(\mu_1)$ and moves from
$\mathcal R_1$ to $\mathcal R_2$. \end{proof}

We use $\xi_j$ to denote the restriction of $\xi$ to $\mathcal
R_j$.

As a corollary to the lemma we find the following result about the
asymptotic behavior of the Cauchy transforms.
\begin{corollary} \label{cor:asymptoticsCmuj}
    There exists a constant $\alpha$ such that (with $\omega={\rm e}^{2\pi{\rm i}/3}$)
\begin{align}
F_1(z)&=\frac{1}{z} +\OO(z^{-2}),\\
F_2(z)&=\begin{cases} \frac{2}{3z} -
\frac{\alpha}{z^{5/3}} + \OO(z^{-2}), &\Re z >0,\\
\frac{2}{3z} -\frac{\omega \alpha}{z^{5/3}} +\OO(z^{-2}), &\Re z<0 , \ \Im z>0,\\
\frac{2}{3z} -\frac{\omega^2 \alpha}{z^{5/3}} +\OO(z^{-2}),  & \Re
z<0,\ \Im z<0,
\end{cases}\\
F_3(z)&=
\begin{cases} \frac{1}{3z} +
\frac{\omega^2 \alpha}{z^{5/3}} +\OO(z^{-2}),  & \Im z>0,\\
\frac{1}{3z}+\frac{\omega \alpha}{z^{5/3}} +\OO(z^{-2}), & \Im z<0,
\end{cases}
\end{align}
uniformly for $z\to \infty$.
\end{corollary}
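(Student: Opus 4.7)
The plan is to exploit the meromorphic function $\xi$ on $\mathcal R$ from Lemma~\ref{lem:analyticityxi}, and in particular its local Laurent expansion at the point $\infty^*$ at infinity common to sheets $\mathcal R_2, \mathcal R_3, \mathcal R_4$. For $F_1$ no Riemann-surface input is needed: since $\mu_1$ is compactly supported and symmetric by Theorem~\ref{prop:structureEqmeasure}, its Cauchy transform has a convergent Laurent series $F_1(z) = \sum_{k\geq 0} m_k z^{-k-1}$ in which all odd moments vanish, giving $F_1(z) = 1/z + O(z^{-3})$ at once.

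For $F_2$ and $F_3$ I would first write the Laurent expansion at $\infty^*$. The local parameter $w$ satisfies $w^3 = 1/z$, and since $\xi$ has a simple pole at $\infty^*$,
\[
\xi = \frac{\tau^{4/3}}{w} + \sum_{k \geq 0} c_k w^k,
\]
where the residue $\tau^{4/3}$ is identified by comparing with the leading term $\tau^{4/3} z^{1/3}$ of $\xi$ on $\mathcal R_2$ in the right half-plane. Next, in each quadrant I identify which branch $w = \omega^j z^{-1/3}$ corresponds to which sheet by matching the piecewise leading $z^{1/3}$ coefficient of $\xi_j$ from Lemma~\ref{lem:analyticityxi} against $\tau^{4/3} \omega^{-j}$. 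For instance in the first quadrant the identifications are $\mathcal R_2 \leftrightarrow z^{-1/3}$, $\mathcal R_3 \leftrightarrow \omega^2 z^{-1/3}$, $\mathcal R_4 \leftrightarrow \omega z^{-1/3}$, and the identifications in the other quadrants follow from $(-z)^{1/3} = -\omega z^{1/3}$ in the upper half-plane and $-\omega^2 z^{1/3}$ in the lower.

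The coefficients $c_0, c_1, c_2, c_3, c_4$ (the only ones affecting the stated $O(z^{-2})$ error) are pinned down by three inputs, which I would verify in turn. First, the explicit formulas give $\xi_2 + \xi_3 + \xi_4 = F_1$ in every quadrant (a direct consequence of $1 + \omega + \omega^2 = 0$), and since averaging over the three branches annihilates every term of the Laurent series except those with $k \equiv 0 \pmod 3$, matching against $F_1 = 1/z + O(z^{-3})$ forces $c_0 = 0$ and $c_3 = 1/3$. Second, the antisymmetry $\xi_j(-z) = -\xi_j(z)$ on each sheet $j = 2, 3, 4$, which follows from $F_j(-z) = -F_j(z)$ combined with the piecewise definition of $\xi_j$ and the branch identities for $(-z)^{1/3}$, translates in the Laurent variable to $\xi(-w) = -\xi(w)$ and forces $c_k = 0$ for every even $k$; in particular $c_2 = c_4 = 0$. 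Third, dominated convergence yields $zF_2(z) \to 2/3$ as $z \to \infty$ along $\R$ (since $|z/(z-\zeta)| \leq 1$ for $\zeta \in {\rm i}\R$), and since $z^{-1/3}$ is not $o(z^{-1})$ this rules out a $c_1 z^{-1/3}$ contribution, giving $c_1 = 0$.

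Setting $\alpha := c_5$, the identity $F_2 = F_1 + \tau^{4/3} z^{1/3} - \xi_2$ on $\mathcal R_2$ in the right half-plane (and its analogues in the left half-plane and for $F_3$ on $\mathcal R_4$) then yields the claimed asymptotics, with the factors $\omega, \omega^2$ multiplying $\alpha/z^{5/3}$ in the various quadrants arising from $w^5 = \omega^{5j} z^{-5/3}$ under the branch identifications. The trickiest step will be checking $\xi(-w) = -\xi(w)$ in the second input above: the involution $z \mapsto -z$ permutes the sheets at $\infty^*$ in a nontrivial way and requires careful branch tracking on each piece of the piecewise definition of $\xi_j$ in Lemma~\ref{lem:analyticityxi}.
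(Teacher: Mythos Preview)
Your proposal is correct and follows essentially the same route as the paper: both arguments expand $\xi$ in a local parameter at the common point at infinity (the paper writes this as $\Xi(z)=\xi_2(z^3)$, you use $w=z^{-1/3}$), kill the even Laurent coefficients via the oddness $\xi_j(-z)=-\xi_j(z)$, and fix $c_1=0$, $c_3=1/3$ from the total mass $\mu_2({\rm i}\R)=2/3$. Your additional identity $\xi_2+\xi_3+\xi_4=F_1$ is a nice observation that independently pins down $c_0$ and $c_3$, but it is not needed once the parity and total-mass inputs are in place; otherwise the two arguments coincide.
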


\begin{proof} The statement about $F_1$ is obvious, since $\mu_1$ has
a compact support.

For $z$ large enough define $\Xi$ as
\[ \Xi(z) = \xi_2(z^3) \qquad - \frac{\pi}{6} < \arg z < \frac{\pi}{6}, \]
and then continued analytically into other sectors.  Explicit
calculations  show that
\begin{align}
\Xi(z)=\begin{cases}
\xi_2(z^3), & \begin{cases} -\pi/6 \leq \arg z \leq \pi/6,\\
5\pi/6 \leq \arg z \leq \pi,\\
-\pi \leq \arg z \leq -5\pi/6,
\end{cases} \\
\xi_3(z^3), & \begin{cases} \pi/6\leq \arg z \leq \pi/3,\\
-\pi/3 \leq \arg z \leq -\pi/6,\\
2\pi/3\leq \arg z \leq 5\pi/6,\\
-5\pi/6 \leq \arg z \leq -2\pi/3,
\end{cases} \\
\xi_4(z^3), & \begin{cases} \pi/3\leq \arg z \leq 2\pi/3,\\
-2\pi/3 \leq \arg z \leq -\pi/3.
\end{cases} \\
\end{cases}
\end{align}
Then $\Xi$ is analytic for $|z|>R$ for $R$ large enough with a
Laurent expansion
\begin{equation}
    \Xi(z)=\tau^{4/3} z+\sum_{j=0}^\infty c_j z^{-j}.
\end{equation}
Since $\xi_j(-z)=-\xi_j(z)$ we also have $\Xi(-z)=-\Xi(z)$, and
hence the even powers vanish in this expansion.

Next we claim that we can reconstruct the $\xi_j$, $j=2,3,4$  out
of $\Xi$ again in the following way
\begin{align}
\xi_2(z)&=\begin{cases}
\Xi(z^{1/3} ), &\Re z >0,\\
\Xi(z^{1/3} \omega), &\Re z<0 , \ \Im z>0,\\
\Xi(z^{1/3} \omega^2),  & \Re z<0,\ \Im z<0,
\end{cases}\\
\xi_3(z)&=\begin{cases}
\Xi(z^{1/3}), &\Re z <0,\\
\Xi(z^{1/3} \omega), &\Re z>0 , \ \Im z>0,\\
\Xi(z^{1/3} \omega^2),  & \Re z>0,\ \Im z<0,
\end{cases}\\
\xi_4(z)&=\begin{cases}
\Xi(z^{1/3} \omega^2), &\Im z>0,\\
\Xi(z^{1/3}\omega), & \Im z<0.
\end{cases}
\end{align}
We leave the verification to the reader. Since we have that by \eqref{eq:defxi}
\begin{align}
F_2(z)= -\xi_1(z)-\xi_2(z) + V'(z)+ \tau^{4/3} z^{1/3}
\end{align}
for $\Re z>0$, we obtain
\begin{align}
    F_2(z)=-\frac{c_1}{z^{1/3}}+\frac{1-c_3}{z} -\frac{c_5}{z^{5/3}}+\OO(z^{-1}),
\end{align}
for $z\to \infty$ and $\Re z>0$.  Since $F_2(z)=\int \frac{1}{z-x}
{\rm d}\mu_2(x)$ and since $\mu_2$ has total mass $2/3$, we see
that $c_1=0$ and $c_3=1/3$.  This proves the asymptotic behavior
of $F_2$ with $\alpha=c_5$ for $\Re z>0$. The asymptotics for the
other sectors and other Cauchy transforms follow from similar
arguments.
\end{proof}

The solution to the equilibrium problem can be described in terms
of the functions $\xi_j$ as described in the following
proposition.

\begin{proposition}\label{prop:EqmeasureXi}
With $\xi_j=\xi|_{\mathcal R_j}$, we have that
           \begin{align}\label{eq:mu1inxi}
    {\rm d} \mu_1(z)&=\frac{1}{2\pi {\rm i}}
    \left(\xi_{1,+}(z)-\xi_{1,-}(z)\right) {\rm d}z, \qquad z \in \R, \\
    \label{eq:mu2inxi}
    {\rm d} \mu_2(z)&=\frac{1}{2\pi {\rm i}}
    \left(\xi_{2,+}(z)-\xi_{2,-}(z)\right) {\rm d}z+ d \sigma(z), \qquad z \in {\rm i}\R, \\
     {\rm d} \mu_3(z)&=\frac{1}{2\pi {\rm i}}
    \left(\xi_{3,+}(z)-\xi_{3,-}(z)\right) {\rm d}z-
    \frac{\tau^{4/3} \sqrt{3}}{2\pi} |z|^{1/3} {\rm d}z, \qquad z \in \R. \label{eq:mu3inxi}
  \end{align}
\end{proposition}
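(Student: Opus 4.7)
The plan is to apply the Sokhotski--Plemelj formulas directly to the definition \eqref{eq:defxi} of $\xi$ on each part of its contour of non-analyticity, keeping separate track of the jumps coming from the Cauchy transforms $F_j$ and those coming from the cube-root branches appearing in \eqref{eq:defxi}. Throughout I fix the standard orientation convention that the $+$-side of an oriented contour lies to its left, so on $\R$ (oriented left-to-right) the $+$-side is the upper half plane, and on ${\rm i}\R$ (oriented upward) the $+$-side is $\{\Re z<0\}$.

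For \eqref{eq:mu1inxi}, on $\R$ the definition reads $\xi_1=V'-F_1$ with $V'$ entire, so only $F_1$ contributes to the jump. Sokhotski--Plemelj gives $F_{1,+}-F_{1,-}=-2\pi{\rm i}\,{\rm d}\mu_1/{\rm d}x$, hence $\xi_{1,+}-\xi_{1,-}=2\pi{\rm i}\,{\rm d}\mu_1/{\rm d}x$, which is \eqref{eq:mu1inxi} (both sides vanishing on $\R\setminus S(\mu_1)$). Formula \eqref{eq:mu3inxi} will be handled similarly: $F_2$ is analytic across $\R\setminus\{0\}$ because $\supp\mu_2\subset{\rm i}\R$, so on $\R$ only $F_3$ and the cube-root term in the definition of $\xi_3$ can contribute. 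The first gives $2\pi{\rm i}\,{\rm d}\mu_3/{\rm d}x$ as above. For the second I will use the principal branch: across $\R_+$, $(-z)^{1/3}$ jumps by $|x|^{1/3}({\rm e}^{-{\rm i}\pi/3}-{\rm e}^{{\rm i}\pi/3})=-{\rm i}\sqrt 3\,|x|^{1/3}$, so $-\tau^{4/3}(-z)^{1/3}$ contributes ${\rm i}\sqrt 3\,\tau^{4/3}|x|^{1/3}$; an identical computation across $\R_-$ using $\tau^{4/3}z^{1/3}$ produces the same value ${\rm i}\sqrt 3\,\tau^{4/3}|z|^{1/3}$. Dividing the sum by $2\pi{\rm i}$ gives \eqref{eq:mu3inxi}.

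For \eqref{eq:mu2inxi} the argument runs in parallel on ${\rm i}\R$. A direct Plemelj calculation (with the chosen orientation) gives $F_{2,+}-F_{2,-}=-2\pi\,{\rm d}\mu_2/|{\rm d}z|$, while $F_1$ is analytic across ${\rm i}\R\setminus\{0\}$. Using the $+$-side formula $-\tau^{4/3}(-z)^{1/3}$ and the $-$-side formula $\tau^{4/3}z^{1/3}$ from \eqref{eq:defxi}, together with the principal-branch values $z^{1/3}=|z|^{1/3}{\rm e}^{\pm{\rm i}\pi/6}$ and $(-z)^{1/3}=|z|^{1/3}{\rm e}^{\mp{\rm i}\pi/6}$ for $\pm\Im z>0$, the combined cube-root contribution is $-\tau^{4/3}[(-z)^{1/3}+z^{1/3}]=-\sqrt 3\,\tau^{4/3}|z|^{1/3}$. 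Summing and multiplying by ${\rm d}z/(2\pi{\rm i})=|{\rm d}z|/(2\pi)$ turns the right-hand side into ${\rm d}\mu_2-{\rm d}\sigma$, which rearranges to \eqref{eq:mu2inxi}.

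The only real work throughout is the careful bookkeeping of the principal branches of the cube roots and of the orientation conventions on $\R$ and ${\rm i}\R$; once these branch jumps are in hand, each of the three formulas is an immediate consequence of Sokhotski--Plemelj and presents no substantive analytic obstacle.
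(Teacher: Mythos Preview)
Your proof is correct and follows essentially the same approach as the paper: both apply the Sokhotski--Plemelj formula $F_{j,+}-F_{j,-}=-2\pi{\rm i}\,{\rm d}\mu_j/{\rm d}z$ directly to the definition \eqref{eq:defxi} of $\xi$ on each sheet. The paper's proof is a one-line reference to that formula, while you have carried out explicitly the branch-cut bookkeeping for the cube-root terms that the paper leaves implicit; the underlying argument is the same.
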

\begin{proof} This follows by Lemma \ref{lem:analyticityxi} and the
Sokhotski-Plemelj formula
\begin{align}
    -\frac{1}{2\pi {\rm i}} \left( F_{j,+}(z)- F_{j,-}(z) \right)
    =\frac{{\rm d}}{{\rm d}z} \mu_{j}(z)
\end{align}
for $z\in S(\mu_j)$. \end{proof}

\section{The second transformation $X \mapsto U$}

From now on, we assume that $\mu_1$ is one-cut regular. That is,
$\mu_1$ is supported on one interval, and there are no singular
points in the sense of Definition \ref{def:regular}. We write
\[ S(\mu_1) = [-a,a] \]
with $a > 0$.

\subsection{Definition of the transformation}

In the second transformation in the steepest descent analysis, we
use the $g$-functions associated with the measures $\mu_j$ that
minimize the energy functional $E_V$.  For $j=1$ and $j=3$, we
define $g_j$ as
\begin{equation}
    g_j(z)=\int \log(z-s) \ {\rm d} \mu_j(s), \qquad j=1,3,
\end{equation}
with the principal branch of the logarithm
 \[ \log(z-s) = \log|z-s| + {\rm i} \arg (z-s), \qquad \arg (z-s) \in (-\pi, \pi). \]
We also define $g_2$ by a similar integral, but since $\mu_2$ is
supported on the imaginary axis, we make a different choice for
the branch of the logarithm. We define $g_2$ by
\begin{align} \label{eq:defg2}
  g_2(z)=\int \log(z-s) \ {\rm d}\mu_2(s).
\end{align}
where we take the logarithm such that
\begin{align} \label{eq:logbranchg2}
\log(z-s)=\log|z-s| +{\rm i}\arg(z-s), \quad \arg(z-s) \in
(-\pi/2, 3\pi/2),
\end{align}
for $z\in \C\setminus {\rm i}\R$ and $s \in {\rm i}\R$. Thus $g_2$
is defined and analytic in $\C \setminus {\rm i} \R$.

The transformation $X \mapsto U$ is now defined by
\begin{equation}\label{eq:XtoU}
    U(z)=L^{-n} X(z) G(z)^n L^n,
\end{equation}
where $G$ is given by
\begin{align} \label{eq:defG}
    G(z)= \diag \begin{pmatrix}
    {\rm e}^{-g_1(z)} & {\rm e}^{g_1(z)-g_2(z)} & {\rm e}^{g_2(z)-g_3(z)} & {\rm e}^{g_3(z)}
\end{pmatrix}
\end{align}
and $L$ is given by
\begin{equation}\label{eq:defL}
L= \diag \begin{pmatrix} {\rm e}^{-\ell} & 1 & 1 & 1
\end{pmatrix},
\end{equation}
with $\ell$  the variational constant from
\eqref{eq:variationalcondmu1a}.

Since $X$ and the $g_j$ are analytic in $\C\setminus (\R\cup {\rm
i}\R)$, it is clear that $U$ is also analytic in $\C\setminus
(\R\cup {\rm i}\R)$. The jump matrices $J_U$ in the RH problem for
$U$ are obtained from the jump matrices $J_{X}$ by
 \begin{equation} \label{eq:JU}
    J_{U} = L^{-n} G_-^{-n} J_{X} G_+^n L^n,
    \end{equation}
 and we have to calculate them on
the various parts of the real and imaginary axis. We will see that
the jump matrices take a simple form that is suitable for further
analysis.

Also the behavior at infinity simplifies under the transformation.
However, in contrast to other works (see e.g.\
\cite{DKMVZuniform}), the RH problem is not normalized at infinity
after the use of the $g$-functions. So $U(z)$ does not tend to the
identity matrix as $z \to \infty$, but instead there is a more
complicated behavior at infinity, which however does not depend on
$n$ anymore.

Before we state the RH problem that is satisfied by $U$, we first
collect the properties of the functions $g_j$ that will be needed.

\subsection{The $g$-functions}

Observe that  the functions $g_j$ are anti-derivatives of the
Cauchy transforms $F_j$, i.e.,
 \begin{align} \label{eq:fromgtoCmu}
 g_j'(z)=F_j(z).
 \end{align}
The asymptotic behavior of $g_j$ can be obtained by integrating
the asymptotic behavior of $F_j$ as given in Corollary
\ref{cor:asymptoticsCmuj}.
\begin{lemma}\label{lem:asymptoticsg}
With the constant $\alpha$ as in Corollary {\rm
\ref{cor:asymptoticsCmuj}}, we have that
\begin{align}
g_1(z)&=\log z  +\OO(z^{-2}), \label{eq:asymptoticsg1}\\
g_2(z)&=\begin{cases}\frac{2}{3}\log z -
\frac{3\alpha}{2z^{2/3}} +\OO(z^{-1}), &\Re z >0, \\
\frac{2}{3}\log z-\frac{3\omega \alpha}{2z^{2/3}} +\OO(z^{-1}), &\Re z<0 , \ \Im z>0,\\
\frac{2}{3}\log z + \frac{4\pi{\rm i}}{3} -\frac{3\omega^2
\alpha}{2z^{2/3}} +\OO(z^{-1}), & \Re z<0,\ \Im z<0,
\end{cases}\label{eq:asymptoticsg2} \\
g_3(z)&=
\begin{cases}\frac{1}{3}\log z+
\frac{3\omega^2 \alpha}{2z^{2/3}} +\OO(z^{-1}),  & \Im z>0,\\
\frac{1}{3}\log z +\frac{3\omega \alpha }{2z^{2/3}} +\OO(z^{-1}),
& \Im z<0,
\end{cases} \label{eq:asymptoticsg3}
\end{align}
as $z\to \infty$ uniformly. In
\eqref{eq:asymptoticsg1}--\eqref{eq:asymptoticsg3} every $\log z$ is
defined with the principal branch.
\end{lemma}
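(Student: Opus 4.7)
The strategy is to integrate the large-$z$ expansions of the Cauchy transforms $F_j$ given in Corollary \ref{cor:asymptoticsCmuj}, using the identity $g_j'(z)=F_j(z)$ from \eqref{eq:fromgtoCmu}. Termwise integration of a Puiseux series of the form $F_j(z)=\sum c_k z^{-k/3}$ produces an expansion for $g_j$ modulo a single constant of integration in each sector. The three tasks are therefore: (i) carry out the termwise integration, (ii) pin down the logarithmic leading term, and (iii) determine the constants of integration in each of the sectors listed in the statement.

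For (ii) the leading $\log z$ term comes directly from the definition: for any probability-like measure $\mu$ on a set $S$, one has $\int\log(z-s)\,{\rm d}\mu(s)=\mu(S)\log z+O(1)$ along any ray where the chosen branch of $\log(z-s)$ agrees with $(\mu(S))\log z$ for $s$ in $S$ and $z$ large. In particular for $g_1$ this gives the $\log z$ coefficient $1$, for $g_2$ it gives $2/3$, and for $g_3$ it gives $1/3$, matching the coefficients in \eqref{eq:asymptoticsg1}--\eqref{eq:asymptoticsg3}. The $O(z^{-2})$ error for $g_1$ is immediate from the compactness of $S(\mu_1)$ together with the symmetry $\mu_1(-A)=\mu_1(A)$, which kills the $1/z$ term after $\log z$.

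The heart of the proof is (iii), the book-keeping of integration constants, and this is the step I expect to be most delicate. For $g_2$ and $g_3$ one simply integrates $F_2$ and $F_3$ sector by sector; integrating $-\alpha z^{-5/3}$ contributes $\frac{3\alpha}{2}z^{-2/3}$ (up to sign handled by the chosen branch of the fractional power), and similarly for the other sectors with $\omega$ and $\omega^2$. To fix the additive constants we compare $g_j$ to the same expression evaluated on a reference ray where we control the branch. For $g_3$ the principal branch of $\log(z-s)$ is used throughout, so the additive constant in each half-plane is determined by matching to the real axis and no extra constant survives. For $g_2$, however, the non-standard convention $\arg(z-s)\in(-\pi/2,3\pi/2)$ in \eqref{eq:logbranchg2} produces an additional $2\pi{\rm i}\cdot\mu_2({\rm i}\R)=\frac{4\pi{\rm i}}{3}$ precisely when $z$ lies in the third quadrant, since there $\arg(z-s)=\arg_{\mathrm{principal}}(z-s)+2\pi$ for all $s\in{\rm i}\R$ with $|s|$ bounded relative to $|z|$. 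This accounts for the $\frac{4\pi{\rm i}}{3}$ term in the third-quadrant expansion of $g_2$ in \eqref{eq:asymptoticsg2} while no analogous term appears in the second quadrant.

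Finally, uniformity as $z\to\infty$ comes from the uniformity of the expansions in Corollary \ref{cor:asymptoticsCmuj} combined with dominated convergence once the finite-mass and tail estimates on $\mu_2$ and $\mu_3$ (from the proof of Theorem \ref{prop:structureEqmeasure}, which shows these measures decay like $|z|^{-5/3}$) are invoked. The main obstacle is bookkeeping the branches of $z^{1/3}$ and of the logarithm across the imaginary axis so that the $\omega,\omega^2$ coefficients in \eqref{eq:asymptoticsg2}--\eqref{eq:asymptoticsg3} come out with the correct signs; this is a purely sector-by-sector verification using $z^{1/3}=\omega^k(z\omega^{-3k})^{1/3}$ and the explicit cutting prescriptions in Lemma \ref{lem:analyticityxi}.
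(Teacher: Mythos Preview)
Your proposal is correct and follows essentially the same approach as the paper: integrate the asymptotic expansions of $F_j$ from Corollary \ref{cor:asymptoticsCmuj} using $g_j'=F_j$, identify the leading logarithm from the total mass via $g_j(z)=\|\mu_j\|\log z+\int\log(1-s/z)\,{\rm d}\mu_j(s)$, and trace the extra $4\pi{\rm i}/3$ in the third quadrant to the branch choice \eqref{eq:logbranchg2}. The paper's own proof is terser but rests on exactly these steps; your added remarks on uniformity and the sector-by-sector branch bookkeeping simply make explicit what the paper leaves implicit.
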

\begin{proof} The asymptotics \eqref{eq:asymptoticsg1} for $g_1$
follows from the fact that $\mu_1$ is a symmetric probability
measure with compact support.

The asymptotics \eqref{eq:asymptoticsg2}--\eqref{eq:asymptoticsg3}
for $g_2$ and $g_3$ can be found by integrating the asymptotic
results for $F_2$ and $F_3$ as given in Corollary
\ref{cor:asymptoticsCmuj}. The constants of integration vanish,
which can be seen from the fact that
\[ \int \log(z-s) d\mu_j(s) = \log(z) \| \mu_j \| + \int
 \log(1-s/z) d\mu_j(s) \]
and $\log(1-s/z) d\mu_j(s) \to 0$ as $z \to \infty$.

The additional constant $\frac{4\pi {\rm i}}{3}$ in
\eqref{eq:asymptoticsg2} in the third quadrant comes from the fact
that we used the branch \eqref{eq:logbranchg2} of the logarithm to
define $g_2$ which does not correspond to the principal branch of
$\log z$ as $z \to \infty$ in the third quadrant. \end{proof}

\subsection{The $\phi$-functions}

In the statement of the RH problem for $U$ it turns out to be
convenient to use certain functions $\phi_j$. We recall that
$\xi_j$ is defined as the restriction of $\xi$ to the sheet
$\mathcal R_j$ of the Riemann surface. We define
\begin{align}
    \phi_j:\C\setminus(\R\cup \big(-{\rm i}\infty,-{\rm i}c]\cup[{\rm i}c,{\rm i}\infty)\big)\to \C,
\end{align}
as
\begin{align}\label{eq:defphi1}
\phi_1(z)&= - \frac{1}{2}\int_{a}^z  \left(\xi_1(y)-\xi_2(y)\right) \ {\rm d}y, \\
\phi_2(z)&=\begin{cases}
-\frac{1}{2}\int_{{\rm i}c}^z \left(\xi_{2}(y)-\xi_3(y)\right) \ {\rm d} y, & \Im z>0, \\
-\frac{1}{2}\int_{-{\rm i}c}^z \left(\xi_{2}(y)-\xi_{3}(y) \right) \ {\rm d}
y, & \Im z<0,
\end{cases}\label{eq:defphi2} \\
\phi_{3}(z)&= \frac{\pi {\rm i}}{6}-\frac{1}{2}\int_0^z
\left(\xi_3(y)-\xi_4(y)\right)\ {\rm d} y. \label{eq:defphi3}
\end{align}
All paths of integration lie entirely (except for their starting
values) in $\mathbb C \setminus (\mathbb R \cup \big(-{\rm
i}\infty,-{\rm i}c]\cup[{\rm i}c,{\rm i}\infty)\big)$.

The importance of the $\phi$-functions is that jump properties of
the function $g_j$ can be conveniently written in terms of the
function $\phi_j$. We start with the jump properties of $g_1$ and
the connection with $\phi_1$.

\begin{lemma} \label{lem:propertiesg1}
\begin{enumerate}
\item[(a)] We have for $z \in [-a,a]$,
    \begin{align} \label{eq:g1+g1-phi1}
    g_{1,+}(z) - g_{1,-}(z) = 2 \phi_{1,+}(z) = - 2\phi_{1,-}(z),
    \end{align} and
    \begin{multline} \label{eq:variationalcondmu1complex}
    g_{1,+}(z) + g_{1,-}(z) - g_2(z) - V(z) + \frac{3}{4} \tau^{4/3}
    |z|^{4/3} + \ell \\
    = \begin{cases} 0, & z \in (0,a], \\
     -2 \pi {\rm i}/3, & z \in [-a,0).
    \end{cases}
    \end{multline}

\item[(b)] For $z \in \R \setminus [-a,a]$, we have
    \begin{align}\label{eq:onecut}
        g_{1,+}(z)-g_{1,-}(z)= \begin{cases}
            0, & z \in (a,\infty), \\
        2\pi {\rm i}, & z \in (-\infty, -a).
        \end{cases}
    \end{align}
    \begin{multline} \label{eq:variationalphi1}
    g_{1,+}(z) + g_{1,-}(z) - g_2(z) - V(z) +
    \frac{3}{4}\tau^{4/3} |z|^{4/3} + \ell \\
    = \begin{cases}
        2 \phi_{1}(z), & z \in (a, \infty), \\
        2 \phi_{1,+}(z) -8 \pi {\rm i}/3, & z \in (-\infty, -a).
        \end{cases} \end{multline}
 \end{enumerate}
\end{lemma}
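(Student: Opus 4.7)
The plan is to verify each identity by combining the definitions of $g_1$, $g_2$, $\phi_1$ with the properties of the equilibrium measures established in Theorem \ref{prop:structureEqmeasure} and Proposition \ref{prop:EqmeasureXi}. For the jump formula \eqref{eq:g1+g1-phi1}, the principal-branch definition of $g_1$ directly gives $g_{1,+}(x) - g_{1,-}(x) = 2\pi {\rm i}\,\mu_1((x,a])$ for $x \in [-a,a]$. On the other hand, from the crosswise connection of the sheets $\mathcal R_1$ and $\mathcal R_2$ across $S(\mu_1)$ one has $\xi_{2,\pm}(y) = \xi_{1,\mp}(y)$ on $(-a,a)$, so \eqref{eq:mu1inxi} gives $\xi_{1,+}(y) - \xi_{2,+}(y) = \xi_{1,+}(y) - \xi_{1,-}(y) = 2\pi{\rm i}\, {\rm d}\mu_1/{\rm d}y$. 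Integrating the definition \eqref{eq:defphi1} from $a$ along the $+$ side of $[-a,a]$ then yields $2\phi_{1,+}(x) = 2\pi {\rm i} \,\mu_1((x,a])$, and the analogous computation on the $-$ side produces $2\phi_{1,-}(x) = -2\phi_{1,+}(x)$.

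For the second identity in part (a), I use $g_{1,+}(x) + g_{1,-}(x) = -2 U^{\mu_1}(x)$ together with a careful evaluation of $g_2$ on the real line. Thanks to the symmetry $\mu_2(-A) = \mu_2(A)$, the branch convention \eqref{eq:logbranchg2} gives $g_2(x) = -U^{\mu_2}(x)$ for $x > 0$ and $g_2(x) = -U^{\mu_2}(x) + 2\pi {\rm i}/3$ for $x < 0$; the extra summand arises because, for $x < 0$ and $s = {\rm i}t$ with $t > 0$, the branch of \eqref{eq:logbranchg2} forces a $+2\pi$ correction to the principal argument, while $\mu_2$ carries mass $1/3$ on ${\rm i}\R^+$. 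Substituting these expressions into the LHS and invoking the variational equality \eqref{eq:variationalcondmu1a} on $S(\mu_1) = [-a,a]$ yields the stated values $0$ on $(0,a]$ and $-2\pi {\rm i}/3$ on $[-a,0)$.

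Part (b) is handled by the same circle of ideas. The formula \eqref{eq:onecut} is immediate from the principal-branch definition of $g_1$. For the variational identity \eqref{eq:variationalphi1} on $(a,\infty)$, I differentiate: using the $\mathcal R_1$ formula and the $\Re z > 0$ branch of the $\mathcal R_2$ formula in \eqref{eq:defxi}, together with $g_j'(z) = F_j(z)$, the derivative of the LHS equals $-(\xi_1(z) - \xi_2(z)) = 2\phi_1'(z)$. The integration constant is pinned down by taking $z \to a^+$: $\phi_1(a) = 0$ and part (a) gives value $0$ for the LHS at $a$, producing the constant $\ell$. On $(-\infty,-a)$ the derivative computation is repeated with the $\Re z < 0$ expression for $\xi_2$, approaching from the upper half plane; this time the integration constant is fixed at $z \to -a^-$ by combining the $[-a,0)$ case of part (a) with the jump values $g_{1,+}(-a) - g_{1,-}(-a) = 2\pi {\rm i}$ and $2\phi_{1,+}(-a) = 2\pi{\rm i}$, which produces exactly the constant $-8\pi {\rm i}/3$ in \eqref{eq:variationalphi1}.

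The main obstacle is bookkeeping: keeping track of the non-principal branch \eqref{eq:logbranchg2} of $g_2$, the sector-dependent formulas for $\xi_2$ and $\xi_3$ in Lemma \ref{lem:analyticityxi}, the jump of $g_1$ across $(-\infty,-a)$, and the various fractional-power branches, in such a way that all additive constants line up to produce precisely $0$, $-2\pi {\rm i}/3$, and $-8\pi {\rm i}/3$ on the respective intervals.
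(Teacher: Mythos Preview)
Your proposal is correct and follows essentially the same route as the paper. The only organizational difference is in establishing \eqref{eq:variationalcondmu1complex}: the paper first invokes the variational equality \eqref{eq:variationalcondmu1a} to kill the real part and then computes the two argument integrals $\int(\arg_+ + \arg_-)(z-s)\,{\rm d}\mu_1(s)$ and $\int\arg(z-s)\,{\rm d}\mu_2(s)$ separately, whereas you package the second of these into the statement $g_2(x)=-U^{\mu_2}(x)+2\pi{\rm i}/3$ for $x<0$; the underlying calculation (symmetry of $\mu_2$ plus the branch \eqref{eq:logbranchg2}) is identical, and for part (b) both you and the paper differentiate, identify the derivative as $\xi_2-\xi_1=2\phi_1'$, and fix the integration constants by matching at $\pm a$ using $\phi_{1,+}(-a)=\pi{\rm i}$.
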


\begin{proof} (a)
Let $z \in [-a,a]$. Then we have
\begin{align} \label{eq:g1jump}
g_{1,+}(z)-g_{1,-}(z)&={\rm i} \int \left(\arg_+(z-s)-\arg_-(z-s)\right) \ {\rm d} \mu_1(s)\\
\nonumber &=2\pi {\rm i} \int_z^{a} \ {\rm d}\mu_1(s).
\end{align}
By \eqref{eq:mu1inxi}, \eqref{eq:defphi1}, and the fact that
$\xi_{1,-}(z)=\xi_{2,+}(z)$ we obtain
\begin{align}
    g_{1,+}(z)-g_{1,-}(z) & = \int_z^{a} \left(\xi_{1,+}(y)-\xi_{1,-}(y)\right) \ {\rm d} y\\
    \nonumber & =\int_z^{a} \left(\xi_{1,+}(y)-\xi_{2,+}(y)\right) \ {\rm d}y \\
    \nonumber & =2\phi_{1,+}(z)=-2\phi_{1,-}(z)
\end{align}
which proves \eqref{eq:g1+g1-phi1}.

For $z\in [-a,a] = S(\mu_1)$, it follows from the variational
condition \eqref{eq:variationalcondmu1a} that
\begin{align}
    \Re \left(g_{1,+}(z)+g_{1,-}(z)-g_2(z)-V(z)+\frac{3}{4} \tau^{4/3} |z|^{4/3}+\ell\right)=0
\end{align}
and hence
\begin{multline}\label{eq:variationalcondmu1complexstep1}
    g_{1,+}(z)+g_{1,-}(z)-g_2(z)-V(z)+\frac{3}{4} \tau^{4/3} |z|^{4/3}+\ell\\
    ={\rm i} \int \big(\arg_+(z-s)+\arg_-(z-s)\big) {\rm d}
    \mu_1(s) -{\rm i} \int_{{\rm i}\R} \arg(z-s) \ {\rm d}\mu_2(s),
\end{multline}
where by the choice of branches in the definition of $g_1$ and $g_2$,
we have to use  $-\pi < \arg(z-s) <
\pi$ in the first integral and $-\pi/2 < \arg(z-s) < 3\pi/2$
in the second integral.
Clearly,
\begin{align}\label{eq:variationalcondmu1complexstep2}
\arg_+(z-s)+\arg_-(z-s)=0
\end{align}
for $z,s\in \R$. From the  symmetry of $\mu_2$ with respect to the
real axis,  we obtain
\begin{equation} \label{eq:conditiong2}
\int_{{\rm i}\R} \arg(z-s) \ {\rm d}\mu_2(s)=\int_0^{{\rm
i}\infty} \left(\arg(z-s)+\arg(z+s)\right) {\rm d}\mu_2(s).
\end{equation}
Since by \eqref{eq:logbranchg2}
\begin{equation}
 \arg(z-s)+\arg(z+s) = \left\{ \begin{array}{cc}
 0 & \textrm{if }  z > 0, \\
 2\pi & \textrm{if } z < 0, \end{array} \right.
\end{equation}
for $s\in (0,{\rm i}\infty)$, and since $\mu_2({\rm i}\R_+) = 1/3$, we
get from \eqref{eq:conditiong2}
\begin{align}
\int_{{\rm i}\R} \arg(z-s) \ {\rm d}\mu_2(s) = \left\{
\begin{array}{cc}
 0 & \textrm{if }  z > 0, \\
 2\pi/3 & \textrm{if } z < 0. \end{array} \right.
\label{eq:variationalcondmu1complexstep3}
\end{align}
Inserting \eqref{eq:variationalcondmu1complexstep3} and
\eqref{eq:variationalcondmu1complexstep2} into
\eqref{eq:variationalcondmu1complexstep1} leads to
\eqref{eq:variationalcondmu1complex}.

\medskip

(b) Let $z \in \R \setminus [-a,a]$. Then we continue to have
\eqref{eq:g1jump}, and from this the relations \eqref{eq:onecut}
immediately follow. By \eqref{eq:fromgtoCmu} and \eqref{eq:defxi}
we further have
\begin{align}\label{eq:diffcomplexvariationalcondmu1}
 \frac{{\rm d}}{{\rm d} z} \left( g_{1,+}(z)+g_{1,-}(z)-g_2(z)-V(z)+
    \frac{3}{4}\tau^{4/3}|z|^{4/3}+\ell \right)=\xi_2(z)-\xi_1(z).
\end{align}
We obtain \eqref{eq:variationalphi1} from \eqref{eq:defphi1},
\eqref{eq:diffcomplexvariationalcondmu1}, and the fact that
equality holds in \eqref{eq:variationalcondmu1complex} for $z= \pm
a$. Note also that $\phi_{1,+}(-a) = \pi {\rm i}$.
\end{proof}

For the functions $g_2$ and $\phi_2$ we have in a similar way.

\begin{lemma} \label{lem:propertiesg2}
\begin{enumerate}
\item[(a)] For  $z \in S(\sigma-\mu_2)$, we have
\begin{align}\label{eq:variationalcondmu2complex}
    g_{2,+}(z)+g_{2,-}(z)-g_1(z)-g_3(z) =
        \begin{cases}
        0, & \textrm{if } z \in [{\rm i}c, {\rm i}\infty), \\
        4\pi {\rm i}/3, & \textrm{if } z \in (-{\rm i}\infty, -{\rm i}c],
        \end{cases}
\end{align}
and for $z \in S(\sigma-\mu_2) \cap {\rm i} \R_{\pm}$,
\begin{multline}\label{eq:middleblockentrydiag}
    g_{2,+}(z)-g_{2,-}(z)\pm\frac{{\rm i} 3\sqrt{3}}{4} \tau^{4/3}
    |z|^{4/3} \\ =
    2\phi_{2,+}(z) + 2\pi {\rm i}/3
    =  -2\phi_{2,-}(z) + 2 \pi {\rm i}/3.
\end{multline}

\item[(b)] For $z\in {\rm i}\R \setminus S(\sigma-\mu_2) = (-{\rm
i}c, {\rm i}c)$, we have
\begin{align} \label{eq:conditionong2}
    g_{2,+}(z)-g_{2,-}(z)\pm\frac{{\rm i} 3\sqrt{3}}{4} \tau^{4/3}
    |z|^{4/3} = 2\pi {\rm i}/3, \qquad z \in {\rm i} \R_{\pm},
    \end{align}
and
\begin{align} \label{eq:conditionong2b}
 g_{2,+}(z)+g_{2,-}(z)-g_1(z)-g_3(z) =
    \begin{cases}
    2 \phi_2(z) & \textrm{if } z \in (0,{\rm i} c), \\
    2 \phi_2(z) + 4 \pi {\rm i}/3 & \textrm{if } z\in (-{\rm i}c, 0).
    \end{cases}
 \end{align}
\end{enumerate}
\end{lemma}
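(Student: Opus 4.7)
The argument mirrors that of Lemma \ref{lem:propertiesg1}, with the same three ingredients: a direct Sokhotski--Plemelj computation of the jump of $g_2$ using the nonstandard branch \eqref{eq:logbranchg2}; the variational conditions \eqref{eq:variationalcondmu2a}--\eqref{eq:variationalcondmu2b}; and the identities $g_j'=F_j$ together with the expressions \eqref{eq:defxi} for the $\xi_j$, which convert $\phi_2$ back into a combination of $g$-functions.

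First I would establish the jump of $g_2$ across ${\rm i}\R$. For $z={\rm i}\eta$ and $s={\rm i}\beta$ the branch \eqref{eq:logbranchg2} gives $\arg_+({\rm i}(\eta-\beta))-\arg_-({\rm i}(\eta-\beta))=2\pi$ precisely when $\beta>\eta$ and $0$ otherwise, so that
\[
g_{2,+}({\rm i}\eta)-g_{2,-}({\rm i}\eta) = 2\pi{\rm i}\,\mu_2(\{{\rm i}\beta:\beta>\eta\}).
\]
On $(-{\rm i}c,{\rm i}c)$ the constraint $\mu_2=\sigma$ is active, and the explicit density \eqref{eq:densitysigma} together with $\mu_2({\rm i}\R_\pm)=1/3$ gives \eqref{eq:conditionong2}. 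For the $\phi$-side of \eqref{eq:middleblockentrydiag} I would use that on $S(\sigma-\mu_2)$ the sheets $\mathcal R_2$ and $\mathcal R_3$ are interchanged, so $\xi_{2,+}=\xi_{3,-}$ and hence $\xi_{2,+}-\xi_{3,+}=\xi_{2,+}-\xi_{2,-}$. Combined with \eqref{eq:mu2inxi} and \eqref{eq:defphi2}, this yields $2\phi_{2,+}({\rm i}\eta)=2\pi{\rm i}\,(\sigma-\mu_2)([{\rm i}c,{\rm i}\eta])$ for $\eta>c$; a straightforward splitting of $\mu_2([0,{\rm i}\eta])$ into its parts inside and outside $[-{\rm i}c,{\rm i}c]$ then delivers \eqref{eq:middleblockentrydiag}, and similarly on $(-{\rm i}\infty,-{\rm i}c]$.

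For the sum $g_{2,+}+g_{2,-}-g_1-g_3$, the variational equality \eqref{eq:variationalcondmu2a} forces the real part to vanish identically on $S(\sigma-\mu_2)$. For the imaginary part, the evenness of $\mu_1$ and $\mu_3$ together with the principal branch of the logarithm used in $g_1,g_3$ gives $\Im g_1({\rm i}\eta)=(\pi/2)\sign\eta$ and $\Im g_3({\rm i}\eta)=(\pi/6)\sign\eta$, while the shifted branch \eqref{eq:logbranchg2} produces $\Im(g_{2,+}+g_{2,-})({\rm i}\eta)=\pi\|\mu_2\|=2\pi/3$ independently of the sign of $\eta$. Subtracting yields $0$ on $[{\rm i}c,{\rm i}\infty)$ and $4\pi{\rm i}/3$ on $(-{\rm i}\infty,-{\rm i}c]$, establishing \eqref{eq:variationalcondmu2complex}. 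Identity \eqref{eq:conditionong2b} on $(-{\rm i}c,{\rm i}c)$ then follows by differentiating both sides: using $g_j'=F_j$ and the explicit formulas \eqref{eq:defxi} one verifies $F_{2,+}+F_{2,-}-F_1-F_3 = -(\xi_2-\xi_3) = 2\phi_2'$ on $(-{\rm i}c,{\rm i}c)$, so that $g_{2,+}+g_{2,-}-g_1-g_3$ and $2\phi_2$ differ by an additive constant on each of $(0,{\rm i}c)$ and $(-{\rm i}c,0)$. These constants are pinned down by continuity at $\pm{\rm i}c$ using \eqref{eq:variationalcondmu2complex} together with $\phi_2(\pm{\rm i}c)=0$, giving the values $0$ and $4\pi{\rm i}/3$ respectively.

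The main obstacle is the bookkeeping of the two distinct logarithm branches (principal for $g_1,g_3$ versus the shifted one \eqref{eq:logbranchg2} for $g_2$), together with the choice of starting points in the upper- and lower-half-plane definitions \eqref{eq:defphi2} of $\phi_2$. Once these branch issues are unravelled, all the numerical constants reduce to elementary arithmetic involving the total masses $\|\mu_j\|$ and the density of $\sigma$ from \eqref{eq:densitysigma}.
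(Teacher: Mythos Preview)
Your proposal is correct and follows essentially the same route as the paper: both arguments combine the Sokhotski--Plemelj jump of $g_2$ (expressed as $2\pi{\rm i}$ times a tail mass of $\mu_2$), the variational equality \eqref{eq:variationalcondmu2a} for the real part of the sum, an argument computation using the symmetry of $\mu_1,\mu_2,\mu_3$ for the imaginary part, and the identity $g_j'=F_j$ together with \eqref{eq:defxi} to match with $\phi_2$. The only cosmetic difference is that you evaluate $\Im g_1({\rm i}\eta)$ and $\Im g_3({\rm i}\eta)$ separately as $(\pi/2)\sign\eta$ and $(\pi/6)\sign\eta$, whereas the paper bundles them into a single integral $\int\arg(z-s)\,{\rm d}(\mu_1+\mu_3)(s)=\pm 2\pi/3$; the content is identical.
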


\begin{proof}
(a) Let $z \in S(\sigma-\mu_2)$.
From the variational condition \eqref{eq:variationalcondmu2a} it
then follows that
\begin{equation}
\Re \left(g_{2,+}(z)+g_{2,-}(z)-g_1(z)-g_3(z)\right)= 0.
\end{equation}
 Hence
\begin{multline} \label{eq:variationalcondmu2complexstep1}
    g_{2,+}(z) +g_{2,-}(z)-g_1(z)-g_3(z)\\
    = {\rm i} \int \left(\arg_+(z-s)+\arg_-(z-s)\right)  {\rm d} \mu_2(s) \\
    - {\rm i} \int \left(\arg(z-s)\right) {\rm d} (\mu_1+\mu_3)(s).
\end{multline}
 By symmetry we find
\begin{multline}
    \int \arg(z-s) {\rm d} (\mu_1+\mu_3)(s)
    =\int_0^\infty \left(\arg(z-s)+\arg(z+s)\right) {\rm d}
    (\mu_1+\mu_3)(s).
\end{multline}
Since $\arg(z-s) + \arg(z+s) = \pm \pi$ for $z \in {\rm i} \R_{\pm}$ and $s \in \R_+$,
and since $(\mu_1 + \mu_3)(\R_+) = 2 \pi {\rm i}/3$,
we then obtain
\begin{align}\label{eq:variationalcondmu2complexstep2}
    \int \arg(z-s) {\rm d} (\mu_1+\mu_3)(s)
    = \pm 2 \pi/3, \qquad
        z \in {\rm i} \R_{\pm}.
\end{align}
 Moreover,
\begin{align}\label{eq:variationalcondmu2complexstep3}
\int \left(\arg_+(z-s)+\arg_-(z-s)\right)  {\rm d}
\mu_2(s)=2\pi/3, \qquad z \in {\rm i}\R.
\end{align}
Inserting \eqref{eq:variationalcondmu2complexstep2} and
\eqref{eq:variationalcondmu2complexstep3} into
\eqref{eq:variationalcondmu2complexstep1} leads to
\eqref{eq:variationalcondmu2complex}.

By symmetry we have $\mu_2({\rm i} \R_+) = 1/3$. Then by the
definition of $g_2$ and \eqref{eq:mu2inxi} we obtain
\begin{align} \nonumber
    g_{2,+}(z)-g_{2,-}(z) & =
    2\pi{\rm i}\int_z^{{\rm i}\infty} {\rm d}\mu_2(y)
    = \frac{2\pi{\rm i}}{3}-2\pi{\rm i} \int_0^z \ {\rm d}\mu_2(y)\\
    \label{eq:diffg2}
    & = \frac{2\pi{\rm i}}{3} - \int_0^z \left(\xi_{2,+}(y)-\xi_{2,-}(y)
        \right) {\rm d} y - 2\pi {\rm i} \int_0^z {\rm d} \sigma(y).
        \end{align}
By \eqref{eq:densitysigma} we have
\begin{equation} \label{eq:integralofsigma}
    2 \pi {\rm i} \int_0^z {\rm d} \sigma(y) =
    \pm \frac{{\rm i} 3\sqrt{3}}{4} \tau^{4/3} |z|^{4/3},
    \quad \textrm{for } z \in {\rm i}\R_{\pm}.
    \end{equation}
For $z \in S(\sigma-\mu_2) \cap {\rm i}\R_{\pm}$, we also find
since $\xi_2$ is analytic in $(-{\rm i}c,0) \cup (0,{\rm i}c)$
\begin{align}  \nonumber
     \int_0^z \left(\xi_{2,+}(y)-\xi_{2,-}(y) \right) {\rm d}y
    & = \int_{\pm {\rm i}c}^z \left(\xi_{2,+}(y)-\xi_{2,-}(y) \right) {\rm d}y \\
    \nonumber
    & = \int_{\pm {\rm i}c}^z \left(\xi_{2,+}(y)-\xi_{3,+}(y) \right) {\rm d}y \\
    \label{eq:2phi2integral}
    & = -2 \phi_{2,+}(z) = 2 \phi_{2,-}(z),
    \end{align}
where we used the definition \eqref{eq:defphi2} of $\phi_2$.
Putting  \eqref{eq:integralofsigma} and
\eqref{eq:2phi2integral} into \eqref{eq:diffg2} leads to \eqref{eq:middleblockentrydiag}.

\medskip

(b) Let $z\in (-{\rm i}c ,{\rm i}c)$.  Then
as before
\begin{align} \label{eq:integralofmu2-1}
    g_{2,+}(z)-g_{2,-}(z)&=
    2\pi{\rm i}\int_z^{{\rm i}\infty} {\rm d}\mu_2(y)
    =
    \frac{2\pi{\rm i}}{3}-2\pi{\rm i} \int_0^z {\rm d}\mu_2(y).
    \end{align}
Now we have
\[ \int_0^z {\rm d} \mu_2(y) =
    \int_0^z {\rm d}\sigma(y), \]
    and so by \eqref{eq:integralofsigma}
\begin{align} \label{eq:integralofmu2-2}
     2 \pi {\rm i} \int_0^z {\rm d} \mu_2(y) =
    \pm \frac{{\rm i} 3\sqrt{3}}{4} \tau^{4/3} |z|^{4/3},
    \quad \textrm{for } z \in (-{\rm i}c, {\rm i}c) \cap {\rm i}\R_{\pm}.
    \end{align}
Then \eqref{eq:conditionong2} follows by \eqref{eq:integralofmu2-1}
and \eqref{eq:integralofmu2-2}.

Next, from \eqref{eq:fromgtoCmu} and \eqref{eq:defxi} we
obtain
\begin{multline} \label{eq:diffcomplexvariationalcondmu2}
\frac{{\rm d}}{{\rm d}z} \left(g_{2,+}(z)+g_{2,-}(z)-g_{1}(z)-g_3(z)\right) \\ =
    F_{2,+}(z) + F_{2,-}(z) - F_1(z) - F_3(z) = -\xi_2(z)+\xi_3(z).
\end{multline}
Integrating \eqref{eq:diffcomplexvariationalcondmu2} and noting
that the equality \eqref{eq:variationalcondmu2complex}
holds for $z = \pm {\rm i}c$,
 we obtain \eqref{eq:conditionong2b}
by \eqref{eq:defphi2}.
\end{proof}

Finally, for $g_3$ and $\phi_3$ we have the following results.
\begin{lemma} \label{lem:propertiesg3}
We have for $z \in \R$,
\begin{align} \label{eq:conditiong3}
    g_{3,+}(z)+g_{3,-}(z)-g_2(z) =
        \begin{cases} 0, &  z > 0, \\
            -2 \pi {\rm i}/3, &  z < 0.
            \end{cases}
\end{align}
and
\begin{multline} \label{eq:lowerrightblockentrydiag}
    g_{3,+}(z)-g_{3,-}(z) \mp \frac{{\rm i}3\sqrt{3}}{4}\tau^{4/3}
    |z|^{1/3} \\
    = 2\phi_{3,+}(z) = - 2\phi_{3,-}(z)
        + 2 \pi {\rm i}/3, \quad \textrm{for } z \in \R_{\pm}.
\end{multline}
\end{lemma}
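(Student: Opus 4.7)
The proof parallels those of Lemmas~\ref{lem:propertiesg1} and~\ref{lem:propertiesg2}. For \eqref{eq:conditiong3}, I would combine the variational condition \eqref{eq:variationalcondmu3}, which gives $\Re(g_{3,+}(z)+g_{3,-}(z)-g_2(z))=0$ on $\R$, with a careful evaluation of the imaginary parts via argument integrals. Since $g_3$ uses the principal branch and $z,s\in\R$, the identity $\arg_+(z-s)+\arg_-(z-s)=0$ forces $g_{3,+}+g_{3,-}=-2U^{\mu_3}$ to be purely real, so the entire imaginary contribution comes from $-g_2(z)=U^{\mu_2}(z)-{\rm i}\int\arg(z-s)\,{\rm d}\mu_2(s)$ with the shifted branch $\arg\in(-\pi/2,3\pi/2)$ from \eqref{eq:logbranchg2}.

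Pairing $s=\pm{\rm i}t$ via the symmetry of $\mu_2$, I would check that for $z>0$ the arguments $\arg(z\mp{\rm i}t)$ are opposite and cancel, while for $z<0$ the two points sit in the second and third quadrants with arguments in $(\pi/2,\pi)$ and $(\pi,3\pi/2)$ respectively, summing to $2\pi$. Since $\mu_2({\rm i}\R_+)=1/3$ by symmetry, this yields $\int\arg(z-s)\,{\rm d}\mu_2(s)=0$ for $z>0$ and $2\pi/3$ for $z<0$, and multiplication by $-{\rm i}$ produces the two cases in \eqref{eq:conditiong3}.

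For \eqref{eq:lowerrightblockentrydiag}, I would start from the elementary identity $g_{3,+}(z)-g_{3,-}(z)=2\pi{\rm i}\,\mu_3((z,\infty))$ on $\R$; since $\mu_3(\R_+)=1/6$ by symmetry, this equals $\pi{\rm i}/3-2\pi{\rm i}\int_0^z{\rm d}\mu_3$. Substituting the representation \eqref{eq:mu3inxi} from Proposition~\ref{prop:EqmeasureXi} splits the integral as
\begin{equation*}
2\pi{\rm i}\int_0^z{\rm d}\mu_3(y)=\int_0^z(\xi_{3,+}-\xi_{3,-})(y)\,{\rm d}y-{\rm i}\sqrt{3}\,\tau^{4/3}\int_0^z|y|^{1/3}\,{\rm d}y.
\end{equation*}
Across $\R=S(\mu_3)$ the sheets $\mathcal R_3$ and $\mathcal R_4$ are glued crosswise, so $\xi_{3,-}=\xi_{4,+}$ on $\R$; together with the definition \eqref{eq:defphi3} this gives $\int_0^z(\xi_{3,+}-\xi_{4,+})\,{\rm d}y=\pi{\rm i}/3-2\phi_{3,+}(z)$. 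The elementary evaluation $\int_0^z|y|^{1/3}\,{\rm d}y=\pm\tfrac{3}{4}|z|^{4/3}$ for $z\in\R_\pm$ then assembles everything, with the two $\pi{\rm i}/3$ contributions cancelling, to produce the first equality in \eqref{eq:lowerrightblockentrydiag}. The second equality follows from the identity $\phi_{3,+}+\phi_{3,-}=\pi{\rm i}/3$ on $\R$, itself a direct consequence of \eqref{eq:defphi3} and the crosswise gluing.

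The main obstacle is the meticulous branch-tracking throughout: reconciling the principal branch used for $g_3$ with the shifted $(-\pi/2,3\pi/2)$ branch used for $g_2$, and ensuring that the additive constant $\pi{\rm i}/6$ in the definition \eqref{eq:defphi3} of $\phi_3$ conspires with $2\pi{\rm i}\cdot\mu_3(\R_+)=\pi{\rm i}/3$ to produce exactly the shift $+2\pi{\rm i}/3$ appearing in the third member of \eqref{eq:lowerrightblockentrydiag}.
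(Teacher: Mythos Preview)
Your proposal is correct and follows essentially the same route as the paper: the variational condition \eqref{eq:variationalcondmu3} for the real part together with the argument integral \eqref{eq:variationalcondmu1complexstep3} for the imaginary part gives \eqref{eq:conditiong3}, and the jump identity $g_{3,+}-g_{3,-}=\pi{\rm i}/3-2\pi{\rm i}\int_0^z{\rm d}\mu_3$ combined with \eqref{eq:mu3inxi}, the crosswise relation $\xi_{3,-}=\xi_{4,+}$, and \eqref{eq:defphi3} gives \eqref{eq:lowerrightblockentrydiag}. The paper's proof is slightly terser (it simply cites \eqref{eq:variationalcondmu1complexstep2}--\eqref{eq:variationalcondmu1complexstep3} rather than re-deriving the $\mu_2$ argument integral), but the logic is identical.
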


\begin{proof}
From the variational condition \eqref{eq:variationalcondmu3} it
follows that for $z \in \R$,
\begin{align}
    \Re\left(g_{3,+}(z)+g_{3,-}(z)-g_2(z)\right)=0,
\end{align}
and hence
\begin{multline*}
g_{3,+}(z)+g_{3,-}(z)-g_2(z) ={\rm i} \int
\big(\arg_+(z-s)+\arg_-(z-s)\big) \ {\rm d} \mu_3(s)\\
-{\rm i} \int \arg(z-s) \ {\rm d} \mu_2(s).
\end{multline*}
Combining this with \eqref{eq:variationalcondmu1complexstep2} and
\eqref{eq:variationalcondmu1complexstep3} gives
\eqref{eq:conditiong3}.

Further, by symmetry we find $\mu_3(\R_+) = 1/6$ and so
\begin{align}
g_{3,+}(z)-g_{3,-}(z) & =2\pi{\rm i} \int_z^\infty {\rm d} \mu_3(s)
= \frac{\pi {\rm i}}{3}  -2\pi{\rm i}\int_0^z {\rm d}\mu_3(s).
\end{align}
Now inserting \eqref{eq:mu3inxi}  and using
$\xi_{3,-}(z)=\xi_{4,+}(z)$ gives
\begin{align*}
    g_{3,+}(z)-g_{3,-}(z)&=\frac{\pi {\rm i}}{3}
     -\int_0^z \left(\xi_{3,+}(y)-\xi_{3,-}(y)-{\rm i}\sqrt{3}\tau^{4/3} |y|^{1/3}\right){\rm d}y\\
    \nonumber &=\frac{\pi {\rm i}}{3}  -\int_0^z
    \left(\xi_{3,+}(y)-\xi_{4,+}(y) \right){\rm d}y \pm\frac{{\rm i}3\sqrt{3}}{4}\tau^{4/3} |z|^{1/3}
\end{align*}
for $z\in \R_{\pm}$. Combining this with \eqref{eq:defphi3} leads
to \eqref{eq:lowerrightblockentrydiag}.
\end{proof}

\subsection{The RH problem for $U$}

In the following lemma we state the RH problem that is satisfied
by $U$.

\begin{lemma} \label{lem:RHforU}
We have that $U$ satisfies the RH problem
\begin{equation} \label{eq:RHforU}
\left\{
    \begin{array}{l}
    \multicolumn{1}{l}{U \textrm{ is analytic in }\C\setminus(\R \cup {\rm i}\R)},\\
    U_+(z)=U_-(z) J_{U}(z),\qquad  z\in \R\cup {\rm i}\R, \\
U(z)=(I+\OO(z^{-1/3}))\begin{pmatrix}    1 & 0 & 0 &0\\
    0 & z^{1/3} & 0 & 0\\
    0 & 0 & 1  & 0 \\
    0 & 0 & 0 & z^{-1/3}
    \end{pmatrix} \begin{pmatrix} 1 & 0 \\
    0 & A_j
    \end{pmatrix}\\
    \multicolumn{1}{r}{\textrm{as } z\to \infty \textrm{ in the }j\textrm{th quadrant}}
    \end{array}
    \right.,
\end{equation}
where the $3 \times 3$ matrices $A_j$ are given in \eqref{eq:A1A2}
and \eqref{eq:A3A4}. The asymptotics for $U$ are uniform as $z \to
\infty$ in any region such that \eqref{eq:epsilonarea} holds for
some $\eps > 0$.

The jump matrix $J_{U}$ is given by
\begin{align}
J_{U}&=\begin{pmatrix} \label{eq:JumpUSmu1}
    {\rm e}^{-2 n\phi_{1,+}} & 1 &0 &0\\
    0 & {\rm e}^{- 2 n \phi_{1,-}} &0 &0\\
    0 & 0 & {\rm e}^{-2n\phi_{3,+}} & 1 \\
    0 & 0 & 0 &{\rm e}^{-2n\phi_{3,-}}
    \end{pmatrix}, \quad  \textrm{on }  S(\mu_1),
\end{align}
\begin{align}
J_{U}&=\begin{pmatrix}
    1 & {\rm e}^{2n \phi_{1,+(x)}} &0 &0\\
    0 & 1  &0 &0\\
    0 & 0 & {\rm e}^{-2n\phi_{3,+}} & 1 \\
    0 & 0 & 0 &{\rm e}^{-2n\phi_{3,-}}
    \end{pmatrix},\label{eq:JumpUnotinSmu1}
    \quad \textrm{on } \R \setminus S(\mu_1),
\end{align}
\begin{align}
J_{U}&=\begin{pmatrix}
    1& 0 & 0  & 0\\
    0& {\rm e}^{2n\phi_{2,-}} & 0  & 0 \\
    0 & 1 & {\rm e}^{2n\phi_{2,+}} & 0 \\
    0 & 0 & 0 &1
    \end{pmatrix},\qquad \textrm{on } S(\sigma-\mu_2),\label{eq:JumpUSsigmamu2}\\
J_{U}&=\begin{pmatrix}
    1& 0 & 0  & 0\\
    0& 1 & 0  & 0 \\
    0 & {\rm e}^{-2 n \phi_2} & 1 & 0 \\
    0 & 0 & 0 &1
    \end{pmatrix}, \qquad \textrm{on } {\rm i}\R \setminus S(\sigma-\mu_2),\label{eq:JumpUic}
\end{align}
\end{lemma}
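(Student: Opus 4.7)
The verification proceeds in three steps: analyticity, jump matrices, and asymptotics at infinity. Analyticity of $U$ in $\C\setminus(\R\cup{\rm i}\R)$ is immediate from \eqref{eq:XtoU}, since $X$ is analytic there, $g_1$ and $g_3$ are analytic on $\C\setminus\R$, and $g_2$ is analytic on $\C\setminus{\rm i}\R$ by the choice of branch \eqref{eq:logbranchg2}, while $L$ and $G$ are diagonal. For the jumps we apply the conjugation formula
\[
    J_U(z) = L^{-n} G_-(z)^{-n} J_X(z) G_+(z)^n L^n, \qquad z \in \R \cup {\rm i}\R,
\]
which follows directly from \eqref{eq:XtoU}. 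Because $L$ and $G$ are diagonal and $J_X$ splits on $\R$ into two independent nontrivial $2\times 2$ blocks (the $(1,2)$-block and the $(3,4)$-block in \eqref{eq:jumpforXonR}) and on ${\rm i}\R$ into a single nontrivial $2\times 2$ block (the middle $(2,3)$-block in \eqref{eq:jumpforXoniR}), the whole computation reduces to three independent $2\times 2$ conjugations, each handled by the corresponding identity from Lemmas~\ref{lem:propertiesg1}--\ref{lem:propertiesg3}.

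On $S(\mu_1)$ the $(1,2)$-block of $J_X$ is upper triangular. After conjugation with $\diag(e^{n\ell-ng_1},e^{n(g_1-g_2)})$, its $(1,2)$-entry becomes $\exp\bigl(n(g_{1,+}+g_{1,-}-g_2-V+\tfrac{3}{4}\tau^{4/3}|z|^{4/3}+\ell)\bigr)$, which equals $1$ on $(0,a]$ by \eqref{eq:variationalcondmu1complex} and $e^{-2\pi{\rm i}n/3}=1$ on $[-a,0)$ since $n\equiv 0\pmod 3$; the diagonal entries become $e^{\mp n(g_{1,+}-g_{1,-})}=e^{\mp 2n\phi_{1,\pm}}$ by \eqref{eq:g1+g1-phi1}. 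For the $(3,4)$-block, the oscillating entries $e^{\pm n\frac{3\sqrt 3}{4}{\rm i}\tau^{4/3}|z|^{4/3}}$ combined with conjugation by $\diag(e^{n(g_2-g_3)},e^{ng_3})$ yield $e^{\mp 2n\phi_{3,\pm}}$ via \eqref{eq:lowerrightblockentrydiag}, again using $e^{\pm 2\pi{\rm i}n/3}=1$. This produces \eqref{eq:JumpUSmu1}. On $\R\setminus S(\mu_1)$ the same computation with \eqref{eq:onecut} replacing \eqref{eq:g1+g1-phi1} and \eqref{eq:variationalphi1} replacing \eqref{eq:variationalcondmu1complex} yields \eqref{eq:JumpUnotinSmu1}.

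On ${\rm i}\R$ the middle block of $J_X$ is lower triangular with oscillating diagonal. Conjugation with $\diag(e^{n(g_1-g_2)},e^{n(g_2-g_3)})$ turns the diagonal entries into $e^{\mp n(g_{2,+}-g_{2,-})\mp n\frac{3\sqrt 3}{4}{\rm i}\tau^{4/3}|z|^{4/3}}$, which by \eqref{eq:middleblockentrydiag} together with $e^{\pm 2\pi{\rm i}n/3}=1$ equal $e^{2n\phi_{2,-}}$ and $e^{2n\phi_{2,+}}$, while the $(3,2)$-entry becomes $e^{n(g_{2,+}+g_{2,-}-g_1-g_3)}$, which equals $1$ on $S(\sigma-\mu_2)$ by \eqref{eq:variationalcondmu2complex} (absorbing $e^{4\pi{\rm i}n/3}=1$); this gives \eqref{eq:JumpUSsigmamu2}. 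On $(-{\rm i}c,{\rm i}c)$ the diagonal collapses by \eqref{eq:conditionong2} and the $(3,2)$-entry is $e^{-2n\phi_2}$ by \eqref{eq:conditionong2b}, producing \eqref{eq:JumpUic}. Finally, the asymptotic behavior at infinity follows by substituting Lemma~\ref{lem:asymptoticsg} into $G(z)^n=\diag(e^{-ng_1},e^{n(g_1-g_2)},e^{n(g_2-g_3)},e^{ng_3})$: the $\log z$ leading terms yield $z$-power factors that combine with the diagonal in the asymptotic condition of \eqref{eq:RHforX} to cancel the $n$-dependence and produce the diagonal $\diag(1,z^{1/3},1,z^{-1/3})$ of \eqref{eq:RHforU}, while the $z^{-2/3}$ corrections combine with the $I+O(z^{-2/3})$ factor of $X$ to give the $I+O(z^{-1/3})$ error (the weaker decay being forced by the $z^{\pm 1/3}$ factors now present in the diagonal); the constant $4\pi{\rm i}/3$ in \eqref{eq:asymptoticsg2} disappears since $e^{4\pi{\rm i}n/3}=1$, and conjugation by $L^n$ introduces only the factor $e^{\pm n\ell}$ in the first row and column, which fits within the stated error.

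The principal obstacle is careful bookkeeping: the additive constants $\pm 2\pi{\rm i}/3$ and $\pm 4\pi{\rm i}/3$ in the variational identities of Lemmas~\ref{lem:propertiesg1}--\ref{lem:propertiesg3} survive exponentiation only because $n\equiv 0\pmod 3$, so that every factor of the form $e^{2\pi{\rm i}kn/3}$ collapses to $1$. A related subtlety is that the oscillating exponentials $e^{\pm n\frac{3\sqrt{3}}{4}{\rm i}\tau^{4/3}|z|^{4/3}}$ arising from the Pearcey-based construction of $X$ in \eqref{eq:jumpforXonR}--\eqref{eq:jumpforXoniR} must match exactly the contributions of the constraint measure $\sigma$ to the jumps of $g_2$ and $g_3$, and this matching is precisely the content of \eqref{eq:middleblockentrydiag} and \eqref{eq:lowerrightblockentrydiag}; this is where the measure $\sigma$ and the upper constraint $\mu_2\leq\sigma$ enter the Riemann--Hilbert analysis in an essential way.
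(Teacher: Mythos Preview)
Your treatment of the jump matrices is correct and matches the paper's approach: the same block-by-block conjugation using Lemmas~\ref{lem:propertiesg1}--\ref{lem:propertiesg3}, with the additive constants $\pm 2\pi{\rm i}/3$, $\pm 4\pi{\rm i}/3$ killed by $n\equiv 0\pmod 3$. (You omit the verification that the $(3,4)$-entry on $\R$ becomes $1$, which needs \eqref{eq:conditiong3}, but this is a detail.)

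The gap is in the asymptotics. Your sentence ``the $z^{-2/3}$ corrections combine with the $I+\OO(z^{-2/3})$ factor of $X$ to give the $I+\OO(z^{-1/3})$ error (the weaker decay being forced by the $z^{\pm 1/3}$ factors now present in the diagonal)'' does not hold without further input. Write the correction coming from $G(z)^n$ as $I + n\bigl(\begin{smallmatrix}0&0\\0&H_j\end{smallmatrix}\bigr)z^{-2/3}+\OO(z^{-1})$, where $H_j$ is diagonal with entries from $\{1,\omega,\omega^2\}$ times $\tfrac{3\alpha}{2}$ (this is what Lemma~\ref{lem:asymptoticsg} gives). To reach the stated form you must move this correction to the left past $\bigl(\begin{smallmatrix}1&0\\0&A_j\end{smallmatrix}\bigr)$ and then past the diagonal $D=\diag(1,z^{-1/3},1,z^{1/3})$. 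Conjugation by $D$ multiplies the $(4,2)$-entry of any matrix by $z^{2/3}$, so a generic $\OO(z^{-2/3})$ correction would become $\OO(1)$ in that slot, not $\OO(z^{-1/3})$. What saves the argument is the explicit identity
\[
A_j H_j A_j^{-1} \;=\; \frac{3\alpha}{2}\begin{pmatrix}0&0&1\\1&0&0\\0&1&0\end{pmatrix}
\qquad (j=1,2,3,4),
\]
which the paper checks directly. Because this matrix has a zero in the $(3,1)$-position (i.e.\ the $(4,2)$-slot in the $4\times4$ picture), the dangerous $z^{2/3}$ factor never appears, and the conjugated correction is genuinely $\OO(z^{-1/3})$. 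Your argument omits this computation, and without it the bound $I+\OO(z^{-1/3})$ is not established.
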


\subsection{Proof of the jump matrices $J_U$ in the RH problem for $U$}

\begin{proof}

From \eqref{eq:jumpforXonR} we see that the jump matrix for $X$ on
the real line can be decomposed into  $2\times 2$ blocks. The jump
matrix \eqref{eq:JU} for $U$ can also be decomposed into $2\times
2$ blocks, with two non-trivial diagonal blocks. The upper left
block is given by
\begin{multline} \label{eq:upperleftblock}
\begin{pmatrix}
    {\rm e}^{n\left(g_{1,-}(z)+\ell \right)}& 0\\
    0 & {\rm e}^{n\left(g_{2,-}(z)-g_{1,-}(z)\right)}
\end{pmatrix}
\begin{pmatrix}
1 &{\rm e}^{-n\left(V(z)-\frac{3}{4}\tau^{4/3}|z|^{4/3}\right)}\\
0 & 1
\end{pmatrix}\\
\times
\begin{pmatrix}
{\rm e}^{-n\left(g_{1,+}(z)+\ell\right)}& 0\\
0 & {\rm e}^{-n\left(g_{2,+}(z)-g_{1,+}(z)\right)}
\end{pmatrix}\\
=\begin{pmatrix}
{\rm e}^{-n (g_{1,+}(z)-g_{1,-}(z))} & {\rm e}^{n\left(g_{1,+}(z)+g_{1,-}(z)-g_2(z)-V(z)+\frac{3}{4}\tau^{4/3} |z|^{4/3}+\ell \right)}\\
0 &  {\rm e}^{n \left(g_{1,+}(z)-g_{1,-}(z)\right)}
\end{pmatrix},
\end{multline}
which by the relations in Lemma \ref{lem:propertiesg1} and the
fact that $n$ is a multiple of three, gives the equalities of the
upper left blocks in \eqref{eq:JumpUSmu1} and
\eqref{eq:JumpUnotinSmu1}.

For the lower right block in the jump matrix \eqref{eq:JU} on the
real line we find after  a simple calculation
\begin{equation}\label{eq:lowerrightblock1}
    \begin{pmatrix}
    {\rm e}^{-n(g_{3,+}(z)-g_{3,-}(z)-\frac{{\rm i}3\sqrt{3}}{4} \tau^{4/3} z^{4/3})} & {\rm e}^{n(g_{3,+}(z)+g_{3,-}(z)-g_2(z))}\\
    0 & {\rm e}^{n(g_{3,+}(z)-g_{3,-}(z)-\frac{{\rm i}3\sqrt{3}}{4}
    \tau^{4/3} z^{4/3})}
\end{pmatrix}
\end{equation}
for $z\in \R_+$ and
\begin{equation}\label{eq:lowerrightblock2}
\begin{pmatrix}
    {\rm e}^{-n(g_{3,+}(z)-g_{3,-}(z)+\frac{{\rm i}3\sqrt{3}}{4} \tau^{4/3} |z|^{4/3})} & {\rm e}^{n(g_{3,+}(z)+g_{3,-}(z)-g_2(z))}\\
    0 & {\rm e}^{n(g_{3,+}(z)-g_{3,-}(z)+\frac{{\rm i}3\sqrt{3}}{4}
    \tau^{4/3} |z|^{4/3})}
\end{pmatrix}
\end{equation}
for  $z\in \R_-$. This proves the equality of the lower right
blocks in \eqref{eq:JumpUSmu1} and \eqref{eq:JumpUnotinSmu1} in
view of Lemma \ref{lem:propertiesg3}. It is important again that
$n$ is a multiple of three.

\medskip

We finally come to the  middle block in the jump matrix on the
imaginary axis. By \eqref{eq:jumpforXoniR} and \eqref{eq:JU} we
find after a calculation
\begin{equation}\label{eq:middleblock1}
\begin{pmatrix}
{\rm e}^{-n(g_{2,+}(z)-g_{2,-}(z)+\frac{{\rm i}3\sqrt{3}}{4} \tau^{4/3} z^{4/3})} & 0\\
{\rm e}^{-n(g_{2,+}(z)+g_{2,-}(z)-g_1(z) - g_3(z))} & {\rm
e}^{n(g_{2,+}(z)-g_{2,-}(z)+\frac{{\rm i}3\sqrt{3}}{4} \tau^{4/3}
z^{4/3})}
\end{pmatrix}
\end{equation}
for $z\in {\rm i} \R_+$ and
\begin{equation}\label{eq:middleblock2}
\begin{pmatrix}
{\rm e}^{-n(g_{2,+}(z)-g_{2,-}(z)-\frac{{\rm i}3\sqrt{3}}{4} \tau^{4/3} z^{4/3})} & 0\\
{\rm e}^{-n(g_{2,+}(z)+g_{2,-}(z)-g_3(z)-g_1(z))} & {\rm
e}^{n(g_{2,+}(z)-g_{2,-}(z)-\frac{{\rm i}3\sqrt{3}}{4} \tau^{4/3}
z^{4/3})}
\end{pmatrix}\end{equation}
for  $z\in {\rm i}\R_-$. Then by Lemma \ref{lem:propertiesg2} and
the fact that $n$ is a multiple of three, we indeed obtain
\eqref{eq:JumpUSsigmamu2} and \eqref{eq:JumpUic}.
\end{proof}

\subsection{Proof of the asymptotics for $U$}

\begin{proof} We deal with the asymptotic condition in the RH problem for $U$. Define
matrices
\begin{align}
H_1 &=\frac{3\alpha}{2}\begin{pmatrix}
1  & 0 & 0 \\
0 & \omega &0\\
0 & 0 & \omega^2
\end{pmatrix},
\qquad
H_2=\frac{3\alpha}{2}\begin{pmatrix}
\omega  & 0 & 0 \\
0 & 1 &0\\
0 & 0 & \omega^2
\end{pmatrix}, \\
H_3&=\frac{3\alpha}{2}\begin{pmatrix}
\omega^2  & 0 & 0 \\
0 & 1 &0\\
0 & 0 & \omega
\end{pmatrix}, \qquad
H_4=\frac{3\alpha}{2}\begin{pmatrix}
1  & 0 & 0 \\
0 & \omega^2 &0\\
0 & 0 & \omega
\end{pmatrix}.
\end{align}
From Lemma \ref{lem:asymptoticsg} and \eqref{eq:defG} we obtain
\begin{equation}
G(z)=\begin{pmatrix}
z^{-1} & 0 & 0  & 0\\
0 & z^{1/3}& 0  & 0\\
0 & 0 & z^{1/3} & 0\\
0 & 0 & 0 & z^{1/3}
\end{pmatrix} \left( I+\begin{pmatrix}1&0\\
0&H_j\end{pmatrix} z^{-2/3}+\OO(z^{-1}) \right)
\end{equation}
for $z\to \infty$ and in the $j$th quadrant with $j=1,3,4$. In the
second quadrant the situation is slightly different. Then we have
\begin{equation}
G(z)=\begin{pmatrix}
z^{-1} & 0 & 0  & 0\\
0 & \omega^2z^{1/3}& 0  & 0\\
0 & 0 & \omega z^{1/3} & 0\\
0 & 0 & 0 & z^{1/3}
\end{pmatrix} \left( I +\begin{pmatrix}1&0\\
0&H_2\end{pmatrix} z^{-2/3}+\OO(z^{-1}) \right)
\end{equation}
for $z\to \infty$ in the second quadrant. The extra factors
$\omega^2$ in the $(2,2)$ entry and $\omega$ in the $(3,3)$ entry
are due to the extra term $-\frac{4\pi{\rm i}}{3}$ in the
asymptotic behavior of $g_2$ in the second quadrant as given in
\eqref{eq:asymptoticsg2}. However these extra factors play no role
in the asymptotic behavior of $G(z)^n$, since $n$ is a multiple of
three.

Then by the asymptotics \eqref{eq:RHforX} for $X$,
\begin{align}\label{eq:asymptoticstildeXGn}
X(z)G(z)^n & =(I+\OO(z^{-{2/3}})) \begin{pmatrix}
1&0 &0&0\\
0&                   z^{-1/3} & 0 & 0\\
0&                    0        & 1 & 0\\
0&                    0        & 0 & z^{1/3}
                \end{pmatrix}\\
            & \qquad    \times
                \begin{pmatrix}1 &  0 \\
                0& A_j\end{pmatrix} \left(I+ n\begin{pmatrix}1&0\\
    0& H_j\end{pmatrix} z^{-2/3}+\OO(z^{-1})\right)\nonumber\\
\nonumber &= (I+\OO(z^{-{2/3}}))\begin{pmatrix}
1&0 &0&0\\
0&                   z^{-1/3} & 0 & 0\\
0&                    0        & 1 & 0\\
0&                    0        & 0 & z^{1/3}
                \end{pmatrix}\\
                &\qquad \times
                \left(I+ n\begin{pmatrix}1&0\\
0& A_jH_jA_j^{-1}\end{pmatrix} z^{-2/3}+\OO(z^{-1})\right)\begin{pmatrix}1 &  0 \\
                0& A_j\end{pmatrix}\nonumber
\end{align}
as $z\to \infty$ in the $j$th quadrant. A simple calculation then
shows that
\begin{align}
A_jH_j A_j^{-1} =\frac{3\alpha}{2}\begin{pmatrix}
0 & 0 & 1\\
1 &0 & 0\\
0 & 1 & 0
\end{pmatrix}, \qquad j=1,2,3,4,
\end{align}
and hence
\begin{align}
\begin{pmatrix}
1&0 &0&0\\
0&                   z^{-1/3} & 0 & 0\\
0&                    0        & 1 & 0\\
0&                    0        & 0 & z^{1/3}
                \end{pmatrix}
                \left(I+ n\begin{pmatrix}1&0\\
0& A_jH_jA_j^{-1}\end{pmatrix} z^{-2/3} \right)\\
\nonumber =(I+\OO(z^{-{1/3}}))\begin{pmatrix}
1&0 &0&0\\
0&                   z^{-1/3} & 0 & 0\\
0&                    0        & 1 & 0\\
0&                    0        & 0 & z^{1/3}
                \end{pmatrix}
\end{align}
as   $z \to \infty$. Combining this with
\eqref{eq:asymptoticstildeXGn} leads to
\begin{align}
X(z)G(z)^n=(1+\OO(z^{-1/3}))\begin{pmatrix}
1&0 &0&0\\
0&                   z^{-1/3} & 0 & 0\\
0&                    0        & 1 & 0\\
0&                    0        & 0 & z^{1/3}
                \end{pmatrix}
                \begin{pmatrix}1 &  0 \\
                0& A_j\end{pmatrix}
\end{align}
as $z\to \infty$ in the the $j$th quadrant. Then the asymptotic
condition in the RH problem \eqref{eq:RHforU} for $U$ follows by
\eqref{eq:XtoU} and the fact $L$ is a diagonal matrix of the form
\eqref{eq:defL}. This completes the proof of Lemma
\ref{lem:RHforU}.
\end{proof}

Let us summarize the progress we made so far. Although the RH
problem for $U$ is not normalized at infinity, an important step
is that the asymptotic behavior of $U$ does not depend on $n$
anymore.

The jump structure for $U$ is somewhat involved. Let us take a
closer look at the $2\times 2$ lower right blocks of the jump
matrix $J_{U}$ on $\R$, that is, at the lower right blocks of
\eqref{eq:JumpUSmu1} and \eqref{eq:JumpUnotinSmu1}. We see that
the diagonal entries are highly oscillating. Indeed, by
\eqref{eq:defphi3}, the fact that $\xi_{4,+}(z)=\xi_{3,-}(z)$ for
$z\in S(\sigma-\mu_2)$ and \eqref{eq:mu3inxi} we obtain
\begin{align}\label{eq:oscillatingjumps3}
\phi_{3,+}(z)= - \phi_{3,-}(z) = - \pi{\rm i}\int_0^z {\rm
d}\mu_3 \ \mp \ \frac{{\rm i}3\sqrt{3}}{8}\tau^{4/3} |z|^{4/3}
\end{align}
for $z\in \R$, and hence $\phi_{3,\pm}$ is purely imaginary on
$\R$.

An important step in the steepest analysis is the so-called
opening of the lens. In this step jump matrices with highly
oscillatory diagonal entries, are turned into a constant (or
otherwise nice behaved) jump matrix on the original contour and
jump matrices with exponentially decaying off-diagonal entries on
new contours on the lips of the lens. We do not go into details
here, but let us mention that this construction works under the
condition
\begin{align} \label{eq:conditiontoopenthelens3}
    \frac{{\rm d}}{{\rm d}x}\Im(\phi_{3,+}(x))<0
\end{align}
which holds in our case by \eqref{eq:oscillatingjumps3} and the
fact that  $\mu_3$ is a measure with a density that is strictly positive
on $\R$.

From \eqref{eq:JumpUSsigmamu2} we see that the jump matrix for $U$
on the the part $S(\sigma-\mu_2) = (-{\rm i}\infty,-{\rm i}c] \cup
[{\rm i}c,{\rm  i}\infty)$ of the imaginary axis has highly
oscillating terms on the diagonal. Indeed, by  \eqref{eq:defphi2},
the fact that $\xi_{3,+}(z)=\xi_{2,-}(z)$ for $z\in
S(\sigma-\mu_2)$ and \eqref{eq:mu2inxi} we obtain
    \begin{align}\label{eq:oscillatingjumps2}
        \phi_{2,+}(z)=-\pi{\rm i} \int_{{\rm i} c}^z {\rm d}(\mu_2-\sigma)
    \end{align}
for $z\in S(\sigma-\mu_2) \cap {\rm i}\R$. In order to open the
lenses successfully we now must have the condition
    \begin{align}\label{eq:conditiontoopenthelens2}
        \frac{{\rm d}}{{\rm d}y} \Im \phi_{2,+}({\rm i} y) > 0,
    \end{align}
for  $y>c$ and $y < c$. The difference in sign in
\eqref{eq:conditiontoopenthelens2} when compared to
\eqref{eq:conditiontoopenthelens3} is due to the different
triangularity structure of the jump matrices. Indeed the middle
$2\times 2$ block in \eqref{eq:JumpUSsigmamu2} is lower
triangular, in contrast to the non-trivial blocks in
\eqref{eq:JumpUSmu1} and \eqref{eq:JumpUnotinSmu1}, which are
upper triangular. From \eqref{eq:oscillatingjumps2} we see that
    \begin{align}
            \frac{{\rm d}}{{\rm d}y} \Im \phi_{2,+}({\rm i} y)=-\pi
            \frac{{\rm d}\mu_2}{|{\rm d}z|}({\rm i}y)+\pi \frac{{\rm d}\sigma}{|{\rm d}z|}({\rm i}y)
    \end{align}
for  $y\in \R$. And since ${\rm i}y\in S(\sigma-\mu_2)$ for $y>c$
and $y<c$ we have that condition
\eqref{eq:conditiontoopenthelens2}  is  satisfied. Here we see the
importance of the upper constraint $\mu_2\leq \sigma$.

\section{The third and fourth transformations $U\mapsto T \mapsto S$}

\subsection{Definition of the transformation $U \mapsto T$}
The next step in the Deift/Zhou steepest descent analysis is the
opening of lenses. The aim of this step is to turn the oscillating
diagonal entries in the jump matrices into exponentially small
off-diagonal entries.

We have to open a lens around each of the sets $S(\mu_1)$,
$S(\sigma-\mu_2)$ and $S(\mu_3)$. The latter two are unbounded and
we will treat the opening of lenses around these two sets in this
section. The opening of the lens around the bounded set $S(\mu_1)$
is more standard and it is deferred to the next section.

The lens around $S(\sigma-\mu_2)$ is opened as follows. In the
discussion at the end of the last section we have seen that
\eqref{eq:conditiontoopenthelens2} holds for every $y>c$ and
$y<-c$. Then it follows from the Cauchy-Riemann equations that
$\Re \phi_2(z)<0$ for $z$ in region around $(-{\rm i}\infty,-{\rm
i}c)\cup({\rm i}c,{\rm i}\infty)$. In particular, one can show
that, for some $r > 0$, it contains a cone
\begin{align}\label{eq:cone1}
 \{ z = x+{\rm i}y \in \mathbb C \mid 0 < |x| < 2r(|y| - c) \}
 \end{align}
in its interior.

\begin{figure}[t]
 \centering
    \begin{overpic}[width=9.6cm,height=8cm]{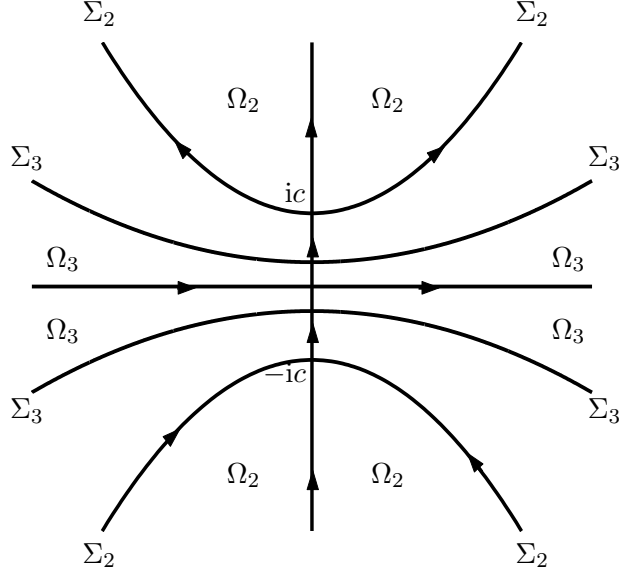}
      \put(85,46){\mbox{$\Omega_{3}$}}
      \put(60,68){\mbox{$\Omega_{2}$}}
      \put(40,68){\mbox{$\Omega_{2}$}}
      \put(15,46){\mbox{$\Omega_{3}$}}
      \put(15,36){\mbox{$\Omega_{3}$}}
      \put(40,16){\mbox{$\Omega_{2}$}}
      \put(60,16){\mbox{$\Omega_{2}$}}
      \put(85,36){\mbox{$\Omega_{3}$}}
      \put(20,5){\mbox{$\Sigma_{2}$}}
      \put(80,5){\mbox{$\Sigma_{2}$}}
      \put(90,25){\mbox{$\Sigma_{3}$}}
      \put(90,60){\mbox{$\Sigma_{3}$}}
      \put(80,80){\mbox{$\Sigma_{2}$}}
      \put(20,80){\mbox{$\Sigma_{2}$}}
      \put(10,60){\mbox{$\Sigma_{3}$}}
      \put(10,25){\mbox{$\Sigma_{3}$}}
    \put(45,30){\mbox{$-{{\rm i}c}$}}
    \put(48,55){\mbox{${\rm i}c$}}
    \end{overpic}
    \caption{The contours $\Sigma_2$ and $\Sigma_3$ enclose the inner
    regions $\Omega_2$ and $\Omega_3$.
    Together with $\R\cup{\rm i}\R$ they form the contour  $\Sigma_T$.}
    \label{GlobalOpeningLens}
  \end{figure}

Then we take the contour $\Sigma_2$ around $S(\sigma -\mu_2)$  as
shown in Figure \ref{GlobalOpeningLens}, so that
\begin{itemize}
    \item $\Re \phi_2(z) < 0$ for $z \in \Sigma_2 \setminus \{\pm {\rm i}c\}$, and
    \item there exists an $r >0$ such that $|x| > r (|y|-c)$ for every $z = x+{\rm i}y \in
    \Sigma_2 \setminus \{\pm {\rm i}c\}$.
\end{itemize}

The lens around $S(\mu_3) = \R$ is opened as follows. Since
\eqref{eq:conditiontoopenthelens3} holds for every $x \in \R$, we
have that $\Re \phi_3(z) > 0$ for $z$ in a region around $\R$. The
region is unbounded and it can be shown that it contains the  cone
\[ \{ z = x+{\rm i}y \in \mathbb C \mid 0 < |y| < 2r(|x|+1) \} \]
for some $r>0$. We take the contour $\Sigma_3$, as shown in Figure \ref{GlobalOpeningLens}, such that
\begin{itemize}
    \item $\Re \phi_3(z) > 0$ for $z \in \Sigma_3$, and
    \item there exists an $r > 0$ so that $|y| > r (|x|+1)$ for every $z = x+{\rm i}y \in
    \Sigma_3$,
\end{itemize}
see also Figure \ref{GlobalOpeningLens}. We may (and do) assume
that $\Sigma_2$ and $\Sigma_3$ do not intersect.

The contours $\Sigma_2$ and $\Sigma_3$ give rise to a partitioning
of the complex plane  as in Figure \ref{GlobalOpeningLens}. The
inner part of the lens around $S(\sigma-\mu_2)$ enclosed by the
contour $\Sigma_2$ is denoted by $\Omega_2$, and the inner part of
the lens around $\R$ enclosed by $\Sigma_3$ is denoted by
$\Omega_3$.

We define the $4\times 4$ matrix valued function $T$ by
\begin{align}
T(z)& = U(z) \begin{pmatrix} 1 & 0 & 0 & 0\\
0 & 1 & {\rm e}^{2n\phi_2(z)} & 0\\
0 & 0 & 1 & 0\\
0 & 0 & 0& 1
\end{pmatrix}, \qquad  z\in \Omega_{2}, \ \Re z > 0, \label{eq:Tomegal2}
 \\
T(z)& = U(z) \begin{pmatrix} 1 & 0 & 0 & 0\\
0 & 1 & -{\rm e}^{2n\phi_2(z)} & 0 \\
0 & 0 & 1 & 0\\
0 & 0 & 0&1
\end{pmatrix}, \qquad  z\in \Omega_{2}, \ \Re z < 0, \label{eq:Tomegau2}\\
 T(z)& =  U(z) \begin{pmatrix} 1 & 0 & 0 & 0\\
0 & 1 &  0 & 0\\
0 & 0 & 1 & 0\\
0 & 0 & -{\rm e}^{-2n\phi_3(z)}&1
\end{pmatrix}, \qquad   z\in \Omega_{3}, \ \Im z > 0, \label{eq:Tomegau3} \\
T(z)& =  U(z) \begin{pmatrix} 1 & 0 & 0 & 0\\
 0 & 1 & 0 & 0\\
0 & 0 & 1 & 0\\
0 & 0 & {\rm e}^{-2n\phi_3(z)}&1
\end{pmatrix}, \qquad  z\in \Omega_{3}, \ \Im z < 0,  \label{eq:Tomegal3}
\end{align}
and
 \begin{align} \label{eq:Telsewhere}
    T(z) & = U(z) \qquad \textrm{ elsewhere.}
\end{align}

Then the $4\times 4$ matrix valued function $T$ is defined and
analytic in $\C\setminus \Sigma_T$ where $\Sigma_T$ is given by
    \begin{align}
        \Sigma_T = \R \cup{\rm i}\R \cup \Sigma_{2}\cup \Sigma_{3}.
    \end{align}

\subsection{RH problem for $T$}

In the following lemma we state the RH problem that is satisfied
by $T$.

\begin{lemma}
$T$ is the unique solution of the following RH problem
\begin{equation} \label{eq:RHproblemT}
\left\{
\begin{array}{l}
T \textrm{ is analytic in }  \C\setminus \Sigma_{T},  \\
 T_+(z)= T_-(z) J_{T}(z), \qquad z\in \Sigma_{T}, \\
T(z)=(I+\OO(z^{-1/3}))\begin{pmatrix}
1& 0 & 0 &0 \\
0& z^{1/3} & 0 & 0\\
0 & 0 &1 & 0\\
0 & 0 & 0 & z^{-1/3}
\end{pmatrix}\begin{pmatrix}
1 & 0 \\
0 & A_j
\end{pmatrix} \\
 \multicolumn{1}{r}{\textrm{as } z\to \infty \textrm{ in the }j\textrm{th quadrant}.}
\end{array}
\right.
\end{equation}
The matrices $A_j$ are given by \eqref{eq:A1A2} and
\eqref{eq:A3A4}, and the asymptotic condition in
\eqref{eq:RHproblemT} holds uniformly as $z \to \infty$ in each
quadrant.

The jump matrix $J_{T}$ is given by
\begin{align} \label{eq:JT1}
J_{T} &=
\begin{pmatrix}
{\rm e}^{-2n\phi_{1,+}} & 1 & 0 & 0 \\
0 & {\rm e}^{-2n\phi_{1,-}} & 0 & 0 \\
0 & 0 & 0 & 1\\
0 & 0 & -1 & 0\\
\end{pmatrix}, \quad \textrm{on } S(\mu_1),\\
\label{eq:JT2} J_{T} &=\begin{pmatrix}
1 & {\rm e}^{2n\phi_{1,+}} & 0 & 0 \\
0 & 1 & 0 & 0 \\
0 & 0 & 0 & 1\\
0 & 0 & -1 & 0\\
\end{pmatrix}, \quad \begin{array}{c} \textrm{on } \R \setminus S(\mu_1), \end{array}\\
\label{eq:JT3} J_{T}&=\begin{pmatrix}
1 & 0 & 0 & 0 \\
0 & 1 & 0 & 0 \\
0 & 0 & 1 & 0\\
0 & 0 & {\rm e}^{-2n\phi_3} & 1\\
\end{pmatrix}, \quad \textrm{on } \Sigma_3, \\
\label{eq:JT4} J_{T}&=\begin{pmatrix}
1 & 0 & 0 & 0 \\
0 & 0 & -1 & 0 \\
0 & 1 & 0 & 0\\
0 & 0 & 0 & 1\\
\end{pmatrix}, \quad \textrm{on } S(\sigma-\mu_2), \\
\label{eq:JT5} J_{T}&=\begin{pmatrix}
1 & 0 & 0 & 0 \\
0 & 1 & {\rm e}^{2n\phi_2} & 0 \\
0 & 0 & 1 & 0\\
0 & 0 & 0 & 1\\
\end{pmatrix}, \quad \textrm{on } \Sigma_{2},
\end{align}
and
\begin{align}
\label{eq:JT6} J_{T} &=\begin{pmatrix}
1 & 0 & 0 & 0 \\
0 & 1 & 0 & 0 \\
0 & {\rm e}^{-2n\phi_2} & 1 & 0\\
0 & 0 & 0 & 1\\
\end{pmatrix}, \quad \textrm{on } (-{\rm i}c,{\rm i}c) \setminus \Omega_3, \\
\label{eq:JT7} J_{T} &=\begin{pmatrix}
1 & 0 & 0 & 0 \\
0 & 1 & 0 & 0 \\
0 & {\rm e}^{-2n\phi_{2}} & 1 & 0\\
0 &  {\rm e}^{-2n(\phi_{2} +\phi_3)} & 0 & 1\\
\end{pmatrix}, \quad \textrm{on } (0,{\rm i}c) \cap \Omega_3, \\
J_{T} &=\label{eq:JT8}\begin{pmatrix}
1 & 0 & 0 & 0 \\
0 & 1 & 0 & 0 \\
0 & {\rm e}^{-2n\phi_{2}} & 1 & 0\\
0 &  -{\rm e}^{-2n(\phi_{2} +\phi_3)} & 0 & 1\\
\end{pmatrix}, \quad \textrm{on } (-{\rm i}c,0) \cap \Omega_3.
\end{align}
\end{lemma}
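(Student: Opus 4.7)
The plan is to verify each of the four requirements (analyticity, jump on every piece of $\Sigma_T$, asymptotic behavior, uniqueness) directly from the definition \eqref{eq:Tomegal2}--\eqref{eq:Telsewhere} of $T$, the RH problem \eqref{eq:RHforU} for $U$, and the identities for $\phi_1,\phi_2,\phi_3$ collected in Lemmas \ref{lem:propertiesg1}--\ref{lem:propertiesg3}. Analyticity is immediate: on each region $\Omega_2\cap\{\Re z\gtrless 0\}$ and $\Omega_3\cap\{\Im z\gtrless 0\}$ the multiplier on the right of $U$ has entries that are either constant or equal to $\pm {\rm e}^{2n\phi_2(z)}$, $\pm {\rm e}^{-2n\phi_3(z)}$, and by construction $\phi_2$ is analytic on $\C\setminus(\R\cup S(\sigma-\mu_2))$ and $\phi_3$ on $\C\setminus(\R\cup S(\sigma-\mu_2))$, so each multiplier is analytic in the region where it is used. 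Hence $T$ is analytic on $\C\setminus\Sigma_T$.

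Next I will compute the jump $J_T=T_-^{-1}T_+$ on each piece of $\Sigma_T$. On the new contours $\Sigma_2$ and $\Sigma_3$ the factor $U$ has no jump, so $J_T$ is exactly the product of the two lens multipliers on either side, and one reads off \eqref{eq:JT5} and \eqref{eq:JT3} directly. On the pieces of the old contour $\R\cup{\rm i}\R$ the computation is $J_T=M_-^{-1} J_U M_+$, where $M_\pm$ are the lens factors on the $\pm$ side (equal to $I$ in the regions where no lens is opened). The nontrivial cases split naturally by whether the point lies in $S(\mu_1)$, $S(\sigma-\mu_2)$, in the portion of $\R$ inside $\Omega_3$, or in the portion of $(-{\rm i}c,{\rm i}c)$ inside $\Omega_3$. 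In each case the $2\times 2$ block structure of $J_U$ from \eqref{eq:JumpUSmu1}--\eqref{eq:JumpUic} decouples the algebra, and the key simplifications come from:
\begin{align*}
\phi_{1,+}+\phi_{1,-}&=0 \text{ on }S(\mu_1),\\
\phi_{3,+}+\phi_{3,-}&=\tfrac{\pi {\rm i}}{3}\sign z \text{ on }\R,\\
\phi_{2,+}+\phi_{2,-}&=\tfrac{2\pi {\rm i}}{3} \text{ on }S(\sigma-\mu_2),
\end{align*}
together with the hypothesis $n\equiv 0\pmod 3$, which makes ${\rm e}^{\pm 2n\pi{\rm i}/3}=1$. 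For example, on $S(\mu_1)$ the lower right $2\times 2$ block of $M_-^{-1}J_UM_+$ equals
\[
\begin{pmatrix} 1 & 0 \\ -{\rm e}^{-2n\phi_{3,-}} & 1\end{pmatrix}
\begin{pmatrix} {\rm e}^{-2n\phi_{3,+}} & 1 \\ 0 & {\rm e}^{-2n\phi_{3,-}}\end{pmatrix}
\begin{pmatrix} 1 & 0 \\ -{\rm e}^{-2n\phi_{3,+}} & 1\end{pmatrix}
=\begin{pmatrix}0 & 1\\ -1 & 0\end{pmatrix},
\]
where the last equality uses ${\rm e}^{-2n(\phi_{3,+}+\phi_{3,-})}=1$. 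Analogous cancellations produce the anti-diagonal block on $S(\sigma-\mu_2)$ in \eqref{eq:JT4} and the additional off-diagonal entries in \eqref{eq:JT7}--\eqref{eq:JT8} where two lens contours cross the imaginary axis. The main obstacle here is bookkeeping the signs and the $2\pi{\rm i}/3$ constants correctly in the four sub-cases $\{\Re z\gtrless 0\}\times\{\Im z\gtrless 0\}$, but no new ideas are needed.

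Finally, for the asymptotic behavior: in each quadrant the multiplier taking $U$ to $T$ is $I$ outside the lenses and differs from $I$ only by an entry ${\rm e}^{2n\phi_2(z)}$ or ${\rm e}^{-2n\phi_3(z)}$ inside $\Omega_2$ or $\Omega_3$. Since $\Sigma_2$ and $\Sigma_3$ were constructed so that $\Re\phi_2<0$ on $\Omega_2\setminus\{\pm{\rm i}c\}$ and $\Re\phi_3>0$ on $\Omega_3$, these entries are exponentially small as $|z|\to\infty$, and in particular the correction to the asymptotics of $U$ given by \eqref{eq:RHforU} is absorbed into the $O(z^{-1/3})$ error. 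This yields the asymptotic condition in \eqref{eq:RHproblemT}. Uniqueness follows from the standard argument that the determinant of $T$ is entire and constant, together with the explicit asymptotic prefactor, so the RH problem has at most one solution.
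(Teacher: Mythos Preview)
Your approach is essentially identical to the paper's: both compute $J_T=M_-^{-1}J_UM_+$ piece by piece via the standard lens factorizations of the $2\times2$ blocks of $J_U$, and both obtain the asymptotics from $\Re\phi_2<0$ on $\Omega_2$ and $\Re\phi_3>0$ on $\Omega_3$ so that the lens multipliers are $I+$(exponentially small). Two small bookkeeping slips worth fixing: from Lemmas~\ref{lem:propertiesg2}--\ref{lem:propertiesg3} one has $\phi_{2,+}+\phi_{2,-}=0$ on $S(\sigma-\mu_2)$ and $\phi_{3,+}+\phi_{3,-}=\pi{\rm i}/3$ on all of $\R$, not $\tfrac{2\pi{\rm i}}{3}$ and $\tfrac{\pi{\rm i}}{3}\sign z$ as you wrote; since $n\equiv 0\pmod 3$ your conclusion ${\rm e}^{\pm 2n(\phi_{j,+}+\phi_{j,-})}=1$ is unaffected, so the argument still goes through.
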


\begin{proof} Each of the jump matrices
\eqref{eq:JT1}--\eqref{eq:JT7} follows from straightforward
calculations based on the definitions
\eqref{eq:Tomegal2}--\eqref{eq:Tomegal3} and the jump matrices in
the RH problem for $U$. Then the jump matrices
\eqref{eq:JT1}--\eqref{eq:JT3} are based on the factorization
\[ \begin{pmatrix} e^{-2n \phi_{3,+}} & 0 \\ 1 & e^{2n\phi_{3,-}}
    \end{pmatrix}
    = \begin{pmatrix} 1 & 0 \\ e^{-2n \phi_{3,-}}  & 1 \end{pmatrix}
    \begin{pmatrix} 0 & 1 \\ -1 & 0 \end{pmatrix}
    \begin{pmatrix} 1 & 0 \\ e^{-2n \phi_{3,+}} & 1 \end{pmatrix}
    \]
of the $2\times 2$ lower right block in the jump matrix $J_{U}$ on
$\R$, see \eqref{eq:JumpUSmu1} and \eqref{eq:JumpUnotinSmu1}. The
jump matrices \eqref{eq:JT4}--\eqref{eq:JT5} are similarly based
on the factorization
\[ \begin{pmatrix} e^{2n \phi_{2,-}} & 0 \\ 1 & e^{2n\phi_{2,+}}
\end{pmatrix}
    = \begin{pmatrix} 1 & e^{2n \phi_{2,-}} \\ 0 & 1 \end{pmatrix}
    \begin{pmatrix} 0 & -1 \\ 1 & 0 \end{pmatrix}
    \begin{pmatrix} 1 & e^{2n \phi_{2,+}} \\ 0 & 1 \end{pmatrix}
    \]
of the $2 \times 2$ middle block in the jump matrix $J_{U}$ on
$S(\sigma-\mu_2)$, see \eqref{eq:JumpUSsigmamu2}.

The jump matrix \eqref{eq:JT6} is the same as the corresponding
jump matrix \eqref{eq:JumpUic} in the RH problem for $U$. The jump
matrix \eqref{eq:JT7} follows from conjugating \eqref{eq:JumpUic}
with either \eqref{eq:Tomegau3} or \eqref{eq:Tomegal3}.

The asymptotic condition in the RH problem for $T$ follows from
the definition \eqref{eq:Tomegal2}--\eqref{eq:Tomegal3}, the
asymptotic condition in the RH problem for $U$ in Lemma
\ref{lem:RHforU}, and the fact that $\Re \phi_2 < 0$ in $\Omega_2$
and $\Re \phi_3 > 0$ in $\Omega_3$. For the latter facts, see also
Lemmas \ref{lem:Rephi2} and \ref{lem:Rephi3} below.

A more detailed analysis would show that the asymptotics for $T$
actually holds uniformly up to the axes, in contrast to the
asymptotics for $U$. To show this we would have to trace back the
transformations $Y \mapsto X \mapsto U \mapsto T$ to see in
particular what combination of Pearcey integrals is actually
involved in the regions $\Omega_2$ and $\Omega_3$ near the axes.
It turns out that the asymptotics of the relevant combinations of
Pearcey integrals is uniformly valid up to the axes. We will not
give details here.
\end{proof}

\subsection{Large $n$ behavior of $J_T$}

In the following lemmas we state result about the sign of $\Re
\phi_2$ and $\Re \phi_3$ on various parts of the contour
$\Sigma_T$.

\begin{lemma}\label{lem:Rephi2}
For every  neighborhoods $U_{{\rm i}c}$ and $U_{-{\rm i}c}$ around
${\rm i}c$ and $-{\rm i}c$ there exists an $\eps_2>0$ such that
    \begin{enumerate}
        \item[\rm (a)] $\Re \phi_2(z) <-\eps_2 |z|^{4/3}$ for $z\in \Sigma_2\setminus U_{\pm {\rm i}c}$,
        \item[\rm (b)] $\Re \phi_2(z) >\eps_2$ for $z\in [-{\rm i}c,{\rm i}c]\setminus U_{\pm {\rm i}c}$.
    \end{enumerate}
\end{lemma}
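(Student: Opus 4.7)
The plan is to combine the strict variational inequalities of Proposition~\ref{prop:variationalconditions} with a Cauchy--Riemann analysis of $\phi_2$ near $S(\sigma-\mu_2)$ and an asymptotic expansion at infinity. For part~(b), the identity \eqref{eq:conditionong2b}, together with $\Re g_j(z) = -U^{\mu_j}(z)$, gives
\[
2\Re\phi_2(z) = U^{\mu_1}(z) + U^{\mu_3}(z) - 2U^{\mu_2}(z), \qquad z \in (-{\rm i}c, {\rm i}c).
\]
The strict variational inequality \eqref{eq:variationalcondmu2b} makes the right-hand side strictly positive on $(-{\rm i}c, {\rm i}c)$, and continuity on the compact set $[-{\rm i}c, {\rm i}c] \setminus (U_{{\rm i}c} \cup U_{-{\rm i}c})$ yields the uniform positive lower bound needed in~(b).

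For part~(a), I would first observe that $\Re\phi_{2,\pm} \equiv 0$ on $S(\sigma-\mu_2)$: by \eqref{eq:oscillatingjumps2} the boundary value $\phi_{2,+}$ is purely imaginary, and \eqref{eq:middleblockentrydiag} yields $\phi_{2,+} = -\phi_{2,-}$ there. The Cauchy--Riemann equations then express the normal derivative of $\Re\phi_2$ across $S(\sigma-\mu_2)$ in terms of $(\partial/\partial y)\Im\phi_{2,\pm}({\rm i}y)$, and \eqref{eq:conditiontoopenthelens2} forces the correct sign so that $\Re\phi_2 < 0$ in a one-sided neighborhood of $S(\sigma-\mu_2) \setminus \{\pm {\rm i}c\}$ on each side. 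Since $\Re\phi_2$ is harmonic in $\C \setminus \Sigma_T$, the maximum principle then propagates this negativity throughout the cone~\eqref{eq:cone1}, which by construction of $\Sigma_2$ contains $\Sigma_2 \setminus \{\pm {\rm i}c\}$.

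For the $|z|^{4/3}$ rate at infinity, I would use \eqref{eq:defphi2} together with the explicit formulas for $\xi_2 - \xi_3$ from \eqref{eq:defxi} and the asymptotics of the Cauchy transforms from Corollary~\ref{cor:asymptoticsCmuj}. Integrating the leading term $\pm\tau^{4/3}(z^{1/3} + (-z)^{1/3})$ (whose sign depends on the half-plane) produces in each quadrant an expansion of the form
\[
\phi_2(z) = \mp \tfrac{3}{8}\tau^{4/3}\bigl(z^{4/3} - (-z)^{4/3}\bigr) + O(\log|z|), \qquad z \to \infty.
\]
A direct trigonometric calculation gives $\Re\bigl(z^{4/3} - (-z)^{4/3}\bigr) = \sqrt{3}\,|z|^{4/3}\cos(\tfrac{4}{3}\arg z - \tfrac{\pi}{6})$ in the first quadrant, and the symmetry $\mu_j(-A)=\mu_j(A)$ gives analogous formulas in the other quadrants; together with the signs of the local leading coefficients this forces $\Re\phi_2(z) \leq -\eps\,|z|^{4/3}$ inside the cone. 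Combined with the bounded-$z$ estimate derived from the compactness argument in the style of part~(b), this gives the bound in part~(a).

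The hardest step will be the global propagation of the sign: one must verify that the local negativity established in a tubular neighborhood of $S(\sigma-\mu_2)$ extends to the whole unbounded cone containing $\Sigma_2$. Here the boundary behavior of $\Re\phi_2$ (vanishing on $S(\sigma-\mu_2)$, controlled on $\R$ through the variational conditions for $\mu_1$ and $\mu_3$, and the asymptotic $-|z|^{4/3}$ growth along rays in the cone) is combined with the maximum principle for harmonic functions to pin down the sign throughout the region, and the asymptotic expansion then upgrades the bound to the desired rate $-\eps_2|z|^{4/3}$.
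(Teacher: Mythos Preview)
Your treatment of part~(b) is correct and matches the paper's argument exactly: the identity \eqref{eq:conditionong2b} converts $\Re\phi_2$ into the difference of potentials appearing in \eqref{eq:variationalcondmu2b}, and compactness finishes.

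For part~(a) you are working much harder than necessary. Recall how $\Sigma_2$ was introduced: it is \emph{defined} so that $\Re\phi_2(z)<0$ for every $z\in\Sigma_2\setminus\{\pm{\rm i}c\}$ (this is the first bullet in the construction of $\Sigma_2$). So strict negativity on $\Sigma_2$ is a hypothesis, not a conclusion; the only new content in the lemma is the quantitative rate $-\eps_2|z|^{4/3}$. The paper obtains this directly: expand $\phi_2$ at infinity in the first quadrant to get $\phi_2(z)=\tfrac{3(\omega-1)}{8}\tau^{4/3}z^{4/3}+\OO(\log z)$, check that $\Re\big((\omega-1)e^{4{\rm i}\arg z/3}\big)$ is bounded away from zero with the right sign once $\arg z$ stays in $(0,\pi/2-\delta)$ (which is guaranteed by the second bullet defining $\Sigma_2$), repeat in the other quadrants, and patch the bounded part by continuity since $\Re\phi_2<0$ there already. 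Your asymptotic formula is equivalent to this, so that portion is fine.

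Your extra machinery (Cauchy--Riemann near $S(\sigma-\mu_2)$ followed by a maximum-principle propagation through the cone \eqref{eq:cone1}) is therefore superfluous, and as written it also has a gap. To apply the maximum principle on the cone you would need the sign of $\Re\phi_2$ on its entire boundary, but the slanted rays $|x|=2r(|y|-c)$ are precisely where the sign is not yet known, so the argument is circular. A Phragm\'en--Lindel\"of refinement might salvage it, but this is both delicate and unnecessary. Moreover, your claim that $\Sigma_2\setminus\{\pm{\rm i}c\}$ lies inside the cone \eqref{eq:cone1} does not follow from the construction: the condition imposed on $\Sigma_2$ is $|x|>r(|y|-c)$ (a lower bound on the distance to the axis), whereas membership in the cone is the upper bound $|x|<2r(|y|-c)$; nothing forces $\Sigma_2$ to sit in that strip. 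The clean route is simply to cite the defining property of $\Sigma_2$ and proceed as above.
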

\begin{proof} (a) By definition of $\Sigma_2$ we have $\Re
\phi_2(z)<0$ for $z\in \Sigma_2\setminus \{\pm{\rm  i}c\}$. Let us
consider the behavior for $z\in \Sigma_2\cap I$ near infinity.
Then
\begin{align}   \label{eq:asymptoticsphi2first}
        \phi_2(z)=-\frac{1}{2}\int_{{\rm i } c}^z \left(\xi_{2}(y)-\xi_3(y)\right) \ {\rm d}y=\frac{3(\omega-1)\tau^{4/3}}{8} z^{4/3}+\OO(\log(z)),
    \end{align}
as $z\to \infty$ remaining in the first quadrant. Since $\Omega_2$ contains the cone \eqref{eq:cone1} for some $r>0$, we have that there exists a $\delta>0$ such that
\begin{align}\label{eq:deltacone}
0<\arg z<\pi/2-\delta
\end{align}
for large enough $z\in \Sigma_2\cap I$.  Then we have from \eqref{eq:asymptoticsphi2first} that
\begin{align}
\Re \phi_2(z) &= \frac{3}{8}\tau^{4/3} |z|^{4/3} \Re \left((\omega-1){\rm e}^{4{\rm i} \arg z/3}\right)\left(1+\OO(|z|^{-4/3}\log|z|)\right)
\end{align}
as $z\to \infty$ and $z\in \Sigma_2\cap I$. By \eqref{eq:deltacone} we have that
\begin{multline}
    \Re \left((\omega-1){\rm e}^{4{\rm i} \arg z/3}\right)=\sqrt{3} \sin \left(\frac{4}{3}(\arg z-\pi/2)\right)\\
    \leq-\sqrt{3} \min\left(\sin \frac{4\delta}{3},\frac{\sqrt{3}}{2}\right)
\end{multline}
for $z\in \Sigma_2\cap I$ large enough. Hence there exists an $\eps_2>0$ such that
\begin{align}\label{eq:rephi2large}
\Re\phi_2(z)<-\eps_2 |z|^{4/3}
\end{align}
for $z\in \Sigma_2 \cap I$ large enough.  By similar arguments in
the other quadrants, we see that we can choose $\eps_2$ such that
\eqref{eq:rephi2large} holds for $z\in \Sigma_2$ large enough.
Finally, by continuity we can choose $\eps_2$ small enough such
that \eqref{eq:rephi2large} holds for  $z\in \Sigma_2\setminus
U_{{\pm {\rm i}}c}$.  This proves the first property

(b) The second property follows from the variational conditions
for $\mu_2$. From \eqref{eq:defxi} and
\eqref{eq:variationalcondmu2b} we have
\begin{multline}
\Re \phi_2(z)=\Re (2g_2(z)-g_1(z)-g_3(z))\\=-2 U^{\mu_2}(z)+U^{\mu_1}(z)+U^{\mu_3}(z)>0
\end{multline}
for $z\in (-{\rm i}c,{\rm i}c)$. It is also clear that $\Re
\phi_2$ is continuous on $(-{\rm i}c,{\rm i}c)$. Hence the
statement follows. \end{proof}

\begin{lemma}\label{lem:Rephi3}
There exists an $\eps_3>0$ such that $\Re \phi_3(z)>\eps_3|z|^{4/3}$ for all $z\in \Sigma_3$.
\end{lemma}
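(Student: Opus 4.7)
The plan is to follow the strategy of part~(a) of Lemma~\ref{lem:Rephi2}: extract the leading $z^{4/3}$ behavior of $\phi_3$ at infinity, then use the angular restriction built into $\Sigma_3$ to make the sign of the real part uniform. The construction of $\Sigma_3$ already gives $\Re\phi_3 > 0$ pointwise on $\Sigma_3$, so what must be shown is a quantitative lower bound in $|z|^{4/3}$.

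First I would compute $\xi_3(z) - \xi_4(z)$ to leading order quadrant by quadrant from the definition \eqref{eq:defxi} of $\xi$ and Corollary~\ref{cor:asymptoticsCmuj}. In the first quadrant, using the principal-branch identity $(-z)^{1/3} = e^{-i\pi/3} z^{1/3}$ valid for $\Im z > 0$, together with $\omega - \omega^2 = i\sqrt{3}$ (where $\omega = e^{2\pi i/3}$), the Cauchy-transform contributions $F_2 - 2F_3$ cancel to order $z^{-5/3}$, leaving
\[
\xi_3(z) - \xi_4(z) = i\sqrt{3}\, \tau^{4/3}\, z^{1/3} + O(z^{-5/3}).
\]
Integrating as in \eqref{eq:defphi3} yields
\[
\phi_3(z) = -\tfrac{3\sqrt{3}}{8}\, i\, \tau^{4/3}\, z^{4/3} + O(\log|z|),\qquad z\to\infty \text{ in the first quadrant,}
\]
hence $\Re\phi_3(z) = \tfrac{3\sqrt{3}}{8}\, \tau^{4/3}\, |z|^{4/3}\, \sin(\tfrac{4}{3}\arg z) + O(\log|z|)$. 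In the second quadrant the branch of $\xi_3$ changes to $+\tau^{4/3} z^{1/3}$, so the constant $\omega - \omega^2$ is replaced by $1 - \omega^2 = \sqrt{3}\, e^{i\pi/6}$ and an analogous computation produces
\[
\Re\phi_3(z) = -\tfrac{3\sqrt{3}}{8}\, \tau^{4/3}\, |z|^{4/3}\, \cos\!\bigl(\tfrac{\pi}{6} + \tfrac{4}{3}\arg z\bigr) + O(\log|z|).
\]
The lower half-plane is covered by the symmetry $\Re\phi_3(\bar z) = \Re\phi_3(z)$, which follows from $F_j(\bar z) = \overline{F_j(z)}$ together with the fact that the piecewise definitions of $\xi_3$ and $\xi_4$ are compatible with complex conjugation.

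Second, the constraint $|y| > r(|x|+1)$ on $\Sigma_3$ forces $\arg z \in (\arctan r,\, \pi - \arctan r)$ in the upper half-plane. In the first quadrant this places $\tfrac{4}{3}\arg z$ in a closed subinterval of $(0,\, 2\pi/3)$ on which $\sin \geq \sin(\tfrac{4}{3}\arctan r) > 0$, while in the second quadrant $\tfrac{\pi}{6} + \tfrac{4}{3}\arg z$ lies in a closed subinterval of $(5\pi/6,\, 3\pi/2)$ on which $\cos$ is bounded above by $-\min(\sqrt{3}/2,\, \sin(\tfrac{4}{3}\arctan r)) < 0$. Combining these bounds with the asymptotic formulas above produces a constant $\eps > 0$ such that $\Re\phi_3(z) \geq \eps\, |z|^{4/3}$ for all $z \in \Sigma_3$ with $|z|$ beyond some $R$. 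On the compact remainder $\Sigma_3 \cap \{|z| \leq R\}$, $\Re\phi_3$ is continuous and strictly positive by the construction of $\Sigma_3$, so it attains a positive minimum $m$; taking $\eps_3 = \min(\eps,\, m/R^{4/3})$ delivers the required uniform bound.

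The main bookkeeping obstacle is to pin down the leading constant in each quadrant, since $\xi_3$ has different defining formulas across $i\R$ and $\xi_4$ across $\R$, so the phase of the $z^{1/3}$ term shifts as one moves between quadrants. Once the signs have been identified correctly, the constraint $|y| > r(|x|+1)$ built into $\Sigma_3$ is precisely what keeps $\arg z$ uniformly away from the zeros of the resulting sine and cosine factors, and the conclusion follows by elementary trigonometry and continuity on the compact part.
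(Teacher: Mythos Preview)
Your proposal is correct and follows essentially the same approach as the paper, which simply refers back to the proof of part~(a) of Lemma~\ref{lem:Rephi2}. You have carried out the quadrant-by-quadrant asymptotics and the trigonometric bounds explicitly, which is exactly what that reference entails.
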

\begin{proof} The statement can be proved in the same way as we
proved property (a) in the proof of Lemma \ref{lem:Rephi2}.
\end{proof}

By  Lemma  \ref{lem:Rephi2} we see that the jump matrices $J_T$ in
\eqref{eq:JT5} and \eqref{eq:JT6} converge pointwise to the
identity matrix $I$ at an exponential rate as $n\to \infty$. In
fact, it shows that this convergence is uniform as long as we stay
away from the points $\pm {\rm i}c$.  Since $\Re \phi_3>0$ in
$\Omega_3$ we also have that \eqref{eq:JT7} and \eqref{eq:JT8}
converge uniform to the identity matrix  $I$ at an exponential
rate as $n\to\infty$.

By Lemma \ref{lem:Rephi3}, we see that the jump matrix $J_T$ in
\eqref{eq:JT3} converges uniformly to the identity matrix $I$ at
an exponential rate as $n\to \infty$.

\subsection{The fourth transformation $ T\mapsto  S$}

In the next transformation  we also open the lens $\Omega_1$
around $S(\mu_1) = [-a,a]$.

We use $\Sigma_1$ to denote the outer boundary of $\Omega_1$ so
that $\Sigma_1$ consists of two contours from $-a$ to $a$, one in
the upper half-plane and one in the lower half-plane. Both
contours are oriented from $-a$ to $a$. See also Figure
\ref{fig:jumps}.

Note that
\begin{align}
    \phi_{1,+}(z) = -\pi{\rm i}\int_{a}^z {\rm d}\mu_1, \qquad z
    \in [-a,a],
\end{align}
Since we assume that $\mu_1$ is regular we have
\begin{align}
    \frac{{\rm d}}{{\rm d}x}\Im \phi_{1,+}(x)<0 \quad
    \textrm{   and   } \quad \frac{{\rm d}}{{\rm d}x}\Im \phi_{1,-}(x)>0,
\end{align}
for $x \in (-a,a)$, and we see that  by the Cauchy-Riemann
equations, there exists a region around $(-a, a)$ so that $\Re
\phi_1(z) > 0$ for every $z \not\in (-a,a)$ in that region. Hence
the contours $\Sigma_1$ can be taken such that
\begin{itemize}
\item $\Re \phi_1(z) > 0$ for $z \in \Sigma_1 \setminus
 \{-a, a \}$,
 \item $\Sigma_1$ is contained in $\Omega_3$.
\end{itemize}

We define the $4 \times 4$ matrix valued function $S$ by
\begin{align} \label{eq:defS1}
 S(z)& =T(z) \begin{pmatrix} 1 & 0 & 0 & 0\\
 -{\rm e}^{-2n\phi_1(z)} & 1 &  0 & 0\\
 0 & 0 & 1 & 0\\
 0 & 0 & 0&1
 \end{pmatrix}, \qquad  z\in \Omega_{1}, \ \Im z > 0, \\
    \label{eq:defS2}
    S(z) & = T(z) \begin{pmatrix} 1 & 0 & 0 & 0\\
 {\rm e}^{-2n\phi_1(z)} & 1 & 0 & 0\\
0 & 0 & 1 & 0\\
0 & 0 & 0&1
\end{pmatrix}, \qquad  z\in \Omega_{1}, \ \Im z < 0,
\end{align}
and
\begin{align} \label{eq:defS3}
    S(z) &=T(z) \qquad \textrm{ elsewhere}.
\end{align}

Then $S$ is defined and analytic in $\C \setminus \Sigma_S$ where
\[ \Sigma_S = \R \cup {\rm i} \R \cup \Sigma_1 \cup \Sigma_2 \cup \Sigma_3, \]
see Figure \ref{fig:jumps} for a sketch of $\Sigma_S$. The next
lemma gives the RH problem that is satisfied by $S$.

\begin{lemma}
$S$ is the unique solution of the following RH problem
\begin{equation}\label{eq:RHproblemS}
\left\{
\begin{array}{l}
S \textrm{ is analytic in }  \C\setminus \Sigma_S,\\
 S_+(z)=S_-(z) J_{ S}(
 z),\qquad  z\in \Sigma_S,\\
S(z)=(I+\OO(z^{-1/3}))\begin{pmatrix}
1& 0 & 0 &0 \\
0& z^{1/3} & 0 & 0\\
0 & 0 &1 & 0\\
0 & 0 & 0 & z^{-1/3}
\end{pmatrix}\begin{pmatrix}
1 & 0 \\
0 & A_j
\end{pmatrix} \\
 \multicolumn{1}{r}{\textrm{uniformly as } z\to \infty \textrm{ in the }j\textrm{th quadrant.}}\end{array} \right.
\end{equation}
The matrices $A_j$ are given by \eqref{eq:A1A2} and
\eqref{eq:A3A4} and the jump matrices  $J_{S}$ are given by
\begin{align}\label{eq:JS1}
J_S(z)&=\begin{pmatrix}
0 & 1 & 0 & 0 \\
-1 & 0 & 0 & 0 \\
0 & 0 & 0  &1\\
0 & 0 & -1 & 0\\
\end{pmatrix}, \quad  z\in  S(\mu_1), \\
\label{eq:JS2}
J_S(z)&=\begin{pmatrix}
1 & 0 & 0 & 0 \\
{\rm e}^{-2n\phi_1(z)} & 1 & 0 & 0 \\
0 & 0 & 1  &0\\
0 & 0 & 0 & 1\\
\end{pmatrix}, \quad  z\in  \Sigma_{1},
\end{align}
\begin{align}\label{eq:JS3}
J_{S}(z)&=\begin{pmatrix}
1 & 0 & 0 & 0 \\
0 & 1 & 0 & 0 \\
-{\rm e}^{-2n(\phi_1(z)+\phi_{2}(z))} & {\rm e}^{-2n\phi_{2}(z)} & 1 & 0\\
-{\rm e}^{-2n(\phi_1(z)+\phi_{2}(z)+\phi_3(z))}  & {\rm e}^{-2n(\phi_{2}(z)+\phi_3(z))} & 0 & 1\\
\end{pmatrix},
\end{align}
for $z \in (0, {\rm i}c) \cap \Omega_1,$
\begin{align}\label{eq:JS4}
J_{S}(z)&=\begin{pmatrix}
1 & 0 & 0 & 0 \\
0 & 1 & 0 & 0 \\
{\rm e}^{-2n(\phi_1(z)+\phi_{2}(z))} & {\rm e}^{-2n\phi_{2}(z)} & 1 & 0\\
-{\rm e}^{-2n(\phi_1(z)+\phi_{2}(z)+\phi_3(z))}  & -{\rm e}^{-2n(\phi_{2}(z)+\phi_3(z))} & 0 & 1\\
\end{pmatrix},
\end{align}
for $z \in (-{\rm i}c, 0) \cap \Omega_1,$
 and
 \begin{align}
 J_S=J_{T} \qquad \textrm{ elsewhere}.
 \end{align}

\end{lemma}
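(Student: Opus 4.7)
The plan is to verify each assertion in the lemma by direct computation from the piecewise definition $S = TM$ in \eqref{eq:defS1}--\eqref{eq:defS3}. The triangular multipliers involve only $e^{\pm 2n\phi_1}$, and $\phi_1$ is analytic on $\C \setminus [-a,a]$; since the lens $\Omega_1$ is a bounded neighborhood of $[-a,a]$, I obtain at once that $S$ is analytic on $\C \setminus \Sigma_S$, and that $S(z) = T(z)$ outside a sufficiently large disk. The asymptotic condition at infinity for $S$ is therefore inherited directly from the one in \eqref{eq:RHproblemT}.

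For the jump across a lip $\Sigma_1 \cap \{\Im z > 0\}$, $T$ is continuous across while $S$ jumps from $T$ (on the side outside the lens) to $TM$ (on the inside), so $J_S = M^{-1}$, which is precisely the matrix in \eqref{eq:JS2}; the lower lip is analogous.

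The main algebraic step is the jump on $S(\mu_1) = [-a,a]$. Here the $+$- and $-$-sides lie in the upper and lower halves of $\Omega_1$ respectively, so $S_\pm = T_\pm M_\pm$ and hence $J_S = M_-^{-1} J_T M_+$ with $J_T$ given by \eqref{eq:JT1}. The lower-right $2 \times 2$ block of $J_T$ is unchanged (the $M_\pm$ act only in the upper two rows and columns), and the upper-left $2 \times 2$ block admits the classical opening-of-lens factorization
\[
\begin{pmatrix} e^{-2n\phi_{1,+}} & 1 \\ 0 & e^{-2n\phi_{1,-}} \end{pmatrix}
= \begin{pmatrix} 1 & 0 \\ e^{-2n\phi_{1,-}} & 1 \end{pmatrix}
\begin{pmatrix} 0 & 1 \\ -1 & 0 \end{pmatrix}
\begin{pmatrix} 1 & 0 \\ e^{-2n\phi_{1,+}} & 1 \end{pmatrix},
\]
where the identity uses $\phi_{1,+} + \phi_{1,-} = 0$ on $[-a,a]$ from \eqref{eq:g1+g1-phi1}. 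The outer two factors cancel against $M_-^{-1}$ and $M_+$ respectively, leaving exactly the jump matrix \eqref{eq:JS1}.

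It remains to compute $J_S$ on the two imaginary-axis segments $(0,{\rm i}c) \cap \Omega_1$ and $(-{\rm i}c, 0) \cap \Omega_1$. On each of these, both the $+$- and $-$-sides lie in the same half of $\Omega_1$, so the same triangular factor $M$ multiplies from either side and hence $J_S = M^{-1} J_T M$. A direct conjugation with $J_T$ given by \eqref{eq:JT7} and \eqref{eq:JT8} respectively produces exactly \eqref{eq:JS3} and \eqref{eq:JS4}; on all remaining parts of $\Sigma_S$, $S = T$ on both sides and $J_S = J_T$. There is no genuine difficulty beyond the bookkeeping of signs and orientations, and the one nontrivial algebraic input is the factorization identity on $[-a,a]$, which is the standard opening-of-lens step familiar from the RH analysis of orthogonal polynomials in unitary ensembles.
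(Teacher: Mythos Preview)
Your argument is correct and matches the paper's approach exactly: both rely on the standard opening-of-lens factorization of the upper-left $2\times 2$ block of $J_T$ on $[-a,a]$, together with direct conjugation checks on the remaining pieces of $\Sigma_S$; the paper simply states the factorization and refers to \cite{DKMVZuniform} for the rest, whereas you spell out the jumps on $\Sigma_1$ and on $(0,{\rm i}c)\cap\Omega_1$, $(-{\rm i}c,0)\cap\Omega_1$ explicitly. One small inaccuracy: $\phi_1$ is not analytic on all of $\C\setminus[-a,a]$ (it inherits a cut on $S(\sigma-\mu_2)$ through $\xi_2$), but since the lens $\Omega_1$ stays away from $\pm{\rm i}c$ this does not affect your conclusion.
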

\begin{proof} Since  $S$ differs from $T$ only in a bounded domain, it
is clear that the asymptotic behavior of $S$ is the same as that
of $T$. Hence the asymptotic condition in \eqref{eq:RHproblemS}
follows from the asymptotic condition in the RH problem
\eqref{eq:RHproblemT}.

The calculations that lead to the jump matrices are based on the
factorization
 \[ \begin{pmatrix} e^{-2n \phi_{1,+}} & 1 \\ 0 & e^{-2n\phi_{1,-}} \end{pmatrix}
    = \begin{pmatrix} 1 & 0 \\ e^{-2n\phi_{1,-}} & 1 \end{pmatrix}
    \begin{pmatrix} 0 & 1 \\ -1 & 0 \end{pmatrix}
    \begin{pmatrix} 1 & 0 \\ e^{-2n \phi_{1,+}} & 1 \end{pmatrix}
\]
of the $2\times 2$ upper left block of \eqref{eq:JT1}.  We will
not give further details here as this step in the RH steepest
descent analysis is similar to the corresponding step in the RH
analysis for orthogonal polynomials considered in
\cite{DKMVZuniform}. \end{proof}

\begin{figure}[t]
\centering
\begin{overpic}[width=9.6cm,height=8cm]{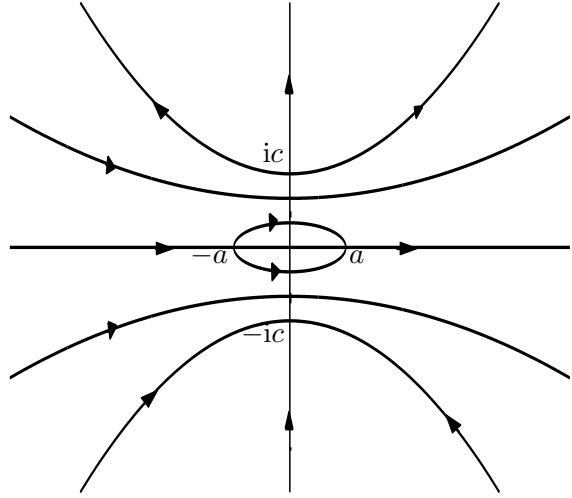}

\put(45,30){\mbox{$-{{\rm i}c}$}}
\put(48,55){\mbox{${\rm i}c$}}
\put(38,41){\mbox{$-a$}}
\put(60,41){\mbox{$a$}}
\end{overpic}
\caption{The jump contour $\Sigma_S$ in the RH problem for $S$ in
the one-cut case.} \label{fig:jumps}
\end{figure}

\bigskip

The following lemma deals with the sign of $\Re \phi_1$ at parts
of the contour $\Sigma_S$.

\begin{lemma} \label{lem:Rephi1}
Let $U_{-a}$ be a neighborhood of $-a$ and $U_{a}$ a neighborhood
of $a$.  Then there exists an $\eps_1>0$ such that
    \begin{enumerate}
        \item[\rm (a)] $\Re \phi_1(z)>\eps_1$ for $z \in \Sigma_{1}\setminus
        \left( U_{-a} \cup U_{a}\right) $,
        \item[\rm (b)] $\Re \phi_{1,+}(x)<-\eps_1 (|x|+1)$ for
        $x\in \R\setminus \left(S(\mu_1) \cup U_{-a}\cup U_{a}\big)\right)$.
    \end{enumerate}
\end{lemma}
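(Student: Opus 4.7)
The plan is to derive explicit expressions for $\Re\phi_1$ on the relevant pieces of $\Sigma_S$ from the definition \eqref{eq:defphi1} and the identities in Lemma~\ref{lem:propertiesg1}, and then exploit the fact that $\mu_1$ is one-cut regular together with compactness and the growth of $V$ at infinity.

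For part (a), I would observe that by the very construction of $\Sigma_1$ described just before \eqref{eq:defS1}--\eqref{eq:defS2}, we have $\Re\phi_1(z)>0$ for every $z\in\Sigma_1\setminus\{-a,a\}$. Since $\Sigma_1$ is a bounded arc (two curves joining $-a$ to $a$), the set $\Sigma_1\setminus(U_{-a}\cup U_a)$ is compact, so continuity of $\Re\phi_1$ supplies a positive minimum $\eps_1$. The only mild point to check is that, because $\mu_1$ is one-cut \emph{regular}, $\phi_{1,+}$ has the standard $3/2$-power behavior at $\pm a$, which guarantees that the lens contour $\Sigma_1$ can be chosen arbitrarily close to the interval while still keeping $\Re\phi_1>0$ off the endpoints; this is precisely the statement used when defining $\Sigma_1$.

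For part (b), I would combine the identity \eqref{eq:variationalphi1} with the fact that on $\R\setminus[-a,a]$ the two boundary values of $g_1$ differ only by a purely imaginary constant, so that $\Re g_{1,\pm}(x)=-U^{\mu_1}(x)$, and also $\Re g_2(x)=-U^{\mu_2}(x)$ for $x\in\R$ (using the branch convention \eqref{eq:logbranchg2} and that the imaginary parts are constant on each half-line). This yields
\begin{equation}
2\Re\phi_{1,+}(x) = -2U^{\mu_1}(x) + U^{\mu_2}(x) - V(x) + \tfrac{3}{4}\tau^{4/3}|x|^{4/3} + \ell,
\qquad x\in\R\setminus[-a,a].
\end{equation}
The variational inequality \eqref{eq:variationalcondmu1b} gives $\Re\phi_{1,+}(x)\le 0$, and by the regularity assumption (Definition~\ref{def:regular}) this inequality is strict on $\R\setminus S(\mu_1)$. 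On any compact subset of $\R\setminus(S(\mu_1)\cup U_{-a}\cup U_a)$ continuity upgrades strict negativity to a uniform negative bound.

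The main point to handle carefully is the large $|x|$ regime, where we must extract linear growth $-\eps_1(|x|+1)$. Here I would use that $V$ is an even polynomial with positive leading coefficient and even degree $\deg V\ge 2$, while $U^{\mu_1}(x),\,U^{\mu_2}(x)=O(\log|x|)$ (since $\mu_1,\mu_2$ are finite positive measures) and $\frac{3}{4}\tau^{4/3}|x|^{4/3}$ grows only like $|x|^{4/3}$. Since $\deg V\ge 2>4/3$, the term $-\tfrac{1}{2}V(x)$ dominates all others as $|x|\to\infty$, yielding $\Re\phi_{1,+}(x)\le -c\,|x|^{\deg V}$ for $|x|$ sufficiently large, which is stronger than $-\eps_1(|x|+1)$. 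Merging the compact-set bound with this tail bound produces the required uniform estimate. The only obstacle worth naming is the bookkeeping at $\pm a$: the regularity assumption is what prevents $\Re\phi_{1,+}(x)$ from vanishing faster than $|x\mp a|^{3/2}$ as $x\to\pm a$ from outside $[-a,a]$, which is essential for the bound to be uniform once $U_{\pm a}$ is fixed.
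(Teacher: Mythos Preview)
Your proposal is correct and follows essentially the same approach as the paper: for (a) you use compactness of $\Sigma_1\setminus(U_{-a}\cup U_a)$ together with the strict positivity of $\Re\phi_1$ built into the choice of $\Sigma_1$, and for (b) you take real parts in \eqref{eq:variationalphi1} to express $\Re\phi_{1,+}$ via the variational inequality \eqref{eq:variationalcondmu1b}, invoke regularity for strict negativity, and then use that $-V(x)$ dominates at infinity. The paper does exactly this, only more tersely; your additional remarks on the $3/2$-power behavior at $\pm a$ and the explicit growth comparison $|x|^{\deg V}$ versus $|x|^{4/3}$ and $\log|x|$ are correct elaborations but not needed beyond what the paper states.
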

\begin{proof} (a) This follows immediately from the continuity of
$\phi_{1}$ and the fact that $\Re \phi_1>0$ on
$\Sigma_1^{(k)}\setminus\{a_k,b_k\}$.

(b) Part (b) follows from the variational condition
\eqref{eq:variationalcondmu1b}. Indeed, by
\eqref{eq:variationalphi1} we have that
\begin{align} \label{eq:Rephi1final}
    \Re \phi_{1,+}(x)=\frac{1}{2} \Re \left(2g_1(x)-g_2(x)-V(x)+\frac{3}{4} \tau^{4/3} x^{4/3}+k_1\right)
\end{align}
for $x \in \R\setminus S(\mu_1)$. By
\eqref{eq:variationalcondmu1b} and regularity of the measure we
see that $\Re \phi_1(x)<0$ for $x\in \R \setminus S(\mu_1)$.
Moreover,  for large $x$ we have that the dominant term at the
right-hand side of \eqref{eq:Rephi1final} is $-V(x)$, which is a
polynomial of even degree. Combining this with the continuity of
$\Re \phi_1$ we see that there exists an $\eps_1>0$ such that
\begin{align}
    \Re \phi_{1,+}(x)<-\eps_1 (|x|+1)
\end{align}
for all $x\in \R\setminus S(\mu_1)$. \end{proof}

By Lemma \ref{lem:Rephi1} the matrix $J_T$ in \eqref{eq:JS2}
converges to the identity matrix $I_4$ at an exponential rate as
$n\to \infty$. If one stays away from the endpoints of $S(\mu_1)$,
the convergence is uniform. Combining Lemmas \ref{lem:Rephi1},
\ref{lem:Rephi2} and \ref{lem:Rephi3} we also see that the jump
matrix $J_T$ given in \eqref{eq:JS3} and \eqref{eq:JS4} converge
uniformly to the identity matrix $I_4$ at an exponential rate as
$n\to \infty$.

We also see that the $(1,2)$ entry of $J_T$ in \eqref{eq:JT2}
converges to zero at an exponential rate as $n\to \infty$. Again
if we stay away from the endpoints of $S(\mu_1)$ the  convergence
is uniform.

\section{Construction of parametrices and the transformation $S\mapsto R$}

\subsection{The RH problem for $M$}

If we ignore all exponentially small entries in the jump matrices $J_S$
in the RH problem for $S$, then we obtain the following model RH
problem for a $4 \times 4$ matrix valued function $M$.
    \begin{align}
        \label{eq:RHforM}
            \left\{
            \begin{array}{l}
                    \multicolumn{1}{l}{M \textrm{ is analytic in }
                    \C \setminus ( \R\cup S(\sigma-\mu_2)),} \\
                    M_+(z)=M_-(z)
                        \begin{pmatrix}
                                0& 1 & 0 & 0\\
                                -1 & 0 & 0 & 0\\
                                0 & 0 & 0 & 1\\
                                0 & 0 & -1 & 0
                        \end{pmatrix}, \quad  z\in S(\mu_1), \\
                    M_+(z)=M_-(z)
                        \begin{pmatrix}
                                1 & 0 & 0 & 0\\
                                0 & 1 & 0 & 0\\
                                0 & 0 & 0 & 1\\
                                0 & 0 & -1 & 0
                        \end{pmatrix}, \quad z \in \R \setminus S(\mu_1), \\
                    M_+(z)=M_-(z)
                        \begin{pmatrix}
                            1 & 0 & 0 & 0\\
                            0 & 0 & -1 & 0\\
                            0 & 1 & 0 & 0\\
                            0 & 0 & 0 & 1
                        \end{pmatrix}, \quad z\in S(\sigma-\mu_2),\\
                    M(z)=(I+\OO(z^{-1}))
                        \begin{pmatrix}
                            1& 0 & 0 &0\\
                            0 & z^{1/3} & 0 & 0\\
                            0 & 0 & 1 & 0 \\
                            0 & 0 & 0 & z^{-1/3}
                        \end{pmatrix}
                        \begin{pmatrix}
                            1 & 0 \\
                            0 & A_j
                        \end{pmatrix}\\
                    \multicolumn{1}{r}{\text{uniformly as } z\to \infty \text{ in the $j$th quadrant.}}
                \end{array}\right.
    \end{align}
Here, the matrices $A_j$ are given by \eqref{eq:A1A2} and
\eqref{eq:A3A4}.

The solution of \eqref{eq:RHforM} is not unique.  To ensure
uniqueness, we impose the additional conditions
\begin{align}
        \label{eq:assumptionfourthroot}
    \left\{ \begin{array}{ll}
            M(z)=\OO((z\mp a)^{-1/4}), & \qquad z\to \pm a, \\
            M(z)=\OO((z\mp {\rm i}c)^{-1/4}), & \qquad z\to \pm {\rm i}c.
            \end{array} \right.
    \end{align}

\subsection{Construction of the outside parametrix $M$}

The RH problem for $M$ can be solved in the one-cut case by using a
rational parametrization of the Riemann surface, which provides a
conformal map to the Riemann sphere.

We have $S(\mu_1)=[-a,a]$ and  $S(\sigma-\mu_2) = {\rm i}\R
\setminus (-{\rm i}c, {\rm i}c)$. The Riemann surface thus depends
on two parameters $a$ and $c$. It has genus zero. An explicit
rational parametrization is given by the equation
    \begin{equation} \label{algebraicequ}
        w+\frac{s^2-t^2}{w}+\frac{s^2t^2}{3w^3}=z,
    \end{equation}
where the constants $s > 0$, $t > $ are the unique positive
solutions of the equations
    \begin{equation}\label{equ-for-st}
        \left\{ \begin{array}{l}
            2s-\frac{2t^2}{3s}=a, \\
            2t-\frac{2s^2}{3t}=-c.
       \end{array}\right.
    \end{equation}
In this parametrization the branch points at $z=\pm a$ and $z=\pm
{\rm i} c$ correspond to $w = \pm s$ and $w=\mp {\rm i}t$
respectively.

We introduce the following function
\begin{align} \label{eq:squarerootscut}
    w\mapsto \big((w^2-s^2)(w^2+t^2)\big)^{1/2},
\end{align}
defined and analytic in the complex $w$-plane cut along
\begin{align} \label{eq:defGammaN}
    \Gamma_N :=  w_{1,+}(S(\mu_1)) \cup w_{2,-}(S(\sigma-\mu_2))
    \cup w_{3,+}(S(\mu_3)).
\end{align}
The square root is  taken such that
\eqref{eq:squarerootscut} behaves like $w^2$ as $w\to \infty$, and
such that it changes sign when crossing the cuts. See also Figure
\ref{imagrieman2}.

\begin{figure}[t]
\centering
\begin{overpic}[width=10cm,height=6cm]{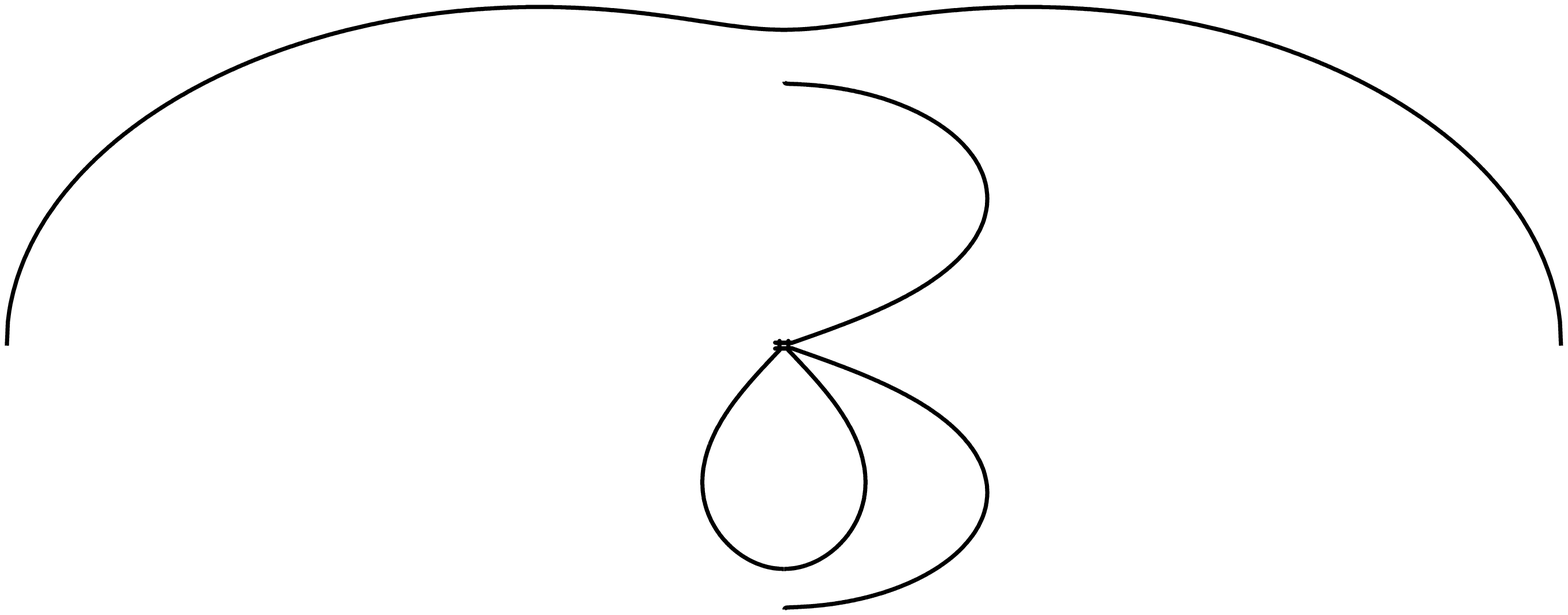}

\put(8,28){\mbox{$-s$}} \put(90,28){\mbox{$s$}}
\put(45,8){\mbox{$-{{\rm i}t}$}} \put(48,50){\mbox{${\rm i}t$}}
\put(57,30){\mbox{$w_{2,-}(S(\sigma-\mu_2))$}}
\put(25,19){\mbox{$w_{3,+}(S(\mu_3))$}}
\put(17,45){\mbox{$w_{1,+}(S(\mu_1))$}}
\end{overpic}
\caption{The cut $\Gamma_N$ in the definition of the square root
of the function $w\mapsto \big((w^2-s^2)(w^2+t^2)\big)^{1/2}$.}
\label{imagrieman2}
\end{figure}

\begin{proposition}
The RH problem \eqref{eq:RHforM}, \eqref{eq:assumptionfourthroot}
for $M$ has a unique solution.
\end{proposition}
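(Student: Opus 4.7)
The plan is to prove existence and uniqueness separately, using the genus-zero Riemann surface and its rational parametrization \eqref{algebraicequ} in an essential way.

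For existence, I would exploit that \eqref{algebraicequ} supplies four branches $w_1,w_2,w_3,w_4$ of $w$ as a function of $z$, with $w_j$ the branch on sheet $\mathcal{R}_j$. Across $S(\mu_1)$ the branches $w_1$ and $w_2$ swap, across $S(\sigma-\mu_2)$ the branches $w_2$ and $w_3$ swap, and across $S(\mu_3)$ the branches $w_3$ and $w_4$ swap. This permutation pattern matches exactly the structure of the three jump matrices in \eqref{eq:RHforM}. The rows of $M$ will then be constructed from rational combinations of the $w_j(z)$ together with the branch function $r(w)=\bigl((w^2-s^2)(w^2+t^2)\bigr)^{1/2}$ from \eqref{eq:squarerootscut}, chosen so that crossing a cut effects the desired signed swap of two columns (the signs arising from the sign change of $r$ across $\Gamma_N$). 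The diagonal prefactor $\mathrm{diag}(1,z^{1/3},1,z^{-1/3})$ in the asymptotics is produced by the fact that $w_1(z)\sim z$ while $w_j(z)\sim c_j z^{-1/3}$ for $j=2,3,4$ as $z\to\infty$, the three constants $c_j$ being the three cube roots of $s^2t^2/3$. The fourth-root blow-up \eqref{eq:assumptionfourthroot} at the branch points $\pm a,\pm\mathrm{i}c$ comes out automatically because $r$ vanishes like a square root at these points and sits in a denominator.

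For uniqueness, the argument is the standard one. Suppose $M$ and $\widetilde M$ are two solutions, and set $N:=M\widetilde M^{-1}$. Because $M$ and $\widetilde M$ share the same jumps on $\mathbb{R}\cup S(\sigma-\mu_2)$, the function $N$ extends analytically across each of these contours. At each branch point $z_0\in\{\pm a,\pm\mathrm{i}c\}$, the bound \eqref{eq:assumptionfourthroot} together with the fact that $\det M$ and $\det\widetilde M$ remain bounded (as can be checked from the jump matrices, whose determinants are $\pm 1$) forces the entries of $N$ to be at worst $O\bigl((z-z_0)^{-1/2}\bigr)$; such a singularity is weaker than a pole and is therefore removable. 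Hence $N$ is entire. Since the right-hand asymptotic factors in \eqref{eq:RHforM} are identical for $M$ and $\widetilde M$ in each quadrant, $N(z)=I+O(z^{-1})$, and Liouville's theorem yields $N\equiv I$.

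The main obstacle lies in the existence part: one must match all four distinct asymptotic matrices $A_1,\ldots,A_4$ of \eqref{eq:A1A2}--\eqref{eq:A3A4} simultaneously out of the three cube-root branches of $w\sim c\,z^{-1/3}$. This requires keeping careful track of which branch $w_j$ lives in which quadrant of the $z$-plane at infinity, and aligning the principal branch of $z^{1/3}$ used in the definition of $\Theta_j$ and of the $A_j$ with the geometry of the sheets of $\mathcal{R}$. Once this combinatorial bookkeeping is done, the actual formulas for the entries of $M$ drop out as rational functions of $w$, and verification of the jumps reduces to checking the sheet permutations and sign changes of $r(w)$.
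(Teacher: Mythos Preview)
Your existence construction is essentially the paper's: the entries of the parametrix are built as $N_k(w_j(z))$ with $N_k(w)=Q_k(w)\big/\bigl(w\,r(w)\bigr)$ for polynomials $Q_k$ of degree at most $3$, so that the sheet swaps of the $w_j$ together with the sign change of $r$ across $\Gamma_N$ reproduce the three jump matrices, and the square-root zeros of $r$ at $\pm s,\pm{\rm i}t$ give the fourth-root bound \eqref{eq:assumptionfourthroot}. Where you diverge is exactly at the step you flag as the main obstacle, the asymptotic normalization. You propose to select the $Q_k$ so as to match the four matrices $A_1,\dots,A_4$ quadrant by quadrant. The paper bypasses this: it starts from an \emph{arbitrary} basis $Q_1,\dots,Q_4$, obtaining a matrix $\widehat M$ with the correct jumps and endpoint behaviour but uncontrolled normalization, then observes that $\widehat M(z)A(z)^{-1}$ (with $A$ the target asymptotic factor \eqref{eq:defA}) is analytic at infinity with a constant limit $C$. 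A short Liouville argument on $\det\widehat M$ shows $\det C\neq 0$, and $M:=C^{-1}\widehat M$ is then the desired solution. This left-multiplication trick absorbs all of your cube-root bookkeeping into a single invertible constant and makes the proof essentially computation-free; your direct route would also work but costs exactly the labor you anticipate.

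One small gap in your uniqueness sketch: from the crude bound $M,\widetilde M=O\bigl((z-z_0)^{-1/4}\bigr)$ alone, the cofactors appearing in $\widetilde M^{-1}$ are only $O\bigl((z-z_0)^{-3/4}\bigr)$, so the product $M\widetilde M^{-1}$ is a priori $O\bigl((z-z_0)^{-1}\bigr)$, not $O\bigl((z-z_0)^{-1/2}\bigr)$, and a simple pole is not yet excluded. To get the stronger bound you need the extra observation that at each branch point only two of the four columns genuinely carry the fourth-root singularity (the other two having the same jump on both sides of the branch point and hence continuing analytically); this follows from the jump structure but should be stated.
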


\begin{proof}
The uniqueness of the solution follows by standard arguments for
uniqueness of RH problems, see e.g.\ \cite{DeiftBook}. We have to
prove existence of a solution.

Choose any basis $Q_k$, $k=1, \ldots, 4$ of the space of
polynomials of degree $\leq 3$, and define
 \begin{align} \label{eq:defNk}
    N_k(w) = \frac{Q_k(w)}{w\big((w^2-s^2)(w^2+t^2)\big)^{1/2}}.
    \end{align}
Then each $N_k$ is analytic in $\C \setminus \Gamma_N$ and
satisfies
 \begin{align} \label{eq:hatNk1}
    N_{k,+}(w) & = - N_{k,-}(w), \qquad w \in \Gamma_N,
    \\ \label{eq:hatNk2}
    N_k(w) & =  \OO(1), \qquad \textrm{as } w \to \infty,
    \\ \label{eq:hatNk3}
    N_k(w) & = \OO(w^{-1}), \qquad \textrm{as } w \to 0,
    \\ \label{eq:hatNk4}
    N_k(w) &= \OO((w \mp s)^{-1/2}), \quad \textrm{as } w \to \pm s, \\
    \label{eq:hatNk5}
    N_k(w) & = \OO((w \mp {\rm i}t)^{-1/2}), \quad \textrm{as } w \to  \pm {\rm i}t.
    \end{align}

Then the $4\times 4$ matrix valued function $z \mapsto
\widehat{M}(z)$ with entries
    \begin{align} \label{eq:hatM}
    \widehat{M}_{kj}(z) = N_k(w_j(z)), \qquad k,j=1, \ldots, 4,
    \end{align}
where $w_j$, $j=1,\ldots, 4$, are the mapping functions for
\eqref{algebraicequ}, is analytic in $\C \setminus (\R \cup
S(\sigma-\mu_2))$, satisfies the jump conditions in the RH problem
for $M$ (due to the property \eqref{eq:hatNk1} of $N_k$), as well
as the fourth root condition \eqref{eq:assumptionfourthroot} (due
to \eqref{eq:hatNk4}--\eqref{eq:hatNk5}). In addition we have as
$z \to \infty$,
 \begin{align} \label{eq:hatM1}
 \widehat{M}_{kj}(z) = \begin{cases}
    \OO(1), & \textrm{ for } j =1, \\
    \OO(z^{1/3}), & \textrm{ for } j=2,3,4,
    \end{cases}
    \end{align}
which follows from \eqref{eq:hatNk2}--\eqref{eq:hatNk3} and
\eqref{eq:hatM}.

Since the jump matrices have determinant one, it follows by
standard arguments that $z \to \det \widehat{M}(z)$ extends to an
entire function. From \eqref{eq:hatM1} it follows that $\det
\widehat{M}(z) = \OO(z)$ as $z \to \infty$, so that
 \begin{align} \label{eq:dethatM}
 \det \widehat{M}(z) = \alpha z + \beta,
 \end{align}
for some constants $\alpha$ and $\beta$. Since the polynomials
$Q_k$, $k=1,\ldots, 4$ are linearly independent, $\alpha$ and
$\beta$ cannot both be zero.

Define
    \begin{align} \label{eq:defA}
    A(z)=
      \begin{pmatrix}
                1& 0 & 0 &0\\
                0 & z^{1/3} & 0 & 0\\
                0 & 0 & 1 & 0 \\
                0 & 0 & 0 & z^{-1/3}
        \end{pmatrix}
            \begin{pmatrix}
                1 & 0 \\
                0 & A_j
            \end{pmatrix}
    \end{align}
for $z$ in the $j$th quadrant. Then $A$ is analytic in $\C
\setminus\big(\R\cup {\rm i}\R\big)$ and from the formulas
\eqref{eq:A1A2} and \eqref{eq:A3A4} for the matrices $A_j$ it
easily follows that on  $\R\cup{\rm i}\R$ it satisfies the jump
conditions
    \begin{align}
        A_+(z)&=A_-(z)
            \begin{pmatrix}
                1 & 0 & 0 &0\\
                0 & 1 & 0 &0\\
                0 & 0 & 0 &1\\
                0 & 0 & -1&0
            \end{pmatrix}, \qquad z\in \R, \\
        A_+(z)&=A_-(z)
            \begin{pmatrix}
            1 & 0 & 0 & 0 \\
            0 & 0 & -1 & 0 \\
            0 & 1 & 0 & 0\\
            0 & 0 & 0 & 1
            \end{pmatrix}, \qquad z\in {\rm i}\R.
    \end{align}
Comparing this with the jumps in the RH problem for $M$, we see
that $A$ and $\widehat{M}$ satisfy the same jump conditions for
$z$ large enough. Then $\widehat{M} A^{-1}$ is analytic in a
neighborhood of infinity.

Due to \eqref{eq:hatM1} and \eqref{eq:defA}, we have
$\widehat{M}(z) A^{-1}(z) = \OO(z^{2/3})$ as $z \to \infty$.
Therefore, by analyticity at infinity, we have
 \begin{align}
 \widehat{M}(z) A^{-1}(z) = C + \OO(z^{-1}), \qquad z \to \infty,
 \end{align}
for a constant matrix $C$. Then $\det \widehat{M}(z) = \det C +
\OO(z^{-1})$, so that by \eqref{eq:dethatM}, $\alpha = 0$ and
$\beta = \det C$. Since $\alpha$ and $\beta$ are not both zero, we
find $\det C \neq 0$. Thus $C^{-1}$ exists, and then it easily
follows that
\begin{equation}
    M = C^{-1} \widehat{M}
    \end{equation}
    satisfies all conditions in the RH problem for $M$, as well as
    the fourth root condition \eqref{eq:assumptionfourthroot}.

This completes the construction of a solution of the  RH problem for $M$
$M$ in the one-cut case.
\end{proof}

\subsection{Construction of local parametrices}

The next step is the construction of local parametrices near the
branch points. Since we are in the one-cut regular case, the
density of $\mu_1$ vanishes as a square root at the endpoints $\pm
a$. As in the case of orthogonal polynomials
\cite{DKMVZuniform,DKMVZstrong} the local parametrix will then be
constructed with the help of Airy functions. Also for larger size
RH problems Airy parametrices have been constructed, see e.g.\
\cite{ABK,BK1,DKV,KVW}. The situation in the present case is
similar, and so we will not give all details of the construction
here.

\subsubsection{The model RH problem: Airy functions}

Airy functions solve a model RH problem. Let $y_0$, $y_1$ and
$y_2$ be defined by
\begin{align}
y_0(s)=\Ai(s), \quad y_1(s)=\omega \Ai(\omega s), \quad y_2(s)=\omega^2 \Ai(\omega^2 s).
\end{align}
where $\Ai$ is the Airy function and $\omega={\rm e}^{2\pi{\rm
i}/3}$. Consider the $2\times 2$ matrix valued function $\Psi$
\begin{align}\label{eq:Airy1}
\Psi(s)&=\begin{pmatrix}
y_0(s) & -y_2(s)\\
y_0'(s) & -y_2'(s)
\end{pmatrix}, \qquad \arg s\in (0,2\pi/3),\\
\Psi(s)&=\begin{pmatrix}
-y_1(s) & -y_2(s)\\
-y_1'(s) & -y_2'(s)
\end{pmatrix}, \qquad  \arg s\in (2\pi/3,\pi),\label{eq:Airy2}\\
\Psi(s)&=\begin{pmatrix}
-y_2(s) & y_1(s)\\
-y_2'(s) & y_1'(s)
\end{pmatrix}, \qquad \arg s\in (-\pi,-2\pi/3),\\
\Psi(s)&=\begin{pmatrix}
y_0(s) & y_1(s)\\
y_0'(s) & y_1'(s)
\end{pmatrix}, \qquad  \arg s\in (-2\pi/3,0),\label{eq:Airy4}
\end{align}
Then $\Psi$ is analytic in the complex $s$-plane with a jump
discontinuity along the rays $\arg s=0$, $\arg=\pm 2\pi/3$ and
$\arg s=\pi$. We equip these rays with an orientation as shown in
Figure \ref{fig:airy}. This figure also shows the jump matrices,
which can be easily obtained from the definition
\eqref{eq:Airy1}--\eqref{eq:Airy4} and the linear relation $y_0 +
y_1 + y_2 =0$.

\begin{figure}[t]
\centering
    \begin{overpic}[width=9cm,height=7cm]{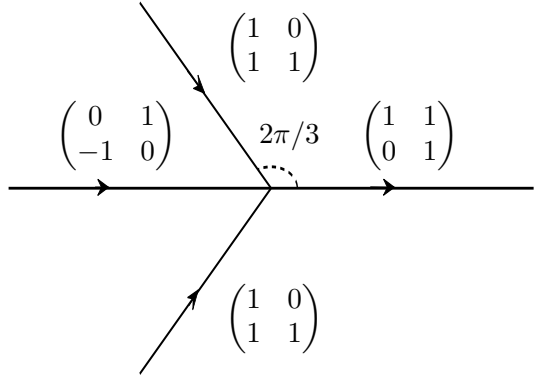}
    \put(65,47){\mbox{$\begin{pmatrix}
     1& 1\\
     0& 1
    \end{pmatrix}$}}
    \put(20,47){\mbox{$\begin{pmatrix}
     0& 1\\
     -1& 0
    \end{pmatrix}$}}
    \put(45,20){\mbox{$\begin{pmatrix}
     1& 0\\
     1& 1
    \end{pmatrix}$}}
    \put(50,47){\mbox{$2\pi/3$}}
        \put(45,60){\mbox{$\begin{pmatrix}
     1& 0\\
     1& 1
    \end{pmatrix}$}}
    \end{overpic}
    \caption{The contour and the jump matrices in the RH problem for $\Psi$}
    \label{fig:airy}
\end{figure}

The asymptotic behavior of $\Psi(s)$ is given by
        \begin{align}\label{eq:asymptAiryparam}
            \Psi(s)=\frac{1}{2\sqrt \pi}
                    \begin{pmatrix}
                        s^{-1/4} & 0 \\
                        0 & s^{1/4}
                    \end{pmatrix}
                    \begin{pmatrix}
                        1 & {\rm i}\\
                        -1 & {\rm i}
                    \end{pmatrix}(I+\OO(s^{-2/3}))
            \begin{pmatrix}
{\rm e}^{-\frac{2}{3} s^{3/2}} & 0 \\
0 & {\rm e}^{\frac{2}{3} s^{3/2}}
\end{pmatrix},
\end{align}
as $s\to \infty$.

\subsubsection{The local parametrix near $\pm a$}

The local parametrix $P_a$ is constructed in a small disk $U_a$
around $a$. It should satisfy the jump conditions in the RH
problem for $S$ exactly and match with $M$ on the boundary
$\partial U_a$.

Recall the function $\phi_1$ from \eqref{eq:defphi1}, which
behaves near $z = a$ as
    \begin{align}\label{eq:orderphi1}
            -\frac{3}{2}\phi_{1}(z)&=\rho(z-a)^{3/2}\left(1+\OO(z-a)\right), \qquad z\to
            a,
    \end{align}
where $\rho > 0$ is such that
    \begin{align}
     \frac{{\rm d}\mu_1}{{\rm d}x}(x)=\frac{ \rho}{\pi} (a-x)^{1/2}(1+\OO((a-x)),
     \qquad  x \nearrow a.
    \end{align}
Then
    \begin{align}\label{eq:deff1}
        f_1(z)=\left(-\frac{3}{2}\phi_1(z)\right)^{2/3},
        \qquad z \in U_a,
    \end{align}
is a conformal map from $U_a$ onto a neighborhood of the origin,
so that $f_1(z) > 0$ for $z \in U_a\cap[a,\infty)$. We use the
freedom we have in opening the lens around $[-a,a]$ so that $f_1$
maps the part of $\Sigma_1 \cap U_a$ in the upper half-plane into
into the ray $\arg s=2\pi/3$ and the part in the lower half-plane
into the the ray $\arg s=-2\pi/3$.

Now define $P_a$ by
    \begin{align}\label{eq:defPa}
        P_a(z)=E_a(z) \begin{pmatrix}\Psi(n^{2/3} f_1(z))  & 0   \\
            0 & I_2 \end{pmatrix} \begin{pmatrix}
            {\rm e}^{-n\phi_1(z) \sigma_3} & 0 \\
            0 & I_2
        \end{pmatrix},
    \end{align}
for $z\in U_a$, where $\sigma_3 = \left(\begin{smallmatrix} 1 & 0
\\ 0 & - 1 \end{smallmatrix}\right)$ and
$E_a$ is given by
    \begin{align}\label{eq:E1upperhalf}
        E_a(z)=\sqrt{\pi} M(z)
            \begin{pmatrix}
                1  & -1 & 0  \\
            -{\rm i}& -{\rm i} &   0 \\
                0 & 0  & I_2
            \end{pmatrix}
        \begin{pmatrix} \big( n^{2/3} f_1(z)\big)^{\sigma_3/4} & 0 \\
            0 & I_2
        \end{pmatrix}.
    \end{align}
for $z\in U_a$.

In  the following lemma we prove that $P_a$ exactly solves the
jump condition in the RH problem for $S$ in $U_a$ and matches with
the outside parametrix $M$ on $\partial U_a$.
\begin{lemma} \label{lem:parama}
    The function $P_a$ satisfies
        \begin{align}\label{eq:paramplusa}
            \left\{
                \begin{array}{ll}
                \multicolumn{2}{l}{P_a \textrm{ is analytic  in } U_a\setminus J_S,}\\
                P_{a_+}(z)=P_{a_-}(z)J_S(z), & z\in U_a\cap J_S, \\
                P_a(z)=M(z)(1+\OO(1/n)), & \textrm{as } n \to \infty,
                \textrm{ uniformly for } z\in \partial U_a.
            \end{array}
            \right.
        \end{align}
\end{lemma}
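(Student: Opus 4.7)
The plan is to carry out the standard three-step Airy-parametrix construction, adapted to the $4\times 4$ block structure in which only the upper-left $2\times 2$ block of $P_a$ carries the soft-edge Airy singular behaviour at $z=a$, while the lower-right block is produced by $M$ alone. I will verify in turn that (i) the prefactor $E_a$ has the right analytic structure, (ii) the jumps of $P_a$ on $\Sigma_S\cap U_a$ match those of $S$, and (iii) $P_a$ matches $M$ on $\partial U_a$ up to $\mathcal{O}(1/n)$.

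For (i), I would check that $E_a$ in~\eqref{eq:E1upperhalf} is analytic in $U_a\setminus(-a,a)$, has no jump across $(-a,a)\cap U_a$ in the upper-left $2\times 2$ block, and in its lower-right block carries only the jump of $M$ itself (namely $\bigl(\begin{smallmatrix}0&1\\-1&0\end{smallmatrix}\bigr)$). The crucial identity is
\[
\begin{pmatrix}0&1\\-1&0\end{pmatrix}\begin{pmatrix}1&-1\\-i&-i\end{pmatrix}\begin{pmatrix}i&0\\0&-i\end{pmatrix}=\begin{pmatrix}1&-1\\-i&-i\end{pmatrix},
\]
which shows that the upper-left jump of $M$ on $[-a,a]$, combined with the jump $\diag(i,-i)$ of $f_1^{\sigma_3/4}$ across its branch cut $\{f_1<0\}=(-a,a)\cap U_a$, is absorbed by the constant factor $\bigl(\begin{smallmatrix}1&-1\\-i&-i\end{smallmatrix}\bigr)$. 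The same factor $f_1^{\sigma_3/4}$ cancels the $(z-a)^{-1/4}$ singularity of $M$ inherited from~\eqref{eq:assumptionfourthroot}, so $E_a$ extends analytically through $z=a$.

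Granted (i), the jumps of $P_a$ on the lensed contour $\Sigma_S\cap U_a$ reduce to the jumps of the Airy factor $\Psi(n^{2/3}f_1)\oplus I_2$ conjugated by $e^{-n\phi_1\sigma_3}\oplus I_2$, plus the lower-right contribution from $E_a$ on $[-a,a]$. The construction of $f_1$ sends $\Sigma_1\cap U_a$, $(-a,a)\cap U_a$ and $(a,\infty)\cap U_a$ to the rays $\arg s=\pm 2\pi/3$, $\arg s=\pi$, $\arg s=0$ respectively; reading off the three Airy jumps from Figure~\ref{fig:airy} and using~\eqref{eq:g1+g1-phi1} to replace $\phi_{1,+}=-\phi_{1,-}$ on $[-a,a]$ reproduces exactly~\eqref{eq:JS1}--\eqref{eq:JS2} restricted to $U_a$. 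For the matching on $\partial U_a$, I would insert the large-$s$ asymptotics~\eqref{eq:asymptAiryparam} of $\Psi$ into~\eqref{eq:defPa}. Using the identity $\tfrac{2}{3}(n^{2/3}f_1(z))^{3/2}=-n\phi_1(z)$, which is built into~\eqref{eq:deff1}, the oscillatory exponentials cancel against $e^{-n\phi_1\sigma_3}$, and the matrix identity $\bigl(\begin{smallmatrix}1&-1\\-i&-i\end{smallmatrix}\bigr)\bigl(\begin{smallmatrix}1&i\\-1&i\end{smallmatrix}\bigr)=2I_2$ collapses the constant prefactors, leaving $P_a=M(I+\mathcal{O}(n^{-1}))$ uniformly on $\partial U_a$, the $\mathcal{O}(n^{-1})$ error coming from the $\mathcal{O}(s^{-2/3})$ term in~\eqref{eq:asymptAiryparam}.

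The main obstacle is the bookkeeping in (i): carefully tracking the orientation of the branch cut of $f_1^{\sigma_3/4}$ together with the two-sided boundary values of $M$, and verifying that the explicit constant matrix is the one that simultaneously diagonalizes the upper-left jump of $M$ and removes its fourth-root singularity. Once this is established, (ii) and (iii) are mechanical computations following the template of the Airy parametrix analysis in~\cite{DKMVZuniform} and its block-matrix generalizations in~\cite{DKV,KVW}.
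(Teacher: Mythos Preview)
Your proposal is correct and follows exactly the standard Airy-parametrix verification that the paper has in mind; the paper's own proof consists of the single sentence ``The proof is standard.'' One minor slip: the error $\mathcal{O}(s^{-2/3})$ quoted in~\eqref{eq:asymptAiryparam} would only give $\mathcal{O}(n^{-4/9})$ since $s\sim n^{2/3}$ on $\partial U_a$; the claimed $\mathcal{O}(n^{-1})$ matching comes from the sharper (and standard) bound $\mathcal{O}(s^{-3/2})$ in the Airy asymptotics, so you should cite that instead.
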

\begin{proof}
The proof is standard.
\end{proof}

The construction of the local parametrix $P_{-a}$ around $-a$ can
be done similarly. We can also use the symmetry to define $P_{-a}$
directly in terms of $P_a$ as follows
\begin{align}
P_{-a}(z)=\begin{pmatrix}
1 & 0 & 0 & 0 \\
0 & -1 & 0 & 0\\
0 & 0 & 1 & 0\\
0 & 0 & 0 & -1
\end{pmatrix}P_a(-z)
\begin{pmatrix}
 1& 0 & 0 &0 \\
    0 & -1 & 0 & 0\\
        0 & 0 & 1 & 0\\
     0 & 0 & 0 & -1
    \end{pmatrix}
\end{align}

\subsubsection{The local parametrix near $\pm {\rm i} c$}

The local parametrices near the branch cuts $\pm {\rm i} c$ are
constructed in essentially the same way. We will focus on $-{\rm
i}c$. The jump matrices in the RH problem for $S$ that are
relevant near $-{\rm i}c$ \eqref{eq:JT4}, \eqref{eq:JT5} and
\eqref{eq:JT6}. They are non-trivial only in the $2\times 2$
middle block.

We again construct the local parametrix by means of the Airy model
RH problem and a conformal map that maps a disk $U_{-{\rm i}c}$
around $-{\rm i}c$ onto a neighborhood of the origin. There is a
small difference in the fact that the $2\times 2$ middle block in
\eqref{eq:JT5} is upper triangular, whereas the jump matrix for
$\Psi$ on the rays $\arg s =\pm 2\pi/3$ is lower triangular. We
can deal with the different triangularity structure by using
    \begin{align} \label{eq:AiryPhi}
\Phi=\Psi\begin{pmatrix}
 0 & 1 \\
 1 & 0
\end{pmatrix}
\end{align}
instead of $\Psi$ in the construction of $P_{-{\rm i}c}$.

The conformal map $f_2$ is constructed out of $\phi_2$, which
behaves like
    \begin{align}
            \frac{3}{2}\phi_{2}(z)= \widetilde{\rho}
            (z+{\rm i}c)^{3/2}{\rm e}^{-3\pi{\rm i}/4}(1+\OO(z+{\rm i}c)),
    \end{align}
    as $z \to -{\rm i}c$. The fracional power is defined here with
    a branch cut along $(-{\rm i}\infty, -{\rm i}c]$ and so that
    $\phi_2$ takes positive values on $(-{\rm i}c,0)$. The number
    $\widetilde \rho > 0$ is such that (see also part (b) of Theorem
    \eqref{prop:structureEqmeasure})
    \begin{align}
            \frac{{\rm d} (\sigma-\mu_2)}{|{\rm d}z|}({\rm i}y)=
            \frac{\widetilde\rho}{\pi} |y+c|^{1/2}(1+\OO(y+c)),
            \qquad y\nearrow -c.
    \end{align}

Then $f_2$ is defined on a small enough disk $U_{-{\rm i}c}$
around $-{\rm i}c$ by
\begin{align}\label{eq:deff2}
    f_2(z)=\left(\frac{3}{2}\phi_2(z)\right)^{2/3},
\end{align}
with the $2/3$-root taken so that $f_2(z) > 0$ for $z \in (-{\rm
i}c,0)\cap U_{-{\rm i}c}$. Then $f_2$ is a conformal map from
$U_{-{\rm i}c}$ onto a neighborhood of zero. We adjust the
definition of the lens around $(-{\rm i} \infty, -{\rm i}c]$ so
that the part of $\Sigma_{2} \cap U_{-{\rm i}c}$ in the left
half-plane is mapped into the ray $\arg s=2\pi/3$ and the part in
the right half-plane into the ray $\arg s=-2\pi/3$.

Now we define $P_{-{\rm i}c} $ by
\begin{align}\label{eq:defPc}
    P_{-{\rm i}c}(z)=E_{-{\rm i}c}(z)
        \begin{pmatrix}1& 0  & 0\\
            0 & \Phi(n^{2/3} f_2(z))&  0 \\
            0 & 0 & 1 & \end{pmatrix}
    \begin{pmatrix}
        1 & 0  & 0& \\
        0 & {\rm e}^{-n\phi_2(z)\sigma_3} & 0  \\
        0 & 0 & 1
    \end{pmatrix}
\end{align}
with $\Phi$ as in \eqref{eq:AiryPhi}, and $E_{-{\rm i}c}$  given
by
\begin{multline}\label{eq:defE2}
E_{-{\rm i}c}(z)=\sqrt{\pi} M(z) \begin{pmatrix}
1& 0 & 0 &0 \\
0 &-{\rm i}  & -{\rm i} & 0 \\
0&1& -1  & 0  \\
 0 & 0 & 0 & 1
\end{pmatrix} \begin{pmatrix}
 1 & 0 & 0\\
 0 & \big( n^{2/3} f_2(z)\big)^{-\sigma_3/4} & 0  \\
 0 & 0 & 1
\end{pmatrix}
\end{multline}
The fractional power $\big(f_2(z))^{-1/4}$ is defined with a
branch cut along $(-{\rm i}\infty,-{\rm i}c]$ and so that it is
real and positive for $z \in (-{\rm i}c,0)\cap U_{-{\rm i}c}$.

Then we have the following result, whose proof is again omitted.
\begin{lemma} \label{lem:paramminic}
The function $P_{-{\rm i}c}$ satisfies
\begin{align}\label{eq:paramminic}
    \left\{
    \begin{array}{ll}
    \multicolumn{2}{l}{P_{-{\rm i}c} \textrm{ is analytic  in } U_{-{\rm i}c}\setminus J_S,}\\
    P_{-{\rm i}c,+}(z)=P_{-{\rm i}c,-}(z)J_S(z), & z\in U_{-{\rm i}c}\cap J_S, \\
    P_{-{\rm i}c}(z)=M(z)(1+\OO(1/n)), & \textrm{as } n \to \infty
    \textrm{ uniformly for } z\in \partial U_{-{\rm i}c}.
    \end{array}
\right.
\end{align}
\end{lemma}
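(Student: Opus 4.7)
The argument mirrors that of Lemma \ref{lem:parama}, with three adjustments suited to the local neighborhood of $-{\rm i}c$: the non-trivial Airy block now occupies the middle $2\times 2$ block of $P_{-{\rm i}c}$ rather than the upper-left block; the relevant jump \eqref{eq:JT5} on $\Sigma_2$ is upper triangular rather than lower triangular, which is compensated by the right-multiplication by $\sigma_1 := \bigl(\begin{smallmatrix} 0 & 1 \\ 1 & 0 \end{smallmatrix}\bigr)$ in the definition $\Phi = \Psi \sigma_1$ from \eqref{eq:AiryPhi}; and the fractional-power branch cut lies along $(-{\rm i}\infty, -{\rm i}c]$ rather than along a real ray. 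One then verifies in turn (i) analyticity of $P_{-{\rm i}c}$ in $U_{-{\rm i}c}\setminus J_S$, which reduces to analyticity of $E_{-{\rm i}c}$ throughout $U_{-{\rm i}c}$; (ii) the jump conditions \eqref{eq:JT4}--\eqref{eq:JT6}; and (iii) the matching $P_{-{\rm i}c} M^{-1} = I + \OO(1/n)$ on $\partial U_{-{\rm i}c}$.

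For (i), each factor in \eqref{eq:defE2} is manifestly analytic in $U_{-{\rm i}c}$ off the arc $(-{\rm i}\infty, -{\rm i}c] \cap U_{-{\rm i}c}$. Across this arc, the diagonal factor $\mathrm{diag}(1, f_2^{-1/4}, f_2^{1/4}, 1)$ transforms by conjugation with $\mathrm{diag}(1, -{\rm i}, {\rm i}, 1)$, since $f_2(z)^{1/4}$ changes by a factor ${\rm i}$. A direct computation using the constant matrix in \eqref{eq:defE2} shows that this conjugation coincides with right-multiplication of $M$ by the inverse of the jump of $M$ on $S(\sigma-\mu_2)$ from \eqref{eq:RHforM}, so that $E_{-{\rm i}c,+} = E_{-{\rm i}c,-}$ across the arc. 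At $z=-{\rm i}c$ itself, the fourth-root singularity of $M$ allowed by \eqref{eq:assumptionfourthroot} is neutralized by the matching quarter-root zero of $f_2^{-1/4}$, so the isolated singularity is removable and $E_{-{\rm i}c}$ extends analytically to all of $U_{-{\rm i}c}$.

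For (ii), analyticity of $E_{-{\rm i}c}$ reduces the jump of $P_{-{\rm i}c}$ on each piece of $J_S \cap U_{-{\rm i}c}$ to that of the inner factor $\mathrm{diag}(1, \Phi(n^{2/3} f_2), 1) \cdot \mathrm{diag}(1, {\rm e}^{-n\phi_2 \sigma_3}, 1)$. Conjugation by $\sigma_1$ turns the Airy jumps on $\arg s = \pm 2\pi/3$ from lower-triangular to upper-triangular, so on the two branches of $\Sigma_2 \cap U_{-{\rm i}c}$ (where $\phi_2$ is analytic) one recovers \eqref{eq:JT5}. On $(-{\rm i}\infty,-{\rm i}c] \cap U_{-{\rm i}c}$ both factors jump simultaneously, and the identity $\phi_{2,+} + \phi_{2,-} = 0$ there together with the $\sigma_1$-conjugated Airy jump on $\arg s = \pi$, namely $\bigl(\begin{smallmatrix} 0 & -1 \\ 1 & 0 \end{smallmatrix}\bigr)$, reproduces the anti-diagonal middle block of \eqref{eq:JT4}. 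On $(-{\rm i}c, 0) \cap U_{-{\rm i}c}$ the exponential is analytic and $\Phi$ has the $\sigma_1$-conjugated jump $\bigl(\begin{smallmatrix} 1 & 0 \\ 1 & 1 \end{smallmatrix}\bigr)$ coming from the positive $s$-axis, which after conjugation by ${\rm e}^{-n\phi_2 \sigma_3}$ yields precisely \eqref{eq:JT6}.

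The main technical step is (iii). Substituting $s = n^{2/3} f_2(z)$ into \eqref{eq:asymptAiryparam}, using the identity $\tfrac{2}{3} f_2^{3/2} = \phi_2$ encoded in \eqref{eq:deff2}, and exploiting the anti-commutation $\sigma_1 \sigma_3 = -\sigma_3 \sigma_1$ to move the Airy exponential past $\sigma_1$ in $\Phi = \Psi \sigma_1$, one obtains that on $\partial U_{-{\rm i}c}$, as $n \to \infty$,
\begin{equation*}
\Phi(n^{2/3} f_2(z))\, {\rm e}^{-n\phi_2(z)\sigma_3} = \frac{1}{2\sqrt\pi}\, (n^{2/3} f_2(z))^{-\sigma_3/4} \begin{pmatrix} 1 & {\rm i} \\ -1 & {\rm i} \end{pmatrix} \sigma_1\, \bigl(I + \OO(1/n)\bigr),
\end{equation*}
uniformly. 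By the definition \eqref{eq:defE2}, left-multiplication of this leading-order factor by $E_{-{\rm i}c}(z)$ yields exactly $M(z)$, giving the matching. The subtlety lies in the uniformity of the error: since $\partial U_{-{\rm i}c}$ is a fixed circle around $-{\rm i}c$, the quantity $n^{2/3} f_2(z)$ has modulus of order $n^{2/3}$ there, and the Airy asymptotic is valid with an honest $\OO(1/n)$ remainder (this is the part that would require a careful $s^{-3/2}$-refinement of \eqref{eq:asymptAiryparam}). The rest is standard Airy-parametrix bookkeeping as in \cite{DKMVZuniform, DKV}.
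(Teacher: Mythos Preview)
Your proposal is correct and supplies exactly the standard Airy-parametrix verification that the paper omits (the paper simply states that the proof is again omitted, as for Lemma~\ref{lem:parama}). The three-step structure you outline---analyticity of $E_{-{\rm i}c}$ via cancellation of the $M$-jump against the quarter-root branch jump, reproduction of the jumps \eqref{eq:JT4}--\eqref{eq:JT6} after the $\sigma_1$-conjugation, and matching via the Airy asymptotics with $\tfrac{2}{3}f_2^{3/2}=\phi_2$---is precisely the intended argument, and your remark that one really needs the sharper $\OO(s^{-3/2})$ remainder in \eqref{eq:asymptAiryparam} to obtain the $\OO(1/n)$ matching is well taken.
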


The local parametrix $P_{{\rm i}c}$ around ${\rm i}c$ can be
constructed in a similar way. We can also use the symmetry in the
problem and define $P_{ {\rm i}c}$ in terms of $P_{-{\rm i}c}$ as
follows
\begin{align}
P_{{\rm i}c}(z)=\begin{pmatrix}
1 & 0 & 0 & 0 \\
0 & -1 & 0 & 0\\
0 & 0 & 1 & 0\\
0 & 0 & 0 & -1
\end{pmatrix}P_{{-\rm i}c}(-z)
\begin{pmatrix}
 1& 0 & 0 &0 \\
 0 & -1 & 0 & 0\\
 0 & 0 & 1 & 0\\
 0 & 0 & 0 & -1
\end{pmatrix}.
\end{align}

\subsection{The final transformation $S \mapsto R$}

Having constructed the outside parametrix $M$ and the local
parametrices $P_{\pm a}$ and $P_{-{\rm i}c}$, we define
\begin{align}\label{eq:defP}
P(z)=\begin{cases}
M(z),& z\in \C\setminus \big(\overline{U}_{\pm a}\cup \overline{U}_{\pm {\rm i} c} \cup \Sigma_S\big),\\
P_{\pm a}(z), & z\in U_{\pm a} \setminus \Sigma_S,\\
P_{\pm {\rm i} c}(z), & z\in U_{\pm {\rm i}c} \setminus \Sigma_S.
\end{cases}
\end{align}
Recall that the $3\times 3$ matrix-valued function
\eqref{eq:PinIandII}--\eqref{eq:PinIIIandIV} constructed out of
the Pearcey integrals was also denoted by $P$. Since this function
wlll not play a role anymore, we trust that the double use of the
symbol $P$ will not lead to any confusion. From now on $P$ will
always refer to the function defined in \eqref{eq:defP}.

Define the final transformation $S \mapsto R$ by
\begin{align} \label{eq:defR}
    R = S P^{-1}.
\end{align}
Then $R$ is defined and analytic in $\C \setminus (\Sigma_S \cup
\partial U_{\pm a} \cup \partial U_{\pm {\rm i} c})$, and has an
analytic continuation to $\C \setminus \Sigma_R$, with
    \begin{multline}
        \Sigma_R=\left(\partial U_{\pm a}\cup\partial U_{\pm {\rm i}c}
        \cup \Sigma_1 \cup \Sigma_2 \cup \Sigma_3\cup (-{\rm i}c,{\rm i}c) \cup\R \right) \\
        \setminus \left([-a,a] \cup U_{\pm a}\cup  U_{\pm {\rm i}c}\right),
    \end{multline}
see also Figure \ref{fig:RHforR}. The circles $\partial U_{\pm a}$
and $U_{{\pm i}c}$ are oriented in the counterclockwise direction.

\begin{figure}[t]
\centering
\begin{overpic}[width=10cm,height=8cm]{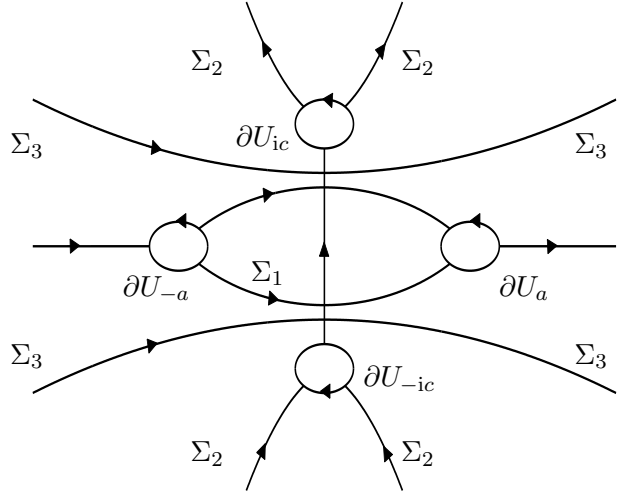}
\put(62,65){\mbox{$\Sigma_2$}}
\put(34,65){\mbox{$\Sigma_2$}}
\put(62,13){\mbox{$\Sigma_2$}}
\put(34,13){\mbox{$\Sigma_{2}$}}
\put(85,26){\mbox{$\Sigma_{3}$}}
\put(85,54){\mbox{$\Sigma_{3}$}}
\put(10,26){\mbox{$\Sigma_{3}$}}
\put(10,54){\mbox{$\Sigma_{3}$}}
\put(42,37){\mbox{$\Sigma_{1}$}}
\put(25,35){\mbox{$\partial U_{-a}$}}
\put(75,35){\mbox{$\partial U_a$}}
\put(40,55){\mbox{$\partial U_{{\rm i}c}$}}
\put(57,23){\mbox{$\partial U_{-{\rm i}c}$}}
\end{overpic}
\caption{The jump contour $\Sigma_R$ for $R$. Only the jumps on
$\partial U_{\pm a}$ and $\partial U_{\pm {\rm i}c}$ contribute to
the asymptotic expansion for $R$. The jump matrices on the other
contours are exponentially close to the identity matrix as $n \to
\infty$.} \label{fig:RHforR}
\end{figure}

We obtain the following RH problem for $R$.

\begin{lemma}
We have that $R$ satisfies the RH problem
\begin{align}
\left\{
\begin{array}{ll}
\multicolumn{2}{l}{R \textrm{ is analytic in } \C\setminus \Sigma_R,} \\
R_+(z)=R_-(z) J_R(z), & z\in \Sigma_R, \\
R(z)=I+\OO(1/z), & \textrm{uniformly as } z\to \infty,
\end{array}
\right.
\end{align}
where the jump matrix $J_R$ is given on the various parts of
$\Sigma_R$ as follows
\begin{align}
J_R(z)& = M(z)P_{\pm a}(z)^{-1}, \quad z\in \partial U_{\pm a}, \label{eq:JRUa}\\
J_R(z)& =M(z) P_{\pm {\rm i}c}(z)^{-1}, \quad z\in \partial U_{\pm
{\rm i}c}, \label{eq:JRUc}
\end{align}
\begin{align}
J_R(z)&=M_-(z)
    \begin{pmatrix}
    1 & {\rm e}^{2n\phi_1(z)} & 0 & 0 \\
    0 & 1 & 0 & 0 \\
    0 & 0 & 0 & 1\\
    0 & 0 & -1 &0
    \end{pmatrix}M_+(z)^{-1}, \quad z \in \R\setminus \big([-a,a]\cup U_{\pm a}\big)
    \end{align}
    \begin{align}
J_R(z)&=M(z)\begin{pmatrix}
 1 & 0 & 0 & 0\\
 {\rm e}^{-2n\phi_1(z)} & 1 & 0 & 0\\
 0 & 0 & 1  & 0\\
 0 & 0 & 0 & 1
\end{pmatrix}
M(z)^{-1}, \quad z\in \Sigma_{1}\setminus U_{\pm a},\\
J_R(z)&=M(z)\begin{pmatrix}
 1 & 0 & 0 & 0\\
 0 & 1 & {\rm e}^{2n\phi_2(z)} & 0\\
 0 & 0 & 1  & 0\\
 0 & 0 & 0 & 1
\end{pmatrix}
M(z)^{-1}, \quad z\in \Sigma_{2}\setminus U_{\pm {\rm i}c}, \\
J_R(z)&=M(z)\begin{pmatrix}
 1 & 0 & 0 & 0\\
 0 & 1 & 0 & 0\\
 0 & 0 & 1  & 0\\
 0 & 0 & {\rm e}^{2n\phi_3(z)} & 1
\end{pmatrix}
M(z)^{-1}, \quad z\in \Sigma_{3},
\end{align}
and on the various pieces of the segment  $(-{\rm i}c,{\rm i}c)$
we have
\begin{align}
J_R(z)&=M(z)\begin{pmatrix}
1 & 0 & 0 & 0\\
0 & 1 & 0 & 0\\
0 & {\rm e}^{-2n\phi_{2}(z)} & 1 & 0  \\
0 & 0 & 0 & 1
\end{pmatrix}M(z)^{-1},
\end{align}
for $ z\in (-{\rm i}c,{\rm i}c)\setminus \left(\Omega_3\cup U_{\pm {\rm i}c}\right)$,
\begin{align}
J_{R}(z)&=M(z)\begin{pmatrix}
1 & 0 & 0 & 0 \\
0 & 1 & 0 & 0 \\
0 & {\rm e}^{-2n\phi_{2}(z)} & 1 & 0\\
0 & {\rm e}^{-2n(\phi_{2}(z)+\phi_3(z))} & 0 & 1\\
\end{pmatrix}M(z)^{-1},
\end{align}
for $ z\in (-{\rm i} c,{\rm i}c)\cap
\left(\Omega_3\setminus\Omega_1\right)$,

\begin{align}
J_R(z)&=M(z)\begin{pmatrix}
1 & 0 & 0 & 0 \\
0 & 1 & 0 & 0 \\
-{\rm e}^{-2n(\phi_1(z)+\phi_{2}(z))}  & {\rm e}^{-2n\phi_{2}(z)} & 1 & 0\\
-{\rm e}^{-2n(\phi_1(z)+\phi_{2}(z)+\phi_3(z))}  & {\rm e}^{-2n(\phi_{2}(z)+\phi_3(z))} & 0 & 1\\
\end{pmatrix}M(z)^{-1},
\end{align}
for $z\in (-{\rm i}c,{\rm i}c)\cap \Omega_1$.
\end{lemma}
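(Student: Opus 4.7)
The plan is to derive the three conditions in the RH problem for $R$ directly from the defining formula $R = SP^{-1}$, using analyticity of $S$ and $P$, the exact-match property of the local parametrices, and the matching conditions on the disk boundaries.

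First, I would verify analyticity across the ``removed'' pieces of $\Sigma_S$. Inside each disk $U_{\pm a}$ and $U_{\pm ic}$, the local parametrix was constructed (Lemmas \ref{lem:parama} and \ref{lem:paramminic}) precisely so that $(P_{\pm a})_\pm$ and $(P_{\pm ic})_\pm$ satisfy the same jump $J_S$ as $S_\pm$. Therefore on every arc $\Sigma_S \cap U_{\pm a, \pm ic}$ we get
\begin{equation*}
R_+ = S_+ P_+^{-1} = (S_- J_S)(J_S^{-1} P_-^{-1}) = R_-,
\end{equation*}
so $R$ extends analytically across these arcs. On $(-a,a)\setminus(U_{-a}\cup U_a)$, both $S$ and $M$ jump by the matrix in \eqref{eq:JS1}, so $R = SM^{-1}$ extends analytically across $(-a,a)$; the same reasoning handles $S(\sigma-\mu_2)\setminus U_{\pm ic}$. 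Hence $R$ is analytic in $\mathbb C \setminus \Sigma_R$.

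Second, I would read off the jumps on $\Sigma_R$. On each arc of $\Sigma_R$ outside the disks $P = M$ is analytic, so $J_R = M J_S M^{-1}$; substituting the explicit $J_S$ from \eqref{eq:JS2}--\eqref{eq:JS4} and the corresponding pieces of $J_T$ from \eqref{eq:JT3}--\eqref{eq:JT8} (together with the jump on $\R\setminus[-a,a]$ given by \eqref{eq:JT2}) produces the formulas stated. On the circles $\partial U_{\pm a}$ and $\partial U_{\pm ic}$, oriented counterclockwise so that the disk interior is the $+$-side, $S$ has no jump while $P_+$ equals the local parametrix and $P_-$ equals $M$, giving
\begin{equation*}
J_R = R_-^{-1} R_+ = (S M^{-1})^{-1}\bigl(S P_{\pm a,\pm ic}^{-1}\bigr) = M P_{\pm a,\pm ic}^{-1},
\end{equation*}
which matches \eqref{eq:JRUa}--\eqref{eq:JRUc}.

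Third, I would check the normalisation at infinity. Outside $\Sigma_R$ and all disks we have $R = S M^{-1}$, and both $S$ and $M$ have the same leading structure $\operatorname{diag}(1,z^{1/3},1,z^{-1/3})\bigl(\begin{smallmatrix}1 & 0\\ 0 & A_j\end{smallmatrix}\bigr)$ in the $j$-th quadrant, so the diagonal factors and the constant $A_j$ cancel in $SM^{-1}$, leaving $R(z) \to I$. The main obstacle is upgrading this to $R = I + O(1/z)$: the asymptotic condition for $S$ as stated carries only an $O(z^{-1/3})$ error, inherited from the Pearcey-integral asymptotics through the chain $Y \mapsto X \mapsto U \mapsto T \mapsto S$. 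One has to check that the potential $O(z^{-1/3})$ and $O(z^{-2/3})$ corrections in $S\Lambda_j^{-1}$ are in fact already captured (to that order) by the analogous expansion of $M\Lambda_j^{-1}$ coming from \eqref{eq:defNk}, so that they cancel in the product $SM^{-1}$. Once this cancellation is verified, $R$ satisfies a small-norm RH problem: by Lemmas \ref{lem:Rephi1}--\ref{lem:Rephi3} the jump matrices on $\Sigma_1 \cup \Sigma_2 \cup \Sigma_3 \cup(\R\setminus[-a,a])\cup((-ic,ic)\setminus\Omega_3)$ are exponentially close to $I$ as $n \to \infty$, and by \eqref{eq:paramplusa} and \eqref{eq:paramminic} the jumps on the circles satisfy $J_R = I + O(1/n)$ uniformly; this small-norm input is then used in Section 9.
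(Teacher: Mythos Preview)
Your treatment of analyticity and of the jump computations is essentially what the paper does (the paper just says ``the formulas for $J_R$ follow by straightforward calculations''), and is more explicit. One small slip: $M$ is \emph{not} analytic across $\R\setminus[-a,a]$ (it has the jump in \eqref{eq:RHforM}), so on that piece the correct formula is $J_R = M_- J_S M_+^{-1}$, not $M J_S M^{-1}$; this is exactly what appears in the statement and you clearly know it, but your sentence ``$P=M$ is analytic'' on all arcs outside the disks overstates it.

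Where you and the paper genuinely diverge is in the asymptotic condition $R(z)=I+\OO(1/z)$. You propose to upgrade the a priori $\OO(z^{-1/3})$ by checking that the first two subleading corrections in $S$ agree with those in $M$, which would require tracking the full Pearcey asymptotics through $Y\mapsto X\mapsto U\mapsto T\mapsto S$. The paper avoids this entirely: from the stated asymptotics of $S$ and $M$ one gets immediately $R(z)=S(z)M(z)^{-1}=I+\OO(z^{-1/3})$, hence $R-I\to 0$; then one observes that along every unbounded arc of $\Sigma_R$ the jump satisfies $J_R(z)-I=\OO(e^{-cn|z|^{4/3}})$ or $\OO(e^{-cn|z|})$ (Lemmas~\ref{lem:Rephi2}, \ref{lem:Rephi3}, \ref{lem:Rephi1}), so the standard Cauchy-integral/Liouville argument for small-norm RH problems upgrades the decay of $R-I$ to $\OO(1/z)$. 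Your route would also work in principle, but it replaces a one-line invocation of a standard RH fact by a rather laborious expansion-matching computation; the paper's route buys you the $\OO(1/z)$ essentially for free once you have the exponential jump estimates, which you already need anyway for the small-norm theory in Section~9.
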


\begin{proof}
The formulas for $J_R$ follow by straightforward calculations.

From the asymptotic condition in the RH problems for $S$ and $M$
in \eqref{eq:RHproblemS} and \eqref{eq:RHforM}, respectively, we
obtain
\[ R(z) = S(z) M(z)^{-1} = I + \OO(z^{-1/3}) \qquad \textrm{as } z
\to \infty. \] Since the jump matrices $J_R(z)$ are exponentially
close to the identity matrix as $z \to \infty$ (see
\eqref{eq:JRsmall1}, \eqref{eq:JRsmall2}, \eqref{eq:JRsmall3}
below) the improved error term $R(z) = I + \OO(z^{-1})$ will
follow.
\end{proof}

From Lemmas  \ref{lem:Rephi2}, \ref{lem:Rephi3}, \ref{lem:Rephi1},
and the matching conditions for the local parametrices $P_{\pm a}$
and $P_{\pm {\rm i}c}$ it follows that all jump matrices are close
to the identity matrix if $n$ is large.

In fact, by \eqref{eq:JRUa} and \eqref{eq:JRUc} and the matching
conditions in \eqref{eq:paramplusa} and \eqref{eq:paramminic} it
follows that there exists a constant $C_0 > 0$ such that
\begin{align}
    \|J_R(z)-I\|\leq \frac{C_0}{n}, \qquad z\in \partial U_{\pm {\rm
    i}c}\cup \partial U_{\pm a}.
\end{align}
Here we can use any matrix norm $\| \cdot \|$. The other jump
matrices are exponentially close to the identity matrix as $n \to
\infty$. By Lemma \ref{lem:Rephi1} we have that
\begin{align} \label{eq:JRsmall1}
    \|J_R(z)-I\|\leq C_1 \exp(-n \eps_1 (|z|+1)), \quad z\in \R
    \setminus \left(S(\mu_1)\cup U_{\pm a}\right)
\end{align}
and
\begin{align} \label{eq:JRsmall1b}
\|J_R(z)-I\|\leq C_1 \exp(-n\eps_1),\quad z\in \Sigma_1\setminus
U_{\pm a}
\end{align}
for some constants $C_1 > 0$ and $\eps_1>0$. By Lemma
\ref{lem:Rephi2} we have
\begin{align} \label{eq:JRsmall2}
    \|J_R(z)-I\|\leq C_2 \exp(-n \eps_2 (|z|+1)), \qquad z\in \Sigma_2 \setminus U_{\pm{\rm i}c}
\end{align}
for some constants $C_2 > 0$ and $\eps_2>0$. From the fact that
$\Re \phi_1(z)>0$ for $z\in \Omega_1$, $\Re \phi_3>0$ for  $z\in
\Omega_3$ and by Lemma \ref{lem:Rephi2} we have
\begin{align} \label{eq:JRsmall2b}
 \|J_R(z)-I\|\leq C_2 \exp(-n\eps_2), \qquad z\in (-{\rm i}c,{\rm i}c)\setminus U_{\pm {\rm i}c}
\end{align}
where we can use the same constants $C_2$ and $\eps_2 >0$.
Finally, by Lemma \ref{lem:Rephi3} we have
\begin{align} \label{eq:JRsmall3}
\|J_R(z)-I\|\leq C_3 \exp(-n\eps_3|z|^{4/3}), \quad z\in \Sigma_3
\end{align}
for some constants $C_3 > 0$ and $\eps_3 > 0$.

This leads to the following result.
\begin{proposition}
There exists a constant $C>0$ such that for large enough $n$,
\begin{align}
    \|R(z)-I\|\leq \frac{C}{n(|z|+1)}, \qquad z \in \C \setminus
    \Gamma_R.
\end{align}
\end{proposition}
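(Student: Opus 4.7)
The plan is to apply the standard small-norm Riemann-Hilbert theory adapted to the presence of unbounded contours. Collecting the estimates \eqref{eq:JRsmall1}--\eqref{eq:JRsmall3} together with the bound $\|J_R-I\|\leq C_0/n$ on the four circles $\partial U_{\pm a}\cup\partial U_{\pm ic}$, I would first show that the jump $J_R-I$ admits the uniform bounds
\begin{equation*}
   \|J_R-I\|_{L^1(\Gamma_R)} + \|J_R-I\|_{L^2(\Gamma_R)} + \|J_R-I\|_{L^\infty(\Gamma_R)} \leq \frac{C'}{n},
\end{equation*}
where the super-polynomial decay $\exp(-n\eps|z|^{4/3})$ on the unbounded portions of $\Gamma_R$ absorbs any growth factors and renders all three norms finite uniformly in $n$.

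The contour $\Gamma_R$ is a finite union of Lipschitz arcs, so the Cauchy operator $C_-$ is bounded on $L^2(\Gamma_R)$ with norm independent of $n$. The RH problem for $R$ is equivalent to the singular integral equation $(I-C_{J_R-I})(R_--I) = C_-(J_R-I)$, where $C_{J_R-I}f := C_-\bigl(f(J_R-I)\bigr)$. Since $\|C_{J_R-I}\|_{L^2\to L^2}\leq \|C_-\|\,\|J_R-I\|_{L^\infty}=O(1/n)$, for large enough $n$ the operator $I-C_{J_R-I}$ is invertible by Neumann series, so $\|R_--I\|_{L^2(\Gamma_R)}\leq C''/n$. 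Combined with $\|J_R-I\|_{L^2}=O(1/n)$ this gives $\|R_-(J_R-I)\|_{L^1(\Gamma_R)}\leq C'''/n^2$ as well as $\|R_-(J_R-I)\|_{L^1(\Gamma_R)}\leq C'''/n$.

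The key step is to upgrade to the pointwise estimate with the decay factor $(|z|+1)^{-1}$. For $z\in\C\setminus\Gamma_R$, write
\begin{equation*}
    R(z)-I = \frac{1}{2\pi {\rm i}}\int_{\Gamma_R}\frac{R_-(\zeta)(J_R(\zeta)-I)}{\zeta-z}\,{\rm d}\zeta
\end{equation*}
and split $\Gamma_R=\Gamma_R^{\rm bnd}\cup\Gamma_R^{\rm unb}$ into the bounded components (the four circles plus the bounded portions of $\Sigma_1,\Sigma_2,\Sigma_3$ and of the real/imaginary axis, all contained in some fixed disk $|z|\leq R_0$) and the unbounded tails. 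On $\Gamma_R^{\rm bnd}$ we have $\|J_R-I\|=O(1/n)$ and, for $|z|\geq 2R_0$ away from $\Gamma_R$, $|\zeta-z|\geq |z|/2$, yielding a contribution of order $(n(|z|+1))^{-1}$; while for bounded $z$ the standard estimate $O(1/n)$ suffices. On $\Gamma_R^{\rm unb}$ the super-polynomial decay $\exp(-n\eps(|\zeta|+1))$ dominates and gives a contribution of order $e^{-cn}/(|z|+1)$ by pulling out a factor $(|\zeta-z|\cdot(|z|+1))^{-1}\lesssim (|\zeta|+|z|+1)^{-2}$ that is integrable in $\zeta$. Combining the two estimates yields the desired bound.

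The main obstacle is the uniform pointwise control when $z$ lies close to $\Gamma_R$, where one cannot directly bound $|\zeta-z|^{-1}$ from below. Here I would use the standard trick of deforming the contour slightly (or passing to the boundary values $R_\pm$ and using $\sup_{z\in\Gamma_R}\|R_\pm(z)-I\|$ bounds via $L^\infty$-type estimates for the Cauchy operator on smooth, suitably separated pieces of $\Gamma_R$), after which the decay factor $(|z|+1)^{-1}$ follows from the same splitting argument combined with the exponential decay of $J_R-I$ at infinity.
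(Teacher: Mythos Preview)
Your approach is correct and is precisely the standard small-norm Riemann--Hilbert argument that the paper invokes by citing \cite[Theorem 7.10]{DKMVZstrong}; the paper gives no further detail beyond that citation, so your sketch is in fact more explicit than the paper's own proof.
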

\begin{proof}
 The proposition follows from the above estimates on $\| J_R(z) - I\|$
 from the arguments as used in  \cite[Theorem 7.10]{DKMVZstrong}.
\end{proof}

This concludes the steepest descent analysis of the RH problem for $Y$.

\section{Proofs of Theorems \ref{th:eigenvaluedistribution}  and \ref{th:universality}}

In this section we prove our main results Theorems
\ref{th:eigenvaluedistribution} and \ref{th:universality}. We
assume that $\mu_1$ is one-cut regular.

\subsection{The kernel  $K^{(n)}_{11}$}

The kernel $K^{(n)}_{11}$ is expressed in terms of $Y$ by formula
\eqref{eq:kernelintermsofY}. By following the sequence of
transformations $Y\mapsto X \mapsto U \mapsto T \mapsto S\mapsto
R$ we obtain expressions for $K^{(n)}_{11}$ in terms of the
solutions of the other RH problems.

\subsubsection*{First transformation $ Y \mapsto X$.} From the
transformation $Y\mapsto X$ \eqref{eq:YtohatX}--\eqref{eq:YtoX},
it follows that the right-hand side of \eqref{eq:kernelintermsofY}
can be written as
    \begin{multline}\label{eq:kernelreversestep1}
    K^{(n)}_{11}(x,y)=\frac{1}{2\pi{\rm i}(x-y)}
          \begin{pmatrix}
    0 & w_{0,n}(y) &  w_{1,n}(y) & w_{2,n}(y)
    \end{pmatrix} \\
    \times
    \begin{pmatrix}
        1 & 0 \\
        0 & D_n Q_+(n^{3/4}\tau y) \Theta_{1,+}(n^{3/4}\tau y)
        \end{pmatrix} X_-(y)^{-1}X_+(x) \\
        \times
        \begin{pmatrix}
        1 & 0 \\
        0 & \Theta_{1,+}(n^{3/4}\tau x)^{-1} Q_+(n^{3/4}\tau x)^{-1} D_n^{-1}
        \end{pmatrix}
        \begin{pmatrix}
        1\\
        0\\
        0\\
        0
        \end{pmatrix},
\end{multline}
for $x>0$ and $y>0$. By arguments as in
\eqref{eq:mainideatrickBHIpre} and \eqref{eq:Theta1Theta2} we can
rewrite \eqref{eq:kernelreversestep1} as
    \begin{multline}\label{eq:kernelreversestep1a}
    K^{(n)}_{11}(x,y)
        = \frac{1}{2\pi{\rm i}(x-y)}\begin{pmatrix}
0 & {\rm e}^{-n \left(V(y)-\frac{3}{4}\tau^{4/3}|y|^{4/3}\right)}
&  0 & 0
\end{pmatrix} X_+^{-1}(y)   X_+(x)\begin{pmatrix}
        1\\
        0\\
        0\\
        0
        \end{pmatrix},
\end{multline}
for $x>0$ and $y>0$. Similar calculations show that
\eqref{eq:kernelreversestep1a} also  holds for general $x,y\in
\R$.

\subsubsection*{Second transformation $X \mapsto U$.} Applying
\eqref{eq:XtoU} and using \eqref{eq:defG} and \eqref{eq:defL}, we
next rewrite \eqref{eq:kernelreversestep1a} as
\begin{multline} \label{eq:kernelreversestep2a}
    K^{(n)}_{11}(x,y)  = \frac{{\rm e}^{ng_{1,+}(x)}}{2\pi{\rm i}(x-y)}
        \begin{pmatrix}
            0 & {\rm e}^{-n\left(V(y)-\frac{3}{4}\tau^{4/3}|y|^{4/3}- g_{1,+}(y)+g_2(y)-k_1\right)} &  0 & 0
        \end{pmatrix} \\
             \times U_+^{-1}(y) U_+(x)
                        \begin{pmatrix}
                        1\\
                        0\\
                        0\\
                        0
            \end{pmatrix}, \qquad x, y \in \R.
\end{multline}
Then by  \eqref{eq:variationalcondmu1complex} and
\eqref{eq:g1+g1-phi1} in case $y \in S(\mu_1)$, and by
\eqref{eq:variationalphi1} in case $y\in \R \setminus S(\mu_1)$,
we have
\begin{align*}
    V(y)- \frac{3}{4} \tau^{4/3} |y|^{4/3}-g_{1,+}(y)+g_2(y)-k_1
        \equiv -2\phi_{1,+}(y)+ g_{1,+}(y)
            \quad (\mod 2 \pi{\rm i}/3),
\end{align*}
Substituting this into \eqref{eq:kernelreversestep2a}, we arrive
at
\begin{multline} \label{eq:kernelreversestep2b}
    K^{(n)}_{11}(x,y) = \frac{{\rm e}^{n\left(g_{1,+}(x)-g_{1,+}(y)\right)}}{2\pi{\rm i}(x-y)}
        \begin{pmatrix}
            0 & {\rm e}^{2n\phi_{1,+}(y)} &  0 & 0
        \end{pmatrix}  \\
                \times U_+^{-1}(y)  U_+(x)
                        \begin{pmatrix}
                        1\\
                        0\\
                        0\\
                        0
            \end{pmatrix}, \quad x, y \in \R.
\end{multline}

\subsubsection*{Third transformation $U \mapsto T$.}
The transformation \eqref{eq:Tomegau3} acts only on the lower
right $2\times 2$ block. It does not affect the kernel
\eqref{eq:kernelreversestep2b}. So we have
\begin{multline} \label{eq:kernelreversestep2c}
    K^{(n)}_{11}(x,y) = \frac{{\rm e}^{n\left(g_{1,+}(x)-g_{1,+}(y)\right)}}{2\pi{\rm i}(x-y)}
        \begin{pmatrix}
            0 & {\rm e}^{2n\phi_{1,+}(y)} &  0 & 0
        \end{pmatrix}  \\
                \times T_+^{-1}(y) T_+(x)
                        \begin{pmatrix}
                        1\\
                        0\\
                        0\\
                        0
            \end{pmatrix}, \quad x, y \in \R.
\end{multline}

\subsubsection*{Fourth transformation $T \mapsto S$.}
Combining \eqref{eq:defS1} and \eqref{eq:defS3} with
\eqref{eq:kernelreversestep2c} we obtain
\begin{multline}\label{eq:kernelreversestep3a}
    K^{(n)}_{11}(x,y)=\frac{{\rm e}^{n\left(g_{1,+}(x)-g_{1,+}(y)\right)}}{2\pi{\rm i}(x-y)}
        \begin{pmatrix}
            -\chi_{[-a,a]}(y)  & {\rm e}^{2n\phi_{1,+}(y)} &  0 & 0
        \end{pmatrix}  \\
                \times S_+^{-1}(y)  S_+(x)
                        \begin{pmatrix}
                        1\\
                        \chi_{[-a,a]}(x){\rm e}^{2n\phi_{1,+}(x)}\\
                        0\\
                        0
            \end{pmatrix}, \quad x, y \in \R,
\end{multline}
where $\chi_{[-a,a]}$ is the characteristic function of the
interval $[-a,a]$.

\subsubsection*{Final transformation $S \mapsto R$.}
Since $S=RP$, we obtain
    \begin{multline}\label{eq:kernelreversestep4}
    K^{(n)}_{11}(x,y)=\frac{{\rm e}^{n\left(g_{1,+}(x)-g_{1,+}(y)\right)}}{2\pi{\rm i}(x-y)}
        \begin{pmatrix}
            -\chi_{[-a,a]}(y)  & {\rm e}^{2n\phi_{1,+}(y)} &  0 & 0
        \end{pmatrix}  \\
                \times P_+^{-1}(y) R_+^{-1}(y) R_+(x) P_+(x)
                        \begin{pmatrix}
                        1\\
                        \chi_{[-a,a]}(x){\rm e}^{2n\phi_{1,+}(x)}\\
                        0\\
                        0
            \end{pmatrix}
    \end{multline}
for $x,y\in \R$. This is the final formula.

\subsection{Proofs of Theorems \ref{th:eigenvaluedistribution} and \ref{th:universality}}

\begin{proof}
Recall that $R(z) = I + \OO(1/n)$ as $n \to \infty$ uniformly for
$z \in \C \setminus \Sigma_R$, which we use in
\eqref{eq:kernelreversestep4}. Theorems
\ref{th:eigenvaluedistribution} and \ref{th:universality} then
follow by calculations that are similar to the ones in e.g.\
\cite{BK1,DKV}.

Note that the factor ${\rm e}^{
n\left(g_{1,+}(x)-g_{1,+}(y)\right)}$ in
\eqref{eq:kernelreversestep4} disappears for $x=y$, and it also
drops out of the determinantal formulas in
 \eqref{eq:universalitysine} and
\eqref{eq:universalityairy}. \end{proof}

\subsection*{Acknowledgements}

We are grateful to Marco Bertola, John Harnad and Alexander Its for providing us with a
copy of  their unpublished manuscript \cite{BHI}.

The first author is a research assistant of the Fund for Scientific Research – Flanders.
The authors were supported by the European Science Foundation Program MISGAM.
The second author is supported by FWO-Flanders project G.0455.04, by K.U. Leuven
research grants OT/04/21 and OT/08/33, by Belgian Interuniversity Attraction Pole NOSY P06/02,
and by a grant from the Ministry of Education and Science of Spain, project code
MTM2005-08648-C02-01.

\bibliographystyle{amsplain}

\bigskip

\obeylines

Maurice Duits
Arno B. J. Kuijlaars
Department of Mathematics
Katholieke Universiteit Leuven
Celestijnenlaan 200B
3001 Leuven, BELGIUM

maurice.duits@wis.kuleuven.be
arno.kuijlaars@wis.kuleuven.be

\end{document}